\documentclass[10pt,journal,compsoc]{IEEEtran}
%

%
\usepackage{amsmath,amssymb,amsfonts,bm,setspace}
\usepackage[ruled,lined,commentsnumbered]{algorithm2e}
\usepackage{bbm}
\usepackage{tikz}
\usepackage{stackengine}
\usetikzlibrary{backgrounds}
\usetikzlibrary{shapes.geometric, arrows, positioning}
\usepackage{cancel}
\usepackage{relsize}
\usepackage{xcolor}
\usepackage{multirow}
\usepackage{multicol}
\usepackage{comment}
\usepackage{tabularx,ragged2e,booktabs,caption}
\usepackage{lipsum}
\usepackage{subfiles} 
\usepackage{tikz}
\usetikzlibrary{shapes.geometric, arrows}
\usepackage{relsize}
\usepackage{xcolor}
\usepackage{amsthm}
\usepackage{amsmath}
\usepackage{lipsum, babel}
  \usepackage[caption=false,font=footnotesize]{subfig}
\setlength{\textfloatsep}{5pt plus 1.0pt minus 2.0pt}
\setlength{\floatsep}{10pt plus 1.0pt minus 2.0pt}
\setlength{\abovedisplayskip}{3pt}
\setlength{\belowdisplayskip}{3pt}
\newcolumntype{P}[1]{>{\centering\arraybackslash}p{#1}}
\newcolumntype{M}[1]{>{\centering\arraybackslash}m{#1}}
\newtheorem{theorem}{Theorem}
\theoremstyle{definition}
\newtheorem{definition}{Definition}[section]
\newtheorem{corollary}{Corollary}[theorem]

\ifCLASSOPTIONcompsoc
  \usepackage[nocompress]{cite}
\else
  \usepackage{cite}
\fi

\ifCLASSINFOpdf
\else
\fi

\hyphenation{op-tical net-works semi-conduc-tor}

\begin{document}
%
\title{Robust Regularized Locality Preserving Indexing for Fiedler Vector Estimation}

\author{Aylin~Ta{\c{s}}tan,~\IEEEmembership{Student Member,~IEEE,}
        Michael~Muma,~\IEEEmembership{Member,~IEEE,}
        and~Abdelhak~M.~Zoubir,~\IEEEmembership{Fellow,~IEEE}
\thanks{The authors are with the Signal Processing Group, Technische Universität
Darmstadt, Darmstadt, Germany (e-mail: atastan@spg.tu-darmstadt.de; muma@spg.tu-darmstadt.de; zoubir@spg.tu-darmstadt.de).\protect\\
}}

\markboth{}%
{Shell \MakeLowercase{\textit{et al.}}: Robust Regularized Locality Preserving Indexing for Fiedler Vector Estimation}

\IEEEtitleabstractindextext{%
\begin{abstract}
The Fiedler vector of a connected graph is the eigenvector associated with the algebraic connectivity of the graph Laplacian and it provides substantial information to learn the latent structure of a graph. In real-world applications, however, the data may be subject to heavy-tailed noise and outliers which results in deteriorations in the structure of the Fiedler vector estimate. We design a Robust Regularized 
Locality Preserving Indexing (RRLPI) method for Fiedler vector estimation that aims to approximate the nonlinear manifold structure of 
the Laplace Beltrami operator while minimizing the negative impact of outliers. First, an analysis of the effects of two fundamental outlier types on the eigen-decomposition for block affinity matrices which are essential in cluster analysis is conducted. Then, an error model is formulated and a robust Fiedler vector estimation algorithm is developed. An unsupervised penalty parameter selection algorithm is proposed that leverages the geometric structure of the projection space 
to perform robust regularized Fiedler estimation. The performance of RRLPI is benchmarked against existing competitors in terms of detection probability, partitioning quality, image segmentation capability, robustness and computation time using a large variety of synthetic and real data experiments.

\end{abstract}

\begin{IEEEkeywords}
Fiedler vector, locality preserving indexing, dimension reduction, eigen-decomposition, eigenvectors.
\end{IEEEkeywords}}

\maketitle

\IEEEdisplaynontitleabstractindextext
\IEEEpeerreviewmaketitle

\ifCLASSOPTIONcompsoc
\IEEEraisesectionheading{\section{Introduction}\label{sec:introduction}}
\else
\section{Introduction}
\label{sec:introduction}
\fi

\IEEEPARstart{T}{he} Fiedler vector of a connected graph is the eigenvector associated with the second smallest eigenvalue, the so called Fiedler value, of the graph Laplacian matrix. The Fiedler vector and the Fiedler value provide important information for estimating  \citeform[1]-\citeform[3] and controlling \citeform[4]-\citeform[6] the algebraic connectivity of a graph, finding densely connected groups of vertices that are hidden in the graph structure \citeform[7]-\citeform[10], and representing the implicit relationships between variables in a low-dimensional space \citeform[11]-\citeform[12]. More generally, eigenvector decomposition is used in a variety of applications to achieve tasks, such as dimension reduction \citeform[13]-\citeform[18], recognition \citeform[19]-\citeform[21], clustering/classification \citeform[22]-\citeform[27] and localization \cite{Localapp1}.

Due to its central role in graph analysis, the estimation of the Fiedler vector has been a fundamental research area for decades \citeform[12]-\citeform[20]. One of the most popular approaches to estimate the Fiedler vector is known as Laplacian eigenmaps (LE) \cite{Laplacianeigenmaps}. LE performs a nonlinear dimensionality reduction while preserving the local neighborhood information in a certain sense, and it explicitly reveals the manifold structure \cite{Laplacianeigenmapsdatarep}. An alternative approach is that of locality preserving indexing (LPI) \cite{LPI}, which transforms the nonlinear dimensionality reduction in the Laplace Beltrami operator into a linear system of equations. LPI requires a complete singular value decomposition (SVD), resulting in a considerable computational complexity which is why computationally more attractive alternative approaches have been proposed in \cite{RLPI}, \cite{FSCEFM}.  However, the performance of \cite{RLPI} strongly depends on the penalty parameter selection. Further, in real-world scenarios, outliers and heavy-tailed noise may obscure the graph structure that represents the clean data. Consequentially, the Fiedler vector estimates are corrupted, and the feature mapping no longer provides useful information about the dataset if it is dominated by outliers. Therefore, robust methods are needed. One popular strategy to estimate eigenvectors robustly mitigates the effects of outliers in the representation space by restructuring the affinity matrix, e.g. \citeform[31]-\citeform[33]. Alternatively, the outliers effects can be suppressed in the projection space, such as in \cite{robustproject}. However, the robust projection operation in \cite{robustproject}, uses the $\ell_1$ norm that creates a different eigenbasis and requires prior information about the data, i.e. representative samples. In \cite{RLPFM}, a robust locality preserving feature mapping (RLPFM) is proposed that builds upon M-estimation \cite{Robuststatistics} of the Fiedler vector. In contrast to \cite{robustproject}, the $\ell_2$ norm is used in the projection operation, just as in the classical eigen-decomposition used in spectral clustering \cite{Spectralclustering}.

In the sequel, we propose a new Robust Regularized Locality Preserving Indexing (RRLPI) method for Fiedler vector estimation. The key idea is to transform the Fiedler vector estimation problem into a linear system of equations that reveals the hidden group structure in a given graph. To understand how to best integrate robustness into Fiedler vector estimation, we begin by comprehensively analyzing the outlier effects on the eigen-decomposition. 
Based on our analysis, it is shown that the overall edge weight attached to a vertex is a valuable information to identify an outlier. Therefore, an error model is formulated based on the typical overall edge weight of a graph. In contrast to RLPFM \cite{RLPFM}, which performs M-estimation of the eigenvectors by iteratively reweighting the residuals of a Laplacian eigenmaps-based prediction, RRLPI robustly estimates the Fiedler vector based on the typical overall edge weight of the graph. As in RLPFM, the penalty parameter is optimized based on $\Delta$-separated sets \cite{arora2009expander}.

The paper is organized as follows. Section~\ref{sec:Preliminaries} introduces the basic concepts and briefly discusses Fiedler vector estimation using LPI. The ideas underlying the proposed algorithm and the problem formulation are the subject of Section~\ref{sec:Motivation}.
Section 4 is dedicated to the proposed robust regularized locality preserving indexing method. This section includes the theoretical analysis, penalty parameter selection, computational complexity analysis and possible applications. Section~\ref{sec:Experimental} demonstrates the performance of the proposed approach in comparison to popular competitors both in cluster enumeration and in image segmentation using real-world data sets. Finally, conclusions are drawn in Section~\ref{sec:Conclusion}.

\section{Preliminaries}\label{sec:Preliminaries}
\subsection{Summary of Notations}\label{sec:notations}
Lower and upper-case bold letters denote vectors and matrices, respectively; $\mathbb{R}$ denotes the set of real numbers; $\mathbb{Z}^{+}$ denotes the set of positive integers; $|x|$ denotes the absolute value of $x$; $\|\mathbf{x}\|$ denotes the norm of vector $\mathbf{x}$, e.g. $\|\mathbf{x}\|_2$ is the $\ell_2$ norm; $\mathrm{med}(\mathbf{x})$ denotes the median of vector $\mathbf{x}$; $\mathrm{sign}(x)$ denotes the sign function defined as ${\mathrm{sign}(x)={x}/{|x|}}$; $\mathrm{diag}(x_1,\dots,x_n)$ denotes a diagonal matrix of size $n\times n$ with $x_1,\dots,x_n$ on its diagonal; $\mathbf{x}^{\top}$ denotes transpose of vector $\mathbf{x}$; $\mathbf{I}$ denotes the identity matrix; $\mathbf{1}$ denotes the vector of ones; $\hat{\mathbf{x}}$ denotes the estimate of vector $\mathbf{x}$; $\Tilde{\mathbf{W}}$ refers to a corrupted affinity matrix; $d_i$ denotes the overall edge weight of the $i$th feature vector for a weighted affinity matrix and the corresponding degree for an adjacency matrix; $\lambda_{i}$ denotes the $i$th eigenvalue; $y_{j,i}$ denotes the embedding result of the $j$th feature vector in the eigenvector $\mathbf{y}_i$ associated with the $i$th eigenvalue $\lambda_{i}$.

\subsection{LPI for Fiedler Vector Estimation}\label{sec:LPI}
Suppose that a data set ${\mathbf{X}=[\mathbf{x}_1,\dots,\mathbf{x}_n]\in\mathbb{R}^{m\times n}}$ with $m$ denoting the feature dimension and $n$ being the number of feature vectors, can be represented as a graph ${G=\{V,E,\mathbf{W}\}}$, where $V$ denotes the vertices, $E$ represents the edges, and $\mathbf{W}\in\mathbb{R}^{n\times n}$ is the symmetric affinity matrix that is computed using a similarity measure, e.g. cosine similarity. Let $\mathbf{L}\in\mathbb{R}^{n\times n}$ denote the discrete Laplacian operator (also known as the Laplacian matrix) that is nonnegative definite with associated eigenvalues ${0\leq\lambda_0\leq \lambda_1\leq\dots\leq\lambda_{n-1}}$ sorted in ascending order. Then, it follows that the Fiedler vector $\mathbf{y}_F\in\mathbb{R}^{n}$ is the eigenvector associated with the second smallest eigenvalue $\lambda_1$ of the eigen-problem
\begin{equation}\label{eq:eigenproblem_LaplacianD1}
\mathbf{L}\mathbf{y}_i=\lambda_i\mathbf{y}_i, 
\end{equation}
or in a generalized eigenvalue problem form
\begin{equation}\label{eq:eigenproblem_Laplacian}
       \mathbf{L}\mathbf{y}_i=\lambda_i\mathbf{D}\mathbf{y}_i,
\end{equation}
Here, the Laplacian matrix $\mathbf{L}$ is defined analogously to the Laplace Beltrami operator on the manifold by ${\mathbf{L}=\mathbf{D}-\mathbf{W}}$, where ${\mathbf{D}\in\mathbb{R}^{n\times n}}$ is a diagonal weight matrix with overall edge weights ${d_{i,i}=\sum_{j}w_{i,j}}$ on the diagonal,  ${\lambda_i\in\{\lambda_0,\lambda_1\dots\lambda_{n-1}\}}$ denotes the $i$th eigenvalue and ${\mathbf{y}_i\in\mathbb{R}^{n}}$ is the eigenvector associated with $\lambda_i$.

The LPI method determines linear approximations to the eigenfunctions of the Laplace Beltrami operator \cite{LPP}, \cite{LPI} by representing the Fiedler vector as the response of a linear regression with input variables $\mathbf{X}$, i.e., $\mathbf{y}_F=\mathbf{X}^{\top}\boldsymbol{\beta}_F$. Hence, the LPI finds a transformation vector $\boldsymbol{\beta}_F\in\mathbb{R}^{m}$ that is the eigenvector associated with the second smallest eigenvalue of the generalized eigen-problem
\begin{equation}
        \mathbf{X}\mathbf{L}\mathbf{X}^{\top}\boldsymbol{\beta}_i=\lambda_i\mathbf{X}\mathbf{D}\mathbf{X}^{\top}\boldsymbol{\beta}_i,\hspace{2mm}i=0,\dots,n-1
\end{equation}
or, equivalently, it is associated with the second largest eigenvalue of the generalized eigen-problem
\begin{equation}\label{eq:eigenproblem_proposed}     
        \mathbf{X}\mathbf{W}\mathbf{X}^{\top}\boldsymbol{\beta}_j=\lambda_j\mathbf{X}\mathbf{D}\mathbf{X}^{\top}\boldsymbol{\beta}_j,
\end{equation}
which has the same eigenvalue $\lambda_i$ as in Eq.~\eqref{eq:eigenproblem_Laplacian} for $\lambda_j$ with ${j=n-(i+1)}$. In the following theorem, the link between the well-known Fiedler vector estimation as in \cite{Laplacianeigenmaps} and LPI \cite{LPI} is shown.

\begin{theorem}
Let $\mathbf{y}_F$ be the Fiedler vector associated with the second largest eigenvalue $\lambda_F$ such that $F=n-2$ of the eigen-problem
\begin{equation}\label{eq:generaleigenproblem} 
    \mathbf{W}\mathbf{y}_F=\lambda_F\mathbf{D}\mathbf{y}_F.
\end{equation} If $\mathbf{X}^{\top}\boldsymbol{\beta}_F=\mathbf{y}_F$, then $\boldsymbol{\beta}_F$ is the eigenvector of the eigen-problem
in Eq.~\eqref{eq:eigenproblem_proposed} with the same eigenvalue $\lambda_i$ such that $i=1$.
\end{theorem}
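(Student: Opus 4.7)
The plan is to derive Eq.~\eqref{eq:eigenproblem_proposed} directly from Eq.~\eqref{eq:generaleigenproblem} by substituting the linear ansatz $\mathbf{y}_F = \mathbf{X}^{\top}\boldsymbol{\beta}_F$ and left-multiplying by $\mathbf{X}$, then to reconcile the index convention by exploiting the identity $\mathbf{L}=\mathbf{D}-\mathbf{W}$.

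The first step is a single algebraic substitution. Plugging $\mathbf{y}_F=\mathbf{X}^{\top}\boldsymbol{\beta}_F$ into the hypothesis $\mathbf{W}\mathbf{y}_F=\lambda_F\mathbf{D}\mathbf{y}_F$ yields
\begin{equation*}
\mathbf{W}\mathbf{X}^{\top}\boldsymbol{\beta}_F = \lambda_F\,\mathbf{D}\mathbf{X}^{\top}\boldsymbol{\beta}_F,
\end{equation*}
and left-multiplication of both sides by $\mathbf{X}$ gives
\begin{equation*}
\mathbf{X}\mathbf{W}\mathbf{X}^{\top}\boldsymbol{\beta}_F = \lambda_F\,\mathbf{X}\mathbf{D}\mathbf{X}^{\top}\boldsymbol{\beta}_F,
\end{equation*}
which is precisely Eq.~\eqref{eq:eigenproblem_proposed} with $\boldsymbol{\beta}_j\equiv\boldsymbol{\beta}_F$ and $\lambda_j\equiv\lambda_F$. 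Hence $\boldsymbol{\beta}_F$ is an eigenvector of the LPI generalized eigen-problem at eigenvalue $\lambda_F$.

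The second step is to show that the index $F=n-2$ in Eq.~\eqref{eq:generaleigenproblem} coincides with $i=1$ in the Laplacian ordering of Eq.~\eqref{eq:eigenproblem_Laplacian}. Since $\mathbf{L}=\mathbf{D}-\mathbf{W}$, any eigen-pair $(\mathbf{y}_i,\lambda_i)$ of Eq.~\eqref{eq:eigenproblem_Laplacian} is also an eigen-pair of Eq.~\eqref{eq:generaleigenproblem} with eigenvalue $1-\lambda_i$. Because the $\lambda_i$ are listed in ascending order, the $1-\lambda_i$ appear in descending order, so the second largest eigenvalue of Eq.~\eqref{eq:generaleigenproblem}, at index $F=n-2$, corresponds to the second smallest eigenvalue of Eq.~\eqref{eq:eigenproblem_Laplacian}, at index $i=1$. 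Combined with the first step, this establishes that $\boldsymbol{\beta}_F$ is the eigenvector of Eq.~\eqref{eq:eigenproblem_proposed} associated with the Fiedler index $i=1$, as claimed.

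The only subtle point I anticipate is verifying that left-multiplication by $\mathbf{X}$ does not trivialize the identity by projecting onto the null space of $\mathbf{X}$. This is ruled out by the hypothesis $\mathbf{X}^{\top}\boldsymbol{\beta}_F=\mathbf{y}_F$ together with the Fiedler condition $\mathbf{y}_F\neq\mathbf{0}$, so no Fiedler component is cancelled in the range of $\mathbf{X}$. Beyond this sanity check, the argument reduces to direct algebraic substitution paired with the eigenvalue reindexing above, and I do not foresee any deeper obstacle.
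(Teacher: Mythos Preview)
Your proposal is correct and takes essentially the same approach as the paper: substitute $\mathbf{y}_F=\mathbf{X}^{\top}\boldsymbol{\beta}_F$ into Eq.~\eqref{eq:generaleigenproblem} and left-multiply by $\mathbf{X}$ to recover Eq.~\eqref{eq:eigenproblem_proposed}. Your additional step reconciling the index $F=n-2$ with $i=1$ via $\mathbf{L}=\mathbf{D}-\mathbf{W}$ and the null-space remark are more explicit than the paper's treatment, but the core argument is identical.
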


\begin{proof}
Replacing $\mathbf{X}^{\top}\boldsymbol{\beta}_F$ by the Fiedler vector $\mathbf{y}_F$ on the left side of Eq.~\eqref{eq:eigenproblem_proposed} yields
\begin{align*}
\begin{split}
\mathbf{X}\mathbf{W}\mathbf{X}^{\top}\boldsymbol{\beta}_F =&\mathbf{X}\mathbf{W}\mathbf{y}_F =\mathbf{X} \lambda_F\mathbf{D}\mathbf{y}_F = \lambda_F\mathbf{X}\mathbf{D}\mathbf{y}_F\\ =&\lambda_F\mathbf{X}\mathbf{D}\mathbf{X}^{\top}\boldsymbol{\beta}_F \hspace{2mm}\mathrm{s.t.} \hspace{2mm} F=n-2
\end{split}
\end{align*}
and shows that for $F=j=n-2$, $\boldsymbol{\beta}_F$ is the eigenvector of the eigen-problem of Eq.~\eqref{eq:eigenproblem_proposed} which concludes the proof.
\vspace{-1mm}
\end{proof}
Therefore, building upon \cite{RLPI}, the projective functions of LPI can be determined in two consecutive steps for Fiedler vector estimation. First, the Fiedler vector $\mathbf{y}_F$ associated with the second smallest eigenvalue of Eq.~\eqref{eq:eigenproblem_Laplacian} must be computed. Then, for the Fiedler vector $\mathbf{y}_F$, the LPI method estimates a transformation vector $\boldsymbol{\beta}_F\in\mathbb{R}^{m}$ that satisfies $\mathbf{y}_F=\mathbf{X}^{\top}\boldsymbol{\beta}_F$ by solving the following least squares problem
\begin{equation}
    \hat{\boldsymbol{\beta}}_F=\underset{\boldsymbol{\beta}}{\mathrm{argmin}}\sum_{i=1}^n(\boldsymbol{\beta}_F^{\top}\mathbf{x}_i-y_{F,i})^2,
\end{equation}
where $y_{F,i}$ is the $i$th mapping point in $\mathbf{y}_F$ and $\hat{\boldsymbol{\beta}}_F$ is the estimated transformation vector.
\vspace{-3.5mm}
\section{Motivation and Problem Statement}\label{sec:Motivation}
The previous section discussed the applicability of LPI for Fiedler vector estimation. In particular, LPI may discover the hidden nonlinear structure by finding linear approximations to the nonlinear Laplacian eigenmaps (for details, see \cite{LPI} and \cite{LPP}). However, when using the least-squares objective function, outliers and heavy-tailed noise may have a large impact  on the estimation of the transformation vector $\boldsymbol{\beta}_F$. This leads to errors in the estimated Fiedler vector and, consequently, an information loss about the representation of the underlying graph structure using such a corrupted Fiedler vector estimate. This section analyzes the effects of outliers and noise on the eigen-decomposition of the Laplacian matrix. The analysis provides the theoretical basis and an understanding of the ideas underlying the proposed robust Fiedler vector estimation approach.

\subsection{Outlier Effect on Eigen-decomposition}\label{sec:outliereffec}
The effect of outliers on eigen-decomposition is analyzed in terms of two fundamental types of outliers. All examinations that are conducted in this section are made for block affinity matrices unless it is stated otherwise. Motivated by \cite{sparsesubspace}, we begin by defining the first fundamental type of outliers as follows.

\begin{definition}\label{def:outliertype1}
\textbf{(Type I Outliers)}
\textit{The feature vectors that do not have considerable correlations with any of the samples are called Type I outliers.
}
\end{definition}
Based on this first definition, the correlation coefficient vectors that are associated to the outliers have negligibly small values or, ideally, are zero vectors. In the following theorem, the effect of such outliers on eigen-decomposition is analyzed.
 \begin{figure}[!tbp]
\begin{minipage}[b]{1.0\linewidth}
  \centering
  \includegraphics[trim={0 0mm 0 0mm},clip,width=9cm]{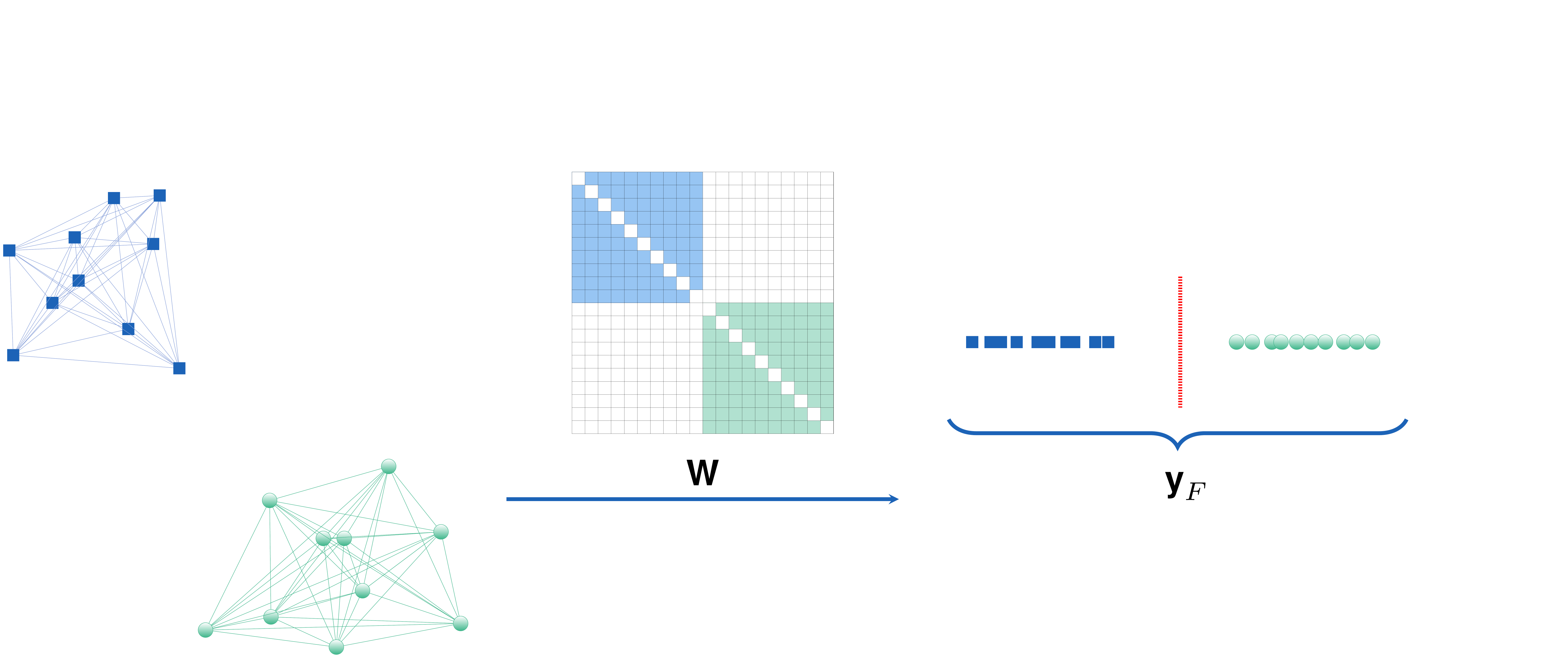}
\end{minipage}
\caption{Fiedler vector estimation for an ideal $k=2$ blocks affinity matrix.}
\label{fig:scenario1}
\end{figure}

\begin{theorem}\label{app:theorem2}
For a block diagonal symmetric affinity matrix ${\mathbf{W}\in\mathbb{R}^{n\times n}}$, the associated Laplacian ${\mathbf{L}\in\mathbb{R}^{n\times n}}$, and eigenvalues ${0\leq\lambda_0\leq \lambda_1\leq\dots\leq\lambda_{n-1}}$, the outlying vertices which have negligibly small correlation coefficients result in eigenvalues that decrease to zero.
\end{theorem}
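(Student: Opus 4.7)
The plan is to reduce the claim to the well-known fact that the multiplicity of the zero eigenvalue of a graph Laplacian equals the number of connected components, and then to upgrade from the idealized to the near-ideal case via a standard perturbation bound.

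First, I would permute the vertex ordering so that the $k$ cluster blocks of $\mathbf{W}$ occupy the leading indices and the $p$ outlying vertices occupy the trailing indices. In the ideal regime where Type~I outliers carry exactly zero correlation with every other sample, the last $p$ rows and columns of $\mathbf{W}$ vanish, and this forces the corresponding diagonal entries of $\mathbf{D}$ to vanish as well, because $d_{i,i} = \sum_{j} w_{i,j}$. Consequently $\mathbf{L} = \mathbf{D} - \mathbf{W}$ is itself block diagonal: the original $k$ cluster-Laplacians occupy one block and a $p \times p$ zero matrix occupies the other. Each cluster block contributes exactly one zero eigenvalue (the constant indicator on that block lies in its kernel), and the $p \times p$ zero block contributes $p$ additional zero eigenvalues whose eigenvectors are the standard basis vectors associated with the outlying indices. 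Hence, compared with the uncorrupted $k$-block Laplacian, the presence of $p$ Type~I outliers forces $p$ extra eigenvalues down to exactly zero.

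For the practical case of merely negligibly small correlations, I would write the actual affinity matrix as $\mathbf{W} = \mathbf{W}_0 + \mathbf{E}$, where $\mathbf{W}_0$ is the idealized matrix above and $\mathbf{E}$ is a symmetric perturbation whose entries are bounded by the outlier correlation magnitudes. The induced perturbation $\mathbf{E}_L$ on the Laplacian is of the same order, so Weyl's inequality gives $|\lambda_i(\mathbf{L}) - \lambda_i(\mathbf{L}_0)| \leq \|\mathbf{E}_L\|_2$. Since the $p$ extra eigenvalues of $\mathbf{L}_0$ are exactly zero, the corresponding eigenvalues of the corrupted Laplacian shrink toward zero as the outlier correlations shrink, which is the desired conclusion.

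The main obstacle I expect is making the qualitative phrase \emph{negligibly small correlation coefficients} quantitatively precise without overcommitting to a particular noise model. My plan to sidestep this is precisely the two-stage strategy above: first prove a clean structural null-space statement in the exact-zero limit, and then transfer it to the perturbed regime via a Weyl-type bound, so that no specific norm threshold has to be fixed in the main analysis and the statement of the theorem can be taken as an asymptotic assertion in the outlier-correlation magnitude.
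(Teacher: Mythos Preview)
Your argument is sound: the exact-zero case reduces cleanly to the connected-components characterization of $\ker(\mathbf{L})$, and Weyl's inequality is the right tool to pass to the near-zero regime. The paper relegates its proof to Appendix~A.1 of the supplementary material, which is not reproduced in the main body, so a line-by-line comparison is not possible here; given the elementary nature of the claim, your two-stage route (structural null-space statement in the zero-correlation limit, then eigenvalue perturbation) is the natural argument and very likely coincides in spirit with the paper's own.

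One small point worth tightening: when you write that the Laplacian perturbation $\mathbf{E}_L$ is ``of the same order'' as $\mathbf{E}$, note that $\mathbf{E}_L = (\mathbf{D}-\mathbf{D}_0) - \mathbf{E}$ and the diagonal degree perturbation aggregates a full row of $\mathbf{E}$, so in the worst case $\|\mathbf{E}_L\|_2$ carries an extra factor of $n$ relative to $\max_{i,j}|E_{i,j}|$. This does not affect the asymptotic conclusion (the eigenvalues still tend to zero as the outlier correlations do), but it is worth stating the bound precisely rather than leaving it implicit.
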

\begin{proof}
See Appendix A.1 of the supplementary material.
\end{proof}
According to Theorem~\ref{app:theorem2}, if the data matrix $\mathbf{X}$ includes outliers that follow Definition \ref{def:outliertype1}, it will have more than $k$ eigenvalues that are close to zero for a $k$ block diagonal affinity matrix, which is a valuable information to detect the number of outliers. In the following corollary, Theorem~\ref{app:theorem2} is extended for general affinity matrices.

\begin{corollary}
\textit{For a symmetric affinity matrix ${\mathbf{W}\in\mathbb{R}^{n\times n}}$, and the associated Laplacian ${\mathbf{L}\in\mathbb{R}^{n\times n}}$, for eigenvalues ${0\leq\lambda_0\leq \lambda_1\leq\dots\leq\lambda_{n-1}}$, the outlying vertices which have negligibly small correlation coefficients result in eigenvalues that decrease to zero.}
\end{corollary}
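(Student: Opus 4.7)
The plan is to reduce the general symmetric case to the block-diagonal setting already handled in Theorem~\ref{app:theorem2} by exploiting the fact that a vertex with negligibly small correlations contributes a row and column that is approximately zero in $\mathbf{W}$, and hence also in $\mathbf{L}=\mathbf{D}-\mathbf{W}$ (since the associated degree $d_{i,i}=\sum_{j}w_{i,j}$ is itself negligible). First I would apply a permutation matrix $\mathbf{P}$ that relabels the vertices so that the $p$ outlying vertices occupy the trailing indices. Since simultaneous row/column permutations leave the Laplacian spectrum invariant, we may assume the permuted $\mathbf{W}$ has the partitioned form
\begin{equation*}
\mathbf{W} = \begin{pmatrix} \mathbf{W}_{\text{in}} & \mathbf{E}_{\text{off}} \\ \mathbf{E}_{\text{off}}^{\top} & \mathbf{E}_{\text{out}} \end{pmatrix},
\end{equation*}
where $\mathbf{W}_{\text{in}}$ encodes the correlations among inliers while $\mathbf{E}_{\text{off}}$ and $\mathbf{E}_{\text{out}}$ consist entirely of entries that are negligibly small by Definition~\ref{def:outliertype1}.

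Next, I would write $\mathbf{W}=\mathbf{W}_{\text{bd}}+\mathbf{E}$, where $\mathbf{W}_{\text{bd}}=\mathrm{diag}(\mathbf{W}_{\text{in}},\mathbf{0})$ is an exactly block-diagonal affinity matrix in which the outlying block is the zero matrix, and $\mathbf{E}$ collects the remaining small blocks. The reference matrix $\mathbf{W}_{\text{bd}}$ is a block-diagonal symmetric affinity matrix whose outlying vertices have identically zero correlation coefficients, so Theorem~\ref{app:theorem2} applies verbatim and yields $p$ eigenvalues of the reference Laplacian $\mathbf{L}_{\text{bd}}=\mathbf{D}_{\text{bd}}-\mathbf{W}_{\text{bd}}$ that equal zero, with the remaining spectrum inherited from $\mathbf{L}_{\text{in}}=\mathbf{D}_{\text{in}}-\mathbf{W}_{\text{in}}$.

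To transfer the conclusion to the unperturbed $\mathbf{W}$, I would express $\mathbf{L}=\mathbf{L}_{\text{bd}}+\boldsymbol{\Delta}$, where $\boldsymbol{\Delta}=(\mathbf{D}-\mathbf{D}_{\text{bd}})-(\mathbf{W}-\mathbf{W}_{\text{bd}})$ depends linearly on the entries of $\mathbf{E}_{\text{off}}$ and $\mathbf{E}_{\text{out}}$. A standard symmetric eigenvalue perturbation bound (Weyl's inequality) then gives $|\lambda_i(\mathbf{L})-\lambda_i(\mathbf{L}_{\text{bd}})|\leq\|\boldsymbol{\Delta}\|_2$, and because every entry of $\boldsymbol{\Delta}$ is negligibly small, so is its spectral norm. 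Consequently, the $p$ trailing eigenvalues of $\mathbf{L}$ remain arbitrarily close to zero as the outlier correlations decrease, which is exactly the asserted behavior.

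The main obstacle I anticipate is not algebraic but notational: making the informal phrase "negligibly small correlation coefficients" quantitative in a way that matches the asymptotic wording of Theorem~\ref{app:theorem2}. I would address this by arguing in the limit, i.e.\ letting the entries of $\mathbf{E}_{\text{off}}$ and $\mathbf{E}_{\text{out}}$ tend to zero and invoking continuity of the spectrum, so that the general result is obtained as a direct corollary of the block-diagonal case without further structural assumptions on $\mathbf{W}_{\text{in}}$.
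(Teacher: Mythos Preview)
Your proposal is mathematically sound. The permutation-plus-perturbation strategy is a clean way to lift Theorem~\ref{app:theorem2} from the block-diagonal setting to the general symmetric case: isolating the $p$ outliers, splitting $\mathbf{W}$ into an exactly block-diagonal reference $\mathbf{W}_{\text{bd}}=\mathrm{diag}(\mathbf{W}_{\text{in}},\mathbf{0})$ plus a small remainder, and then invoking Weyl's inequality on $\mathbf{L}=\mathbf{L}_{\text{bd}}+\boldsymbol{\Delta}$ is both standard and correct. The loose end you already flag---turning ``negligibly small'' into a quantitative statement---is handled adequately by your limiting argument, since $\|\boldsymbol{\Delta}\|_2$ is bounded by a fixed multiple (depending only on $n$) of the maximal entry of $\mathbf{E}_{\text{off}}$ and $\mathbf{E}_{\text{out}}$, and hence vanishes as those entries tend to zero.

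The paper defers its own proof to Appendix~A.2 of the supplementary material, which is not reproduced in the main text, so a line-by-line comparison is not possible here. Given that the result is framed as a direct corollary of Theorem~\ref{app:theorem2}, a reduction-to-block-diagonal argument of the kind you give is the natural route and almost certainly aligned in spirit with what the authors do. One minor remark: you do not strictly need to invoke Theorem~\ref{app:theorem2} for the reference matrix $\mathbf{W}_{\text{bd}}$, since the $p$ zero eigenvalues of $\mathbf{L}_{\text{bd}}$ follow immediately from its trailing $p\times p$ block being identically zero; citing the theorem is harmless but slightly indirect.
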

\begin{proof}
See Appendix A.2 of the supplementary material.
\end{proof}
The eigenvectors of a Laplacian have a large variety of applications, such as \citeform[13]-\citeform[28], and, in particular, they play a crucial role in cluster analysis. In what follows, the effect of outliers on group structures is analyzed for both eigen-decompositions, i.e Eq.~\eqref{eq:eigenproblem_LaplacianD1} and Eq.~\eqref{eq:eigenproblem_Laplacian}.

\begin{corollary}\label{cor:type1outliereffecteigenvector}
\textit{For a definite nonnegative $k$ block zero-diagonal symmetric affinity matrix ${\mathbf{W}\in\mathbb{R}^{n\times n}}$ and the associated Laplacian ${\mathbf{L}\in\mathbb{R}^{n\times n}}$, let the eigenvectors have a norm, such that $\|\mathbf{y}^{(i)}\|^2_2=1$ holds, where $\mathbf{y}^{(i)}$ denotes the eigenvector associated with the $i$th eigenvalue. The outlying vertices which have negligibly small correlation coefficients lead to a decrease in the $\ell_2$ norm between the mapping points that are associated to different clusters based on the first $k$ eigenvectors. When the absolute value of the embedding results of the outliers increases to one, i.e. when $|y_o^{(i)}|\rightarrow 1$, the $\ell_2$ norm decreases to zero.}
\end{corollary}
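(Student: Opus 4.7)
The plan is to build on Theorem~\ref{app:theorem2}: each outlying vertex contributes an additional near-zero eigenvalue, so the list of the $k$ smallest eigenvalues of $\mathbf{L}$ no longer consists purely of the cluster-indicator eigenvalues of the unperturbed block matrix but is contaminated by outlier-dominated modes. Starting from the classical fact that a $k$-block zero-diagonal nonnegative $\mathbf{W}$ produces an $\mathbf{L}$ whose null space is $k$-dimensional and spanned by piecewise-constant indicator vectors (one per block), I would use the unit-norm convention $\|\mathbf{y}^{(i)}\|_2^2=1$ to assign each cluster $c$ a distinct embedded representative $\mathbf{p}_c\in\mathbb{R}^k$ under the map $j\mapsto(y_j^{(0)},\dots,y_j^{(k-1)})^{\top}$. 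The outlier-free distance $\|\mathbf{p}_c-\mathbf{p}_{c'}\|_2>0$ for $c\neq c'$ is the reference value that I aim to show is driven to zero by the outlier-induced contamination.

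The second ingredient is a characterization of the outlier-dominated eigenvector. Invoking Theorem~\ref{app:theorem2}, each outlier $o$ induces a new eigenvalue close to zero, so after re-sorting at least one of the first $k$ eigenvectors, call it $\mathbf{y}^{(i)}$, becomes concentrated on $o$. Since $\mathbf{y}^{(i)}$ has unit norm, the identity $\sum_{j}(y_j^{(i)})^2=1$ forces $|y_j^{(i)}|\to 0$ for every $j\neq o$ as $|y_o^{(i)}|\to 1$. Consequently the $i$th coordinate of the embedded point of any non-outlier vertex is vanishingly small, and the corresponding column of the spectral embedding effectively disappears from the geometry of the non-outlier points.

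Combining the two ingredients, for $j,l$ in distinct clusters I would decompose
\begin{equation*}
\|\mathbf{y}_j-\mathbf{y}_l\|_2^2 \;=\; \sum_{i=0}^{k-1}\bigl(y_j^{(i)}-y_l^{(i)}\bigr)^2,
\end{equation*}
and observe that the summand at each outlier-dominated index vanishes, leaving effectively only $k-1$ coordinates to discriminate among $k$ clusters. A pigeonhole / dimension-counting argument then forces at least one pair of cluster representatives to coincide in the limit, so the inter-cluster distance collapses to zero as $|y_o^{(i)}|\to 1$. The main obstacle I anticipate is making this pigeonhole step quantitative for intermediate values of $|y_o^{(i)}|$: in the extreme regime $|y_o^{(i)}|=1$ the argument is clean because one eigendirection is wholly absorbed by the outlier, but for $|y_o^{(i)}|<1$ one has to control how the residual mass $1-(y_o^{(i)})^2$ is redistributed over the cluster vertices and to verify that the inter-cluster distance decays monotonically with $|y_o^{(i)}|$. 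I would handle this by exploiting orthogonality between $\mathbf{y}^{(i)}$ and the remaining indicator-type eigenvectors, extracting an upper bound on $\|\mathbf{y}_j-\mathbf{y}_l\|_2$ that is strictly smaller than the outlier-free baseline and that tends to zero as $|y_o^{(i)}|\to 1$.
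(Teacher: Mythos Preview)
Your first ingredient is exactly right: the unit–norm constraint $\sum_{j}(y_j^{(i)})^2=1$ is what drives every non-outlier coordinate $y_j^{(i)}$ to zero once $|y_o^{(i)}|\to 1$, and this is the mechanism the paper relies on as well.

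The gap is in the second ingredient. Your ``pigeonhole / dimension-counting'' step does not go through as written. Losing one coordinate leaves the $k$ cluster representatives sitting in a $(k-1)$-dimensional affine slice, but $k$ points in $\mathbb{R}^{k-1}$ are in general perfectly well separated (think of the vertices of a simplex). So from the single assumption ``one of the first $k$ eigenvectors is absorbed by an outlier'' you can conclude that the squared inter-cluster distance \emph{decreases} (you have removed a nonnegative summand $(y_j^{(i)}-y_l^{(i)})^2$), but you cannot conclude that any pair collapses to zero. Your proposed remedy via orthogonality between $\mathbf{y}^{(i)}$ and the remaining indicator-type eigenvectors does not close this gap either: orthogonality to $\mathbf{e}_o$ merely forces the remaining $k-1$ eigenvectors to live in the $k$-dimensional span of the cluster indicators, and a $(k-1)$-frame in a $k$-space can still resolve all $k$ clusters.

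What you are missing is the reading of the hypothesis. The corollary speaks of ``the embedding results of the outlier\emph{s}'' and writes $|y_o^{(i)}|\to 1$ with both an outlier index $o$ and an eigenvector index $i$; the intended regime is that \emph{each} of the first $k$ eigenvectors is dominated by some outlying vertex, not that a single outlier captures one mode. Under that reading the argument is immediate and requires no pigeonhole step at all: for every $i\in\{0,\dots,k-1\}$ the constraint $\|\mathbf{y}^{(i)}\|_2^2=1$ together with $|y_o^{(i)}|\to 1$ forces $y_j^{(i)}\to 0$ for every non-outlier $j$, so every coordinate of the embedding $\mathbf{y}_j=(y_j^{(0)},\dots,y_j^{(k-1)})$ vanishes and $\|\mathbf{y}_j-\mathbf{y}_l\|_2\to 0$ for all non-outlying $j,l$. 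This also makes the subsequent remark in the paper (that spectral clustering then assigns all non-outliers to a single cluster) an immediate consequence. Rewriting your argument with this interpretation removes the need for the dimension-counting step and the orthogonality patch, and yields the ``decrease'' and the ``decrease to zero'' conclusions directly.
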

\begin{proof}
See Appendix A.3 of the supplementary material.
\end{proof}
As a result of Corollary~\ref{cor:type1outliereffecteigenvector}, as $|y_o^{(i)}|\rightarrow 1$,  all non-outlying points are mapped to the same cluster, when using distance-based approaches, such as, spectral clustering. This explains why spectral clustering breaks down in the presence of Type I outliers. 

Next, we will study the effect of a second fundamental type of outliers that are defined as follows.
 \begin{figure}[!tbp]
\begin{minipage}[b]{1.0\linewidth}
  \centering
  \includegraphics[trim={0 0mm 0 0mm},clip,width=9cm]{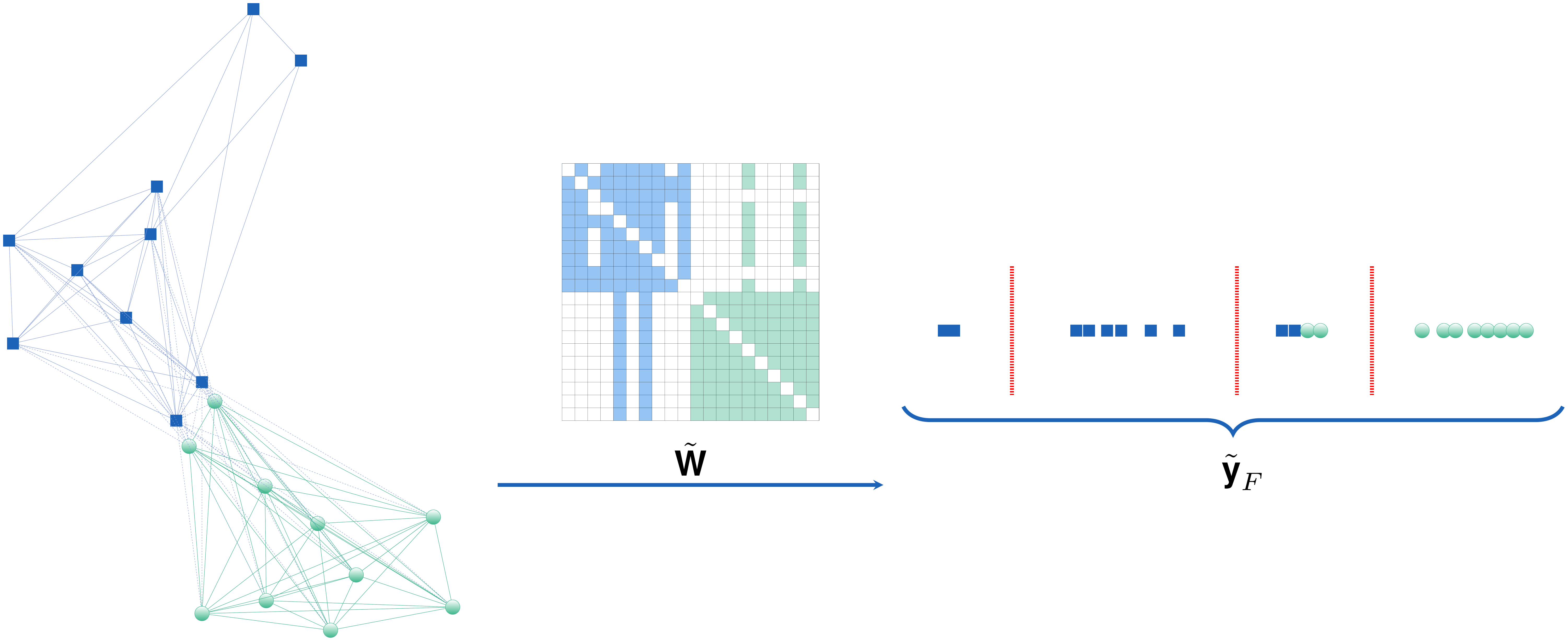}
\end{minipage}
\caption{Fiedler vector estimation for a corrupted $k=2$ blocks affinity matrix.}
\label{fig:scenario2}
\end{figure}

\begin{definition}
\textbf{(Type II Outliers)}
\textit{Outliers of Type II are the feature vectors that have considerable correlations with more than one group of feature vectors.}
\end{definition}

As for the outliers that were defined in Definition~\ref{def:outliertype1}, the effect of such outliers on eigen-decomposition is analyzed for block affinity matrices using the eigen-decompositions as in Eq.~\eqref{eq:eigenproblem_LaplacianD1} and Eq.~\eqref{eq:eigenproblem_Laplacian}. This leads to the following theorem.

\begin{theorem}\label{app:theorem3}
For a $k$ block zero-diagonal symmetric nonnegative affinity matrix $\mathbf{W}\in\mathbb{R}^{n\times n}$, let  $w_i\in \{ w_1,w_2,\dots,w_k\}$ denote a constant around which the correlation coefficients of the $i$th block are assumed to be concentrated with negligibly small variations. Further, let $\Tilde{w}_u$ denote a constant around which the correlations between blocks are concentrated. Let  $\Tilde{\mathbf{W}}$ define an affinity matrix, which is equal to $\mathbf{W}$, except that we impose  $\Tilde{w}_u>0$, such that the $i$th and $j$th block become correlated. Then, it follows that the second smallest eigenvalue associated with the Laplacian matrix of the correlated $i$th and $j$th blocks $\Tilde{\mathbf{L}}_{ij}$ will be greater than zero, while they were equal to zero for $\mathbf{L}_{ij}$.
\end{theorem}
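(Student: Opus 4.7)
The plan is to reduce the statement to the standard spectral-graph fact that the multiplicity of the zero eigenvalue of a graph Laplacian equals the number of connected components of the underlying graph. First, I would describe the structural change caused by switching on $\tilde{w}_u>0$. Restricted to the union of blocks $i$ and $j$, the original affinity $\mathbf{W}_{ij}$ is block diagonal, with each diagonal block encoding an essentially complete subgraph whose weights concentrate around the positive constants $w_i$ and $w_j$. The graph encoded by $\mathbf{W}_{ij}$ therefore has exactly two connected components, so $\mathbf{L}_{ij}$ has a two-dimensional null space spanned by the block indicator vectors $[\mathbf{1}_i;\mathbf{0}]$ and $[\mathbf{0};\mathbf{1}_j]$; in particular $\lambda_0=\lambda_1=0$. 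Once we set $\tilde{w}_u>0$, every cross-block entry becomes positive, so $\tilde{\mathbf{W}}_{ij}$ encodes a connected graph on the combined vertex set.

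Second, I would invoke the Rayleigh--Ritz characterization together with the quadratic-form identity
\[
\mathbf{x}^{\top}\tilde{\mathbf{L}}_{ij}\mathbf{x}=\sum_{p<q}\tilde{w}_{pq}(x_p-x_q)^2\ge 0,
\]
valid because $\tilde{\mathbf{L}}_{ij}$ is a Laplacian. Since $\tilde{\mathbf{L}}_{ij}\mathbf{1}=\mathbf{0}$, we have $\lambda_0=0$ with eigenvector $\mathbf{1}$, and
\[
\lambda_1\bigl(\tilde{\mathbf{L}}_{ij}\bigr)=\min\bigl\{\mathbf{x}^{\top}\tilde{\mathbf{L}}_{ij}\mathbf{x}:\mathbf{x}\perp\mathbf{1},\ \|\mathbf{x}\|_2=1\bigr\}.
\]

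Third, I would argue that this minimum is strictly positive. If a unit vector $\mathbf{x}\perp\mathbf{1}$ annihilated the quadratic form, then $x_p=x_q$ for every pair with $\tilde{w}_{pq}>0$; propagating this equality along the connecting edges (within each block via the positive within-block weights and between the two blocks via $\tilde{w}_u>0$) forces $\mathbf{x}$ to be a constant vector, which contradicts orthogonality to $\mathbf{1}$ with $\|\mathbf{x}\|_2=1$. Hence $\lambda_1(\tilde{\mathbf{L}}_{ij})>0$, while $\lambda_1(\mathbf{L}_{ij})=0$ from the first step, which is exactly the claim.

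The main obstacle I expect is the careful bookkeeping of the ``negligibly small variations'' hypothesis around $w_i$, $w_j$, and $\tilde{w}_u$: its only role here is to guarantee strict positivity of the relevant weights so that the connectivity argument of the second and third steps goes through, and it does not enter the eigenvalue analysis in any quantitative way. A secondary subtlety is that Theorem~\ref{app:theorem3} is phrased symbolically in terms of $\tilde{\mathbf{L}}_{ij}$ rather than the full corrupted Laplacian; this is convenient because the analysis reduces cleanly to the two-block submatrix, but one should note that the same connectivity-based argument would extend verbatim to chains of blocks linked successively by Type II outliers.
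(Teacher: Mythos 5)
Your argument is correct. You reduce the claim to two classical facts: (i) the multiplicity of the zero eigenvalue of a weighted graph Laplacian equals the number of connected components, which gives $\lambda_1(\mathbf{L}_{ij})=0$ for the two disconnected blocks, and (ii) the algebraic connectivity of a connected weighted graph is strictly positive, which you establish via the Rayleigh--Ritz characterization and the identity $\mathbf{x}^{\top}\tilde{\mathbf{L}}_{ij}\mathbf{x}=\sum_{p<q}\tilde{w}_{pq}(x_p-x_q)^2$ together with the propagation-of-equality argument along edges. This is a genuinely different route from the paper's proof, which exploits the idealized structure built into the hypotheses (all within-block weights essentially equal to $w_i$ or $w_j$, all cross-block weights essentially equal to $\tilde{w}_u$) to write down the eigen-decomposition of the two-block Laplacian explicitly: the vector that is constant on each block and orthogonal to $\mathbf{1}$ is an eigenvector with eigenvalue proportional to $\tilde{w}_u$ (namely $(n_i+n_j)\tilde{w}_u$ for block sizes $n_i,n_j$), which is zero precisely when $\tilde{w}_u=0$ and positive otherwise. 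The explicit computation buys a quantitative statement --- the second smallest eigenvalue grows linearly in the cross-block correlation, which is what the paper later uses to interpret the severity of Type~II contamination --- whereas your connectivity argument is qualitative but strictly more general: it needs only that some cross-block entry is positive and that each block is internally connected, so the ``negligibly small variations'' hypothesis indeed plays no quantitative role, exactly as you observe. One point you should make explicit rather than leave implicit: your first step, that the null space of $\mathbf{L}_{ij}$ is exactly two-dimensional, requires the within-block weights $w_i,w_j$ to be strictly positive so that each block is internally connected; otherwise the zero eigenvalue could have higher multiplicity and the claim ``$\lambda_1(\mathbf{L}_{ij})=0$'' would hold trivially but your spanning-set description would not.
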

\begin{proof}
See Appendix A.4 of the supplementary material.
\end{proof}

Based on Theorem~\ref{app:theorem3}, if any blocks are correlated with each other, less than $k$ eigenvalues will be equal to zero. In addition to their effect on the eigenvalue structure, it is important to show the effect of outliers on the eigenvectors which are the basis of clustering approaches. As an example, spectral clustering designs a matrix ${\mathbf{Y}\in\mathbb{R}^{n\times K}}$ whose column vectors are the eigenvectors of Eq.~\eqref{eq:eigenproblem_Laplacian} and then clusters the row vectors of that matrix based on $K$-means partitioning for a number $K$ of clusters. Therefore, in the sequel, the effects of outliers are examined from the perspective of spectral clustering where the group structure plays a crucial role.

\begin{theorem}\label{app:theorem4}
Let the eigenvectors associated with the $K$ smallest eigenvalues of a correlated block affinity matrix $\Tilde{\mathbf{W}}\in\mathbb{R}^{n\times n}$ be column vectors of a matrix ${\Tilde{\mathbf{Y}}=[\Tilde{\mathbf{y}}_0,\dots,\Tilde{\mathbf{y}}_{K-1}]\in\mathbb{R}^{n\times K}}$ where $K$ denotes the number of clusters. The correlations between different blocks results in mapping all feature vectors onto the same location on the column vector $\Tilde{\mathbf{y}}_0$ that is associated with the smallest eigenvalue $\Tilde{\lambda}_0$.
\end{theorem}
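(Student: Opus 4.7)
The plan is to exploit the well-known characterization of the null space of the Laplacian in terms of the connected components of the underlying graph. I would first observe that, independently of the magnitude of the off-block correlations $\Tilde{w}_u$, the all-ones vector $\mathbf{1}$ always lies in the null space of $\Tilde{\mathbf{L}}=\Tilde{\mathbf{D}}-\Tilde{\mathbf{W}}$, because the row sums of $\Tilde{\mathbf{W}}$ coincide with the diagonal entries of $\Tilde{\mathbf{D}}$ by construction. Writing this out, $\Tilde{\mathbf{L}}\mathbf{1}=\Tilde{\mathbf{D}}\mathbf{1}-\Tilde{\mathbf{W}}\mathbf{1}=\mathbf{0}=0\cdot\Tilde{\mathbf{D}}\mathbf{1}$, so for the generalized eigen-problem in Eq.~\eqref{eq:eigenproblem_Laplacian}, the pair $(\Tilde{\lambda}_0,\Tilde{\mathbf{y}}_0)=(0,c\mathbf{1})$ is a valid eigen-pair for some scalar $c$.

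Next, I would invoke Theorem~\ref{app:theorem3}: each newly introduced correlation $\Tilde{w}_u>0$ lifts the second smallest eigenvalue of the Laplacian restricted to the affected blocks strictly above zero, effectively merging two blocks into a single connected component. Iterating this argument over every pair of blocks that Type II outliers tie together, the graph induced by $\Tilde{\mathbf{W}}$ collapses into one connected component. Standard Laplacian theory then implies that $\Tilde{\lambda}_0=0$ is a simple eigenvalue, so that $\Tilde{\mathbf{y}}_0$ is, up to normalization, uniquely determined as a constant vector proportional to $\mathbf{1}$.

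Because every entry of $\Tilde{\mathbf{y}}_0$ is equal, the embedding coordinate $y_{0,i}$ of every feature vector along $\Tilde{\mathbf{y}}_0$ is identical, i.e., all samples collapse onto a single location, which is precisely the claim of the theorem. An identical argument works for the standard eigen-problem of Eq.~\eqref{eq:eigenproblem_LaplacianD1}, since $\mathbf{1}$ is in the kernel of $\Tilde{\mathbf{L}}$ as well.

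The main obstacle is making the connectedness step fully rigorous: Theorem~\ref{app:theorem3} only addresses merging two specific blocks, so one must either add the assumption (arguably already implicit in the phrasing ``correlations between different blocks'') that the Type II outliers link all $k$ blocks into a single component, or, more cautiously, conclude that $\Tilde{\mathbf{y}}_0$ is piecewise constant on connected components. In the latter case the statement still holds componentwise and already shows that $\Tilde{\mathbf{y}}_0$ cannot separate the originally intended clusters, which is the practical message the theorem is used for later in the paper.
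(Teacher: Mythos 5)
Your argument is correct and rests on the same essential fact the paper uses: for the (generalized or standard) eigen-problem, $\mathbf{1}$ lies in the kernel of $\Tilde{\mathbf{L}}$, and once the inter-block correlations of Theorem~\ref{app:theorem3} make the graph connected, $\Tilde{\lambda}_0=0$ becomes simple, forcing $\Tilde{\mathbf{y}}_0$ to be the constant vector and hence collapsing every vertex onto one point. Your closing caveat is well placed — the claim as stated needs the corruptions to tie all blocks into a single component (or must be read componentwise) — and flagging that is the right way to handle the gap between the two-block setting of Theorem~\ref{app:theorem3} and the global statement here.
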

\begin{proof}
\vspace{-0.5mm}
See Appendix A.5 of the supplementary material.
\vspace{-0.5mm}
\end{proof}

The analysis of the effect of outliers on group structures is extended in the following corollary.
\vspace{-0.5mm}
\begin{corollary}
\textit{Let the eigenvectors associated with the $K$ smallest eigenvalues of $\mathbf{W}\in\mathbb{R}^{n\times n}$ and $\Tilde{\mathbf{W}}\in\mathbb{R}^{n\times n}$, respectively, be the column vectors of the matrices ${\mathbf{Y}\in\mathbb{R}^{n\times K}}$ and ${\Tilde{\mathbf{Y}}\in\mathbb{R}^{n\times K}}$ where $K$ denotes the number of clusters. Assuming that the column vectors of ${\mathbf{Y}}$ and $\Tilde{\mathbf{Y}}$ are valued in a range ${\{y_{\mathrm{min}},y_{\mathrm{max}}\}}$, the squared Euclidean distance between row vectors of ${\mathbf{Y}}$ associated with different blocks is greater than that of $\Tilde{\mathbf{Y}}$.}
\end{corollary}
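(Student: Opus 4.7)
The plan is to split both squared Euclidean distances into column-wise contributions and then use Theorem~\ref{app:theorem4} to show that the corrupted distance is strictly smaller. Let $\mathbf{y}_r\in\mathbb{R}^{K}$ and $\Tilde{\mathbf{y}}_r\in\mathbb{R}^{K}$ denote the $r$th row vectors of $\mathbf{Y}$ and $\Tilde{\mathbf{Y}}$, respectively. For row indices $r$ and $s$ that correspond to vertices in different blocks, I would write
\begin{equation*}
\|\mathbf{y}_r-\mathbf{y}_s\|_2^2=\sum_{i=0}^{K-1}(y_{r,i}-y_{s,i})^2,\qquad \|\Tilde{\mathbf{y}}_r-\Tilde{\mathbf{y}}_s\|_2^2=\sum_{i=0}^{K-1}(\Tilde{y}_{r,i}-\Tilde{y}_{s,i})^2,
\end{equation*}
and compare the two sums term by term.

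The first step is to invoke Theorem~\ref{app:theorem4} directly on $\Tilde{\mathbf{y}}_0$: since the inter-block correlation $\Tilde{w}_u>0$ collapses the entire first embedding coordinate to a common location, we have $\Tilde{y}_{r,0}=\Tilde{y}_{s,0}$ for all $r,s$, and the $i=0$ term in $\|\Tilde{\mathbf{y}}_r-\Tilde{\mathbf{y}}_s\|_2^2$ vanishes identically. In contrast, for the block zero-diagonal $\mathbf{W}$ of Theorem~\ref{app:theorem3}, the first $k$ eigenvalues equal zero and the associated eigenvectors can be chosen as an orthonormal basis of block-indicator functions; in particular $y_{r,0}$ and $y_{s,0}$ take distinct values attaining the extremes of the range $[y_{\mathrm{min}},y_{\mathrm{max}}]$, so the $i=0$ contribution to $\|\mathbf{y}_r-\mathbf{y}_s\|_2^2$ is strictly positive.

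For the remaining indices $i\geq 1$, I would use the assumed common range $[y_{\mathrm{min}},y_{\mathrm{max}}]$ to bound each pair $(y_{r,i}-y_{s,i})^2$ and $(\Tilde{y}_{r,i}-\Tilde{y}_{s,i})^2$ by the common constant $(y_{\mathrm{max}}-y_{\mathrm{min}})^2$ and argue that, at this level of comparison, they are of the same order. Combining this with the vanishing of the $i=0$ term in the corrupted case yields the desired strict inequality $\|\mathbf{y}_r-\mathbf{y}_s\|_2^2>\|\Tilde{\mathbf{y}}_r-\Tilde{\mathbf{y}}_s\|_2^2$.

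The main obstacle is making the comparison for $i\geq 1$ fully rigorous, since the eigenvectors $\mathbf{y}_i$ and $\Tilde{\mathbf{y}}_i$ arise from different matrices and are not directly commensurable. I plan to handle this by restricting attention to the regime in which $\Tilde{w}_u$ is small enough that a standard eigenvector perturbation argument guarantees continuity of $\Tilde{\mathbf{y}}_i\to\mathbf{y}_i$ for $i\geq 1$; then the dominant change in the row-wise squared distance comes from the collapse of the first coordinate already captured by Theorem~\ref{app:theorem4}, which is sufficient to conclude.
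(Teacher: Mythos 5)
Your column-wise decomposition and the use of Theorem~\ref{app:theorem4} to annihilate the $i=0$ contribution for $\Tilde{\mathbf{Y}}$ is the right starting point, but two of your steps do not hold up. First, the claim that in the clean case $y_{r,0}$ and $y_{s,0}$ ``take distinct values attaining the extremes of the range'' for any two rows $r,s$ in different blocks is not justified: the zero eigenvalue of a $k$-block Laplacian has a $k$-dimensional eigenspace, and whatever single basis vector you designate as $\mathbf{y}_0$ cannot separate every pair of blocks. Indeed, for the generalized problem $\mathbf{L}\mathbf{y}=\lambda\mathbf{D}\mathbf{y}$ the constant vector is always in the kernel and is the canonical choice for $\mathbf{y}_0$, in which case the $i=0$ term vanishes for $\mathbf{Y}$ exactly as it does for $\Tilde{\mathbf{Y}}$. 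The separation in the ideal case is a property of the full row vectors across all $K$ indicator-type columns, not of the first column alone, so you cannot localize the entire gain at $i=0$.

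Second, the comparison for $i\geq 1$ is an admitted gap, and the proposed repairs do not close it. Bounding both $(y_{r,i}-y_{s,i})^2$ and $(\Tilde{y}_{r,i}-\Tilde{y}_{s,i})^2$ by the common constant $(y_{\mathrm{max}}-y_{\mathrm{min}})^2$ yields no ordering between them, and retreating to a small-$\Tilde{w}_u$ perturbation regime both restricts the corollary (which carries no smallness assumption) and runs into the basis ambiguity of the degenerate zero eigenspace: the perturbed eigenvectors converge to \emph{some} orthonormal basis of that eigenspace, not necessarily the one you fixed for $\mathbf{Y}$, so ``continuity of $\Tilde{\mathbf{y}}_i\to\mathbf{y}_i$'' is not available coordinate by coordinate. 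The intended argument is a comparison of attainable distances under the stated range assumption: for $\mathbf{Y}$ all $K$ coordinates can contribute up to $(y_{\mathrm{max}}-y_{\mathrm{min}})^2$ to the inter-block squared distance, giving $K(y_{\mathrm{max}}-y_{\mathrm{min}})^2$, whereas for $\Tilde{\mathbf{Y}}$ the degenerate column $\Tilde{\mathbf{y}}_0$ contributes zero by Theorem~\ref{app:theorem4}, confining the corrupted distance to at most $(K-1)(y_{\mathrm{max}}-y_{\mathrm{min}})^2$. You should either adopt that bound-based comparison explicitly or supply a genuinely term-by-term argument; your current draft does neither.
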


\begin{proof}
\vspace{-0.5mm}
See Appendix A.6 of the supplementary material.
\vspace{-0.5mm}
\end{proof}

To illustrate the different outlier effects, examples of Fiedler vector estimates are shown for ideal and corrupted affinity matrices in Fig.~\ref{fig:scenario1} 
and Fig.~\ref{fig:scenario2}, respectively. In the ideal case, the vertices of different clusters do not have edges between each other while vertices of the same cluster are connected with strong edges. If such an ideally clustered graph is embedded on the real line using the Fiedler vector, the vertices of the same cluster are concentrated while they are far away from the vertices of a different cluster. Therefore, the embedding results of different clusters are easily separable which is crucial for graph partitioning problems. On the other hand, the corrupted graph in Fig.~\ref{fig:scenario2} includes two typical outlier effects. Based on Theorems~\ref{app:theorem2}-\ref{app:theorem4}, Type I outliers are embedded far from the clusters while Type II outliers that correlate with more than one cluster are embedded between different clusters. In such scenarios, the outliers result in a performance degradation of the Fiedler vector estimate that would lead to losing the group structure information of the graph. However, it is important to see that both types of outliers have a common characteristic: \emph{their overall edge weights deviate from the typical nodes}. This characteristic is leveraged upon in the following section where we propose a robust Fiedler vector estimator.

\vspace{-3mm}
\subsection{Problem Statement}

Given a data set of feature vectors ${\mathbf{X}\hspace{-0.2mm}=\hspace{-0.2mm}[\mathbf{x}_1,\dots,\mathbf{x}_n]\in\mathbb{R}^{m\times n}\hspace{-0.5mm},}$ the aim of this work is to estimate the Fiedler vector ${\mathbf{y}_F\in\mathbb{R}^n}$ such that it embeds each feature vector on a real line, providing robustness at a reasonable computation cost. In the following section, the main ideas of the proposed robust Fiedler vector estimator are explained including an unsupervised penalty parameter selection procedure, an analysis of the computational cost, and possible applications of practical interest.

\vspace{-2.5mm}
\section{Robust Fiedler Vector Estimation}\label{sec:RRLPI}
Let data matrix $\mathbf{X}$ be subject to heavy tailed noise and outliers that obscure the underlying group structure in the graph  $G=\{V,E,\mathbf{W}\}$ that represents $\mathbf{X}$. In the previous section, it was shown that the overall edge weight attached to a vertex is a valuable characteristic of an outlier because it significantly differs from the typical overall edge weight. Thus, the overall edge weight of attached to vertex $i$ is modeled as
\begin{equation}\label{eq:edgeweighterrormodel}
     d_i=d_{\mathrm{typ}}+\epsilon_i.
\end{equation}
Here, $d_i=\sum_j^n w_{i,j}$ and $\epsilon_i$, respectively, denote the overall edge weight and the error term for $i$th vertex, $d_{\mathrm{typ}}$ is the typical overall edge weight of the graph $G$. In practice a robust estimator, such as the median is used, i.e. $\hat{d}_{\mathrm{typ}}=\mathrm{med}(\mathbf{d)}$ for a vector of overall edge weights $\mathbf{d}=[d_1,\dots,d_n]$.
For Fiedler vector estimation, an error vector $\boldsymbol{\epsilon}\in\mathbb{R}^n$ is constructed using the error terms associated with each overall edge weight in $\mathbf{d}$. Then, the transformation vector $\boldsymbol{\beta}_F$ associated with $\mathbf{y}_F$ is computed using penalized ridge regression M-estimation \cite{RRM} by solving the following zero gradient equation
\begin{equation}\label{eq:RRM_estimator}
    -\sum_{i=1}^{n}\psi\Big(\frac{\epsilon_i}{\hat{\sigma}}\Big)\Big(\frac{\mathbf{x}_i^{\top}}{\hat{\sigma}}\Big)+\gamma\boldsymbol{\beta}_F=\mathbf{0},
\end{equation}
where $\gamma$ denotes the penalty parameter, $\hat{\sigma}$ is a robust scale estimate of $\boldsymbol{\epsilon}$ and $\psi$ is a bounded and continuous odd function called the score-function. A popular M-estimator is defined by Huber's function
\begin{equation}
    \psi\Big(\frac{\epsilon_i}{\hat{\sigma}}\Big) = \begin{cases} \frac{\epsilon_i}{\hat{\sigma}}, \hspace{1.58cm}\mathrm{for}\hspace{3mm} \big|\frac{\epsilon_i}{\hat{\sigma}}\big|\leq c \\
        c\cdot\mathrm{sign}\big(\frac{\epsilon_i}{\hat{\sigma}}\big), \hspace{5mm}\mathrm{for}\hspace{3mm} \big|\frac{\epsilon_i}{\hat{\sigma}}\big|>c
    \end{cases},
\end{equation}
where $c$ is the tuning parameter that trades off robustness against outliers and asymptotic relative efficiency under a Gaussian distribution model for $\epsilon$ (see \cite{Robuststatistics} for a discussion). A frequently used robust scale estimate $\hat{\sigma}$ is the normalized median absolute deviation \cite{Robuststatistics} that is defined by
\begin{equation}\label{eq:madn}
     \hat{\sigma}=\mathrm{madn(\mathbf{\boldsymbol{\epsilon}}}) = 1.4826 \cdot\mathrm{med}\vert \boldsymbol{\epsilon}-\mathrm{med}(\boldsymbol{\epsilon})\vert,
\end{equation}
where $\mathrm{med}(\boldsymbol{\epsilon})$ denotes the median of the error vector. The motivation for adopting M-estimation for Fiedler vector estimation is that a bounded score function, such as, Huber's ensures that nodes with atypical edge weights are down-weighted in Eq.~\eqref{eq:RRM_estimator}.

\vspace{-3mm}
\subsection{Theoretical Analysis} Eq.~\eqref{eq:RRM_estimator} in matrix form leads to
\begin{equation}\label{eq:RRM_matrixform}
       \hat{\boldsymbol{\beta}}_F=(\mathbf{X}\boldsymbol{\Omega}\mathbf{X}^{\top}+\gamma\sigma^2\mathbf{I})^{-1}\mathbf{X}\boldsymbol{\Omega}\mathbf{y}_F,
\end{equation}
where $\boldsymbol{\Omega}\in\mathbb{R}^{n\times n}$ is diagonal matrix of weights defined by ${\boldsymbol{\Omega}=\mathrm{diag}(\omega_1,\dots,\omega_n)}$ with $\omega_i=\omega\Big(\frac{\epsilon_i}{\hat{\sigma}}\Big)$ and
\begin{equation}\label{eq:weightfunction}
   \omega\Big(\frac{\epsilon_i}{\hat{\sigma}}\Big)=
   \begin{cases}
   \psi\Big(\frac{\epsilon_i}{\hat{\sigma}}\Big)/\Big(\frac{\epsilon_i}{\hat{\sigma}}\Big) \hspace{5 mm}\mathrm{for}\hspace{3mm} \frac{\epsilon_i}{\hat{\sigma}}\neq0\\
   1, \hspace{2.05cm}\mathrm{for}\hspace{3mm} \frac{\epsilon_i}{\hat{\sigma}}=0.
   \end{cases}
\end{equation}

As discussed in Section~\ref{sec:Preliminaries}, the LPI method uses a linearization of the embedding operation for the Fiedler vector estimation. To understand the relationship between LPI and RRLPI, we first clarify the relation between RLPI and RRLPI.

\begin{theorem}\label{app:theorem5}
RRLPI is a robustly weighted RLPI, and for $\boldsymbol{\Omega}=\mathbf{I}$, it gives identical solutions to RLPI based Fiedler vector estimation.
\end{theorem}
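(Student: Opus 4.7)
The plan is to show that Eq.~\eqref{eq:RRM_estimator} can be rewritten as a weighted normal equation whose closed form is precisely Eq.~\eqref{eq:RRM_matrixform}, and then specialize $\boldsymbol{\Omega}=\mathbf{I}$ to obtain the unweighted ridge-regression expression that defines RLPI \cite{RLPI}.

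First, I would substitute the pseudo-weight identity $\psi(\epsilon_i/\hat{\sigma})=\omega_i\,(\epsilon_i/\hat{\sigma})$, which follows directly from the definition in Eq.~\eqref{eq:weightfunction}, into the M-estimating equation Eq.~\eqref{eq:RRM_estimator}. Reading $\epsilon_i$ as the regression discrepancy $y_{F,i}-\boldsymbol{\beta}_F^{\top}\mathbf{x}_i$ inside the penalized ridge-regression M-estimation framework, pulling out the $1/\hat{\sigma}^{2}$ factor, and multiplying through by $\hat{\sigma}^{2}$ converts the non-linear score condition into the linear weighted normal equation
\[
\sum_{i=1}^{n}\omega_i(y_{F,i}-\boldsymbol{\beta}_F^{\top}\mathbf{x}_i)\mathbf{x}_i=\gamma\hat{\sigma}^{2}\boldsymbol{\beta}_F.
\]

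Next, I would assemble the sums into matrix products using $\boldsymbol{\Omega}=\mathrm{diag}(\omega_1,\dots,\omega_n)$ and $\mathbf{X}=[\mathbf{x}_1,\dots,\mathbf{x}_n]$. The left-hand side becomes $\mathbf{X}\boldsymbol{\Omega}\mathbf{y}_F-\mathbf{X}\boldsymbol{\Omega}\mathbf{X}^{\top}\boldsymbol{\beta}_F$, so rearranging gives
\[
\bigl(\mathbf{X}\boldsymbol{\Omega}\mathbf{X}^{\top}+\gamma\hat{\sigma}^{2}\mathbf{I}\bigr)\boldsymbol{\beta}_F=\mathbf{X}\boldsymbol{\Omega}\mathbf{y}_F.
\]
Because the term $\gamma\hat{\sigma}^{2}\mathbf{I}$ ensures invertibility of the left-hand matrix, inverting recovers exactly Eq.~\eqref{eq:RRM_matrixform}. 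This establishes that RRLPI's closed form is structurally identical to RLPI's ridge-regression solution but with the Gram matrix $\mathbf{X}\mathbf{X}^{\top}$ replaced by $\mathbf{X}\boldsymbol{\Omega}\mathbf{X}^{\top}$ and the cross-term $\mathbf{X}\mathbf{y}_F$ replaced by $\mathbf{X}\boldsymbol{\Omega}\mathbf{y}_F$; hence RRLPI is a weighted RLPI in which outlier-affected vertices are down-weighted by the Huber weight $\omega_i\in[0,1]$.

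Finally, substituting $\boldsymbol{\Omega}=\mathbf{I}$ into Eq.~\eqref{eq:RRM_matrixform} sets all $\omega_i=1$ and collapses the expression to
\[
\hat{\boldsymbol{\beta}}_F=(\mathbf{X}\mathbf{X}^{\top}+\gamma\hat{\sigma}^{2}\mathbf{I})^{-1}\mathbf{X}\mathbf{y}_F,
\]
which is precisely the RLPI closed-form estimator with effective penalty $\gamma\hat{\sigma}^{2}$, so the two solve the same linear system and return identical $\hat{\boldsymbol{\beta}}_F$. The main obstacle is bookkeeping rather than any deep argument: one must clearly identify the residual interpretation of $\epsilon_i$ so that differentiation with respect to $\boldsymbol{\beta}_F$ produces the $-\mathbf{x}_i$ factor in Eq.~\eqref{eq:RRM_estimator}, and carry the $\hat{\sigma}^{2}$ factor through the penalty consistently; once these are fixed, the reduction is a direct IRLS manipulation.
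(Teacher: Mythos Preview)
Your proposal is correct and the reduction is exactly the intended one: once Eq.~\eqref{eq:RRM_matrixform} is on the table, substituting $\boldsymbol{\Omega}=\mathbf{I}$ collapses it to the ridge-regression closed form $(\mathbf{X}\mathbf{X}^{\top}+\gamma\hat{\sigma}^{2}\mathbf{I})^{-1}\mathbf{X}\mathbf{y}_F$ that defines RLPI, which is the content of the theorem.

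One clarification on the point you flag as the ``main obstacle'': in this paper $\epsilon_i$ is \emph{not} the regression residual but the edge-weight deviation $d_i-d_{\mathrm{typ}}$ of Eq.~\eqref{eq:edgeweighterrormodel}, and the weights $\omega_i$ are computed once from these fixed quantities (steps 6--8 of Algorithm~1) rather than iteratively from $y_{F,i}-\boldsymbol{\beta}_F^{\top}\mathbf{x}_i$. The closed form \eqref{eq:RRM_matrixform} is then simply the normal equation of the weighted ridge problem $\min_{\boldsymbol{\beta}}\sum_i\omega_i(y_{F,i}-\boldsymbol{\beta}^{\top}\mathbf{x}_i)^2+\gamma\hat{\sigma}^{2}\|\boldsymbol{\beta}\|_2^2$ with those precomputed weights. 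Your IRLS-style derivation reaches the same expression and is the standard route from \eqref{eq:RRM_estimator} to \eqref{eq:RRM_matrixform}, so nothing in the theorem's proof changes; just be aware that the paper decouples the weight computation from the regression rather than treating $\epsilon_i$ as $\boldsymbol{\beta}_F$-dependent.
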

\begin{proof}
\vspace{-0.75mm}
See Appendix B.1 of the supplementary material.
\vspace{-0.75mm}
\end{proof}
From Theorem~\ref{app:theorem5}, it follows that for $\gamma>0$ and/or $\boldsymbol{\Omega}\neq\mathbf{I}$, the estimated tranformation vector $\hat{\boldsymbol{\beta}}_F$ is not the eigenvector of the eigen-problem in Eq.~\eqref{eq:eigenproblem_proposed} which means that it is not associated with the Fiedler value. However, the following theorem shows in which cases $\boldsymbol{\beta}_F$ exactly gives the eigenvector of eigen-problem in Eq.~\eqref{eq:eigenproblem_proposed}.
\vspace{-0.5mm}
\begin{theorem}\label{app:theorem6}
Suppose $\mathbf{y}_F$ is the Fiedler vector associated with the second largest eigenvalue of the eigen-problem in Eq.~\eqref{eq:eigenproblem_proposed}. Further, let $\boldsymbol{\Omega}\in\mathbb{R}^{n\times n}$ and $\boldsymbol{\Psi}\in\mathbb{R}^{m\times m}$ be two weighting matrices such that $\mathbf{U}^{\top}\boldsymbol{\Psi}\mathbf{U}=\mathbf{I}$ and $\mathbf{V}^{\top}\boldsymbol{\Omega}\mathbf{V}=\mathbf{I}$. If $\mathbf{y}_F$ is in the space spanned by row vectors of the weighted data matrix $\mathbf{X}^{\ast}$, for $\mathbf{X}^{\ast}=\mathbf{X}\boldsymbol{\Omega}$, the corresponding transformation vector $\hat{\boldsymbol{\beta}}_F$ estimated with RRLPI is
the eigenvector of the eigen-problem in Eq.~\eqref{eq:eigenproblem_proposed} as $\gamma$ decreases to zero.
\end{theorem}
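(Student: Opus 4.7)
The plan is to reduce the whole statement to the clean observation, used already in the proof of Theorem~1, that any $\boldsymbol{\beta}$ satisfying $\mathbf{X}^{\top}\boldsymbol{\beta}=\mathbf{y}_F$ automatically obeys
\[
\mathbf{X}\mathbf{W}\mathbf{X}^{\top}\boldsymbol{\beta}=\mathbf{X}\mathbf{W}\mathbf{y}_F=\lambda_F\mathbf{X}\mathbf{D}\mathbf{y}_F=\lambda_F\mathbf{X}\mathbf{D}\mathbf{X}^{\top}\boldsymbol{\beta},
\]
i.e.\ it is an eigenvector of Eq.~\eqref{eq:eigenproblem_proposed} with eigenvalue $\lambda_F$. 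The whole task therefore collapses to proving that the $\gamma\to 0$ limit of the RRLPI estimator~\eqref{eq:RRM_matrixform} preserves the relation $\mathbf{X}^{\top}\hat{\boldsymbol{\beta}}_F=\mathbf{y}_F$.

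First, I would take the limit in Eq.~\eqref{eq:RRM_matrixform} and justify it via the weighted SVD $\mathbf{X}=\mathbf{U}\boldsymbol{\Sigma}\mathbf{V}^{\top}$ with $\mathbf{U}^{\top}\boldsymbol{\Psi}\mathbf{U}=\mathbf{I}$ and $\mathbf{V}^{\top}\boldsymbol{\Omega}\mathbf{V}=\mathbf{I}$. These conditions are exactly what makes the Gram matrix collapse to $\mathbf{X}\boldsymbol{\Omega}\mathbf{X}^{\top}=\mathbf{U}\boldsymbol{\Sigma}\mathbf{V}^{\top}\boldsymbol{\Omega}\mathbf{V}\boldsymbol{\Sigma}\mathbf{U}^{\top}=\mathbf{U}\boldsymbol{\Sigma}^{2}\mathbf{U}^{\top}$, with $\boldsymbol{\Psi}\mathbf{U}$ serving as the left inverse of $\mathbf{U}^{\top}$. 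Provided the singular values are strictly positive, this matrix is invertible, $\gamma\sigma^{2}\mathbf{I}$ becomes a vanishing perturbation, and the limit $\hat{\boldsymbol{\beta}}_F=(\mathbf{X}\boldsymbol{\Omega}\mathbf{X}^{\top})^{-1}\mathbf{X}\boldsymbol{\Omega}\mathbf{y}_F$ exists. If the singular spectrum is deficient, I would instead take the Moore--Penrose limit and invoke the row-space assumption to guarantee a consistent solution.

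Second, I would translate the row-space hypothesis into algebra. Since $\mathbf{y}_F$ lies in the span of the rows of $\mathbf{X}^{\ast}=\mathbf{X}\boldsymbol{\Omega}$, it can be written in the weighted right-singular basis as $\mathbf{y}_F=\boldsymbol{\Omega}\mathbf{V}\mathbf{c}$ for some $\mathbf{c}\in\mathbb{R}^{m}$. Plugging this into $\mathbf{X}\boldsymbol{\Omega}\mathbf{y}_F$, the key cancellation $\mathbf{V}^{\top}\boldsymbol{\Omega}\mathbf{V}=\mathbf{I}$ removes the inner weight factor and yields $\mathbf{X}\boldsymbol{\Omega}\mathbf{y}_F=\mathbf{U}\boldsymbol{\Sigma}\mathbf{c}$, while on the other side $(\mathbf{X}\boldsymbol{\Omega}\mathbf{X}^{\top})^{-1}=\mathbf{U}\boldsymbol{\Sigma}^{-2}\mathbf{U}^{\top}$ after using $\mathbf{U}^{\top}\boldsymbol{\Psi}\mathbf{U}=\mathbf{I}$ to form the inverse in the weighted basis. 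Multiplying the two gives $\hat{\boldsymbol{\beta}}_F$ in closed form, and a direct computation of $\mathbf{X}^{\top}\hat{\boldsymbol{\beta}}_F$ uses $\mathbf{V}^{\top}\boldsymbol{\Omega}\mathbf{V}=\mathbf{I}$ once more to reproduce precisely $\boldsymbol{\Omega}\mathbf{V}\mathbf{c}=\mathbf{y}_F$. Feeding this identity into the Theorem~1 chain finishes the proof.

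The hard part is Step~2: a naive substitution produces a leftover $\boldsymbol{\Omega}^{2}$ factor between the inverse Gram matrix and $\mathbf{y}_F$, so one cannot cancel by inspection as in the unweighted setting. Getting the correct cancellation is exactly where the hypotheses $\mathbf{U}^{\top}\boldsymbol{\Psi}\mathbf{U}=\mathbf{I}$ and $\mathbf{V}^{\top}\boldsymbol{\Omega}\mathbf{V}=\mathbf{I}$ earn their keep, and it is essential that the row-space condition is stated for $\mathbf{X}^{\ast}=\mathbf{X}\boldsymbol{\Omega}$ (so that $\mathbf{y}_F$ is representable via the weighted right-singular vectors) rather than for the unweighted $\mathbf{X}$. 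I would therefore carry out the whole cancellation explicitly in the $(\mathbf{U},\boldsymbol{\Sigma},\mathbf{V})$ coordinate system, noting that the orthogonality-to-the-weight conditions are consumed one at a time: one copy of $\mathbf{V}^{\top}\boldsymbol{\Omega}\mathbf{V}=\mathbf{I}$ to simplify the Gram matrix, a second to simplify $\mathbf{X}\boldsymbol{\Omega}\mathbf{y}_F$, and $\mathbf{U}^{\top}\boldsymbol{\Psi}\mathbf{U}=\mathbf{I}$ to legitimize the inversion step.
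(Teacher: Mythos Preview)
Your high-level plan---reduce everything to showing $\mathbf{X}^{\top}\hat{\boldsymbol{\beta}}_F=\mathbf{y}_F$ in the limit $\gamma\to 0$ and then invoke the Theorem~1 chain---is the right skeleton, and working in a weighted factorization $\mathbf{X}=\mathbf{U}\boldsymbol{\Sigma}\mathbf{V}^{\top}$ subject to $\mathbf{V}^{\top}\boldsymbol{\Omega}\mathbf{V}=\mathbf{I}$ and $\mathbf{U}^{\top}\boldsymbol{\Psi}\mathbf{U}=\mathbf{I}$ is the natural vehicle. However, the two concrete identities you assert in Step~2 do not follow from the hypotheses, and the $\boldsymbol{\Omega}^{2}$ obstruction you correctly flag is not actually removed by your argument.

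Specifically: (i) the claim $(\mathbf{X}\boldsymbol{\Omega}\mathbf{X}^{\top})^{-1}=\mathbf{U}\boldsymbol{\Sigma}^{-2}\mathbf{U}^{\top}$ would need $\mathbf{U}^{\top}\mathbf{U}=\mathbf{I}$, which you do not have; from $\mathbf{U}^{\top}\boldsymbol{\Psi}\mathbf{U}=\mathbf{I}$ the correct inverse (in the square case) is $\boldsymbol{\Psi}\mathbf{U}\boldsymbol{\Sigma}^{-2}\mathbf{U}^{\top}\boldsymbol{\Psi}$. (ii) With $\mathbf{y}_F=\boldsymbol{\Omega}\mathbf{V}\mathbf{c}$ one gets $\mathbf{X}\boldsymbol{\Omega}\mathbf{y}_F=\mathbf{U}\boldsymbol{\Sigma}(\mathbf{V}^{\top}\boldsymbol{\Omega}^{2}\mathbf{V})\mathbf{c}$, not $\mathbf{U}\boldsymbol{\Sigma}\mathbf{c}$; a single copy of $\mathbf{V}^{\top}\boldsymbol{\Omega}\mathbf{V}=\mathbf{I}$ cannot collapse $\boldsymbol{\Omega}^{2}$. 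Propagating both corrections gives
\[
\mathbf{X}^{\top}\hat{\boldsymbol{\beta}}_F=\mathbf{V}\mathbf{V}^{\top}\boldsymbol{\Omega}\,\mathbf{y}_F,
\]
and substituting $\mathbf{y}_F=\boldsymbol{\Omega}\mathbf{V}\mathbf{c}$ leaves $\mathbf{V}(\mathbf{V}^{\top}\boldsymbol{\Omega}^{2}\mathbf{V})\mathbf{c}\neq\mathbf{y}_F$ in general. By contrast, if you represent $\mathbf{y}_F$ as $\mathbf{V}\mathbf{d}$ (i.e.\ in the column span of $\mathbf{V}$, the row space of the \emph{unweighted} $\mathbf{X}$), then $\mathbf{V}\mathbf{V}^{\top}\boldsymbol{\Omega}\mathbf{V}\mathbf{d}=\mathbf{V}\mathbf{d}=\mathbf{y}_F$ using $\mathbf{V}^{\top}\boldsymbol{\Omega}\mathbf{V}=\mathbf{I}$ exactly once, and the argument closes. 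So the crux is not a bookkeeping issue but a genuine mismatch: the cancellation you need is available for $\mathbf{y}_F\in\mathrm{col}(\mathbf{V})$, not for $\mathbf{y}_F\in\boldsymbol{\Omega}\,\mathrm{col}(\mathbf{V})$. You must either argue that the stated row-space hypothesis on $\mathbf{X}^{\ast}=\mathbf{X}\boldsymbol{\Omega}$ in fact places $\mathbf{y}_F$ in $\mathrm{col}(\mathbf{V})$ (this holds, e.g., in the full-rank regime of Corollary~\ref{app:corollary6.1}, or whenever $\mathrm{col}(\mathbf{V})$ is $\boldsymbol{\Omega}$-invariant), or rework the representation so that only a single $\boldsymbol{\Omega}$ sits between $\mathbf{V}^{\top}$ and $\mathbf{V}$ at the decisive step. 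As written, Step~2 asserts cancellations the hypotheses do not supply.
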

\begin{proof}
\vspace{-0.75mm}
See Appendix B.2 of the supplementary material.
\vspace{-0.75mm}
\end{proof}

Based on Theorem~\ref{app:theorem6}, the estimated transformation vector $\hat{\boldsymbol{\beta}}_F$ is the solution of Eq.~\eqref{eq:eigenproblem_proposed} for $\gamma\rightarrow0$, and $\mathbf{U}^{\top}\boldsymbol{\Psi}\mathbf{U}=\mathbf{I}$, $\mathbf{V}^{\top}\boldsymbol{\Omega}\mathbf{V}=\mathbf{I}$. To understand the relationship between RRLPI and LPI, the results of this theorem are extended for all transformation vectors $\hat{\boldsymbol{\beta}}_i\in[\hat{\boldsymbol{\beta}}_0,\dots,\hat{\boldsymbol{\beta}}_{n-1}]$ for the case that the feature space $m$ is greater than the number of feature vectors $n$ and the feature vectors are linearly independent, i.e. $\mathrm{rank}(\mathbf{X})=n$.
\vspace{-0.5mm}
\begin{corollary}\label{app:corollary6.1}
\textit{If the feature vectors are linearly independent, i.e. $\mathrm{rank}(\mathbf{X})=n$, all transformation vectors are solutions of Eq.~\eqref{eq:eigenproblem_proposed} for $\gamma\rightarrow0$, and $\mathbf{U}^{\top}\boldsymbol{\Psi}\mathbf{U}=\mathbf{I}$, $\mathbf{V}^{\top}\boldsymbol{\Omega}\mathbf{V}=\mathbf{I}$ which means that RRLPI is identical to LPI.}
\end{corollary}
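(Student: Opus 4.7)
The plan is to reduce the corollary to Theorem~\ref{app:theorem6} by observing that the full-rank hypothesis on $\mathbf{X}$ removes the lone restrictive assumption of that theorem, namely that $\mathbf{y}_F$ must lie in the row space of the weighted data matrix $\mathbf{X}^{\ast}$. Under the premise $\mathrm{rank}(\mathbf{X})=n$ with $m\geq n$, the $n$ rows of $\mathbf{X}$ span all of $\mathbb{R}^n$, so the row space of $\mathbf{X}$ equals $\mathbb{R}^n$. Since the weight matrix $\boldsymbol{\Omega}=\mathrm{diag}(\omega_1,\dots,\omega_n)$ is a diagonal matrix with strictly positive entries (Huber's weight function from Eq.~\eqref{eq:weightfunction} returns values in $(0,1]$), $\boldsymbol{\Omega}$ is invertible. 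Hence $\mathbf{X}^{\ast}=\mathbf{X}\boldsymbol{\Omega}$ also has rank $n$, and its row space is again the whole of $\mathbb{R}^n$.

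From here the argument is a direct sweep over the eigen-index. Every eigenvector $\mathbf{y}_i\in\mathbb{R}^n$ of the generalized eigen-problem in Eq.~\eqref{eq:eigenproblem_proposed}, for $i=0,\dots,n-1$, automatically lies in the row space of $\mathbf{X}^{\ast}$ by the previous step. Applying Theorem~\ref{app:theorem6} individually to each such $\mathbf{y}_i$ (the theorem is phrased for the Fiedler vector but its proof only uses that the target lies in the row space of $\mathbf{X}^{\ast}$), we conclude that under the stated normalization conditions $\mathbf{U}^{\top}\boldsymbol{\Psi}\mathbf{U}=\mathbf{I}$ and $\mathbf{V}^{\top}\boldsymbol{\Omega}\mathbf{V}=\mathbf{I}$, and as $\gamma\rightarrow 0$, the RRLPI transformation vector $\hat{\boldsymbol{\beta}}_i$ associated with $\mathbf{y}_i$ is an eigenvector of Eq.~\eqref{eq:eigenproblem_proposed}. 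Collecting all $i$, the full set $\{\hat{\boldsymbol{\beta}}_0,\dots,\hat{\boldsymbol{\beta}}_{n-1}\}$ diagonalizes the same pencil $(\mathbf{X}\mathbf{W}\mathbf{X}^{\top},\mathbf{X}\mathbf{D}\mathbf{X}^{\top})$ that defines LPI, with the same eigenvalues, which is precisely the statement that RRLPI reduces to LPI.

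The step I expect to be the main obstacle is the bookkeeping at the end of the previous paragraph, namely verifying that nothing in the proof of Theorem~\ref{app:theorem6} is specific to the index $F=n-2$ and that the orthogonality conditions $\mathbf{U}^{\top}\boldsymbol{\Psi}\mathbf{U}=\mathbf{I}$, $\mathbf{V}^{\top}\boldsymbol{\Omega}\mathbf{V}=\mathbf{I}$ on the SVD factors of $\mathbf{X}^{\ast}$ remain compatible simultaneously for every $i$, since $\mathbf{U}$ and $\mathbf{V}$ are determined by $\mathbf{X}^{\ast}$ (and thus by $\boldsymbol{\Omega}$) rather than by the choice of eigenvector. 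Once this is spelled out, the conclusion is immediate: the limiting $\gamma\rightarrow 0$ system of normal equations that RRLPI solves becomes, in the full-rank case, exactly the LPI eigen-problem of Eq.~\eqref{eq:eigenproblem_proposed}, with no residual $\ell_2$ penalty or weighting artifact left over.
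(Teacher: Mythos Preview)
Your proposal is correct and follows the natural route the paper sets up: the corollary is designed to drop out of Theorem~\ref{app:theorem6} once the row-space hypothesis becomes vacuous, and you identify precisely why $\mathrm{rank}(\mathbf{X})=n$ together with the invertibility of $\boldsymbol{\Omega}$ makes the row space of $\mathbf{X}^{\ast}=\mathbf{X}\boldsymbol{\Omega}$ equal to all of $\mathbb{R}^n$, so that every $\mathbf{y}_i$ trivially satisfies the hypothesis of the theorem. Your own flagged ``obstacle'' is not really one: the SVD factors $\mathbf{U},\mathbf{V}$ and hence the normalization conditions $\mathbf{U}^{\top}\boldsymbol{\Psi}\mathbf{U}=\mathbf{I}$, $\mathbf{V}^{\top}\boldsymbol{\Omega}\mathbf{V}=\mathbf{I}$ depend only on $\mathbf{X}^{\ast}$ and not on which eigenvector $\mathbf{y}_i$ is being targeted, so a single choice works uniformly across all indices $i$, and the argument of Theorem~\ref{app:theorem6} uses the Fiedler index only as a label.
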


\begin{proof}
\vspace{-0.75mm}
See Appendix B.3 of the supplementary material.
\vspace{-0.75mm}
\end{proof}
\vspace{-3mm}
\subsection{$\Delta$-Separated Sets for Penalty Parameter Selection}
\vspace{-0.5mm}
The geometric structure of well-spread $\ell_2^2$-representations shows that the two sets $\mathrm{\mathbf{s}}$ and $\mathrm{\mathbf{t}}$ are well separated if every pair of points $s_i \in \mathrm{\mathbf{s}}$ and $t_j \in \mathrm{\mathbf{t}} $ are mapped at least ${\Delta=\phi(1/\mathrm{log}^{-2/3}n)}$ apart in $\ell_2^2$ distance \cite{arora2009expander}. Inspired by well-spread $\ell_2^2$-representations, we propose a penalty parameter selection algorithm by projecting graph vertices onto a real line using RRLPI-based Fiedler vector estimation such that every pair of two sets $s_i \in \mathrm{\mathbf{s}}$ and $t_j \in \mathrm{\mathbf{t}}$ is at least $\Delta=\phi(1/\mathrm{log}^{-2/3}n)$ apart in $\ell_2^2$ distance for the estimated penalty paramater.

 \begin{figure}[!tbp]
\begin{minipage}[b]{1.0\linewidth}
  \centering
  \includegraphics[trim={0 3.9mm 0 3mm},clip,width=8.5cm]{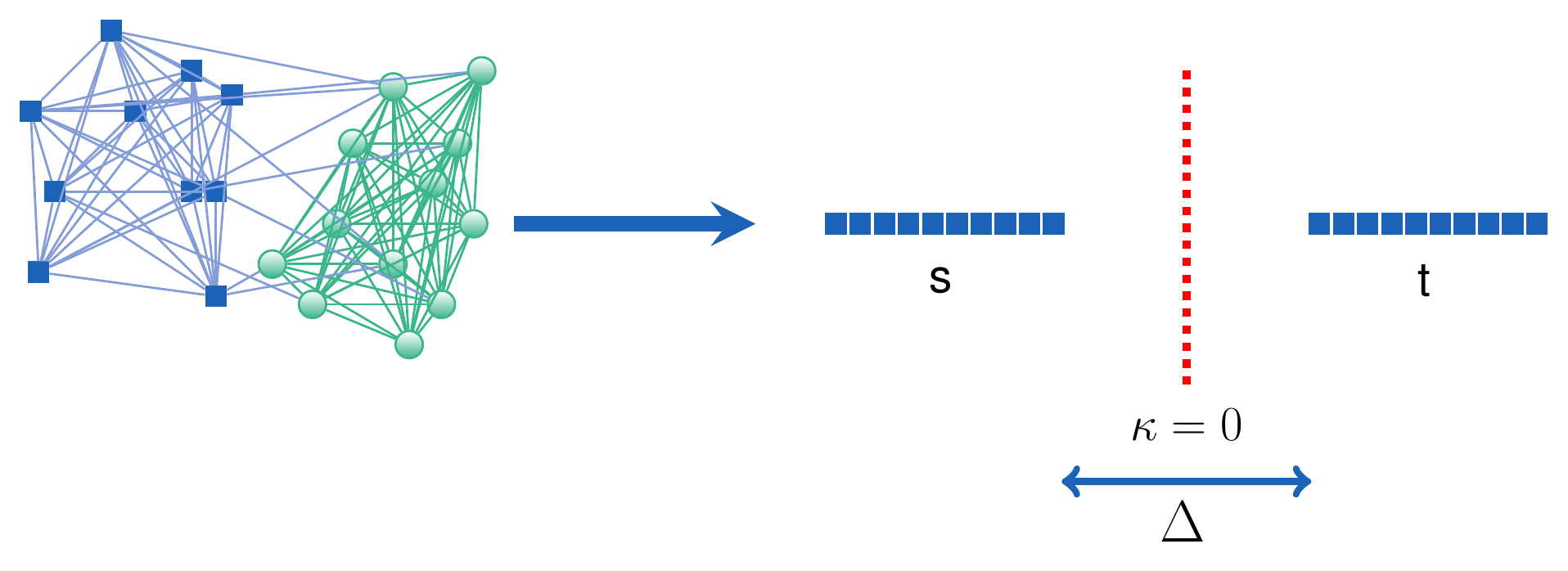}
\end{minipage}
\caption{Example of $\Delta$-separated sets $\mathbf{s}$ and $\mathbf{t}$}
\label{fig:penaltyparameter}
\vspace{-1mm}
\end{figure}

Let $\gamma_i\in\boldsymbol{\gamma}$ be the $i$th candidate penalty parameter in Eq.~\eqref{eq:RRM_matrixform} from a given vector of candidate penalty parameters ${\boldsymbol{\gamma}= [\gamma_{\mathrm{min}},\dots,\gamma_{\mathrm{max}}]\in\mathbb{R}^{N}}$. Further, suppose that for each candidate penalty parameter $\gamma_i$, there exists an associated Fiedler vector estimate $\hat{\mathbf{y}}_F^{(\gamma_i)}$ that projects the graph vertices onto the real line. The geometric structure of well-spread $\ell_2^2$-representations allows for designing the sets $\mathrm{\mathbf{s}}$ and $\mathrm{\mathbf{t}}$ by projecting the points on a random line such that, for a suitable constant $\kappa$, the points that are located on the left and right hand sides of $\kappa$ are the initial candidates for the sets $\mathrm{\mathbf{s}}$ and $\mathrm{\mathbf{t}}$, respectively \cite{arora2009expander}.

\begin{algorithm}[tbp!]
\setstretch{1}
\DontPrintSemicolon
\KwIn{
A data matrix $\mathbf{X}\in \mathbb{R}^{m\times n}$ and an associated affinity matrix $\mathbf{W}\in \mathbb{R}^{n\times n}$, $N_{\mathrm{min}}$ }
\For{$\gamma_i=\gamma_{\mathrm{min}},\dots,\gamma_{\mathrm{max}}$}{
\textbf{Initialization:}\\

Evaluate the Fiedler vector $\mathbf{y}_F\in\mathbb{R}^{n}$ as in \cite{Laplacianeigenmaps}\\
Compute $\boldsymbol{\beta}_F\in\mathbb{R}^{m}$ for $\mathbf{y}_F=\mathbf{X}^{\top}\boldsymbol{\beta}_F$\\
\textbf{RRLPI}\\
Update the error vector $\boldsymbol{\epsilon}\in\mathbb{R}^{n}$ using Eq.~\eqref{eq:edgeweighterrormodel}\\
Compute $\hat{\sigma}$ via Eq.~\eqref{eq:madn}\\
Calculate the weights $\omega_i=\omega(\frac{\epsilon_i}{\hat{\sigma}})$, $\boldsymbol{\Omega}=\mathrm{diag}(\boldsymbol{\omega})$\\
Solve Eq.~\eqref{eq:RRM_matrixform} and estimate $\hat{\boldsymbol{\beta}}_F^{(\gamma_i)}$\\
Estimate the Fiedler vector for $\hat{\mathbf{y}}_F^{(\gamma_i)}=\mathbf{X}^{\top}\hat{\boldsymbol{\beta}}_F^{(\gamma_i)}$\\ 
\textbf{$\Delta$-separated sets}\\
Generate sets $\mathbf{s}^{(\gamma_i)}$ and $\mathbf{t}^{(\gamma_i)}$ via Eq.~\eqref{eq:createsets}\\
Calculate $\|\bar{y}_{\mathrm{min},F}^{(\gamma_i)}-\bar{y}_{\mathrm{max},F}^{(\gamma_i)} \|_2^2$ s.t. $\bar{y}_{\mathrm{min},F}^{(\gamma_i)}\in \mathrm{\mathbf{s}}$ and\\ $\bar{y}_{\mathrm{max},F}^{(\gamma_i)}\in \mathrm{\mathbf{t}}$ and collect in a vector $\mathbf{z}\in\mathbb{R}^N$\\
\While{$N_{\mathbf{s}}\geq N_{\mathrm{min}}\hspace{2.5mm}\mathrm{and}\hspace{2.5mm} N_{\mathbf{t}}\geq N_{\mathrm{min}}$}{
Create $\mathbf{r}\in\mathbb{R}^{N_\mathbf{r}^{(\gamma_i)}}$ using Eq.~\eqref{eq:residualvec}\\
Update $N_\mathbf{r}^{(\gamma_i)}$\\
\uIf{$\mathbf{s}^{(\gamma_i)}$ and $\mathbf{t}^{(\gamma_i)}$ are $\Delta$-\textnormal{separated} }{break\\}\vspace{-1mm}
\textbf{end}
}\vspace{-1mm}Collect $N_\mathbf{r}^{(\gamma_i)}$ into a vector $\mathbf{h}\in\mathbb{R}^N$}
\eIf{at least one pair of $\Delta$-separated sets exist}{Estimate $\hat{\gamma}$ using Eq.~\eqref{eq:estimategamma}\\}
{ Estimate $\hat{\gamma}$ using Eq.~\eqref{eq:estimategammaalternative}}

Estimate transformation vector $\hat{\boldsymbol{\beta}}_F^{\hat{\gamma}}$ in Eq.~\eqref{eq:RRM_matrixform} for $\hat{\gamma}$ \\
Estimate the Fiedler vector for $\hat{\mathbf{y}}_F^{(\hat{\gamma})}=\mathbf{X}^{\top}\hat{\boldsymbol{\beta}}_F^{(\hat{\gamma})}$\\
\caption{Robust Fiedler vector estimation\label{IR}}
\KwOut{A robust estimate of the Fiedler vector $\hat{\mathbf{y}}_F^{(\hat{\gamma})}$}
\end{algorithm}

It has been shown (see, e.g. \cite{spielman2007spectral}) that it is possible to split a candidate Fiedler vector $\hat{\mathbf{y}}_F^{(\gamma_i)}$ into the two subsets $\mathrm{\mathbf{s}}^{(\gamma_i)}$ and $\mathrm{\mathbf{t}}^{(\gamma_i)}$ for $\kappa=0$. Another possible option for $\kappa$ is the median of embeddings such that $\kappa=\mathrm{med}(\hat{\mathbf{y}}_F^{(\gamma_i)})$. From the definition of the $\Delta$-separated sets, the projection subsets $\mathrm{\mathbf{s}}^{(\gamma_i)}$ and $\mathrm{\mathbf{t}}^{(\gamma_i)}$ associated with $\gamma_i$ taking values between zero and one. Therefore, after selecting the members of the two sets $\mathrm{\mathbf{s}}^{(\gamma_i)}\in\mathbb{R}^{N_\mathbf{s}}$ and $\mathrm{\mathbf{t}}^{(\gamma_i)}\in\mathbb{R}^{N_\mathbf{t}}$ associated with $\gamma_i$, the final design of the sets $\mathrm{\mathbf{s}}^{(\gamma_i)}$ and $\mathrm{\mathbf{t}}^{(\gamma_i)}$ is performed using the rescaled estimated Fiedler vector $\bar{\mathbf{y}}^{(\gamma_i)}$ as
\begin{equation}\label{eq:createsets}
\begin{split}
\begin{aligned}
    \mathrm{\mathbf{s}}^{(\gamma_i)}&=\big\{\bar{y}_{j,F}^{(\gamma_i)}: \hat{y}_{j,F}^{(\gamma_i)}>\kappa\big\}\\
    \mathrm{\mathbf{t}}^{(\gamma_i)}&=\big\{\bar{y}_{j,F}^{(\gamma_i)}: \hat{y}_{j,F}^{(\gamma_i)}\leq\kappa\big\}.
\end{aligned}
 \end{split}
\end{equation}
Here, $\hat{y}_{j,F}^{(\gamma_i)}$ denotes the $j$th element of the estimated Fiedler vector $\hat{\mathbf{y}}_F^{(\gamma_i)}$ and  $\bar{y}_{j,F}^{(\gamma_i)}$ is the $j$th element of the rescaled estimated Fiedler vector $\bar{\mathbf{y}}_F^{(\gamma_i)}$. If the rescaled Fiedler vector $\bar{\mathbf{y}}_F^{(\gamma_i)}$ is not sufficiently sparse, it contains pairs of points $\bar{y}_{i,F}^{(\gamma_i)}\in \mathrm{\mathbf{s}}$ and $\bar{y}_{j,F}^{(\gamma_i)}\in \mathrm{\mathbf{t}}$ whose squared Euclidean distance is less than $\Delta$. Thus, for a set of pairs of projections $\bar{y}_{i,F}^{(\gamma_i)}\in \mathrm{\mathbf{s}}$ and $\bar{y}_{j,F}^{(\gamma_i)}\in \mathrm{\mathbf{t}}$, a vector of discarded projections $\mathbf{r}^{(\gamma_i)}\in\mathbb{R}^{N_{\mathbf{r}}^{(\gamma_i)}}$ is designed as
\begin{equation}\label{eq:residualvec}
    \mathrm{\mathbf{r}}^{(\gamma_i)}=\Big\{\bar{y}_{i,F}^{(\gamma_i)},\bar{y}_{j,F}^{(\gamma_i)}:   \|\bar{y}_{i,F}^{(\gamma_i)}-\bar{y}_{j,F}^{(\gamma_i)} \|_2^2\leq\Delta\Big\},
\end{equation}
as long as the two sets $\mathrm{\mathbf{s}}^{(\gamma_i)}$ and $\mathrm{\mathbf{t}}^{(\gamma_i)}$ have a reasonable number $N_{\mathrm{min}}$ of projections. The proposed strategy to estimate penalty parameter $\gamma$ is, therefore, to minimize the number of discarded points i.e., 
\begin{equation}\label{eq:estimategamma}
\hat{\gamma}=\underset{\gamma_i=\gamma_{\mathrm{min}},\dots,\gamma_{\mathrm{max}}}{\arg\min} \{N_{\mathbf{r}}^{(\gamma_i)}\},
\end{equation}
where $N_{\mathbf{r}}^{(\gamma_i)}$ denotes the number of discarded projections for candidate penalty parameter ${\gamma_i}$, and $\hat{\gamma}$ is the estimated penalty parameter. In practice, there might not exist $\Delta$-separated sets 
$\mathrm{\mathbf{s}}^{(\gamma_i)}$ and $\mathrm{\mathbf{t}}^{(\gamma_i)}$ for any candidate penalty parameter such that $\gamma_i\in\{\gamma_{\mathrm{min}},\dots,\gamma_{\mathrm{max}}\}$. For example, the sets might not be $\Delta$-separated, although $N_{\mathbf{r}}^{(\gamma_i)}$ has reached its maximum value. Additionally, the initial sets may not satisfy $N_{\mathbf{s}}<N_{\mathrm{min}}$ or $N_{\mathbf{t}}<N_{\mathrm{min}}$. In such cases, the penalty parameter can alternatively be estimated by maximizing the squared Euclidean distance between the closely valued projections from the two sets $\mathrm{\mathbf{s}}^{(\gamma_i)}$ and $\mathrm{\mathbf{t}}^{(\gamma_i)}$,

\begin{equation}\label{eq:estimategammaalternative}
\hat{\gamma}=\underset{\gamma_i=\gamma_{\mathrm{min}},\dots,\gamma_{\mathrm{max}}}{\arg\max} \{\|\bar{y}_{\mathrm{min},F}^{(\gamma_i)}-\bar{y}_{\mathrm{max},F}^{(\gamma_i)} \|_2^2\},
\end{equation}
where $\bar{y}_{\mathrm{min},F}^{(\gamma_i)}\in \mathrm{\mathbf{s}}$ and $\bar{y}_{\mathrm{max},F}^{(\gamma_i)}\in \mathrm{\mathbf{t}}$ are the minimum and the maximum valued projections from the sets $\mathrm{\mathbf{s}}^{(\gamma_i)}$ and $\mathrm{\mathbf{t}}^{(\gamma_i)}$, respectively.

The main steps of the proposed Fiedler vector estimation are summarized in Algorithm~1.

\vspace{-3mm}
\subsection{Computational Complexity}
\vspace{-0.5mm}
As computational complexity is essential for the scalability of graph embedding techniques, the computational complexity of the proposed approach is analyzed in terms of its main operations. The computational complexity of operations is detailed using the term flam \cite{GStewartMatrixI}, which is a compound operation that includes one addition and one multiplication. For the cases when the complexity is not specified as flam, the Landau's big $O$ symbol is used. The computational complexity of the proposed approach is given as follows:

\textbf{Graph Construction:} The pairwise cosine similarity which takes $\frac{1}{2}n^2m+2nm$ as in \cite{RLPI} can be used for contructing graph $G$.

\textbf{Initialization:} For large eigen-problems, e.g. MATLAB uses a Krylov Schur decomposition \cite{KrylovLargeEigenproblems}. The algorithm includes two main phases that are known as expansion and contraction. When $n$ is larger than $p$, where $p$ denotes the number of Lanczos basis vectors (preferably chosen as $p\geq 2q$ for $q$ eigenvectors), the computational complexity of the algorithm can mainly be attributed to expansion and contraction phases. The expansion phase requires between $n(p^2-q^2)$ flam and ${2n(p^2-q^2)}$ flam. The contraction phase requires ${npq}$ flam \cite{GStewartMatrixII}.

\textbf{Robust Regularized Locality Preserving Indexing (RRLPI):} The proposed projection algorithm requires an estimate of scale that uses repetitive medians. The complexity of repetitive medians is  $O(n)$ \cite{remedian}. Further, for a densely connected matrix, the complexity is mainly attributed to the Cholesky decomposition which is of complexity $O(n^3)$ or, more specifically,  $\frac{1}{6}n^3$ flam \cite{GStewartMatrixI}. This complexity can be reduced to $O(n)$ using \cite{cholrankdeficient} if the matrix is rank deficient. If the matrix is sparse, the computation cost of decomposition can be reduced to $t(2ns+3n+5m)$ flam using a least squares algorithm such as \cite{leastsquare} where $s$ denotes the average number of nonzero features.
    
 \textbf{$\Delta$-Separated Sets:}
    To split the projection into two sets as $\mathbf{s}$ and $\mathbf{t}$, the vector $\mathbf{y}$ must be sorted which is of complexity $O(n\mathrm{log}n)$ and there are computationally efficient alternatives such as \cite{efficientsorting} for which the complexity is reduced to $O(n\sqrt{\mathrm{log}n)}$. To compute $\Delta$-separated sets a maximum of $n$ projections can be subtracted which means that this operation maximally takes $n$ flam.

Summing up the terms with respect to flam yields minimally
\begin{align*}
    t(2ns+3n+5m)+\frac{1}{2}n^2m+n(2m+p^2-k^2+pk+1)
\end{align*}
flam. Hence, the complexity is of order $O(n^2)$. Based on the information that both $O(n)$ and $O(n\mathrm{log}n)$ are considerably smaller than $O(n^2)$, the minimum computational cost can be summarized as $O(n^2)$ for each candidate penalty parameter. Overall, the algorithm is, at least, of complexity $O(Nn^2)$ for a number $N$ of candidate penalty parameters.
\vspace{-4mm}
\subsection{Example Applications}
\vspace{-0.5mm}
Eigenvector decomposition has a large variety of applications, such as, dimension reduction \citeform[13]-\citeform[18], recognition \citeform[19]-\citeform[21] and localization \cite{Localapp1}. Considering images as high-dimensional data sets, it is not surprising that eigen-decomposition is a fundamental research area also in image segmentation, e.g.  \citeform[13]-\citeform[16]. A frequently encountered problem is that the image is subject to noise, which may result in mapping noisy pixels far from the neighboring group of pixels in the embedding space and, consequently, losing the underlying structure. This problem may also occur in cluster enumeration approaches that attempt to find densely connected groups of mappings in the embedding space, which necessitates the application of a robust embedding technique. In the following section, the example of robust graph-based cluster enumeration is discussed.
\vspace{-3mm}
\subsubsection{Cluster Enumeration}
\vspace{-0.5mm}
 Assume that for each candidate number of clusters ${K_{\mathrm{cand}}\in\{K_{\mathrm{min}},\dots,K_{\mathrm{max}}\}}$ there is a clustering algorithm, e.g. \citeform[22]-\citeform[23], that partitions $\hat{\mathbf{y}}^{(\hat{\gamma})}$ into $K_{\mathrm{cand}}$ number of clusters and provides an estimated label vector $\hat{\mathbf{c}}_{K_{\mathrm{cand}}}$. After estimating label vectors for each candidate number of clusters $K_{\mathrm{cand}}$, the cluster number $K$ can be estimated by comparing quality of partitions using modularity as \cite{Newmanmod}
\begin{equation}\label{eq:maxmodscore}
\hat{K}=\underset{\hat{K}_{\mathrm{min}},\dots,\hat{K}_{\mathrm{max}}}{\arg\max} \{Q_{\hat{K}_{\mathrm{cand}}}\},
\end{equation}\\
where\vspace{-1mm}
\begin{equation}\label{eq:modscore}
Q_{\hat{K}_{\mathrm{cand}} }=\frac{1}{2g}\sum_{i,j}^{n}\Big[{w}_{i,j}-\frac{d_id_j}{2g}\Big]\delta(\hat{c}_i,\hat{c}_j)
\end{equation}
denotes the modularity score for a candidate number of clusters $\hat{K}_{\mathrm{cand}}$, $w_{i,j}$ is the edge weight between the $i$th and the $j$th feature vector of $\mathbf{X}$, $d_i$ is the overall edge weight attached to vertex $i$, $\hat{c}_i$ is the estimated community label associated to vertex $i$, $g=\frac{1}{2}\sum_{i,j}w_{i,j}$, and the function $\delta(\hat{c}_i,\hat{c}_j)$ equals 1 if $\hat{c}_i=\hat{c}_j$ and is zero, otherwise.

\vspace{-3mm}
\section{Experimental Evaluation}\label{sec:Experimental}
This section contains the numerical experimental evaluation of the proposed RRLPI method on a broad range of simulated and real-world data sets with applications to robust cluster enumeration and image segmentation. The effects of Type I and Type II outliers on the Fiedler vector estimation are studied for the LE \cite{Laplacianeigenmaps}, LPI \cite{LPI}, RLPI \cite{RLPI}, RLPFM \cite{RLPFM} and RRLPI embedding-based approaches by designing synthetic data Monte Carlo experiments. Then, in addition to the above mentioned embedding approaches, the proposed RRLPI is benchmarked against three state-of-the-art graph-based cluster enumeration approaches, i.e., Le Martelot \cite{Martelot}, Combo \cite{Combo} and Sparcode \cite{Sparcode} and two state-of-the art spectral partitioning approaches, i.e., FastEFM \cite{FSCEFM} and LSC \cite{LSC} in terms of image segmentation.

The cosine similarity is used as the similarity measure in all experiments, unless otherwise specified and RRLPI is computed with the following default parameters: ${\gamma_{\mathrm{min}}=10^{-8}}$, ${\gamma_{\mathrm{max}}=1000}$, ${K_{\mathrm{min}}=1}$, ${K_{\mathrm{max}}=10}$ and ${N_{\mathrm{min}}=\frac{n}{K_{\mathrm{max}}}}$. A MATLAB code for RRLPI is available at:

\vspace{1.5mm}
\hspace{-5mm}https://github/A-Tastan/RRLPI

\begin{figure}[!tbp]
  \centering
\includegraphics[trim={0mm 0mm 0mm 7mm},clip,width=5.3cm]{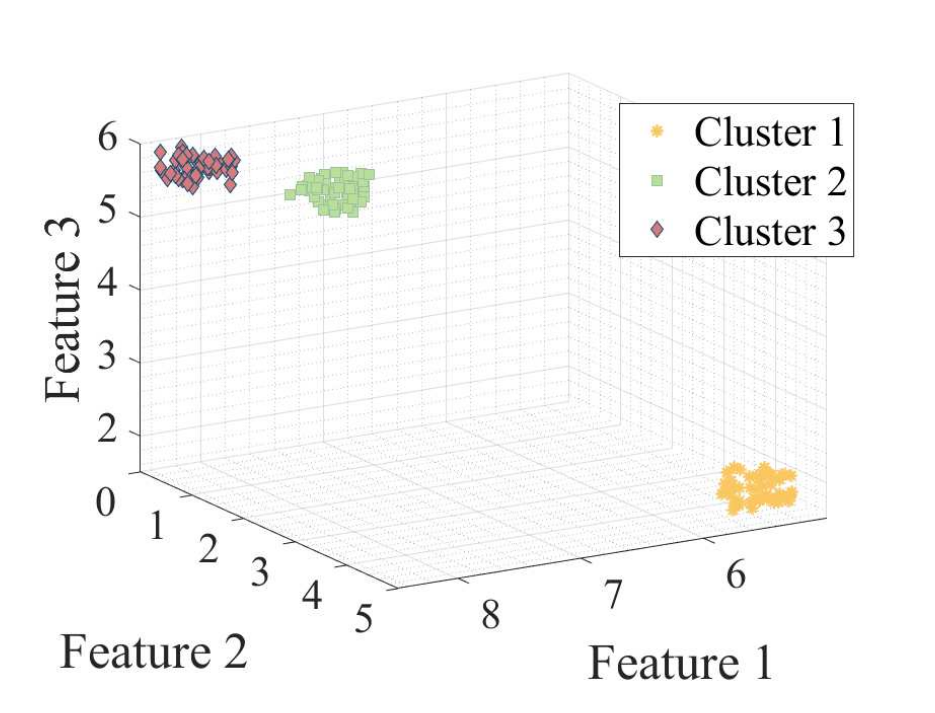}
  \caption{Examplary plot of the first three features of the uncorrupted synthetic data set.}
  \label{fig:initialdatasetthreefeatures}
\vspace{-2mm}
\end{figure}

\begin{figure}[!tbp]
  \centering
\subfloat[LE]{\includegraphics[trim={1mm 0mm 3mm 1mm},clip,width=4.5cm]{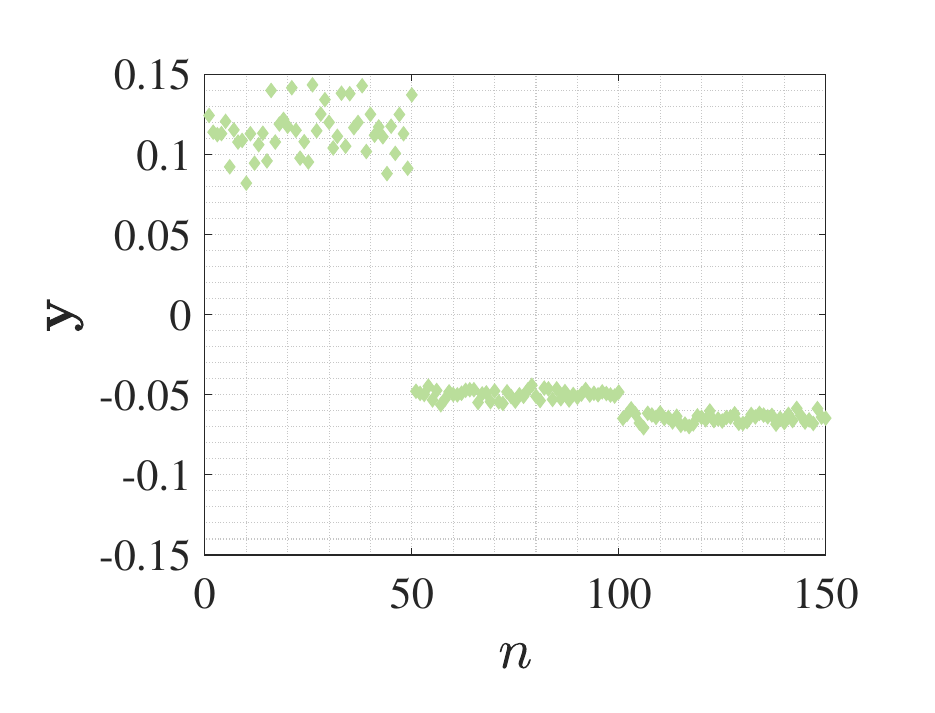}\label{fig:cleanFiedler}}
\subfloat[RRLPI]{\includegraphics[trim={1mm 0mm 3mm 1mm},clip,width=4.5cm]{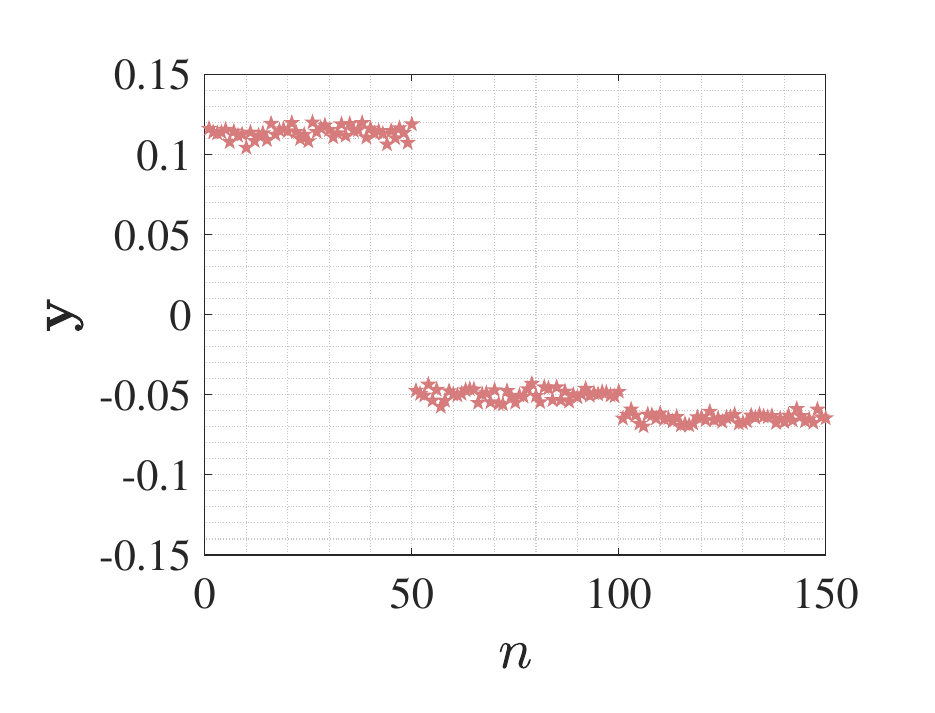}\label{fig:cleanRRLPI}}
  \label{fig:cleandatasetembedding}
  \caption{Estimated eigenvectors for the uncorrupted data set.}
\end{figure}

\vspace{-1.5mm}
\subsection{Performance Measures}
The average partition accuracy $\Bar{p}_{\mathrm{acc}}$ is measured by evaluating
\vspace{-1mm}
\begin{equation}\label{eq:probdetection}
\Bar{p}_{\mathrm{acc}}=\frac{1}{nN_E}\sum_{i=1}^{N_E}\sum_{j=1}^{n}\mathbbm{1}_{\{\hat{c}_j=c_j\}},
\vspace{-1mm}
\end{equation}
where 
\vspace{-1mm}
\begin{equation}\label{eq:indicatorfunc}
\mathbbm{1}_{\{\hat{c}_j=c_j\}}=\begin{cases}
     1 ,        & \text{if } \hat{c}_j=c_j\\
     0 ,           & \text{otherwise}
\end{cases},
\vspace{-1mm}
\end{equation}
$n$ is the number of observations, $N_E$ is the total number of experiments, and $\hat{c}_j$ and $c_j$ are the estimated and ground truth labels for the $j$th observation, respectively.

The empirical probability of detection $p_{\mathrm{det}}$ is used to assess cluster enumeration performance as follows
\vspace{-1.5mm}
\begin{equation}\label{eq:probdetection}
p_{\mathrm{det}}=\frac{1}{N_E}\sum_{i=1}^{N_E}\mathbbm{1}_{\{\hat{K}=K\}},
\vspace{-1.25mm}
\end{equation}
where $\hat{K}$ denotes the estimated number of clusters and $\mathbbm{1}_{\{\hat{K}=K\}}$ is the indicator function.

\begin{figure}[!tbp]
  \centering
\includegraphics[trim={0mm 0mm 0mm 7mm},clip,width=5.3cm]{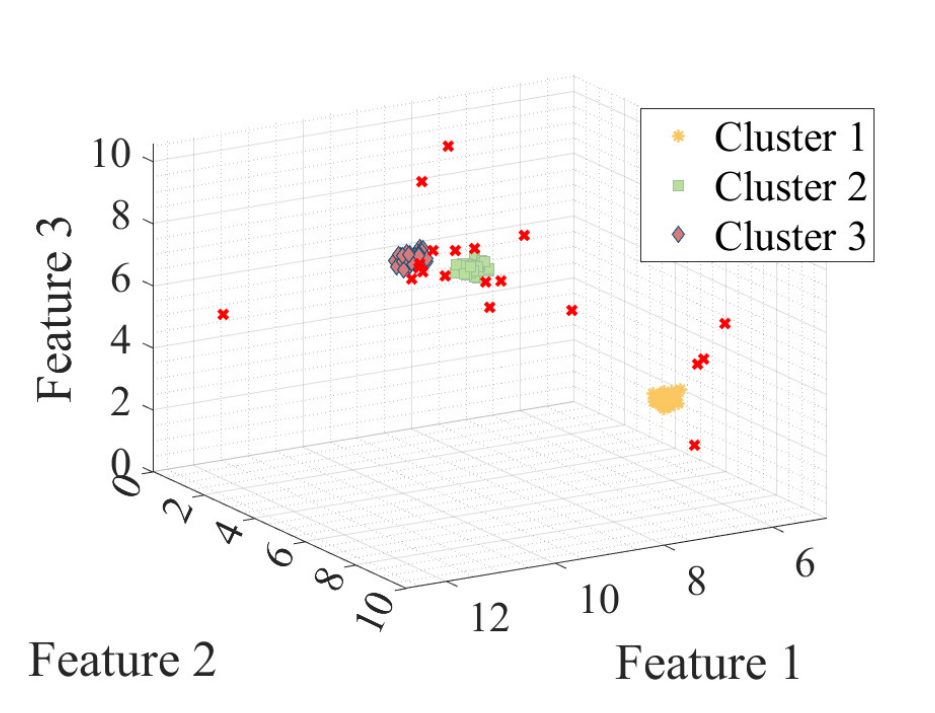}\label{fig:corrupteddataset}
  \caption{Examplary plot of the first three features of the synthetic data set after corruption with Type I and Type II outliers (red crosses). See Section~\ref{sec:Motivation}, for a discussion.}
  \label{fig:corrupteddatasetthreefeatures}
    \vspace{-6mm}
\end{figure}
\begin{figure}[!tbp]
  \centering
  \subfloat[LE]{\includegraphics[trim={1mm 0mm 3mm 1mm},clip,width=4.5cm]{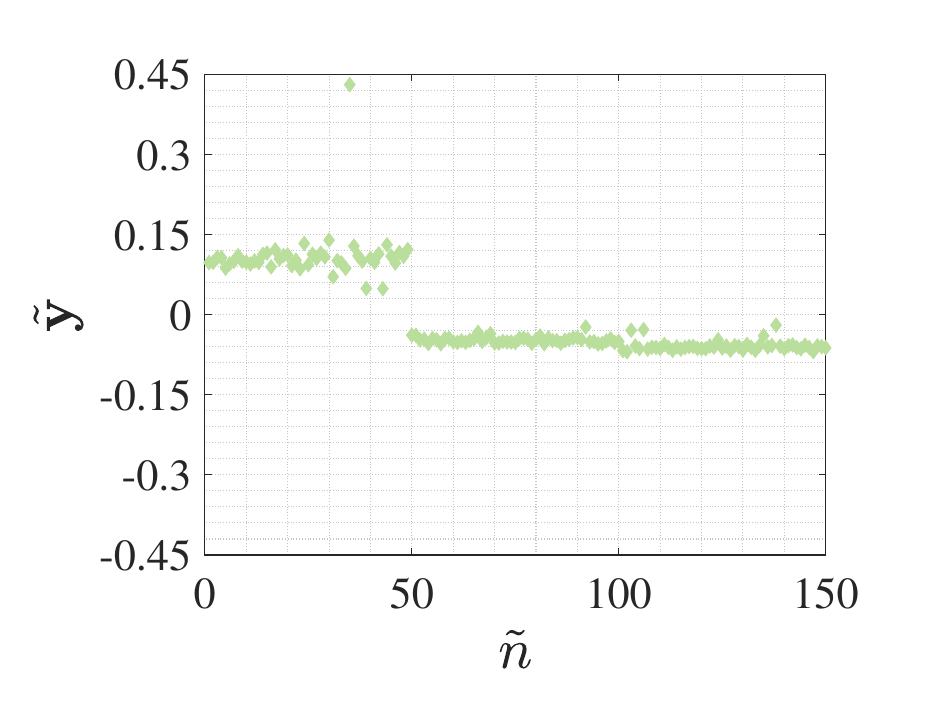}\label{fig:corruptedFiedler}}
\subfloat[RRLPI]{\includegraphics[trim={1mm 0mm 3mm 1mm},clip,width=4.5cm]{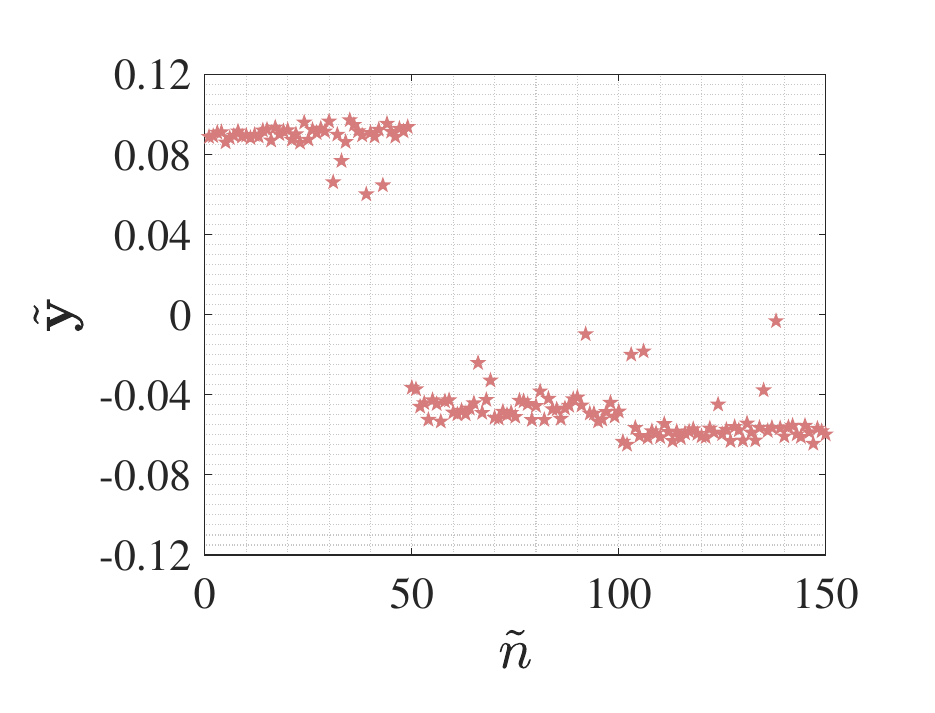}\label{fig:corruptedRRLPI}}\\
  \label{fig:corruptedatasetembedding}
  \caption{Estimated eigenvectors for the corrupted data set.}
\end{figure}

The contour matching score $F_{\mathrm{score}}$ for boundaries and the Jaccard index $J$ are used for the numerical performance analysis in the case of image segmentation \cite{performancemeasureimage}. The $F_{\mathrm{score}}$ quantifies whether a boundary has a match on the ground truth boundary as follows
\begin{equation}
    F_{\mathrm{score}}=2\frac{P\cdot R}{R+P},
\end{equation}
where $P$ and $R$ denote precision and recall values, respectively. The Jaccard index evaluates similarity between estimated and ground truth segmentations according to
\begin{equation}
    J(\hat{\bm{I}}_{\mathrm{seg}},\bm{I}_{\mathrm{seg}})=\frac{\mathrm{TP}}{\mathrm{TP} + \mathrm{FP} + \mathrm{FN}},
\end{equation}
where $\hat{\bm{I}}_{\mathrm{seg}}$ and $\bm{I}_{\mathrm{seg}}$ denote estimated and ground truth segmentations for image $\bm{I}$ and $\mathrm{TP}$, $\mathrm{FP}$, and $\mathrm{FN}$ are true positives, false positives and false negatives, respectively.


\vspace{-3mm}
\subsection{Outlier Effects and Robustness}
To visualize outlier effects on the eigen-decomposition, a synthetic data set is generated for $K=3$ easily separable clusters, see Fig.~\ref{fig:initialdatasetthreefeatures}. The $m$-dimensional feature vectors of each cluster $c_k$, with $k=1,\dots,K$, and $m=6$ are generated as ${\mathbf{x}_{i,k}=\boldsymbol{\mu}_{c_k}+\vartheta_{c_k}\boldsymbol{\upsilon}}$, where $\mathbf{x}_{i,k}$ is the $i$th feature vector associated with the $k$th cluster, $\boldsymbol{\mu}_{c_k}$ is the $k$th cluster centroid, $\vartheta_{c_k}$ is the $k$th scaling constant, and $\boldsymbol{\upsilon}$ is a vector of independently and identically distributed random variables from a uniform distribution on the interval $[-0.5,0.5]$. All details and parameter values to generate the data are provided in Appendix~D.1.1 of the supplementary material.

\begin{table*}[!tbp]
\centering
\small
\begin{tabular}{p{4.3cm}M{1.1cm}M{0.9cm}M{1.2cm}M{0.9cm}M{0.5cm}M{0.6cm}M{0.7cm}M{0.7cm}M{0.5cm}M{1.2cm}}
\hline\hline
\\[-3mm]
 & \multicolumn{8}{c}{$\hat{K}$ for Different Cluster Enumeration Methods} &  &\\\\[-3mm]
\cline{2-9}\\[-3mm]
Data Set & Martelot & Combo & Sparcode & RLPFM & LE & LPI & RLPI & RRLPI & $K$ & Similarity\\
\midrule
Human Gait \cite{humangait},& 4 & 6 & 5 & 4 & 4 & 4 & 4 & 5 &5 & enet\\
Breast Cancer Wisconsin \cite{BreastCancer},&1 & 2 & 2 & 2 & 4 & 2 & 2 & 2 & 2 & cos\\
Iris \cite{Iris},&2& 3 & 2 & 3 & 5 & 3 & 3 & 3 &3 & enet\\
Person Identification \cite{PersonIdentification},& 6 & 7 & 4& 5&10 & 4 & 4&4&4& enet\\
Sonar \cite{Sonar},& 2 & 2 & 2& 2 & 6& 2 & 2 & 2& 2& cos\\
Ionosphere \cite{Ionosphere},& 3 & 3 &4 & 2 & 7 &3 &2 & 2&2&cos\\
D. Retinopathy \cite{D.Retinopathy},& 2 & 2 & 2&2&2 & 2 & 2&2&2&cos\\
Gesture Phase S. \cite{GesturePhase},& 2 & 3 & 3&5&10 & 2 & 6&5&5&cos\\
\hline\hline
\end{tabular}
\caption{\label{tab:table-name}Performance of different cluster enumeration approaches on well-known clustering data sets. The results summarized for similarity measures cosine (cos) and elastic net (enet) using a penalty parameter of $\rho=0.5$.}
\label{tab:tablerealclusterenumeration}
\vspace{-6mm}
\end{table*}
\begin{figure}[!tbp]
  \centering
 \includegraphics[trim={0mm 1mm 0mm 0mm},clip,width=6.2cm]{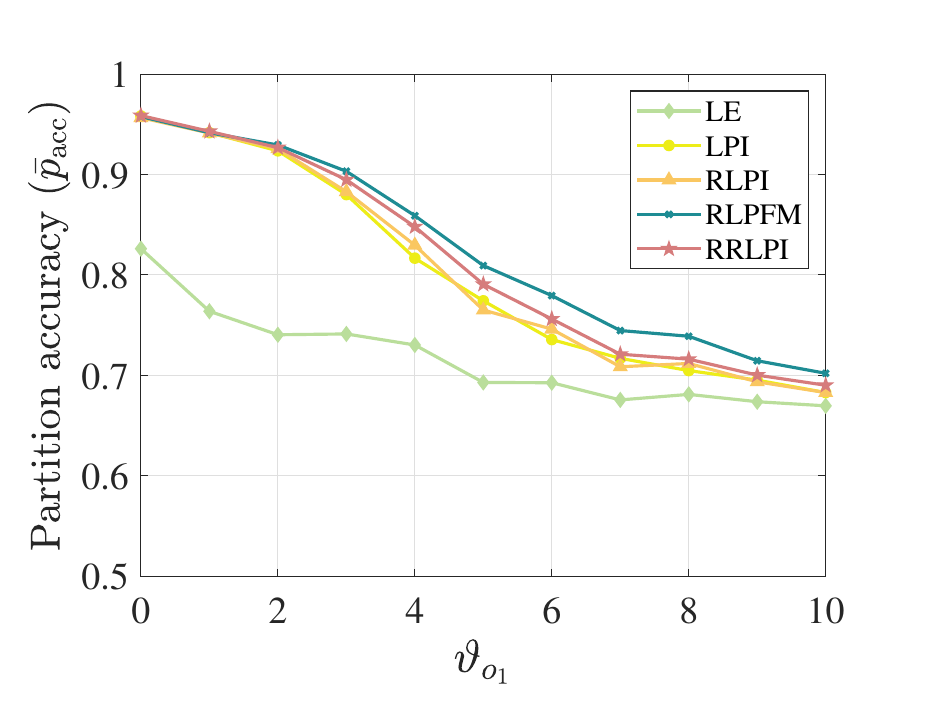}
  \caption{$\bar{p}_{\mathrm{acc}}$ for increasing $\vartheta_{o_1}$ associated with Type I outliers (${n=300,N_{\mathrm{out}}=10, \vartheta_{o_2}=1.5 ,\vartheta_{c_K}=0.5 \hspace{1mm}\mathrm{s.t.}\hspace{1mm} k=1,\dots,K}$).}
  \label{fig:PaccSigma}
  \vspace{-4mm}
\end{figure}
\begin{figure}[!tbp]
  \centering
 \includegraphics[trim={0mm 1mm 0mm 0mm},clip,width=6.2cm]{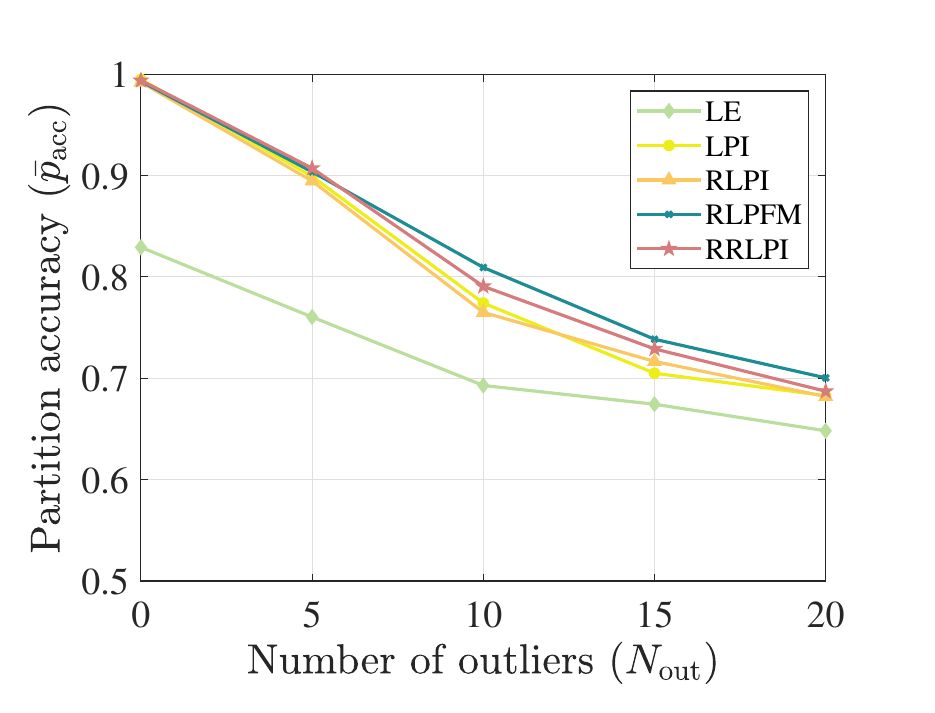}
  \caption{$\bar{p}_{\mathrm{acc}}$ for each of outlier type with increasing $N_{\mathrm{out}}$ (${n=300,\vartheta_{o_1}=5, \vartheta_{o_2}=1.5 ,\vartheta_{c_K}=0.5 \hspace{1mm}\mathrm{s.t.}\hspace{1mm} k=1,\dots,K}$).}
\label{fig:PaccNumout}
\end{figure}

Representative examples of the estimated eigenvectors are shown for LE and for the proposed RRLPI in Fig.~\ref{fig:cleanFiedler} and Fig.~\ref{fig:cleanRRLPI}, respectively. In the absence of outliers, both algorithms provide mappings where the mapping points that are associated with the same cluster are concentrated, and the different clusters are separated. To study robustness, the data set is contaminated with two types outliers, i.e., outliers that do not correlate with any cluster (Type I) and outliers correlate with more than one cluster (Type II); see Sec.~\ref{sec:Motivation} for a definition and a discussion. An example showing the first three features of the contaminated data set is shown in Fig.~\ref{fig:corrupteddatasetthreefeatures}, where both Type I and Type II outliers are highlighted as red crosses.
\begin{figure}[!tbp]
  \centering
\includegraphics[trim={0mm 1mm 0mm 0},clip,width=7.25cm]{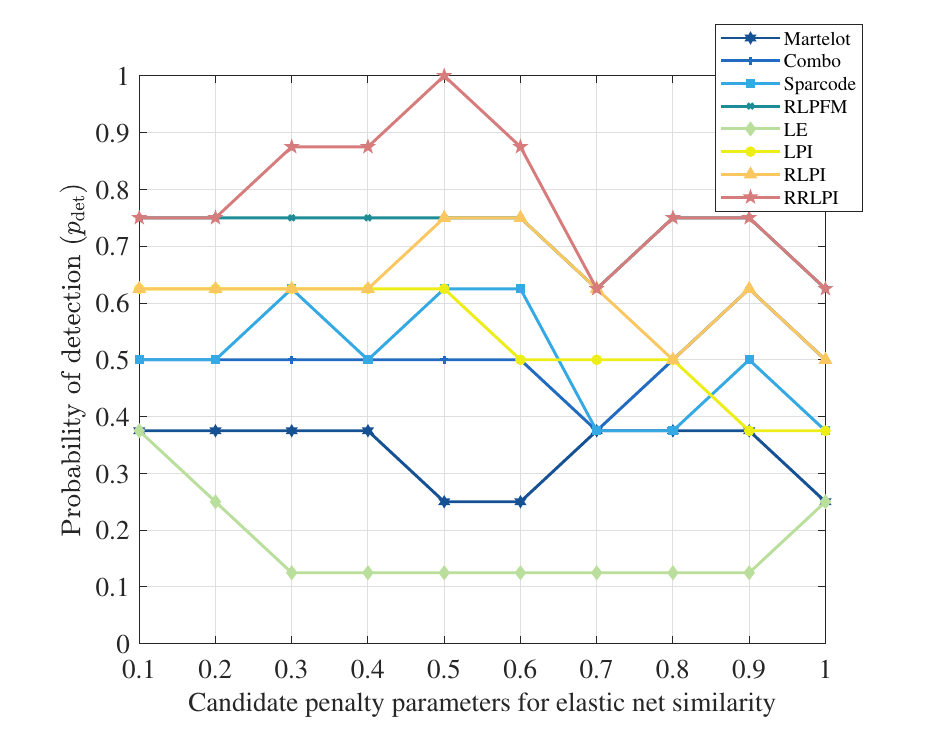}
  \caption{$\bar{p}_{\mathrm{det}}$ with respect to different penalty parameters for $K$-medoids partitioning with Tukey's distance function \cite{Robuststatistics} for the initialization ($c_{\mathrm{Tukey}}=3$).}
  \label{fig:Penaltydetectionrelation}
\end{figure}

The Type I and Type II outliers are, respectively, generated as $\Tilde{\mathbf{x}}_i^{(1)}=\mathbf{x}_{i,k}+\vartheta_{o_1}\boldsymbol{\upsilon}$ and ${\Tilde{\mathbf{x}}_i^{(2)}=\boldsymbol{\mu}_{o_2}+\vartheta_{o_2}\boldsymbol{\upsilon}}$ where $\Tilde{\mathbf{x}}_i^{(j)},j=1,2$ denotes the type of the outlier, $\vartheta_{o_j}, j=1,2$ is a scaling constant associated with the outlier type and $\boldsymbol{\mu}_{o_2}$ is a vector associated with the location of Type II outliers. A detailed explanation including all parameter values, is provided in the supplementary material Appendix~D.1.1. Examples of the eigenvector estimates based on the corrupted data set are shown for LE and RRLPI in Fig.~\ref{fig:corruptedFiedler} and Fig.~\ref{fig:corruptedRRLPI}, respectively. As can be seen, for the LE method, Type I outliers in the data produce outliers in the mapping results that obscure the underlying structure of $K=3$ clusters. In contrast, the proposed RRLPI provides a mapping that is less influenced by the outliers.

\begin{figure*}[!tbp]
    \centering
    \includegraphics[trim={2cm 11cm 2cm 10.6cm},clip,width=16cm]{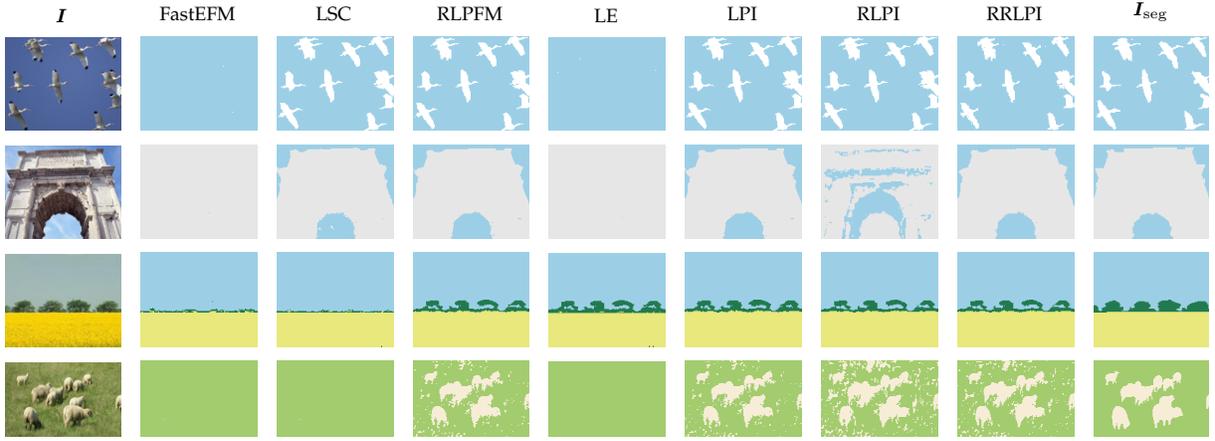}
  \caption{Image segmentation results for the original images.\label{fig:originalimagesegmentationresults}}
\vspace{-4mm}
\end{figure*}
\begin{figure*}
    \centering
\hspace{-6mm}
\subfloat[$\bm{I}$]{\includegraphics[trim={1mm 5mm 8mm 5mm},clip,width=4.13cm]{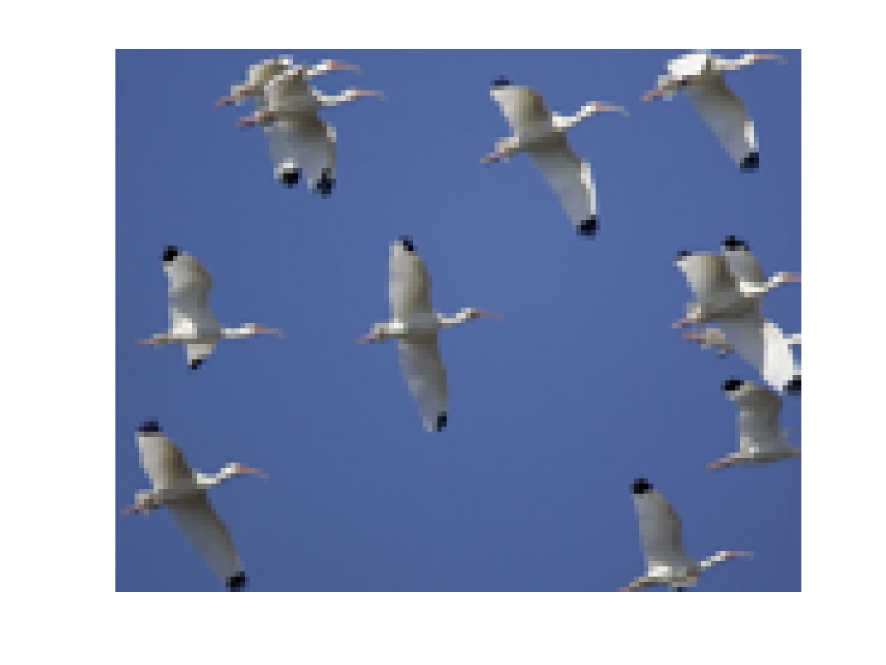}}
\subfloat[$\hat{\bm{I}}_{\mathrm{seg}}$ for LE]{\includegraphics[trim={1mm 5mm 8mm 5mm},clip,width=4.13cm]{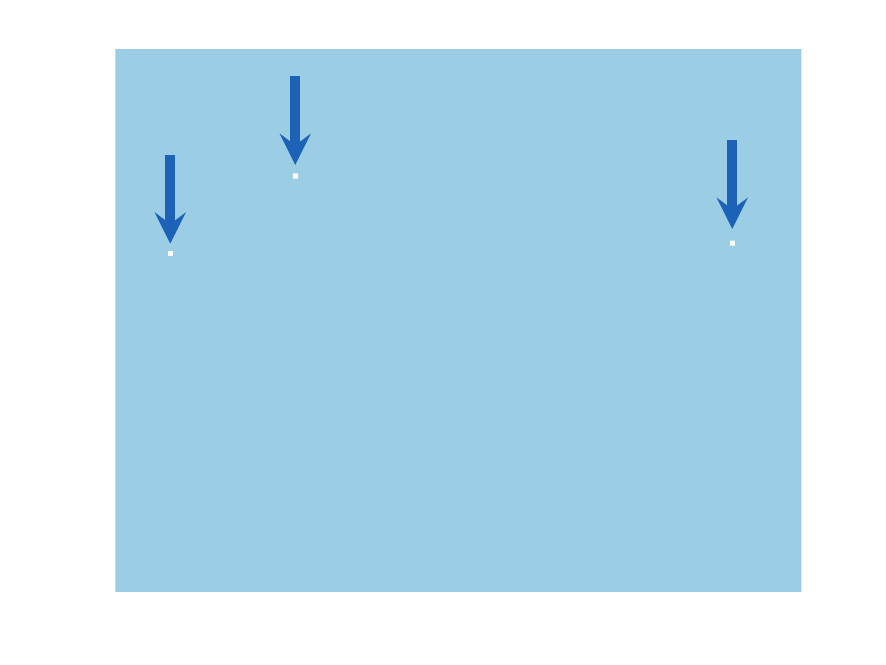}\label{fig:LESegresult}}
\subfloat[$\hat{\bm{I}}_{\mathrm{seg}}$ for RRLPI]{\includegraphics[trim={1mm 5mm 8mm 5mm},clip,width=4.13cm]{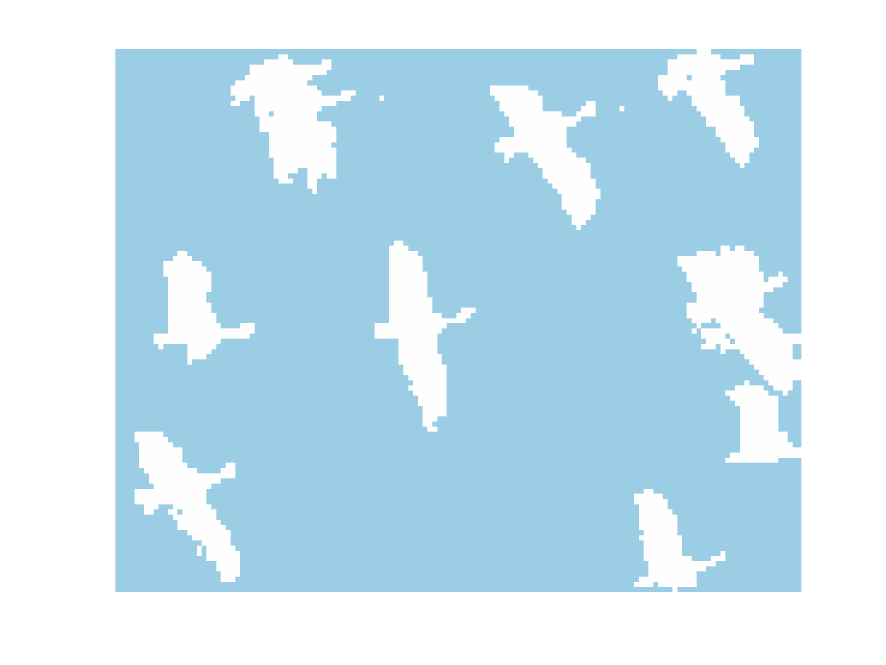}\label{fig:RRLPISegresult}}
\subfloat[Annotated image ${\bm{I}}_{\mathrm{seg}}$]{\includegraphics[trim={1mm 5mm 8mm 5mm},clip,width=4.13cm]{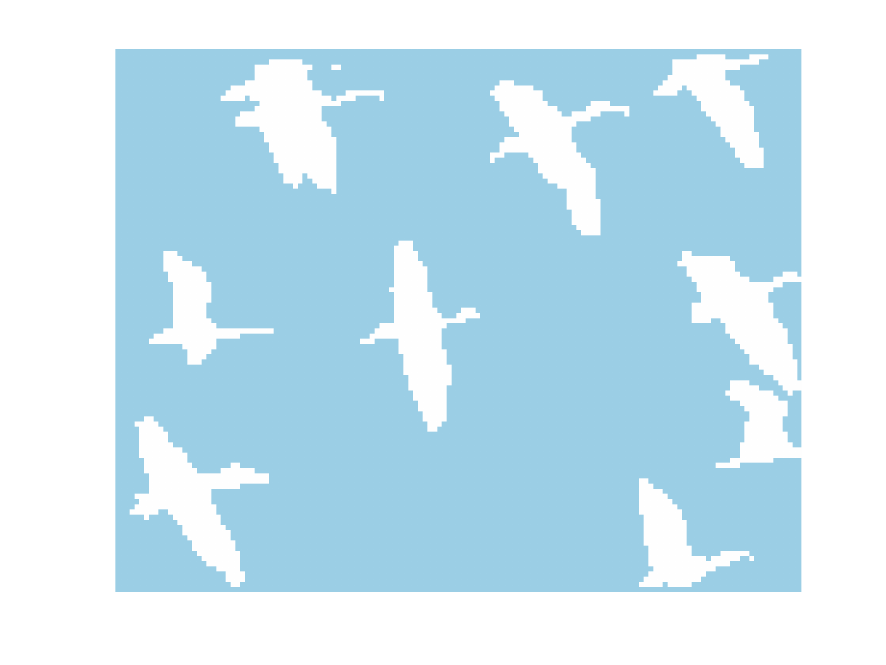}}
\caption{Example segmentations for LE and RRLPI methods. The embeddings that are mapped far away from the group of pixels are pointed out using arrows.}
\label{fig:outliereffectimage}
\vspace{-5mm}
\end{figure*}
\begin{figure}[tbp!]
  \centering
\includegraphics[trim={0mm 1mm 0mm 4mm},clip,width=7.3cm]{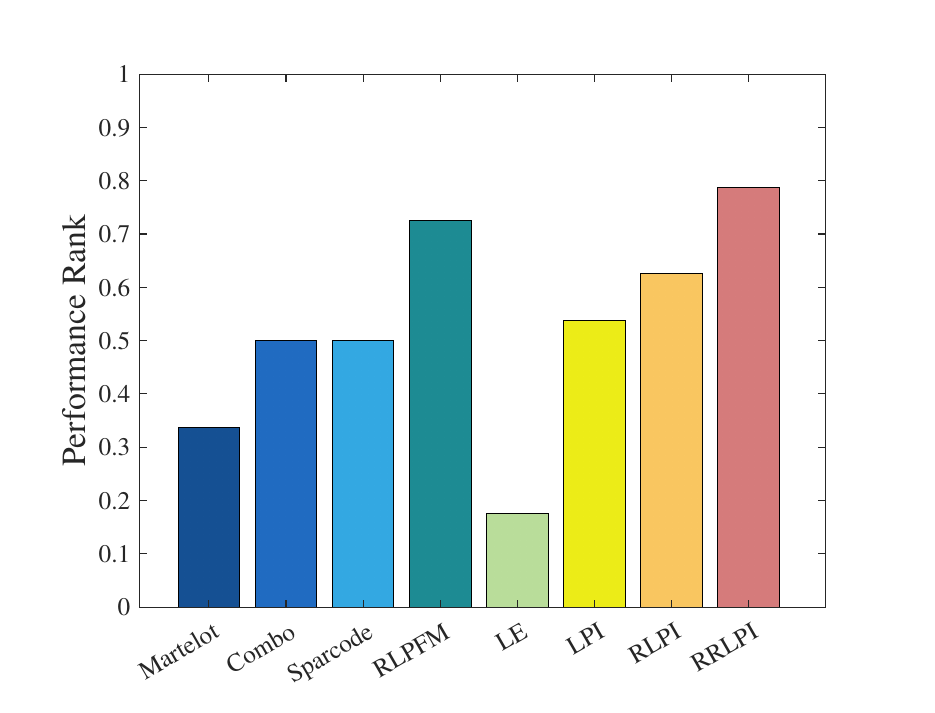}
  \caption{Performance rank for $K$-medoids partitioning with Tukey's distance function for the initialization ($c_{\mathrm{Tukey}}=3$).}
  \label{fig:Performancerank}
\end{figure}

Fig.~\ref{fig:PaccSigma} and Fig.~\ref{fig:PaccNumout} report the average partition accuracy as a function of the constant $\vartheta_{o_1}$ associated with Type I outliers and the number of outliers $N_{\mathrm{out}}$ for each outlier type, respectively. The value of $\vartheta_{o_2}$ is kept constant to generate points that lie between clusters two and three. The robust methods show best performance while the performance of LE quickly decreases in the presence of outliers.
\vspace{-2mm}
\subsection{Cluster Enumeration}
In this section, the cluster enumeration performance of different approaches is benchmarked in terms of their empirical probability of detection using the following data sets:
 
 \textit{Human Gait Data Set:} The experimental data set \cite{humangait} was collected in an office environment at Technische Universität Darmstadt using a 24 GHz radar system \cite{radarinfo}. The data set consists of 800 measurements from five different gait types measured from two different directions \cite{RadarConf}.

\textit{Breast Cancer Wisconsin Data Set:} The data set includes 569 observations from two classes that define benign or malignant tumors \cite{BreastCancer}.
    
 \textit{Iris Data Set:} The iris data set consists of 150 measurements of three different iris flower species \cite{Iris}.
 
 \textit{Person Identification Data Set:} The experimental data set \cite{PersonIdentification} was collected using the same settings as in the human gait data set. The data set includes radar observations of four different subjects walking towards and away from the radar system.
    
 \textit{Connectionist Bench Data Set (Sonar):} The data set includes sonar returns collected from a metal cylinder and a cylindric rock positioned on a sandy ocean floor \cite{Sonar}. The number of observations is equal to 208 for two object clusters.
 
\textit{Ionosphere Data Set:} The data set includes 351 radar returns from the ionosphere for two clusters \cite{Ionosphere}.
    
\textit{Diabetic Retinopathy Debrecen Data Set (D. Retinopathy):} The data set consists of 1151 observations of two clusters using image-based features of diabetic retinopathy \cite{D.Retinopathy}.

\textit{Gesture Phase Segmentation Data Set (Gesture Phase S.):} The processed features that contain scalar velocity and acceleration of hand and wrist movements have been used for videos A1, A2, A3, C1 and C3. The data set includes five phases which are rest, preparation, stroke, hold and retraction \cite{GesturePhase}.

 If none of the cluster enumeration approaches estimates the cluster number correctly with the default cosine similarity, the elastic net similarity measure as in \cite{Robuststatistics}, is used with ten candidate penalty parameters $\rho$ on an equidistant grid ranging from $0.1$ to one. Results are reported for $\rho=0.5$, which gave the best average overall detection performance for all methods. Tukey's distance function \cite{Robuststatistics} where the threshold defined as $c_{\mathrm{Tukey}}=3$ is used as an initialization for $K$-medoids partitioning in the proposed algorithm. For a detailed discussion about different partitioning results, see Appendix~D.2.1 of the supplementary material. The estimated cluster numbers are reported for the different cluster enumeration approaches in Tab.~\ref{tab:tablerealclusterenumeration}. As can be seen from the table, the human gait and gesture phase data sets include a considerable number of outliers that result in misdetection of the cluster number for almost all competitors. The proposed method is the only one that consistently estimates the correct cluster numbers for all data sets.


\begin{figure*}[!tbp]
    \centering
    \includegraphics[trim={2cm 11cm 2cm 10.6cm},clip,width=16cm]{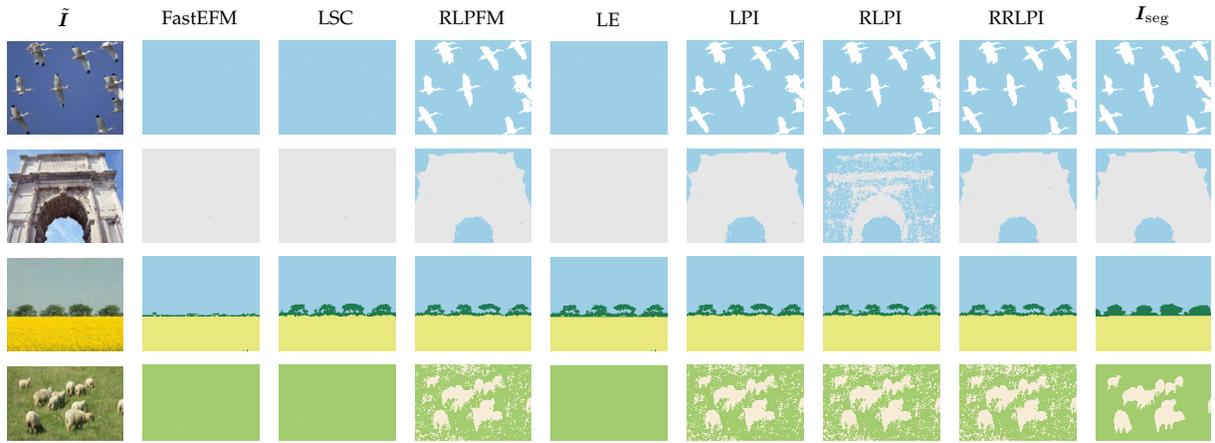}
  \caption{Image segmentation results for the corrupted images. ($\sigma^{(\xi)}=10^{-3}$)\label{fig:corruptedimagesegmentationresults}}
  \vspace{-4mm}
\end{figure*}
\begin{figure*}[!tbp]
  \centering
\subfloat[$\bar{F}_{\mathrm{score}}$ for increasing $\sigma^{(\xi)}$]{\includegraphics[trim={1mm 0mm 2mm 0mm},clip,width=6.2cm]{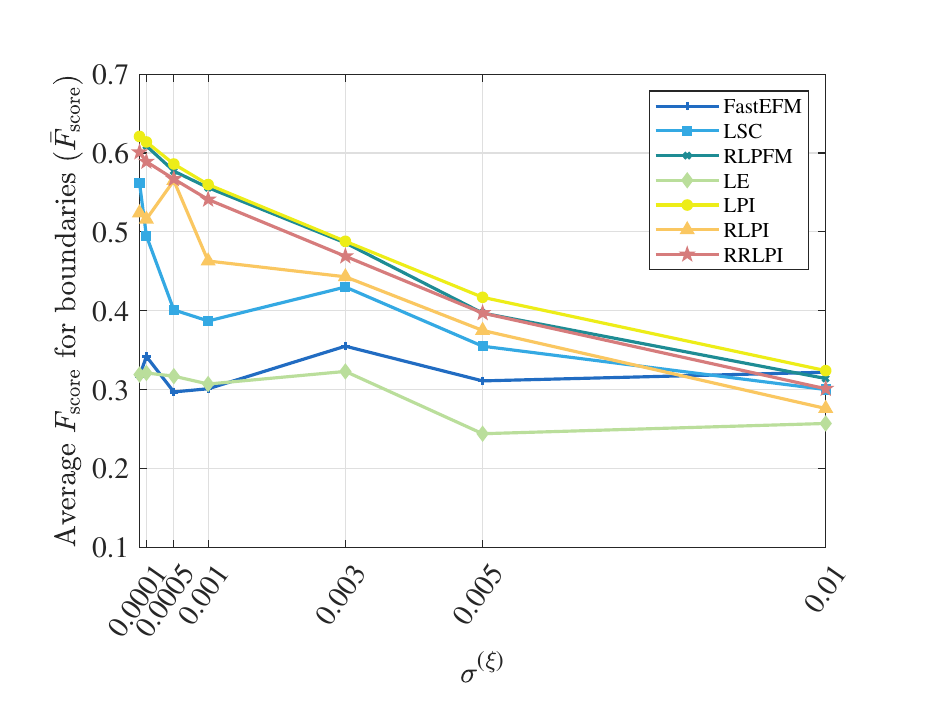}\label{fig:Fscoreresults}}
\subfloat[$\bar{J}$ for increasing $\sigma^{(\xi)}$]{\includegraphics[trim={1mm 0mm 2mm 0mm},clip,width=6.2cm]{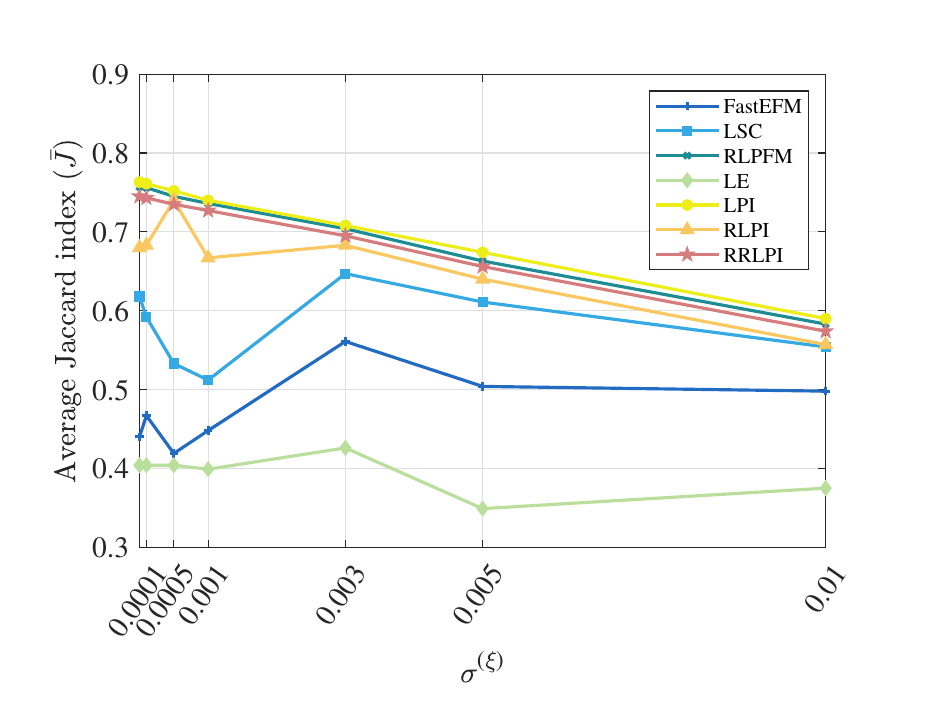}\label{fig:Jscoreresults}}
\subfloat[$\bar{p}_{\mathrm{acc}}$ for increasing $\sigma^{(\xi)}$]{\includegraphics[trim={1mm 0mm 2mm 0mm},clip,width=6.2cm]{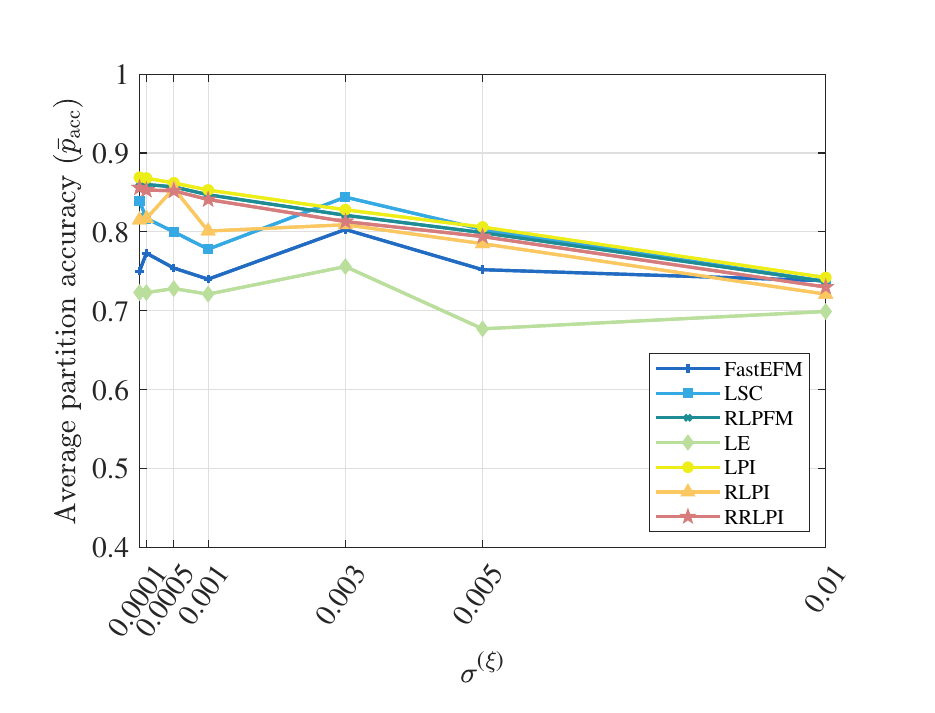}\label{fig:Paccresults}}
  \caption{Numerical results for the image segmentation.  \label{fig:imagesegmentationnumericalresults}}
\end{figure*}

 The empirical probability of detection with respect to different penalty parameters is detailed in Fig. ~\ref{fig:Penaltydetectionrelation}. Then, the performance is summarized in Fig.~\ref{fig:Performancerank} by averaging the results over all penalty parameters. 
The results for cluster enumeration demonstrate that the proposed RRLPI shows the best probability of detection performance for all candidate penalty parameters with an average score of 79~$\%$, whereas the best competitors (RLPFM and RLPI) have scores of 73 $\%$ and 63 $\%$, respectively.

\vspace{-3mm}
\subsection{Image Segmentation}
ADE20K \cite{ADE20K}, is a large-scale dataset that includes high
quality pixel-level annotations of 25210 images (20210, 2000,
and 3000 for the training, validation, and test sets, respectively.). In our experiments, 10 images from the ADE20K data set containing different objects, where each object has a different color, have been selected for color-based image segmentation. The selected and corresponding annotated images are denoted as $\bm{I}$ and $\bm{I}_{\mathrm{seg}}$, respectively. The images are down-sampled, where the dimension of data set $\mathbf{X}$ is $m=3$ and $n\cong15000$ using RGB color codes associated to down-sampled image pixels as features. To analyze robustness, the images are corrupted by adding multiplicative noise using the equation $\Tilde{\bm{I}} = \bm{I}+\xi\times\bm{I}$, where $\Tilde{\bm{I}}$ denotes the corrupted image and $\xi$ is uniformly distributed random noise with zero mean and variance $\sigma^{(\xi)}$.

The down-sampled images are segmented for a pre-defined number of segments $K$ using the default setting which performs $K$-means partitioning for the data sets that have more than $n=3000$ samples. In Fig.~\ref{fig:originalimagesegmentationresults}, examples of the original image $\bm{I}$ and associated segmented images using the estimated Fiedler vectors for seven different embedding approaches are shown along with the ground truth segmented image $\bm{I}_\mathrm{seg}$. The uncorrupted images $\bm{I}$ may also contain outlying pixels and/or noisy pixels. The effect of outliers is that a small number of pixels are mapped far away from the group of pixels and, thus, the remaining group of pixels assigned to a single large segment based on the distance-based partitioning methods. 

A typical example of a segmentation result illustrating the outlier effects is provided
in Fig.~\ref{fig:LESegresult}. As can be seen, the described outlier effect is observed even for the mappings of the uncorrupted (original) image when using LE. To exemplify the robust Fiedler vector estimation, the segmentation result of RRLPI is shown in Fig.~\ref{fig:RRLPISegresult}. The segmentation result demonstrates that the proposed robust Fiedler vector estimation suppress outlier effects on the eigen-decomposition and provides segmentation results that are more consistent with the annotated image $\bm{I}_{\mathrm{seg}}$. Further, in Fig.~\ref{fig:corruptedimagesegmentationresults}, examples of segmented images are presented for the corrupted images where ${\sigma^{(\xi)}=10^{-3}}$. The results show that the outlier effect on eigen-decomposition causes a breakdown of the FastEFM, LSC, and LE approaches. For further examples and detailed numerical results, see supplementary material Appendix~D.3.2.

The experiments are evaluated quantitatively using  $\bar{F}_\mathrm{score}$, $\bar{J}$ and $\bar{p}_{\mathrm{acc}}$, and the results are summarized in Fig.~\ref{fig:imagesegmentationnumericalresults}. All performance measures are evaluated by comparing each estimated segmented image $\hat{\bm{I}}_{\mathrm{seg}}$ with the annotated image $\bm{I}_{\mathrm{seg}}$. The LE and FastEFM show poor performance, even for the original images. Although LSC shows a reasonably good performance for the original images, its performance reduces drastically in the outlier-corrupted case in terms of $\bar{F}_\mathrm{score}$ and $\bar{J}$. The LPI, RLPFM and RRLPI are the top three methods in all performance measures and RLPI follows them with a reasonably good performance, which indicates that the proposed penalty parameter selection algorithm is a promising approach, even when using non-robust methods.

\section{Conclusion}\label{sec:Conclusion}

The effect of outliers on eigen-decomposition has been analyzed and the importance of overall edge weight attached to a vertex has been shown to be a useful measure of outlyingness. Based on the derived theoretical results, we proposed RRLPI, a method to robustly estimate the Fiedler vector that down-weights mappings, for which the overall edge weight deviates from the typical overall edge weight of a given graph. The objective function to estimate the Fiedler vector is penalized using the proposed unsupervised penalty parameter selection algorithm that builds upon $\Delta$-separated sets. The performance of RRLPI is benchmarked for different applications on a variety of real-world data sets. The numerical results for cluster analysis and image segmentation showed that the RRLPI is a promising approach for Fiedler vector estimation in situations where robustly determining the group structure in a data set is essential.
\vspace{-3mm}
\section*{Acknowledgement}
The authors would like to thank Ayfer Taştan
for checking the eigen-decompositions. The work of A. Taştan is supported by the Republic of Turkey Ministry of National Education. The work of M. Muma has been funded by the LOEWE initiative (Hesse, 
Germany) within the emergenCITY centre and is supported by the ‘Athene Young
Investigator Programme’ of Technische Universität Darmstadt, Hesse, 
Germany.
\vspace{-3mm}

\begin{IEEEbiography}[{\includegraphics[trim={0 0mm 5mm 0mm},width=1in,height=1.25in,clip,keepaspectratio]{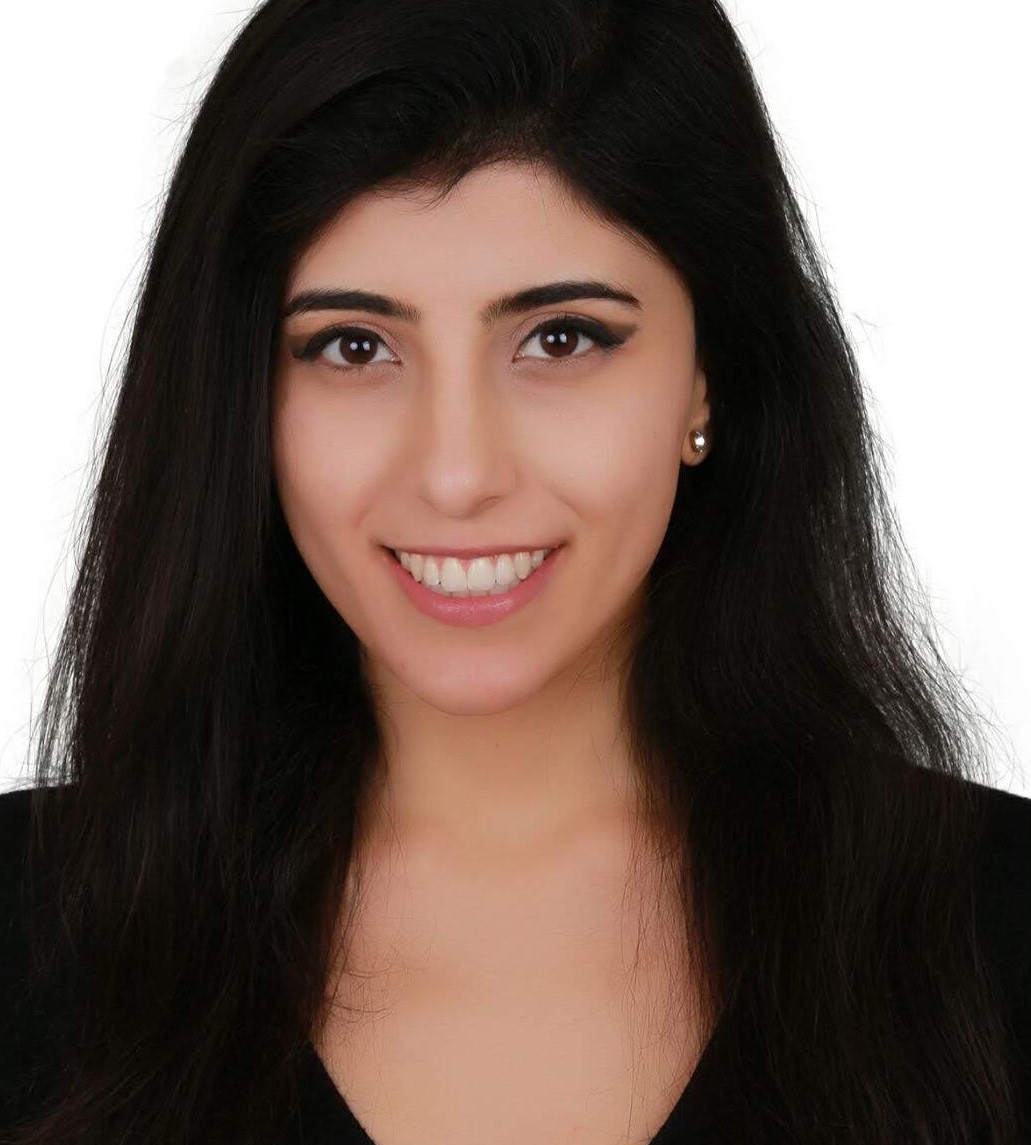}}]{Aylin Ta{\c{s}}tan}
(S’20) received the
B.Sc. and the M.Sc. degrees in electrical and electronics engineering from Gazi University, Ankara, Turkey, in 2016 and 2018, respectively. 
In 2019, she joined the Signal Processing Group, Technische Universitat Darmstadt, Germany, as a Research Associate, where she is currently working toward the Ph.D. degree in electrical engineering and information technology. In September 2020, together with her coauthors, she received the Best Student Paper Award at the 2020 IEEE Radar Conference for the paper entitled “An Unsupervised Approach for Graph-based Robust Clustering of Human Gait Signatures”. Her research interests include robust statistics for graph theory, statistical signal processing, unsupervised learning and robust clustering techniques.
\end{IEEEbiography}

\begin{IEEEbiography}[{\includegraphics[width=1in,height=1.25in,clip,keepaspectratio]{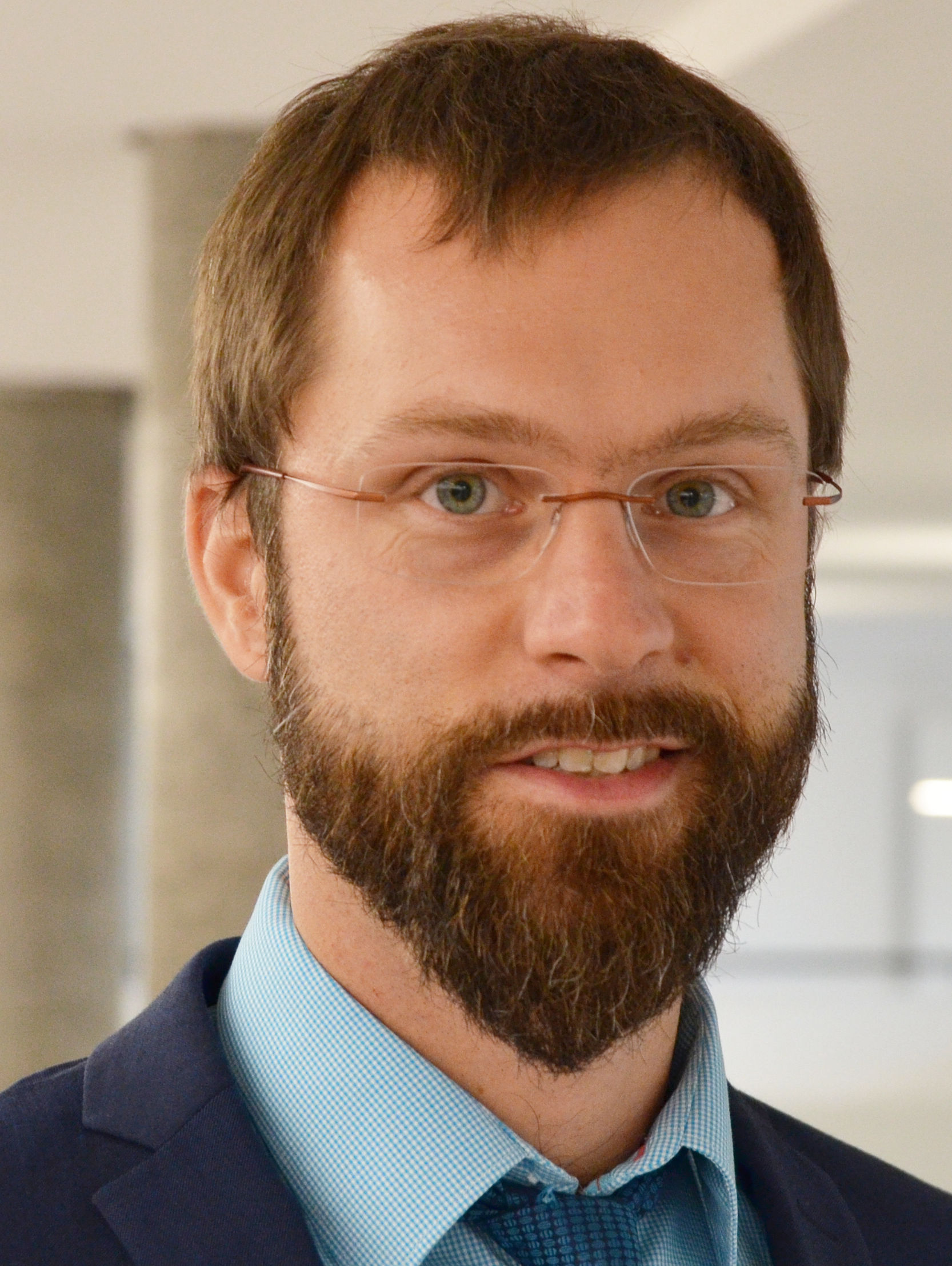}}]{Michael Muma}  (S’10–M’14) received the Dipl.-Ing (2009) and the Dr.-Ing. (summa cum laude, 2014) degree in electrical engineering and information technology both from Technische Universität Darmstadt, Darmstadt, Germany. He is currently an Independent Junior Research Group Leader at TU Darmstadt, and a scientist at the interdisciplinary and multi-site research cooperation emergenCITY. He is devoted to developing new data science approaches, theory and tools for robust signal processing and statistical learning with applications in biomedicine, radar, and sensor networks. For his contributions to robust signal processing and statistical learning, he received the prestigious EURASIP Early Career Award (2021). He was co-author of the 2020 IEEE Radar Conference Student Contest Best Paper Award. Since 2019, he is Associate Editor for IEEE Transactions on Signal Processing, and Elected Member EURASIP Technical Area Committee on Theoretical and Methodological Trends in Signal Processing (TMTSP). Together with his coauthors, he received the 2017 IEEE Signal Processing Magazine Best Paper Award for the paper entitled “Robust Estimation in Signal Processing: A Tutorial-Style Treatment of Fundamental Concepts.” In 2017, he also received the Athene Young Investigator Award for Exceptionally Qualified Early Career Researchers, and in 2015, he was the Supervisor of the Technische Universität Darmstadt student team who won the international IEEE Signal Processing Cup.
\end{IEEEbiography}


\begin{IEEEbiography}[{\includegraphics[width=1in,height=1.25in,clip,keepaspectratio]{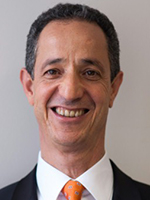}}]{Abdelhak M. Zoubir} (S’87–M’92–SM’97–F’08) received the Dr.-Ing. degree from Ruhr-Universität
Bochum, Bochum, Germany, in 1992. From 1992 to
1998, he was an Associate Professor with the Queensland University of Technology, Brisbane, QLD,
Australia. In 1999, he joined the Curtin University
of Technology, Bentley, WA, Australia, as a Professor of Telecommunications and was the Interim Head
of the School of Electrical and Computer Engineering from 2001 to 2003. In 2003, he was a Professor in signal processing and the Head of the Signal Processing Group, Technische Universität Darmstadt, Darmstadt, Germany, where he is currently the Head of the Department of Electrical Engineering and Information Technology (2020-2022). His research interests include statistical
methods for signal processing with emphasis on bootstrap techniques, robust
detection and estimation and array processing applied to telecommunications,
radar, sonar, car engine monitoring and safety, and biomedicine. He authored or
coauthored over 400 journal and conference papers on these areas. He was the
Technical Chair of the 11th IEEE Workshop on Statistical Signal Processing in
2001, General Co-Chair of the 3rd IEEE International Symposium on Signal
Processing and Information Technology in 2003, and of the 5th IEEE Workshop on Sensor Array and Multichannel Signal Processing in 2008. He was the
General Co-Chair of the European Conference on Signal Processing in 2013
in Marrakesh (Morocco), and the 14th IEEE Workshop on Signal Processing
Advances in Wireless Communications in 2013 in Darmstadt, Germany. He
was the Technical Co-Chair of ICASSP-14, held in Florence. He was an Associate Editor for the IEEE TRANSACTIONS ON SIGNAL PROCESSING from 1999 to 2005 and a Member of the Senior Editorial Board for the IEEE JOURNAL ON
SELECTED TOPICS IN SIGNAL PROCESSING from 2009 to 2011. From 2005 to 2013, he served on the Editorial Board of the EURASIP Journal Signal Processing, and was the Editor-in-Chief for the IEEE SIGNAL PROCESSING MAGAZINE (2012–2014). He was the Chair (2010–2011), the Vice-Chair (2008–2009), and a member (2002–2008) of the IEEE SPS Technical Committee Signal Processing Theory and Methods, and a member (2007–2012) of the IEEE SPS Technical Committee Sensor Array and Multichannel Signal Processing. He has served on the Board of Governors of the IEEE SPS as an elected Member-at-Large (2015–2017), and was an elected Member of the Board of Directors of the EURASIP from 2008 to 2016. He was the President of EURASIP (2017–2018), and a Member of the IEEE Fellow Evaluation Committee of the IEEE SPS (2018-2020). He is currently an elected Member of the Board of Directors of the EURASIP (2019-2021). He is an IEEE Distinguished
Lecturer (Class 2010–2011).
\end{IEEEbiography}






\end{document}


%
\title{Supplementary Information: Robust Regularized Locality Preserving Indexing for Fiedler Vector Estimation}
%

\author{Aylin~Ta{\c{s}}tan,~\IEEEmembership{Student Member,~IEEE,}
        Michael~Muma,~\IEEEmembership{Member,~IEEE,}
        and~Abdelhak~M.~Zoubir,~\IEEEmembership{Fellow,~IEEE}
\thanks{The authors are with the Signal Processing Group, Technische Universität
Darmstadt, Darmstadt, Germany (e-mail: atastan@spg.tu-darmstadt.de; muma@spg.tu-darmstadt.de; zoubir@spg.tu-darmstadt.de).\protect\\
}
}

\markboth{}%
{Shell \MakeLowercase{\textit{et al.}}: Robust Regularized Locality Preserving Indexing for Fiedler Vector Estimation}

\maketitle

\IEEEdisplaynontitleabstractindextext
\IEEEpeerreviewmaketitle

\ifCLASSOPTIONcompsoc
\IEEEraisesectionheading{\section{Structure}\label{sec:introduction}}
\else
\section{Structure}
\label{sec:introduction}
\fi

The Supplementary Information for the paper Robust Regularized Locality Preserving Indexing for Fiedler Vector Estimation is organized as follows: In Appendix A and B, the theorems
by reference to outlier effects on eigen-decomposition and robust Fiedler vector estimation are proved, respectively. Appendix C contains the detailed eigen-decompositions used to analyse the effects of outliers and in Appendix D additional experimental results are provided to show the outlier effects on eigen-decompositon, cluster enumeration and image segmentation, respectively.
\setcounter{secnumdepth}{0}
\vspace{-1.5mm}
\section{Appendix A~: Outlier Effects on Eigen-decomposition}
\subsection{A.1~Proof of Theorem 2}\label{app:ProofTheorem2}
\vspace{-1mm}
\begin{proof}[\unskip\nopunct]
Suppose that $\mathbf{W}\in\mathbb{R}^{n\times n}$ denotes a symmetric block diagonal affinity matrix that can be written for $k$ blocks as
\vspace{-1mm}
\begin{align*}
\begin{bmatrix}
\mathbf{W}_1 & &  \dots &0   \\
 & \mathbf{W}_2 &  & \vphantom{\int\limits^x}\smash{\vdots} \\
\vphantom{\int\limits^x}\smash{\vdots} &  & \ddots  & \vphantom{\int\limits^x}\smash{\vdots} \\
 0&\dots  &   & \mathbf{W}_k \\
\end{bmatrix} 
\end{align*}
where all values outside the block matrices $\mathbf{W}_1,\dots,\mathbf{W}_k$ are equal to zero. Accordingly, the Laplacian is also a block diagonal matrix with blocks $\mathbf{L}_1,\dots,\mathbf{L}_k$ on the diagonal. To study the effects of outlier contamination on eigen-decomposition, let a single outlier, which does not have considerable correlations with any of the feature vectors, be added to the affinity matrix. The corresponding corrupted affinity matrix, weight matrix and Laplacian matrix are denoted as ${\Tilde{\mathbf{W}}\in\mathbb{R}^{(n+1)\times (n+1)}}$, ${\Tilde{\mathbf{D}}\in\mathbb{R}^{(n+1)\times (n+1)}}$ and ${\Tilde{\mathbf{L}}\in\mathbb{R}^{(n+1)\times (n+1)}}$, respectively. If the outlier does not correlate with any of the blocks, it can be written as a single-element block $\tilde{w}_0$, i.e.,
\vspace{-1mm}
\begin{align*}
\begin{bmatrix}
\mathbf{W}_1 &  & & \dots  & 0  \\
 &\mathbf{W}_2 &  & & \vphantom{\int\limits^x}\smash{\vdots} \\
 & & \ddots  &   &  \\
\vphantom{\int\limits^x}\smash{\vdots}&&   &\mathbf{W}_k &  \\
0 &\dots & &  & \Tilde{w}_o \\
\end{bmatrix}
\end{align*}
Similarly, the corrupted Laplacian $\tilde{\mathbf{L}}$ has blocks $\mathbf{L}_1,\dots,\mathbf{L}_k$ and $\Tilde{l}_o$, respectively, on the diagonal. Using determinant properties of block matrices, it follows that
\begin{align*}
    \mathrm{det}(\Tilde{\mathbf{L}}\hspace{-0.4mm}-\hspace{-0.4mm}\Tilde{\lambda}\Tilde{\mathbf{D}})\hspace{-0.8mm}=\hspace{-0.8mm}\mathrm{det}(\mathbf{L}_1\hspace{-0.4mm}-\hspace{-0.4mm}\lambda\mathbf{D}_1)\hspace{-0.5mm}\dots\hspace{-0.5mm}\mathrm{det}(\mathbf{L}_k\hspace{-0.4mm}-\hspace{-0.4mm}\lambda\mathbf{D}_k)\mathrm{det}(\Tilde{l}_o\hspace{-0.4mm}-\hspace{-0.4mm}\Tilde{\lambda} \Tilde{d}_o),
\end{align*}
where $\Tilde{d}_o$ denotes overall edge weight attached to the outlier for ${\Tilde{d}_{o}=\sum_j \Tilde{w}_{j,o}}$, and $\mathbf{L}_k$ and $\mathbf{D}_k$ are the Laplacian and the edge weight matrix of the $k$th block, respectively. Because the affinity matrix $\Tilde{\mathbf{W}}$ has eigenvalues ${0\leq\lambda_0\leq \lambda_1\leq\dots\leq\lambda_{n-1}}$ for blocks $\mathbf{L}_1,\dots,\mathbf{L}_k$ on the diagonal, the expression $\mathrm{det}(\Tilde{l}_o-\Tilde{\lambda} \Tilde{d}_o)=0$ yields
 \begin{align*}
\Tilde{d}_o-\Tilde{w}_o-\Tilde{\lambda} \Tilde{d}_o&=0\\
\Tilde{d}_o(1-\Tilde{\lambda})&=\Tilde{w}_o
 \end{align*}
 and, for $\Tilde{d}_o\approx\Tilde{w}_o$, it follows that $\Tilde{\lambda}=0$.
This result is straight-forwardly generalized to the case of adding $N_o$ outliers with negligibly small correlation coefficients, which results in $N_o$ single outlier blocks and $N_o$ zero eigenvalues.
\vspace{0.5mm}
\end{proof}

\subsection{A.2~Proof of Corollary 2.1}\label{app:ProofCorollary2.1}
\vspace{-1mm}
\begin{proof}[\unskip\nopunct]
Consider an affinity matrix $\mathbf{W}\in\mathbb{R}^{n\times n}$ with eigenvalues ${0\leq\lambda_0\leq \lambda_1\leq\dots\leq\lambda_{n-1}}$. Assuming that a single Type I outlier is added to $\mathbf{W}$, the corrupted affinity matrix ${\Tilde{\mathbf{W}}\in\mathbb{R}^{(n+1)\times (n+1)}}$ is written as two blocks
\begin{align*}
\begin{bmatrix}
\mathbf{W} & 0 \\
 0 & \Tilde{w}_o \\
\end{bmatrix} 
\end{align*}
where $\mathbf{W}$ is the first block that is equal to an initial affinity matrix and $\Tilde{w}_o$ denotes single element block of the outlier. The corrupted Laplacian $\Tilde{\mathbf{L}}\in\mathbb{R}^{(n+1)\times (n+1)}$ of such a matrix also has two blocks, i.e. $\mathbf{L}$ and $\Tilde{l}_o$. The first block has eigenvalues ${0\leq\lambda_0\leq \lambda_1\leq\dots\leq\lambda_{n-1}}$. Because $\mathrm{det}(\Tilde{l}_o-\Tilde{\lambda} \Tilde{d}_o)$, gives an additional zero eigenvalue.
\end{proof}

\subsection{A.3~Proof of Corollary 2.2}\label{app:ProofCorollary2.2}
\vspace{-1mm}
\begin{proof}[\unskip\nopunct]
For simplicity, let $\mathbf{W}\in\mathbb{R}^{n\times n}$ and the corresponding Laplacian $\mathbf{L}\in\mathbb{R}^{n\times n}$ consist of $k=2$ blocks.  To estimate eigenvectors, the $\mathbf{y}_i\in\{\mathbf{y}_0,\dots,\mathbf{y}_3  \}$, where $\mathbf{y}_i$ is the eigenvector associated with the $i$th eigenvalue $\lambda_i$ for $\mathbf{L}$, we consider the eigen-decompositions
$\mathbf{L}\mathbf{y}_i=\lambda_i\mathbf{y}_i$ and $\mathbf{L}\mathbf{y}_i=\lambda_i\mathbf{D}\mathbf{y}_i$ whose corresponding eigenvectors are equivalent and can be written as\footnote{For a discussion, see Sections~C.1 and C.3.}
\begin{align*}
\begin{bmatrix}
y_{1,0}\\
y_{1,0}\\
y_{3,0}\\
y_{3,0}\\
\end{bmatrix}\hspace{5mm}
\begin{bmatrix}
y_{1,1}\\
y_{1,1}\\
y_{3,1}\\
y_{3,1}\\
\end{bmatrix}\hspace{5mm}
\begin{bmatrix}
\quad y_{1,2}\\
-y_{1,2}\\
\quad y_{3,2}\\
-y_{3,2}\\
\end{bmatrix}\hspace{5mm}
\begin{bmatrix}
\quad y_{1,3}\\
-y_{1,3}\\
\quad y_{3,3}\\
-y_{3,3}\\
\end{bmatrix}
\end{align*}
where $y_{j,i}$ denotes the $j$th embedding results in the $i$th eigenvector. Adding a single outlier to the affinity matrix $\mathbf{W}$ leads to the corrupted affinity matrix, weight matrix and Laplacian matrix denoted as ${\Tilde{\mathbf{W}}\in\mathbb{R}^{(n+1)\times (n+1)}}$, ${\Tilde{\mathbf{D}}\in\mathbb{R}^{(n+1)\times (n+1)}}$ and ${\Tilde{\mathbf{L}}\in\mathbb{R}^{(n+1)\times (n+1)}}$, respectively. The eigenvalues of  ${\Tilde{\mathbf{L}}}$ for the standard eigen-decomposition $\mathbf{L}\mathbf{y}_{i}=\lambda_{i}\mathbf{y}_{i}$ are $\Tilde{\lambda}_{0,1,2}=0$, $\Tilde{\lambda}_{3}=2w_1$ and $\Tilde{\lambda}_{4}=2w_2$; and for the generalized eigen-decomposition $\mathbf{L}\mathbf{y}_{i}=\lambda_{i}\mathbf{D}\mathbf{y}_{i}$ they are $\Tilde{\lambda}_{0,1}=0$, $\Tilde{\lambda}_{2,3}=2$ and $\Tilde{\lambda}_{4}=\lambda$ where $\lambda$ can take any value if $\Tilde{w}_o=0$. \color{black}Even though the objective function results in different eigenvalues, the eigenvectors are equivalent for both eigen-decompositions
\begin{align*}
\begin{bmatrix}
\Tilde{y}_{1,0}\\
\Tilde{y}_{1,0}\\
\Tilde{y}_{3,0}\\
\Tilde{y}_{3,0}\\
\Tilde{y}_{o,0}\\
\end{bmatrix}\hspace{5mm}
\begin{bmatrix}
\Tilde{y}_{1,1}\\
\Tilde{y}_{1,1}\\
\Tilde{y}_{3,1}\\
\Tilde{y}_{3,1}\\
\Tilde{y}_{o,1}\\
\end{bmatrix}\hspace{5mm}
\begin{bmatrix}
\Tilde{y}_{1,2}\\
\Tilde{y}_{1,2}\\
\Tilde{y}_{3,2}\\
\Tilde{y}_{3,2}\\
\Tilde{y}_{o,2}\\
\end{bmatrix}\hspace{5mm}
\begin{bmatrix}
\quad \Tilde{y}_{1,3}\\
-\Tilde{y}_{1,3}\\
\quad \Tilde{y}_{3,3}\\
-\Tilde{y}_{3,3}\\
\quad \Tilde{y}_{o,3}\\
\end{bmatrix}\hspace{5mm}
\begin{bmatrix}
\quad \Tilde{y}_{1,4}\\
-\Tilde{y}_{1,4}\\
\quad \Tilde{y}_{3,4}\\
-\Tilde{y}_{3,4}\\
\quad \Tilde{y}_{o,4}\\
\end{bmatrix}
\end{align*}
where $\Tilde{y}_{j,i}$ denotes the $j$th  embedding result in the $i$th  eigenvector and $\Tilde{y}_{o,i}$ is the embedding result of the outlier in the $i$th eigenvector of the corrupted Laplacian matrix $\Tilde{\mathbf{L}}$. As can be seen when ${| \Tilde{y}_{o,i}|\rightarrow 1}$, the remaining  embedding results associated with different blocks become small-valued to satisfy $\|\Tilde{\mathbf{y}}_{i}\|_2^2=1$. This leads to a decrease in the $\ell_2$ norm between mapping points associated with different blocks based on the first $k$ eigenvectors.
\end{proof}
\vspace{-5mm}
\subsection{A.4~Proof of Theorem 3}\label{app:ProofTheorem3}
\vspace{-1mm}
\begin{proof}[\unskip\nopunct]
Let $\mathbf{W}\in\mathbb{R}^{n\times n}$ denote a block zero-diagonal symmetric
affinity matrix for $k$ blocks which does not include correlations between different blocks, i.e., 
\begin{align*}
\begin{bmatrix}
0 & w_1 &\dots  &  &  & &\\
w_1 & 0 &\dots  &  &  & &\\
\vphantom{\int\limits^x}\smash{\vdots} &  \vphantom{\int\limits^x}\smash{\vdots}& 0 & w_2 &  & &\\
 &  & w_2  & 0& &  & \\
  &   &   &  & \ddots &\vphantom{\int\limits^x}\smash{\vdots} &\vphantom{\int\limits^x}\smash{\vdots}\\
  &  &   &  &\dots  &0 &w_k\\
  &   &   &  &\dots &w_k &0\\
\end{bmatrix}
\end{align*}
The associated Laplacian has the same block structure for $\mathbf{L}_1,\dots,\mathbf{L}_k$. To estimate eigenvalues of $\mathbf{L}$, $\mathrm{det}(\mathbf{L}-\lambda\mathbf{D})$ can be written in terms of the distinct determinants as
\begin{align*}
    \mathrm{det}(\mathbf{L}-\lambda\mathbf{D})=\mathrm{det}(\mathbf{L}_1-\lambda\mathbf{D}_1)\dots\mathrm{det}(\mathbf{L}_k-\lambda\mathbf{D}_k)
\end{align*}
where $\mathbf{L}_k$ and $\mathbf{D}_k$, respectively,  denote the Laplacian and the weight matrix of the $k$th block. After solving this equation for the $i$th block, the eigenvalues are equal to ${\lambda_0=0}$ and ${\lambda_1=2}$. Now assume that the first and second blocks are correlated because of Type II outliers that have been added to the data set. Because the correlations are the effect of the outliers, we call them undesired correlations. The resulting corrupted affinity matrix $\Tilde{\mathbf{W}}\in\mathbb{R}^{n\times n}$ consists of $k$ blocks, i.e.,
\vspace{-1mm}
\begin{align*}
\begin{bmatrix}
0 & w_1 & \Tilde{w}_u &\Tilde{w}_u  &\dots  & &\\
w_1 & 0 & \Tilde{w}_u& \Tilde{w}_u &\dots   & &\\
\Tilde{w}_u & \Tilde{w}_u & 0 & w_2 &  & &\\
\Tilde{w}_u & \Tilde{w}_u & w_2  & 0& &  & \\
\vphantom{\int\limits^x}\smash{\vdots}  & \vphantom{\int\limits^x}\smash{\vdots}  &   &  & \ddots &\vphantom{\int\limits^x}\smash{\vdots} &\vphantom{\int\limits^x}\smash{\vdots}\\
  &  &   &  &\dots  &0 &w_k\\
  &   &   &  &\dots &w_k &0\\
\end{bmatrix}
\end{align*}
where $\Tilde{w}_u$ denotes correlation coefficient belonging to the undesired correlations for $\Tilde{w}_u\neq 0$. For simplicity, let $\Tilde{\mathbf{L}}_{12}$ and $\Tilde{\mathbf{D}}_{12}$ be the Laplacian and the weight matrix of correlated blocks, respectively. Hence, the eigenvalues of the corrupted Laplacian $\Tilde{\mathbf{L}}$ are estimated using distinct determinants as
\begin{align*}
    \mathrm{det}(\Tilde{\mathbf{L}}-\Tilde{\lambda}\Tilde{\mathbf{D}})=\mathrm{det}(\Tilde{\mathbf{L}}_{12}-\Tilde{\lambda}\Tilde{\mathbf{D}}_{12})\dots\mathrm{det}(\mathbf{L}_k-\lambda\mathbf{D}_k)
\end{align*}
where $\Tilde{\mathbf{L}}_{12}$ and $\Tilde{\mathbf{D}}_{12}$, respectively, denote the Laplacian and the weight matrix associated with the block that is a combination of blocks $i=1$ and $j=2$. To study the effect of outliers on the eigen-decomposition, the above equation shows that analyzing ${\mathrm{det}(\Tilde{\mathbf{L}}_{12}-\Tilde{\lambda}\Tilde{\mathbf{D}}_{12})=0}$ is sufficient, which can equivalently be written as
\begin{align*}
\begin{vmatrix}
w_1-w_1\Tilde{\lambda}+c & -w_1 & -\Tilde{w}_u & -\Tilde{w}_u\\
-w_1 &w_1-w_1\Tilde{\lambda}+c & -\Tilde{w}_u & -\Tilde{w}_u\\
-\Tilde{w}_u & -\Tilde{w}_u & w_2-w_2\Tilde{\lambda}+c & -w_2\\
-\Tilde{w}_u & -\Tilde{w}_u & -w_2 &w_2-w_2\Tilde{\lambda}+c\\
\end{vmatrix}=\hspace{-0.5mm}0 
\end{align*}
where $c=2\Tilde{w}_u-2\Tilde{w}_u\Tilde{\lambda}$. To simplify the determinant problem for block matrices as in \cite{detblockmatrices}, the matrix $\Tilde{\mathbf{L}}_{12}-\Tilde{\lambda}\Tilde{\mathbf{D}}_{12}$ must be commutative such that ${(-\Tilde{\mathbf{W}}_{u})(\Tilde{\mathbf{L}}_{2}-\Tilde{\lambda}\Tilde{\mathbf{D}}_2)=(\Tilde{\mathbf{L}}_{2}-\Tilde{\lambda}\Tilde{\mathbf{D}}_2)(-\Tilde{\mathbf{W}}_{u})}$. Accordingly, the matrix $\Tilde{\mathbf{L}}_{12}-\Tilde{\lambda}\Tilde{\mathbf{D}}_{12}$ is commutative if it satisfies 
\begin{align*}
\begin{bmatrix}
-\Tilde{w}_u & -\Tilde{w}_u \\
-\Tilde{w}_u & -\Tilde{w}_u \\
\end{bmatrix}
\begin{bmatrix}
z_2 & -w_2 \\
-w_2 & z_2 \\
\end{bmatrix}=
\begin{bmatrix}
z_2 & -w_2 \\
-w_2 & z_2 \\
\end{bmatrix}
\begin{bmatrix}
-\Tilde{w}_u & -\Tilde{w}_u \\
-\Tilde{w}_u & -\Tilde{w}_u \\
\end{bmatrix}
\end{align*}
where $z_2=w_2-w_2\Tilde{\lambda}+c$. The equality is shown as
\begin{align*}
(-\Tilde{\mathbf{W}}_{u})(\Tilde{\mathbf{L}}_{2}-\Tilde{\lambda}\Tilde{\mathbf{D}}_2)&=
\begin{bmatrix}
-\Tilde{w}_u & -\Tilde{w}_u \\
-\Tilde{w}_u & -\Tilde{w}_u 
\end{bmatrix}
\begin{bmatrix}
z_2 & -w_2 \\
-w_2 & z_2 
\end{bmatrix}\\
&=\begin{bmatrix}
-\Tilde{w}_u z_2+w_2\Tilde{w}_u & w_2\Tilde{w}_u-\Tilde{w}_u z_2 \\
-\Tilde{w}_u z_2+w_2\Tilde{w}_u & w_2\Tilde{w}_u-\Tilde{w}_u z_2
\end{bmatrix}\\
&=
\begin{bmatrix}
z_2 & -w_2 \\
-w_2 & z_2 
\end{bmatrix}
\begin{bmatrix}
-\Tilde{w}_u & -\Tilde{w}_u \\
-\Tilde{w}_u & -\Tilde{w}_u 
\end{bmatrix}\\
&=(\Tilde{\mathbf{L}}_{2}-\lambda\Tilde{\mathbf{D}}_2)(-\Tilde{\mathbf{W}}_{u})
\end{align*}
For a commutative matrix, it holds that
\begin{align*}
    \mathrm{det}(\Tilde{\mathbf{L}}_{12}-\Tilde{\lambda}\Tilde{\mathbf{D}}_{12})\hspace{-0.5mm}=\hspace{-0.5mm}\mathrm{det}\Big((\Tilde{\mathbf{L}}_{1}-\Tilde{\lambda}\Tilde{\mathbf{D}}_1)(\Tilde{\mathbf{L}}_{2}-\Tilde{\lambda}\Tilde{\mathbf{D}}_2)-\Tilde{\mathbf{W}}_u^2\Big)\hspace{-0.5mm}=\hspace{-0.5mm}0
\end{align*}
The solution results in a coupled set of equations as follows
\begin{align*}
0\hspace{-0.5mm} &=\hspace{-0.5mm}\bigg|
\begin{bmatrix}
z_1 & -w_1\\
-w_1  & z_1
\end{bmatrix}
\begin{bmatrix}
z_2 & -w_2\\
-w_2  & z_2
\end{bmatrix}-
\begin{bmatrix}
2\Tilde{w}_u^2 & 2\Tilde{w}_u^2\\
2\Tilde{w}_u^2 & 2\Tilde{w}_u^2
\end{bmatrix}\bigg|\\
&=
\bigg|
\begin{bmatrix}
z_1z_2+w_1w_2-2\Tilde{w}_u^2 & -w_2z_1-w_1z_2-2\Tilde{w}_u^2\\
-w_1z_2-w_2z_1-2\Tilde{w}_u^2& w_1w_2+z_1z_2-2\Tilde{w}_u^2
\end{bmatrix}
\bigg|\\
&=
\big(z_1z_2+w_1w_2-2\Tilde{w}_u^2\big)^2-\big(-w_2z_1-w_1z_2-2\Tilde{w}_u^2\big)^2\\
&=\big(z_1z_2+w_1w_2-2\Tilde{w}_u^2-(-w_2z_1-w_1z_2-2\Tilde{w}_u^2)\big)\\
&\quad\big(z_1z_2+w_1w_2-2\Tilde{w}_u^2+\big(-w_2z_1-w_1z_2-2\Tilde{w}_u^2)\big)\\
&=\big(w_1w_2\Tilde{\lambda}^2-4w_1w_2\Tilde{\lambda}+4w_1w_2-w_1\Tilde{\lambda} c-w_2\Tilde{\lambda} c+2w_1c\\&\quad+2w_2c+c^2 \big)\big(w_1w_2\Tilde{\lambda}^2-w_1\Tilde{\lambda} c-w_2\Tilde{\lambda} c+c^2-4\Tilde{w}_u^2 \big)\\
&=(w_1\Tilde{\lambda}-2w_1-c)(w_2\Tilde{\lambda}-2w_2-c)\\&\quad\big((w_1\Tilde{\lambda}-c)(w_2\Tilde{\lambda}-c)-4\Tilde{w}_u^2 \big)
\end{align*}
where $z_1=w_1-w_1\Tilde{\lambda}+c$, $z_2=w_2-w_2\Tilde{\lambda}+c$ and ${c=2\Tilde{w}_u-2\Tilde{w}_u\Tilde{\lambda}}$.
Now, substituting ${c=2\Tilde{w}_u-2\Tilde{w}_u\Tilde{\lambda}}$ into the first and second expressions yields
\begin{align*}
        0&=w_1\Tilde{\lambda}-2w_1-2\Tilde{w}_u+2\Tilde{w}_u\Tilde{\lambda}\\
      \Tilde{\lambda}&=\frac{2(w_1+\Tilde{w}_u)}{w_1+2\Tilde{w}_u}
\end{align*}
and
\begin{align*}
    0&=w_2\Tilde{\lambda}-2w_2-2\Tilde{w}_u+2\Tilde{w}_u\Tilde{\lambda}\\
    \Tilde{\lambda}&=\frac{2(w_2+\Tilde{w}_u)}{w_2+2\Tilde{w}_u}.
\end{align*}
Further, the third case, i.e.,  $(w_1\Tilde{\lambda}-c)(w_2\Tilde{\lambda}-c)-4\Tilde{w}_u^2=0$ must be examined by substituting ${c=2\Tilde{w}_u-2\Tilde{w}_u\Tilde{\lambda}}$:
\begin{align*}
  \begin{split}
      0=&w_1w_2\Tilde{\lambda}^2-w_1\Tilde{\lambda}(2\Tilde{w}_u-2\Tilde{w}_u\Tilde{\lambda})-w_2\Tilde{\lambda}(2\Tilde{w}_u-2\Tilde{w}_u\Tilde{\lambda})\\&+(2\Tilde{w}_u-2\Tilde{w}_u\Tilde{\lambda})^2-4\Tilde{w}_u^2\\
      0=&\Tilde{\lambda}(w_1w_2\Tilde{\lambda}-2\Tilde{w}_uw_1+2\Tilde{w}_uw_1\Tilde{\lambda}-2\Tilde{w}_uw_2\\&+2\Tilde{w}_uw_2\Tilde{\lambda}-8\Tilde{w}_u^2+4\Tilde{w}_u^2\Tilde{\lambda})
  \end{split}
\end{align*}
where the roots of the equation are the two smallest eigenvalues\footnote{For a detailed discussion about the ordering, see Section~C.2.}
\begin{align*}
    \Tilde{\lambda}_0=0\hspace{3mm}\mathrm{and}\hspace{3mm}
    \Tilde{\lambda}_1=\frac{2(w_1\Tilde{w}_u+w_2\Tilde{w}_u+4\Tilde{w}_u^2)}{(w_1+2\Tilde{w}_u)(w_2+2\Tilde{w}_u)}
\end{align*}
As a result, if $\Tilde{w}_u>0$, the second smallest eigenvalue corresponds to the Laplacian $\Tilde{\mathbf{L}}_{12}$ is greater than zero while the second smallest eigenvalue is equal to zero for $\mathbf{L}_{12}$.
\end{proof}
\vspace{-3mm}
\subsection{A.5~Proof of Theorem 4}\label{app:ProofTheorem4}
\begin{proof}[\unskip\nopunct]
The eigenvectors of Laplacian matrix $\mathbf{L}\in\mathbb{R}^{n\times n}$ are identical for both eigen-decompositions $\mathbf{L}\mathbf{y}_i=\lambda_i\mathbf{y}_i$ and $\mathbf{L}\mathbf{y}_i=\lambda_i\mathbf{D}\mathbf{y}_i$.\footnote{For a detailed information about the eigenvectors of $\mathbf{L}$, see Section~C.1 and C.3.} Then, the matrix ${\Tilde{\mathbf{Y}}=[\Tilde{\mathbf{y}}_0,\dots,\Tilde{\mathbf{y}}_{K-1}]\in\mathbb{R}^{n\times K}}$, i.e.
\begin{align*}
\mathbf{Y}=
\begin{bmatrix}
0 & y_{1,1}\\
0 & y_{1,1}\\
y_{3,0} & 0\\
y_{3,0} & 0\\
\end{bmatrix}.
\end{align*}
The next step is to design the matrix ${\Tilde{\mathbf{Y}}=[\Tilde{\mathbf{y}}_0,\dots,\Tilde{\mathbf{y}}_{K-1}]\in\mathbb{R}^{n\times K}}$ using the eigenvectors of the corrupted Laplacian matrix $\Tilde{\mathbf{L}}\in\mathbb{R}^{n\times n}$ based on, respectively, the standard eigen-decomposition $\mathbf{L}\mathbf{y}_i=\lambda_i\mathbf{y}_i$ and the generalized one $\mathbf{L}\mathbf{y}_i=\lambda_i\mathbf{D}\mathbf{y}_i$ as\footnote{For a detailed information about the eigenvectors of $\Tilde{\mathbf{L}}$, see Section~C.2 and C.4.}
\begin{align*}
\Tilde{\mathbf{Y}}=
\begin{bmatrix}
\Tilde{y}_{1,0} & \quad \Tilde{y}_{1,1}\\
\Tilde{y}_{1,0} & \quad \Tilde{y}_{1,1}\\
\Tilde{y}_{1,0} & - \Tilde{y}_{1,1}\\
\Tilde{y}_{1,0} & - \Tilde{y}_{1,1}\\
\end{bmatrix}\hspace{5mm}
\Tilde{\mathbf{Y}}=
\begin{bmatrix}
\Tilde{y}_{1,0} & \Tilde{y}_{1,1}\\\vspace{0.01mm}
\Tilde{y}_{1,0} & \Tilde{y}_{1,1}\\\vspace{0.01mm}
\Tilde{y}_{1,0} & \Tilde{y}_{3,1}\\\vspace{0.01mm}
\Tilde{y}_{1,0} & \Tilde{y}_{3,1}\\
\end{bmatrix}\hspace{5mm}
\end{align*}

By looking at the first column vector of ${\Tilde{\mathbf{Y}}}$ associated with the smallest eigenvalue $\Tilde{\lambda}_0$, it is evident that all feature vectors are mapped onto the same location $\Tilde{y}_{1,1}$ for the eigen-decompositions $\mathbf{L}\mathbf{y}_i=\lambda_i\mathbf{y}_i$ and $\mathbf{L}\mathbf{y}_i=\lambda_i\mathbf{D}\mathbf{y}_i$.
\end{proof}

\vspace{-3mm}
\subsection{A.6~Proof of Corollary 4.1}\label{app:ProofCorollary4.1}
\vspace{-1mm}
\begin{proof}[\unskip\nopunct]
Assuming that the column vectors of the matrices ${\mathbf{Y}}$ and ${\Tilde{\mathbf{Y}}}$ take values in a range ${\{y_{\mathrm{min}},y_{\mathrm{max}}\}}$, without loss of generalization, the matrices can be rewritten by defining the scaled eigenvectors of $\mathbf{L}\mathbf{y}_i=\lambda_i\mathbf{y}_i$ as
\begin{align*}
\mathbf{Y}=
\begin{bmatrix}
s(0) & s(y_{1,1})\\
s(0) & s(y_{1,1})\\
s(y_{3,0}) & s(0)\\
s(y_{3,0}) & s(0)\\
\end{bmatrix}\hspace{5mm}
\Tilde{\mathbf{Y}}=
\begin{bmatrix}
s(\Tilde{y}_{1,0}) & s(\Tilde{y}_{1,1})\\
s(\Tilde{y}_{1,0}) & s(\Tilde{y}_{1,1})\\
s(\Tilde{y}_{1,0}) & s(-\Tilde{y}_{1,1})\\
s(\Tilde{y}_{1,0}) & s(-\Tilde{y}_{1,1})\\
\end{bmatrix},
\end{align*}
where $\mathit{s}(y_{j,i})$ and $\mathit{s}(\Tilde{y}_{j,i})$ respectively denote the scaled $j$th embedding result in the $i$th eigenvector associated with Laplacian matrices $\mathbf{L}$ and $\Tilde{\mathbf{L}}$, and ${\mathit{s}(y_{j,i})=y_\mathrm{min}+\frac{y_{j,i}-\mathrm{min}(\mathbf{y}_i)}{\mathrm{max}(\mathbf{y}_i)-\mathrm{min}(\mathbf{y}_i)}(y_\mathrm{max}-y_\mathrm{min})}$ is the scaling function with $\mathrm{min}(\mathbf{y}_i)$ and $\mathrm{max}(\mathbf{y}_i)$ denoting the minimum and the maximum valued mapping points in the eigenvector $\mathbf{y}_i$, respectively. To cluster the row vectors of the matrix ${\mathbf{Y}}$, the squared Euclidean distances within the blocks and between different blocks are evaluated as
\begin{align*}
    d_{0,1}=&d_{2,3}=0\\
    d_{0,2}=&\big(\mathit{s}(0)-\mathit{s}(y_{3,0})\big)^2+\big(\mathit{s}(y_{1,1})-\mathit{s}(0)\big)^2\\
    =&2\big(y_\mathrm{max}-y_\mathrm{min}\big)^2,
\end{align*}
where $d_{i,j}$ denotes the distance between the $i$th and the $j$th feature vector with $d_{i,j}=d_{j,i}$ and $d_{0,2}=d_{0,3}=d_{1,2}=d_{1,3}$. 

Then, the distances within the blocks and between different blocks are evaluated for $\mathbf{L}\mathbf{y}_i=\lambda_i\mathbf{y}_i$ based ${\Tilde{\mathbf{Y}}}$ as
\begin{align*}
    d_{0,1}=&d_{2,3}=0\\
    d_{0,2}=&\big(\mathit{s}(\Tilde{y}_{1,0})-\mathit{s}(\Tilde{y}_{1,0})\big)^2+\big(\mathit{s}(\Tilde{y}_{1,1})-\mathit{s}(-\Tilde{y}_{1,1})\big)^2\\
    =&\big(y_\mathrm{max}-y_\mathrm{min}\big)^2.
\end{align*} 
The next step is to examine the matrices ${\mathbf{Y}}$ and ${\Tilde{\mathbf{Y}}}$ for the scaled eigenvectors of $\mathbf{L}\mathbf{y}_i=\lambda_i\mathbf{D}\mathbf{y}_i$, as
\begin{align*}
\mathbf{Y}=
\begin{bmatrix}
s(0) & s(y_{1,1})\\
s(0) & s(y_{1,1})\\
s(y_{3,0}) & s(0)\\
s(y_{3,0}) & s(0)\\
\end{bmatrix}\hspace{5mm}
\Tilde{\mathbf{Y}}=
\begin{bmatrix}
s(\Tilde{y}_{1,0}) & s(\Tilde{y}_{1,1})\\
s(\Tilde{y}_{1,0}) & s(\Tilde{y}_{1,1})\\
s(\Tilde{y}_{1,0}) & s(\Tilde{y}_{3,1})\\
s(\Tilde{y}_{1,0}) & s(\Tilde{y}_{3,1})\\
\end{bmatrix}\hspace{5mm}
\end{align*}
As the eigenvectors of $\mathbf{L}\mathbf{y}_i=\lambda_i\mathbf{D}\mathbf{y}_i$ are equivalent to $\mathbf{L}\mathbf{y}_i=\lambda_i\mathbf{y}_i$ for the Laplacian matrix $\mathbf{L}$, the distances associated with $\mathbf{Y}$ are also equal. Further, based on the knowledge that ${\Tilde{y}_{1,1}=\frac{-w_2+2\Tilde{w}_u}{w_1+2\Tilde{w}_u}\Tilde{y}_{3,1}}$, the scaled mapping points yield $s(\Tilde{y}_{1,1})=y_\mathrm{min}$ and $s(\Tilde{y}_{3,1})=y_\mathrm{max}$.\footnote{For a detailed discussion, see Section C.2.} Thus, the distances are computed for $\Tilde{\mathbf{Y}}$ as
\begin{align*}
    d_{0,1}=&d_{2,3}=0\\
    d_{0,2}=&\big(\mathit{s}(\Tilde{y}_{1,0})-\mathit{s}(\Tilde{y}_{1,0})\big)^2+\big(\mathit{s}(\Tilde{y}_{1,1})-\mathit{s}(-\Tilde{y}_{3,1})\big)^2\\
    =&\big(y_\mathrm{max}-y_\mathrm{min}\big)^2.
\end{align*}
This concludes the proof that the distance between the row vectors of the $\mathbf{Y}$ associated with different blocks is greater than that of 
$\Tilde{\mathbf{Y}}$.
\end{proof}
\vspace{-4mm}
\setcounter{secnumdepth}{0}
\section{Appendix B~: Robust Fiedler Vector Estimation}
\vspace{-1mm}
\subsection{B.1~Proof of Theorem 5}\label{app:ProofTheorem5}
\begin{proof}[\unskip\nopunct]
The objective function of RLPI, for the transformation vector $\boldsymbol{\beta}_i$, is given by
\begin{align*}
   \mathcal{L}(\mathbf{X},\hat{\mathbf{y}_i},\hat{\boldsymbol{\beta}_i)}=\|\mathbf{y}_i-\mathbf{X}^{\top}\boldsymbol{\beta}_i\|^2+\gamma\|\boldsymbol{\beta}_i\|^2.
\end{align*}
Introducing the weight function ${\boldsymbol{\Omega}\in \mathbb{R}^{n\times n}}$ leads to:
\begin{align*}
\begin{split}
   \mathcal{L}(\mathbf{X},\hat{\mathbf{y}}_i,\hat{\boldsymbol{\beta}}_i)&=\boldsymbol{\Omega}\|\mathbf{y}_i-\mathbf{X}^{\top}\boldsymbol{\beta}_i\|^2+\gamma\|\boldsymbol{\beta}_i\|^2\\
    =&\mathrm{tr}[\boldsymbol{\Omega}\mathbf{y}_i\mathbf{y}_i^{\top}-2\boldsymbol{\Omega}\mathbf{y}_i(\mathbf{X}^{\top}\boldsymbol{\beta}_i)^{\top}+\boldsymbol{\Omega}\mathbf{X}^{\top}\boldsymbol{\beta}_i(\mathbf{X}^{\top}\boldsymbol{\beta}_i)^{\top}]\\&+\gamma\mathrm{tr}(\boldsymbol{\beta}_i\boldsymbol{\beta}_i^{\top})\\
    =&\mathrm{tr}[\boldsymbol{\Omega}\mathbf{y}\mathbf{y}_i^{\top}-2\boldsymbol{\Omega}\mathbf{y}_i\boldsymbol{\beta}_i^{\top}\mathbf{X}+\boldsymbol{\Omega}\mathbf{X}^{\top}\boldsymbol{\beta}_i\boldsymbol{\beta}_i^{\top}\mathbf{X}]\\&+\gamma\mathrm{tr}(\boldsymbol{\beta}_i\boldsymbol{\beta}_i^{\top})\\
    =&\mathrm{tr}[\boldsymbol{\Omega}\mathbf{y}_i\mathbf{y}_i^{\top}-2\boldsymbol{\Omega}\mathbf{y}\boldsymbol{\beta}_i^{\top}\mathbf{X}+\boldsymbol{\beta}_i\boldsymbol{\beta}_i^{\top}(\mathbf{X}\boldsymbol{\Omega}\mathbf{X}^{\top}+\gamma\mathbf{I})]
\end{split}
\end{align*}
Setting the derivative of the right hand side with respect to $\boldsymbol{\beta}_i$ to zero, yields
\begin{align*}
   -2\mathbf{X}\boldsymbol{\Omega}\mathbf{y}_i+2\boldsymbol{\beta}_i(\mathbf{X}\boldsymbol{\Omega}\mathbf{X}^{\top}+\gamma\mathbf{I})=0.
\end{align*}
Thus,
\begin{align*}
    \hat{\boldsymbol{\beta}}_i=(\mathbf{X}\boldsymbol{\Omega}\mathbf{X}^{\top}+\gamma\mathbf{I})^{-1}\mathbf{X}\boldsymbol{\Omega}\mathbf{y}_i
\end{align*}For a Fiedler vector estimate $\mathbf{y}_i=\mathbf{y}_F$, substituting $\boldsymbol{\Omega}=\mathbf{I}$ in Eq.~(11) shows that 
RLPI is a special case of RRLPI and substituting $\boldsymbol{\Omega}=\mathbf{I}$ and $\gamma\rightarrow 0$ leads to LPI. 
\end{proof}
\vspace{-4mm}
\subsection{B.2~Proof of Theorem
6}\label{app:ProofTheorem6}
\vspace{-1mm}
\begin{proof}[\unskip\nopunct]
Suppose that $\mathrm{rank}(\mathbf{X})=\tau$, the SVD of $\mathbf{X}$ is
\begin{align*}
    \mathbf{X}=\mathbf{U}\boldsymbol{\Sigma}\mathbf{V}^{\top}
\end{align*}
where $\boldsymbol{\Sigma}=\mathrm{diag}(\Sigma_1,\dots,\Sigma_{\tau})$, $\mathbf{U}\in\mathbb{R}^{m\times\tau}$, $\mathbf{V}\in\mathbb{R}^{n\times\tau}$ and $\mathbf{U}^{\top}\mathbf{U}=\mathbf{V}^{\top}\mathbf{V}=\mathbf{I}$. 
This can be generalized using the weighted singular value decomposition\cite{WSVD}, i.e., 
\begin{align*}
    \mathbf{X}^{\ast}=\mathbf{U}\boldsymbol{\Sigma}\mathbf{V}^{\top}\boldsymbol{\Omega}.
\end{align*}
where $\boldsymbol{\Omega}\in\mathbb{R}^{n\times n}$ is a square positive definite symmetric weight matrix such that $\mathbf{V}^{\top}\boldsymbol{\Omega}\mathbf{V}=\mathbf{I}$.

Let $\mathbf{V}^{\ast}$ be a weighted matrix whose columns are weighted orthonormal eigenvectors of $\mathbf{V}$, i.e., $\mathbf{V}^{\ast}=\boldsymbol{\Omega}\mathbf{V}$. Then, the orthogonality term can be equivalently written as ${\mathbf{V}^{\top}\mathbf{V}^{\ast}=\mathbf{I}}$. The Fiedler vector $\mathbf{y}_F$ is in the space spanned by the column vectors of $\mathbf{V}^{\ast}$ because $\mathbf{y}_F$ is spanned by row vectors of weighted data matrix $\mathbf{X}^{\ast}$. Accordingly, the Fiedler vector $\mathbf{y}_F$ can be represented as a unique linear combination of the linearly independent column vectors of $\mathbf{V}^{\ast}$. For a set of combination coefficients $\mathbf{b}\in\mathbb{R}^{\tau}$,
\begin{align*}
\begin{split}
    \mathbf{V}^{\ast}\mathbf{b}=&\mathbf{y}_F\\\boldsymbol{\Omega}\mathbf{V}\mathbf{b}=&\mathbf{y}_F \\ \mathbf{V}^{\top}\boldsymbol{\Omega}\mathbf{V}\mathbf{b}=&\mathbf{V}^{\top}\mathbf{y}_F\\\mathbf{b}=&\mathbf{V}^{\top}\mathbf{y}_F
\end{split}
\end{align*}
Substituting $\mathbf{b}=\mathbf{V}^{\top}\mathbf{y}_F$ into $ \mathbf{V}^{\ast}\mathbf{b}=\mathbf{y}_F$, yields $\mathbf{V}^{\ast}\mathbf{V}^{\top}\mathbf{y}=\mathbf{y}_F$.
To continue, using the pseudo inverse of data matrix $\mathbf{X}^{\dagger}$ and weighted data matrix  $(\mathbf{X}^{\ast})^{\dagger}$ which can be written as
\begin{align*}
    \mathbf{X}^{\dagger}=\mathbf{V}\boldsymbol{\Sigma}^{-1}\mathbf{U}^{\top}
\end{align*}
and
\begin{align*}
    (\mathbf{X}^{\ast})^{\dagger}=\mathbf{V}\boldsymbol{\Sigma}^{-1}\mathbf{U}^{\top}\boldsymbol{\Psi},  
\end{align*}
it follows that
\begin{align*}
    \hat{\boldsymbol{\beta}}_F=(\mathbf{X}\boldsymbol{\Omega}\mathbf{X}^{\top}+\gamma\sigma^2\mathbf{I})^{-1}\mathbf{X}\boldsymbol{\Omega}\mathbf{y}_F=(\mathbf{X}^{\ast}\mathbf{X}^{\top}+\gamma\sigma^2\mathbf{I})^{-1}\mathbf{X}^{\ast}\mathbf{y}_F
\end{align*}
For a discussion, see Section~B.4. For $\gamma\rightarrow 0$
\begin{align*}
\hat{\boldsymbol{\beta}}_F=&(\mathbf{X}^{\ast}\mathbf{X}^{\top}+\gamma\sigma^2\mathbf{I})^{-1}\mathbf{X}^{\ast}\mathbf{y}_F\\
=&(\mathbf{X}^{\top})^{-1}(\mathbf{X}^{\ast})^{-1}\mathbf{X}^{\ast}\mathbf{y}_F\\
=&\mathbf{U}\boldsymbol{\Sigma}^{-1}\mathbf{V}^{\top}\mathbf{V}\boldsymbol{{\Sigma}}^{-1}\mathbf{U}^{\top}\boldsymbol{\Psi}\mathbf{U}\boldsymbol{\Sigma}\mathbf{V}^{\top}\boldsymbol{\Omega}\mathbf{y}_F\\
=&\mathbf{U}\boldsymbol{\Sigma}^{-1}\mathbf{V}^{\top}\mathbf{V}(\mathbf{V})^{-1}\mathbf{y}_F\\
=&\mathbf{U}\boldsymbol{\Sigma}^{-1}\mathbf{V}^{\top}\mathbf{y}_F
\end{align*}
Further, inserting the equation for $\boldsymbol{\beta}_F$ into $\hat{\mathbf{y}}_F=\mathbf{X}^{\top}\hat{\boldsymbol{\beta}}_F$ leads to
\begin{align*}
\begin{split}
    \hat{\mathbf{y}}_F&=\mathbf{X}^{\top}\hat{\boldsymbol{\beta}}_F=\mathbf{V}\boldsymbol{\Sigma}\mathbf{U}^{\top}\mathbf{U}\boldsymbol{\Sigma}^{-1}\mathbf{V}^{\top}\mathbf{y}_F=\mathbf{y}_F.
\end{split}
\end{align*}
 This shows that $\hat{\boldsymbol{\beta}}_F$ is the eigenvector of eigen-problem in Eq.~(4) for $\gamma\rightarrow 0$.
 \end{proof}
\vspace{-2.5mm}
\subsection{B.3~Proof of Corollary 6.1}\label{app:ProofCorollary6.1}
\begin{proof}[\unskip\nopunct]
Based on Theorem 6, it holds that for $\mathrm{rank}(\mathbf{X})=n$ all eigenvectors $\mathbf{y}_0,\dots,\mathbf{y}_{n-1}$ associated with $\lambda_0\leq\lambda_1\leq\dots\lambda_{n-1}$ are in the space spanned by row vectors of $\mathbf{X}$ and the transformation vector for the $i$th eigenvector is
\begin{align*}
   \hat{\boldsymbol{\beta}}_i^{\mathrm{(RRLPI)}} =\mathbf{U}\boldsymbol{\Sigma}^{-1}\mathbf{V}^{\top}\mathbf{y}_i
\end{align*}
where $\gamma\rightarrow0$, $\mathbf{U}^{\top}\boldsymbol{\Psi}\mathbf{U}=\mathbf{I}$ and $\mathbf{V}^{\top}\boldsymbol{\Omega}\mathbf{V}=\mathbf{I}$. If all feature vectors are linearly independent, each transformation vector is unique and is equal to the transformation functions of LPI, i.e., 
\begin{align*}
     \hat{\boldsymbol{\beta}}_i^{(\mathrm{LPI})}=\mathbf{U}\boldsymbol{\Sigma}^{-1}\mathbf{V}^{\top}\mathbf{y}_i=\boldsymbol{\beta}_i^{\mathrm{(RRLPI)}}.
\end{align*}
\end{proof}
\newpage
\subsection{B.4~Moore-Penrose Inverse of Weighted Data Matrix}

Let $\mathbf{X}\in\mathbb{R}^{m\times n}$, $\boldsymbol{\Omega}\in\mathbb{R}^{n\times n}$ and $\boldsymbol{\Psi}\in\mathbb{R}^{m\times m}$ denote a data matrix, and positive definite weight matrices, respectively. If there exist matrices $\boldsymbol{\Omega}$ and $\boldsymbol{\Psi}$ that satisfy $\mathbf{U}^{\top}\boldsymbol{\Psi}\mathbf{U}=\mathbf{I}$ and $\mathbf{V}^{\top}\boldsymbol{\Omega}\mathbf{V}=\mathbf{I}$ such that $\mathbf{X}^{\ast}=\mathbf{U}\boldsymbol{\Sigma}\mathbf{V}^{\top}\boldsymbol{\Omega}$, the weighted Moore-Penrose inverse $(\mathbf{X}^{\ast})^{\dagger}$ can be written as 
\begin{align*}
                 (\mathbf{X}^{\ast})^{\dagger}=\mathbf{V}\boldsymbol{\Sigma}^{-1}\mathbf{U}^{\top}\boldsymbol{\Psi}
\end{align*}
if it satisfies the following conditions:
\begin{align*}
      \mathbf{X}^{\ast}(\mathbf{X}^{\ast})^{\dagger}\mathbf{X}^{\ast}&=\mathbf{X}^{\ast}\\
      (\mathbf{X}^{\ast})^{\dagger}\mathbf{X}^{\ast}(\mathbf{X}^{\ast})^{\dagger}&=(\mathbf{X}^{\ast})^{\dagger}\\
      (\boldsymbol{\Psi}\mathbf{X}^{\ast}(\mathbf{X}^{\ast})^{\dagger})^{\top}&=\boldsymbol{\Psi}\mathbf{X}^{\ast}(\mathbf{X}^{\ast})^{\dagger}\\
      (\boldsymbol{\Omega}(\mathbf{X}^{\ast})^{\dagger}\mathbf{X}^{\ast})^{\top}&=\boldsymbol{\Omega}(\mathbf{X}^{\ast})^{\dagger}\mathbf{X}^{\ast}
\end{align*}.

The four conditions are now examined as

\begin{itemize}
    \item $\mathbf{X}^{\ast}(\mathbf{X}^{\ast})^{\dagger}\mathbf{X}^{\ast}\stackrel{?}{=}\mathbf{X}^{\ast}$
    \begin{align*}
        \mathbf{X}^{\ast}(\mathbf{X}^{\ast})^{\dagger}\mathbf{X}^{\ast}&=\mathbf{U}\boldsymbol{\Sigma}\mathbf{V}^{\top}\boldsymbol{\Omega}  \mathbf{V}\boldsymbol{\Sigma}^{-1}\mathbf{U}^{\top}\boldsymbol{\Psi}  \mathbf{U}\boldsymbol{\Sigma}\mathbf{V}^{\top}\boldsymbol{\Omega}\\
        &=\mathbf{U}\boldsymbol{\Sigma}\boldsymbol{\Sigma}^{-1}\boldsymbol{\Sigma}\mathbf{V}^{\top}\boldsymbol{\Omega}\\
        &=\mathbf{U}\boldsymbol{\Sigma}\mathbf{V}^{\top}\boldsymbol{\Omega}\\
        &=\mathbf{X}^{\ast}
    \end{align*}
    \item $(\mathbf{X}^{\ast})^{\dagger}\mathbf{X}^{\ast}(\mathbf{X}^{\ast})^{\dagger}\stackrel{?}{=}(\mathbf{X}^{\ast})^{\dagger}$
    \begin{align*}
        (\mathbf{X}^{\ast})^{\dagger}\mathbf{X}^{\ast}(\mathbf{X}^{\ast})^{\dagger}&=\mathbf{V}\boldsymbol{\Sigma}^{-1}\mathbf{U}^{\top}\boldsymbol{\Psi}\mathbf{U}\boldsymbol{\Sigma}\mathbf{V}^{\top}\boldsymbol{\Omega}    \mathbf{V}\boldsymbol{\Sigma}^{-1}\mathbf{U}^{\top}\boldsymbol{\Psi}\\
        &=\mathbf{V}\boldsymbol{\Sigma}^{-1}\boldsymbol{\Sigma}\boldsymbol{\Sigma}^{-1}\mathbf{U}^{\top}\boldsymbol{\Psi}\\
        &=\mathbf{V}\boldsymbol{\Sigma}^{-1}\mathbf{U}^{\top}\boldsymbol{\Psi}\\
        &=(\mathbf{X}^{\ast})^{\dagger}
    \end{align*}
    \item $(\boldsymbol{\Psi}\mathbf{X}^{\ast}(\mathbf{X}^{\ast})^{\dagger})^{\top}\stackrel{?}{=}\boldsymbol{\Psi}\mathbf{X}^{\ast}(\mathbf{X}^{\ast})^{\dagger}$
    \begin{align*}
        (\boldsymbol{\Psi}\mathbf{X}^{\ast}(\mathbf{X}^{\ast})^{\dagger})^{\top}&=(\boldsymbol{\Psi}\mathbf{U}\boldsymbol{\Sigma}\mathbf{V}^{\top}\boldsymbol{\Omega} \mathbf{V}\boldsymbol{\Sigma}^{-1}\mathbf{U}^{\top}\boldsymbol{\Psi})^{\top}\\
        &=(\boldsymbol{\Psi}\mathbf{U}\mathbf{U}^{\top}\boldsymbol{\Psi})^{\top}\\
        &=((\mathbf{U}^{\top})^{-1}\mathbf{U}^{-1})^{\top}\\
        &=(\mathbf{U}^{\top})^{-1}\mathbf{U}^{-1}\\
        &=\boldsymbol{\Psi}\mathbf{X}^{\ast}(\mathbf{X}^{\ast})^{\dagger}
    \end{align*}
    \item $(\boldsymbol{\Omega}(\mathbf{X}^{\ast})^{\dagger}\mathbf{X}^{\ast})^{\top}\stackrel{?}{=}\boldsymbol{\Omega}(\mathbf{X}^{\ast})^{\dagger}\mathbf{X}^{\ast}$
    \begin{align*}
        (\boldsymbol{\Omega}(\mathbf{X}^{\ast})^{\dagger}\mathbf{X}^{\ast})^{\top}&=(\boldsymbol{\Omega}\mathbf{V}\boldsymbol{\Sigma}^{-1}\mathbf{U}^{\top}\boldsymbol{\Psi}\mathbf{U}\boldsymbol{\Sigma}\mathbf{V}^{\top}\boldsymbol{\Omega})^{\top}\\
        &=(\boldsymbol{\Omega}\mathbf{V}\mathbf{V}^{\top}\boldsymbol{\Omega})^{\top}\\
        &=((\mathbf{V}^{\top})^{-1}\mathbf{V}^{-1})^{\top}\\
        &=(\mathbf{V}^{\top})^{-1}\mathbf{V}^{-1}\\
        &=\boldsymbol{\Omega}(\mathbf{X}^{\ast})^{\dagger}\mathbf{X}^{\ast}
    \end{align*}
    \end{itemize}
    Thus, for weight matrices $\boldsymbol{\Omega}$ and $\boldsymbol{\Psi}$, which satisfy $\mathbf{U}^{\top}\boldsymbol{\Psi}\mathbf{U}=\mathbf{I}$ and $\mathbf{V}^{\top}\boldsymbol{\Omega}\mathbf{V}=\mathbf{I}$ such that $\mathbf{X}^{\ast}=\mathbf{U}\boldsymbol{\Sigma}\mathbf{V}^{\top}\boldsymbol{\Omega}$, there exist Moore-Penrose inverse of the weighted data matrix $(\mathbf{X}^{\ast})^{\dagger}$.


\newpage
\setcounter{secnumdepth}{0}
\section{Appendix C~: Detailed Solutions for the Generalized and Standard Eigen-decompositions}

\subsection{C.1~Generalized Eigen-decomposotion for Affinity Matrix of Uncorrelated Blocks\label{app:GeneralizedEigenDecUncorrelated}}

For simplicity, let $\mathbf{W}\in\mathbb{R}^{n\times n}$ and the corresponding Laplacian $\mathbf{W}\in\mathbb{R}^{n\times n}$ consist of $k=2$ blocks, i.e.,
\begin{align*}
\mathbf{W}=
\begin{bmatrix}
\vspace{1mm}
0 & w_1 & 0&0\\
w_1 & 0 &0 &0\\
 0&  0& 0 &w_2\\
0 & 0 & w_2& 0
\end{bmatrix}
\hspace{5mm}
\mathbf{L}=
\begin{bmatrix}
\vspace{1mm}
w_1 & -w_1 & 0&0\\
-w_1 & w_1 &0 &0\\
 0&  0& w_2 &-w_2\\
0 & 0 & -w_2& w_2
\end{bmatrix},
\end{align*}
where $w_i$,$i=1,2$ denotes correlation coefficient of the $i$th block. To estimate the eigenvalues, the following expression is considered
\begin{align*}
    \mathrm{det}(\mathbf{L}-\lambda\mathbf{D})=0,
\end{align*}
which can equivalently be written in matrix form as
\begin{align*}
\begin{vmatrix}
\vspace{1mm}
w_1-\lambda w_1 & -w_1 & 0&0\\
-w_1 & w_1-\lambda w_1 &0 &0\\
 0&  0& w_2-\lambda w_2 &-w_2\\
0 & 0 & -w_2& w_2-\lambda w_2
\end{vmatrix}=0.
\end{align*}
Using determinant properties of block matrices, it follows that [62],
\begin{align*}
\begin{vmatrix}
w_1-\lambda w_1 & -w_1\\
-w_1 & w_1-\lambda w_1\\
\end{vmatrix}
\begin{vmatrix}
w_2-\lambda w_2 & -w_2\\
-w_2 & w_2-\lambda w_2\\
\end{vmatrix}=0.
\end{align*}
After solving this equation, the eigenvalues are equal to $\lambda_{0,1}=0$ and $\lambda_{2,3}=2$. To compute the eigenvectors, the following generalized eigen-decomposition is considered
\begin{align*}
    \mathbf{L}\mathbf{y}_i=\lambda_i\mathbf{D}\mathbf{y}_i,
\end{align*}
where $\mathbf{y}_i$ denotes the eigenvector associated with the $i$th eigenvalue $\lambda_i$. The eigen-problem can equivalently be written in the matrix form as
\begin{align*}
\begin{bmatrix}
\vspace{1mm}
w_1-\lambda_i w_1 & -w_1 & 0&0\\\vspace{1mm}
-w_1 & w_1-\lambda_i w_1 &0 &0\\\vspace{1mm}
 0&  0& w_2-\lambda_i w_2 &-w_2\\\vspace{1mm}
0 & 0 & -w_2& w_2-\lambda_i w_2
\end{bmatrix} 
\begin{bmatrix}
\vspace{1mm}
y_{1,i}\\\vspace{1mm}
y_{2,i}\\\vspace{1mm}
y_{3,i}\\\vspace{1mm}
y_{4,i}\\
\end{bmatrix}=0,
\end{align*}
where $y_{j,i}$, $j=1,\dots,4$ denotes the $j$th mapping point in the $i$th eigenvector $\mathbf{y}_i$. Substituting each $\lambda_i\in[\lambda_0,\dots,\lambda_3]$ leads to the eigenvectors
\begin{align*}
\mathbf{y}_0=
\begin{bmatrix}
\vspace{1mm}
y_{1,0}\\\vspace{1mm}
y_{1,0}\\\vspace{1mm}
y_{3,0}\\\vspace{1mm}
y_{3,0}\\
\end{bmatrix}\hspace{5mm}
\mathbf{y}_1=
\begin{bmatrix}
\vspace{1mm}
y_{1,1}\\\vspace{1mm}
y_{1,1}\\\vspace{1mm}
y_{3,1}\\\vspace{1mm}
y_{3,1}\\
\end{bmatrix}\hspace{5mm}
\mathbf{y}_2=
\begin{bmatrix}
\vspace{1mm}
\quad y_{1,2}\\\vspace{1mm}
-y_{1,2}\\\vspace{1mm}
\quad y_{3,2}\\\vspace{1mm}
-y_{3,2}\\
\end{bmatrix}\hspace{5mm}
\mathbf{y}_3=
\begin{bmatrix}
\vspace{1mm}
\quad y_{1,3}\\\vspace{1mm}
-y_{1,3}\\\vspace{1mm}
\quad y_{3,3}\\\vspace{1mm}
-y_{3,3}\\
\end{bmatrix}.
\end{align*}
According to the definition stating that
for a real and symmetric matrix the eigenvectors are orthogonal, the eigenvectors in four possible form yields
\begin{itemize}
    \item 
    \begin{align*}
\mathbf{y}_0=
\begin{bmatrix}
\vspace{1mm}
0\\\vspace{1mm}
0\\\vspace{1mm}
y_{3,0}\\\vspace{1mm}
y_{3,0}\\
\end{bmatrix}\hspace{5mm}
\mathbf{y}_1=
\begin{bmatrix}
\vspace{1mm}
y_{1,1}\\\vspace{1mm}
y_{1,1}\\\vspace{1mm}
0\\\vspace{1mm}
0\\
\end{bmatrix}\hspace{5mm}
\mathbf{y}_2=
\begin{bmatrix}
\vspace{1mm}
\quad0\\\vspace{1mm}
\quad0\\\vspace{1mm}
\quad y_{3,2}\\\vspace{1mm}
-y_{3,2}\\
\end{bmatrix}\hspace{5mm}
\mathbf{y}_3=
\begin{bmatrix}
\vspace{1mm}
\quad y_{1,3}\\\vspace{1mm}
-y_{1,3}\\\vspace{1mm}
\quad0\\\vspace{1mm}
\quad0\\
\end{bmatrix}
\end{align*}
    \item 
\begin{align*}
\mathbf{y}_0=
\begin{bmatrix}
\vspace{1mm}
0\\\vspace{1mm}
0\\\vspace{1mm}
y_{3,0}\\\vspace{1mm}
y_{3,0}\\
\end{bmatrix}\hspace{5mm}
\mathbf{y}_1=
\begin{bmatrix}
\vspace{1mm}
y_{1,1}\\\vspace{1mm}
y_{1,1}\\\vspace{1mm}
0\\\vspace{1mm}
0\\
\end{bmatrix}\hspace{5mm}
\mathbf{y}_2=
\begin{bmatrix}
\vspace{1mm}
\quad y_{1,2}\\\vspace{1mm}
-y_{1,2}\\\vspace{1mm}
\quad0\\\vspace{1mm}
\quad0\\
\end{bmatrix}\hspace{5mm}
\mathbf{y}_3=
\begin{bmatrix}
\vspace{1mm}
\quad0\\\vspace{1mm}
\quad0\\\vspace{1mm}
\quad y_{3,3}\\\vspace{1mm}
-y_{3,3}\\
\end{bmatrix}
\end{align*}
  \item 
    \begin{align*}
\mathbf{y}_0=
\begin{bmatrix}
\vspace{1mm}
y_{1,0}\\\vspace{1mm}
y_{1,0}\\\vspace{1mm}
0\\\vspace{1mm}
0\\
\end{bmatrix}\hspace{5mm}
\mathbf{y}_1=
\begin{bmatrix}
\vspace{1mm}
0\\\vspace{1mm}
0\\\vspace{1mm}
y_{3,1}\\\vspace{1mm}
y_{3,1}\\
\end{bmatrix}\hspace{5mm}
\mathbf{y}_2=
\begin{bmatrix}
\vspace{1mm}
\quad0\\\vspace{1mm}
\quad0\\\vspace{1mm}
\quad y_{3,2}\\\vspace{1mm}
-y_{3,2}\\
\end{bmatrix}\hspace{5mm}
\mathbf{y}_3=
\begin{bmatrix}
\vspace{1mm}
\quad y_{1,3}\\\vspace{1mm}
-y_{1,3}\\\vspace{1mm}
\quad0\\\vspace{1mm}
\quad0\\
\end{bmatrix}
\end{align*}
    \item 
\begin{align*}
\mathbf{y}_0=
\begin{bmatrix}
\vspace{1mm}
y_{1,0}\\\vspace{1mm}
y_{1,0}\\\vspace{1mm}
0\\\vspace{1mm}
0\\
\end{bmatrix}\hspace{5mm}
\mathbf{y}_1=
\begin{bmatrix}
\vspace{1mm}
0\\\vspace{1mm}
0\\\vspace{1mm}
y_{3,1}\\\vspace{1mm}
y_{3,1}\\
\end{bmatrix}\hspace{5mm}
\mathbf{y}_2=
\begin{bmatrix}
\vspace{1mm}
\quad y_{1,2}\\\vspace{1mm}
-y_{1,2}\\\vspace{1mm}
\quad0\\\vspace{1mm}
\quad0\\
\end{bmatrix}\hspace{5mm}
\mathbf{y}_3=
\begin{bmatrix}
\vspace{1mm}
\quad0\\\vspace{1mm}
\quad0\\\vspace{1mm}
\quad y_{3,3}\\\vspace{1mm}
-y_{3,3}\\
\end{bmatrix}
\end{align*}.
\end{itemize}

\subsection{C.2~Generalized Eigen-decomposition for Affinity Matrix of Correlated Blocks}\label{sec:GeneralizedEigenDeccorrelated}
Let $\mathbf{W}\in\mathbb{R}^{n\times n}$ denote a block zero-diagonal symmetric
affinity matrix for $k$ blocks which does not include correlations between different blocks, i.e., 
\begin{align*}
\begin{bmatrix}
0 & w_1 &\dots  &  &  & &\\
w_1 & 0 &\dots  &  &  & &\\
\vphantom{\int\limits^x}\smash{\vdots} &  \vphantom{\int\limits^x}\smash{\vdots}& 0 & w_2 &  & &\\
 &  & w_2  & 0& &  & \\
  &   &   &  & \ddots &\vphantom{\int\limits^x}\smash{\vdots} &\vphantom{\int\limits^x}\smash{\vdots}\\
  &  &   &  &\dots  &0 &w_k\\
  &   &   &  &\dots &w_k &0\\
\end{bmatrix}.
\end{align*}
The associated Laplacian has the same block structure for ${\mathbf{L}_1,\dots,\mathbf{L}_k}$. To estimate eigenvalues of $\mathbf{L}$, the expression ${\mathrm{det}(\mathbf{L}-\lambda\mathbf{D})=0}$ can be written in terms of the distinct determinants as
\begin{align*}
    \mathrm{det}(\mathbf{L}-\lambda\mathbf{D})=\mathrm{det}(\mathbf{L}_1-\lambda\mathbf{D}_1)\dots\mathrm{det}(\mathbf{L}_k-\lambda\mathbf{D}_k),
\end{align*}
where $\mathbf{L}_k$ and $\mathbf{D}_k$, respectively,  denote the Laplacian and the weight matrix of $k$th block. After solving this equation, for the $i$th block, the eigenvalues are equal to ${\lambda_0=0}$ and ${\lambda_1=2}$. Now assume that first and second blocks are correlated because of Type II outliers that have been added to the data set. Because the correlations are the effect of the outliers, we call them undesired correlations. The resulting corrupted affinity matrix $\Tilde{\mathbf{W}}\in\mathbb{R}^{n\times n}$ consists of $k$ blocks, i.e.,
\begin{align*}
\begin{bmatrix}
0 & w_1 & \Tilde{w}_u &\Tilde{w}_u  &\dots  & &\\
w_1 & 0 & \Tilde{w}_u& \Tilde{w}_u &\dots   & &\\
\Tilde{w}_u & \Tilde{w}_u & 0 & w_2 &  & &\\
\Tilde{w}_u & \Tilde{w}_u & w_2  & 0& &  & \\
\vdots  & \vdots  &   &  & \ddots &\vdots &\vdots\\
 &  &   &  &\dots  &0 &w_k\\
 &   &   &  &\dots &w_k &0\\
\end{bmatrix},
\end{align*}
where $\Tilde{w}_u$ denotes the correlation coefficient belonging to the undesired correlations for $\Tilde{w}_u\neq 0$. For simplicity, let $\Tilde{\mathbf{L}}_{12}$ and $\Tilde{\mathbf{D}}_{12}$ be the Laplacian and the weight matrix of correlated blocks, respectively. Hence, the eigenvalues of the corrupted Laplacian $\Tilde{\mathbf{L}}$ are estimated using distinct determinants as
\begin{align*}
    \mathrm{det}(\Tilde{\mathbf{L}}-\Tilde{\lambda}\Tilde{\mathbf{D}})=\mathrm{det}(\Tilde{\mathbf{L}}_{12}-\Tilde{\lambda}\Tilde{\mathbf{D}}_{12})\dots\mathrm{det}(\mathbf{L}_k-\lambda\mathbf{D}_k),
\end{align*}
where $\Tilde{\mathbf{L}}_{12}$ and $\Tilde{\mathbf{D}}_{12}$, respectively, denote the Laplacian and the weight matrix associated with the block that is a combination of blocks $i=1$ and $j=2$. To study the effects of outliers on the eigen-decomposition, the above equation shows that analyzing ${\mathrm{det}(\Tilde{\mathbf{L}}_{12}-\Tilde{\lambda}\Tilde{\mathbf{D}}_{12})=0}$ is sufficient, which can equivalently be written as
\begin{align*}
\begin{vmatrix}
w_1-w_1\Tilde{\lambda}+c & -w_1 & -\Tilde{w}_u & -\Tilde{w}_u\\
-w_1 &w_1-w_1\Tilde{\lambda}+c & -\Tilde{w}_u & -\Tilde{w}_u\\
-\Tilde{w}_u & -\Tilde{w}_u & w_2-w_2\Tilde{\lambda}+c & -w_2\\
-\Tilde{w}_u & -\Tilde{w}_u & -w_2 &w_2-w_2\Tilde{\lambda}+c\\
\end{vmatrix}=\hspace{-0.5mm}0,
\end{align*}
where $c=2\Tilde{w}_u-2\Tilde{w}_u\Tilde{\lambda}$. To simplify the determinant problem for block matrices as in [62], the matrix $\Tilde{\mathbf{L}}_{12}-\Tilde{\lambda}\Tilde{\mathbf{D}}_{12}$ must be commutative such that ${(-\Tilde{\mathbf{W}}_{u})(\Tilde{\mathbf{L}}_{2}-\Tilde{\lambda}\Tilde{\mathbf{D}}_2)=(\Tilde{\mathbf{L}}_{2}-\Tilde{\lambda}\Tilde{\mathbf{D}}_2)(-\Tilde{\mathbf{W}}_{u})}$. Accordingly, the matrix $\Tilde{\mathbf{L}}_{12}-\Tilde{\lambda}\Tilde{\mathbf{D}}_{12}$ is commutative if it satisfies
\begin{align*}
\begin{bmatrix}
-\Tilde{w}_u & -\Tilde{w}_u \\
-\Tilde{w}_u & -\Tilde{w}_u \\
\end{bmatrix}
\begin{bmatrix}
z_2 & -w_2 \\
-w_2 & z_2 \\
\end{bmatrix}=
\begin{bmatrix}
z_2 & -w_2 \\
-w_2 & z_2 \\
\end{bmatrix}
\begin{bmatrix}
-\Tilde{w}_u & -\Tilde{w}_u \\
-\Tilde{w}_u & -\Tilde{w}_u \\
\end{bmatrix}
\end{align*}
where $z_2=w_2-w_2\Tilde{\lambda}+c$. The equality is shown as:
\begin{align*}
(-\Tilde{\mathbf{W}}_{u})(\Tilde{\mathbf{L}}_{2}-\Tilde{\lambda}\Tilde{\mathbf{D}}_2)&=
\begin{bmatrix}
-\Tilde{w}_u & -\Tilde{w}_u \\
-\Tilde{w}_u & -\Tilde{w}_u 
\end{bmatrix}
\begin{bmatrix}
z_2 & -w_2 \\
-w_2 & z_2 
\end{bmatrix}\\
&=\begin{bmatrix}
-\Tilde{w}_u z_2+w_2\Tilde{w}_u & w_2\Tilde{w}_u-\Tilde{w}_u z_2 \\
-\Tilde{w}_u z_2+w_2\Tilde{w}_u & w_2\Tilde{w}_u-\Tilde{w}_u z_2
\end{bmatrix}\\
&=
\begin{bmatrix}
z_2 & -w_2 \\
-w_2 & z_2 
\end{bmatrix}
\begin{bmatrix}
-\Tilde{w}_u & -\Tilde{w}_u \\
-\Tilde{w}_u & -\Tilde{w}_u 
\end{bmatrix}\\
&=(\Tilde{\mathbf{L}}_{2}-\lambda\Tilde{\mathbf{D}}_2)(-\Tilde{\mathbf{W}}_{u})
\end{align*}
For a commutative matrix, it holds that
\begin{align*}
    \mathrm{det}(\Tilde{\mathbf{L}}_{12}-\Tilde{\lambda}\Tilde{\mathbf{D}}_{12})\hspace{-0.5mm}=\hspace{-0.5mm}\mathrm{det}\Big((\Tilde{\mathbf{L}}_{1}-\Tilde{\lambda}\Tilde{\mathbf{D}}_1)(\Tilde{\mathbf{L}}_{2}-\Tilde{\lambda}\Tilde{\mathbf{D}}_2)-\Tilde{\mathbf{W}}_u^2\Big)\hspace{-0.5mm}=\hspace{-0.5mm}0.
\end{align*}
The solution results in coupled set of equations as follows
\begin{align*}
0\hspace{-0.5mm} &=\hspace{-0.5mm}\bigg|
\begin{bmatrix}
z_1 & -w_1\\
-w_1  & z_1
\end{bmatrix}
\begin{bmatrix}
z_2 & -w_2\\
-w_2  & z_2
\end{bmatrix}-
\begin{bmatrix}
2\Tilde{w}_u^2 & 2\Tilde{w}_u^2\\
2\Tilde{w}_u^2 & 2\Tilde{w}_u^2
\end{bmatrix}\bigg|\\
&=
\bigg|
\begin{bmatrix}
z_1z_2+w_1w_2-2\Tilde{w}_u^2 & -w_2z_1-w_1z_2-2\Tilde{w}_u^2\\
-w_1z_2-w_2z_1-2\Tilde{w}_u^2& w_1w_2+z_1z_2-2\Tilde{w}_u^2
\end{bmatrix}
\bigg|\\
&=
\big(z_1z_2+w_1w_2-2\Tilde{w}_u^2\big)^2-\big(-w_2z_1-w_1z_2-2\Tilde{w}_u^2\big)^2\\
&=\big(z_1z_2+w_1w_2-2\Tilde{w}_u^2-(-w_2z_1-w_1z_2-2\Tilde{w}_u^2)\big)\\
&\quad\big(z_1z_2+w_1w_2-2\Tilde{w}_u^2+\big(-w_2z_1-w_1z_2-2\Tilde{w}_u^2)\big)\\
&=\big(w_1w_2\Tilde{\lambda}^2-4w_1w_2\Tilde{\lambda}+4w_1w_2-w_1\Tilde{\lambda} c-w_2\Tilde{\lambda} c+2w_1c\\&\quad+2w_2c+c^2 \big)\big(w_1w_2\Tilde{\lambda}^2-w_1\Tilde{\lambda} c-w_2\Tilde{\lambda} c+c^2-4\Tilde{w}_u^2 \big)\\
&=(w_1\Tilde{\lambda}-2w_1-c)(w_2\Tilde{\lambda}-2w_2-c)\\&\quad\big((w_1\Tilde{\lambda}-c)(w_2\Tilde{\lambda}-c)-4\Tilde{w}_u^2 \big),
\end{align*}
where $z_1=w_1-w_1\Tilde{\lambda}+c$, $z_2=w_2-w_2\Tilde{\lambda}+c$ and ${c=2\Tilde{w}_u-2\Tilde{w}_u\Tilde{\lambda}}$.
Now, substituting ${c=2\Tilde{w}_u-2\Tilde{w}_u\Tilde{\lambda}}$ into the first and second expressions yields
\begin{align*}
        0&=w_1\Tilde{\lambda}-2w_1-2\Tilde{w}_u+2\Tilde{w}_u\Tilde{\lambda}\\
      \Tilde{\lambda}&=\frac{2(w_1+\Tilde{w}_u)}{w_1+2\Tilde{w}_u}
\end{align*}
and
\begin{align*}
    0&=w_2\Tilde{\lambda}-2w_2-2\Tilde{w}_u+2\Tilde{w}_u\Tilde{\lambda}\\
    \Tilde{\lambda}&=\frac{2(w_2+\Tilde{w}_u)}{w_2+2\Tilde{w}_u}.
\end{align*}
Further, the third case, i.e.,  $(w_1\Tilde{\lambda}-c)(w_2\Tilde{\lambda}-c)-4\Tilde{w}_u^2=0$ must be examined by substituting ${c=2\Tilde{w}_u-2\Tilde{w}_u\Tilde{\lambda}}$:
\begin{align*}
  \begin{split}
      w_1w_2\Tilde{\lambda}^2-w_1\Tilde{\lambda}(2\Tilde{w}_u-2\Tilde{w}_u\Tilde{\lambda})-w_2\Tilde{\lambda}(2\Tilde{w}_u-2\Tilde{w}_u\Tilde{\lambda})+(2\Tilde{w}_u-2\Tilde{w}_u\Tilde{\lambda})^2-4\Tilde{w}_u^2=&0\\
      \Tilde{\lambda}(w_1w_2\Tilde{\lambda}-2\Tilde{w}_uw_1+2\Tilde{w}_uw_1\Tilde{\lambda}-2\Tilde{w}_uw_2+2\Tilde{w}_uw_2\Tilde{\lambda}-8\Tilde{w}_u^2+4\Tilde{w}_u^2\Tilde{\lambda})=&0,
  \end{split}
\end{align*}
where the roots of the equation are the eigenvalues
\begin{align*}
    \Tilde{\lambda}=0\hspace{3mm}\mathrm{and}\hspace{3mm}
    \Tilde{\lambda}=\frac{2(w_1\Tilde{w}_u+w_2\Tilde{w}_u+4\Tilde{w}_u^2)}{(w_1+2\Tilde{w}_u)(w_2+2\Tilde{w}_u)}.
\end{align*}
To order the eigenvalues $ \Tilde{\lambda}=\frac{2(w_1+\Tilde{w}_u)}{w_1+2\Tilde{w}_u}$ and $\Tilde{\lambda}=\frac{2(w_2+\Tilde{w}_u)}{w_2+2\Tilde{w}_u}$, the relevant information is insufficient. However, these two eigenvalues are compared with the eigenvalue $\Tilde{\lambda}=\frac{2(w_1\Tilde{w}_u+w_2\Tilde{w}_u+4\Tilde{w}_u^2)}{(w_1+2\Tilde{w}_u)(w_2+2\Tilde{w}_u)}$ to reveal the two smallest eigenvalues as
\begin{align*}
\begin{split}
    \frac{2(w_1\Tilde{w}_u+w_2\Tilde{w}_u+4\Tilde{w}_u^2)}{(w_1+2\Tilde{w}_u)(w_2+2\Tilde{w}_u)}\stackrel{?}{<}&\frac{2(w_1+\Tilde{w}_u)}{w_1+2\Tilde{w}_u}\\
    \frac{2(w_1\Tilde{w}_u+w_2\Tilde{w}_u+4\Tilde{w}_u^2)}{(w_1+2\Tilde{w}_u)(w_2+2\Tilde{w}_u)}\stackrel{?}{<}&\frac{2(w_1+\Tilde{w}_u)(w_2+2\Tilde{w}_u)}{(w_1+2\Tilde{w}_u)(w_2+2\Tilde{w}_u)}\\
    2w_1\Tilde{w}_u+2w_2\Tilde{w}_u+8\Tilde{w}_u^2\stackrel{?}{<}&2w_1w_2+4w_1\Tilde{w}_u+2w_2\Tilde{w}_u+4\Tilde{w}_u^2\\
    4\Tilde{w}_u^2\stackrel{?}{<}&2w_1w_2+2w_1\Tilde{w}_u
\end{split}
\end{align*}
and 
\begin{align*}
\begin{split}
    \frac{2(w_1\Tilde{w}_u+w_2\Tilde{w}_u+4\Tilde{w}_u^2)}{(w_1+2\Tilde{w}_u)(w_2+2\Tilde{w}_u)}\stackrel{?}{<}&\frac{2(w_2+\Tilde{w}_u)}{w_2+2\Tilde{w}_u}\\
    \frac{2(w_1\Tilde{w}_u+w_2\Tilde{w}_u+4\Tilde{w}_u^2)}{(w_1+2\Tilde{w}_u)(w_2+2\Tilde{w}_u)}\stackrel{?}{<}&\frac{2(w_2+\Tilde{w}_u)(w_1+2\Tilde{w}_u)}{(w_2+2\Tilde{w}_u)(w_1+2\Tilde{w}_u)}\\
    2w_1\Tilde{w}_u+2w_2\Tilde{w}_u+8\Tilde{w}_u^2\stackrel{?}{<}&2w_1w_2+4w_2\Tilde{w}_u+2w_1\Tilde{w}_u+4\Tilde{w}_u^2\\
    4\Tilde{w}_u^2\stackrel{?}{<}&2w_1w_2+2w_2\Tilde{w}_u.
\end{split}
\end{align*}
For $w_1>\Tilde{w}_u$ and $w_2>\Tilde{w}_u$, it holds that $\frac{2(w_1\Tilde{w}_u+w_2\Tilde{w}_u+4\Tilde{w}_u^2)}{(w_1+2\Tilde{w}_u)(w_2+2\Tilde{w}_u)}<\frac{2(w_1+\Tilde{w}_u)}{w_1+2\Tilde{w}_u}$ and $\frac{2(w_1\Tilde{w}_u+w_2\Tilde{w}_u+4\Tilde{w}_u^2)}{(w_1+2\Tilde{w}_u)(w_2+2\Tilde{w}_u)}<\frac{2(w_2+\Tilde{w}_u)}{w_2+2\Tilde{w}_u}$. To summarize, for $\frac{2(w_2+\Tilde{w}_u)}{w_2+2\Tilde{w}_u}\leq\frac{2(w_1+\Tilde{w}_u)}{w_1+2\Tilde{w}_u}$the eigenvalues in ascending order are:
\begin{align*}
    \Tilde{\lambda}_{0}=0\hspace{5mm} \Tilde{\lambda}_{1}=\frac{2(w_1\Tilde{w}_u+w_2\Tilde{w}_u+4\Tilde{w}_u^2)}{(w_1+2\Tilde{w}_u)(w_2+2\Tilde{w}_u)}\hspace{5mm}
    \Tilde{\lambda}_{2}=\frac{2(w_2+\Tilde{w}_u)}{w_2+2\Tilde{w}_u}\hspace{5mm}
    \Tilde{\lambda}_{3}=\frac{2(w_1+\Tilde{w}_u)}{w_1+2\Tilde{w}_u}
\end{align*}
or are
\begin{align*}
        \Tilde{\lambda}_{0}=0\hspace{5mm} \Tilde{\lambda}_{1}=\frac{2(w_1\Tilde{w}_u+w_2\Tilde{w}_u+4\Tilde{w}_u^2)}{(w_1+2\Tilde{w}_u)(w_2+2\Tilde{w}_u)}\hspace{5mm}
       \Tilde{\lambda}_{2}=\frac{2(w_1+\Tilde{w}_u)}{w_1+2\Tilde{w}_u}
    \hspace{5mm}
    \Tilde{\lambda}_{3}=\frac{2(w_2+\Tilde{w}_u)}{w_2+2\Tilde{w}_u}
\end{align*}
otherwise.
To analyze the effect of outliers on eigenvectors, the generalized eigenvalue problem is considered
\begin{align*}
    \Tilde{\mathbf{L}}\Tilde{\mathbf{y}}_i=\Tilde{\lambda}_i\Tilde{\mathbf{D}}\Tilde{\mathbf{y}}_i,
\end{align*}
where $\Tilde{\mathbf{y}}_i$ denotes the eigenvector associated with the $i$th eigenvalue $\Tilde{\lambda}_i$. The expression can equivalently be written as
\begin{align*}
\begin{bmatrix}
\vspace{1mm}
w_1+2\Tilde{w}_u & -w_1 & -\Tilde{w}_u \hspace{-3mm}& -\Tilde{w}_u\\\vspace{1mm}
-w_1 & w_1+2\Tilde{w}_u & -\Tilde{w}_u & -\Tilde{w}_u\\\vspace{1mm}
-\Tilde{w}_u & -\Tilde{w}_u & w_2+2\Tilde{w}_u & -w_2\\\vspace{1mm}
-\Tilde{w}_u & -\Tilde{w}_u & -w_2 & w_2+2\Tilde{w}_u\\
\end{bmatrix}
\begin{bmatrix}
\vspace{1mm}
\Tilde{y}_{1,i}\\\vspace{1mm}
\Tilde{y}_{2,i}\\\vspace{1mm}
\Tilde{y}_{3,i}\\\vspace{1mm}
\Tilde{y}_{4,i}\\
\end{bmatrix}=
\begin{bmatrix}
\vspace{1mm}
\Tilde{\lambda}_i\Tilde{d}_1& 0 & 0 & 0\\\vspace{1mm}
0 & \Tilde{\lambda}_i\Tilde{d}_1 & 0& 0\\\vspace{1mm}
0 & 0 & \Tilde{\lambda}_i\Tilde{d}_2 & 0\\\vspace{1mm}
0 & 0 & 0 & \Tilde{\lambda}_i\Tilde{d}_2\\
\end{bmatrix}
\begin{bmatrix}
\vspace{1mm}
\Tilde{y}_{1,i}\\\vspace{1mm}
\Tilde{y}_{2,i}\\\vspace{1mm}
\Tilde{y}_{3,i}\\\vspace{1mm}
\Tilde{y}_{4,i}\\
\end{bmatrix}
\end{align*}
and for $\Tilde{d}_1=w_1+2\Tilde{w}_u$ and $\Tilde{d}_2=w_2+2\Tilde{w}_u$, it holds that
\begin{align*}
        (w_1+2\Tilde{w}_u)\Tilde{y}_{1,i}-w_1\Tilde{y}_{2,i}-\Tilde{w}_u(\Tilde{y}_{3,i}+\Tilde{y}_{4,i})=\Tilde{\lambda}_{i}(w_1+2\Tilde{w}_u)\Tilde{y}_{1,i}\\
        -w_1\Tilde{y}_{1,i}+(w_1+2\Tilde{w}_u)\Tilde{y}_{2,i}-\Tilde{w}_u(\Tilde{y}_{3,i}+\Tilde{y}_{4,i})=\Tilde{\lambda}_{i}(w_1+2\Tilde{w}_u)\Tilde{y}_{2,i}\\
        -\Tilde{w}_u(\Tilde{y}_{1,i}+\Tilde{y}_{2,i})+(w_2+2\Tilde{w}_u)\Tilde{y}_{3,i}-w_2\Tilde{y}_{4,i}=\Tilde{\lambda}_{i}(w_2+2\Tilde{w}_u)\Tilde{y}_{3,i}\\
        -\Tilde{w}_u(\Tilde{y}_{1,i}+\Tilde{y}_{2,i})-w_2\Tilde{y}_{3,i}+(w_2+2\Tilde{w}_u)\Tilde{y}_{4,i}=\Tilde{\lambda}_{i}(w_2+2\Tilde{w}_u)\Tilde{y}_{4,i}
\end{align*}
where $\Tilde{y}_{j,i}$ denotes the $j$th mapping point in the eigenvector associated with the $i$th eigenvalue $\Tilde{\lambda}_i$.
To compute the eigenvectors, each $\Tilde{\lambda}_i\in[\Tilde{\lambda}_0,\dots,\Tilde{\lambda}_3]$ is substituted in $\Tilde{\mathbf{L}}\Tilde{\mathbf{y}}_i=\Tilde{\lambda}_i\Tilde{\mathbf{D}}\Tilde{\mathbf{y}}_i$ as follows:
\begin{itemize}
\item  $\Tilde{\mathbf{L}}\Tilde{\mathbf{y}}_0=\Tilde{\lambda}_0\Tilde{\mathbf{D}}\Tilde{\mathbf{y}}_0$ such that $\Tilde{\lambda}_0=0$
    \begin{align*}
        (w_1+2\Tilde{w}_u)\Tilde{y}_{1,0}-w_1\Tilde{y}_{2,0}-\Tilde{w}_u(\Tilde{y}_{3,0}+\Tilde{y}_{4,0})=0\\
        -w_1\Tilde{y}_{1,0}+(w_1+2\Tilde{w}_u)\Tilde{y}_{2,0}-\Tilde{w}_u(\Tilde{y}_{3,0}+\Tilde{y}_{4,0})=0\\
        -\Tilde{w}_u(\Tilde{y}_{1,0}+\Tilde{y}_{2,0})+(w_2+2\Tilde{w}_u)\Tilde{y}_{3,0}-w_2\Tilde{y}_{4,0}=0\\
        -\Tilde{w}_u(\Tilde{y}_{1,0}+\Tilde{y}_{2,0})-w_2\Tilde{y}_{3,0}+(w_2+2\Tilde{w}_u)\Tilde{y}_{4,0}=0\\
    \end{align*}
  The second expression is subtracted from the first one as
  \begin{align*}
      2w_1\Tilde{y}_{1,0}+2\Tilde{w}_u(\Tilde{y}_{1,0}-\Tilde{y}_{2,0})-2w_1\Tilde{y}_{2,0}&=0\\
      2w_1(\Tilde{y}_{1,0}-\Tilde{y}_{2,0})+2\Tilde{w}_u(\Tilde{y}_{1,0}-\Tilde{y}_{2,0})&=0\\
      2(w_1+\Tilde{w}_u)(\Tilde{y}_{1,0}-\Tilde{y}_{2,0})&=0.\\
  \end{align*}
  As $w_1>\Tilde{w}_u>0$, $\Tilde{y}_{1,0}=\Tilde{y}_{2,0}$. Similarly, the fourth expression is subtracted from the third one as:
   \begin{align*}
      2w_2\Tilde{y}_{3,0}+2\Tilde{w}_u(\Tilde{y}_{3,0}-\Tilde{y}_{4,0})-2w_2\Tilde{y}_{4,0}&=0\\
      2w_2(\Tilde{y}_{3,0}-\Tilde{y}_{4,0})+2\Tilde{w}_u(\Tilde{y}_{3,0}-\Tilde{y}_{4,0})&=0\\
      2(w_2+\Tilde{w}_u)(\Tilde{y}_{3,0}-\Tilde{y}_{4,0})&=0\\
  \end{align*}
  Since $w_2>\Tilde{w}_u>0$, it follows that $\Tilde{y}_{3,0}=\Tilde{y}_{4,0}$. Lastly, the first and the second expressions are summed up as follows:
  \begin{align*}
      2\Tilde{w}_u(\Tilde{y}_{1,0}+\Tilde{y}_{2,0})-2\Tilde{w}_u(\Tilde{y}_{3,0}+\Tilde{y}_{4,0})&=0\\
      \Tilde{y}_{1,0}+\Tilde{y}_{2,0}&=\Tilde{y}_{3,0}+\Tilde{y}_{4,0}\\
  \end{align*}
Substituting $\Tilde{y}_{1,0}=\Tilde{y}_{2,0}$ and $\Tilde{y}_{3,0}=\Tilde{y}_{4,0}$ yields $\Tilde{y}_{1,0}=\Tilde{y}_{2,0}=\Tilde{y}_{3,0}=\Tilde{y}_{4,0}$. Thus, the eigenvector associated with the smallest eigenvalue $\Tilde{\lambda}_0$ is
\begin{align*}
\Tilde{\mathbf{y}}_0=
    \begin{bmatrix}
    \vspace{1mm}
\Tilde{y}_{1,0}\\\vspace{1mm}
\Tilde{y}_{1,0}\\\vspace{1mm}
\Tilde{y}_{1,0} \\\vspace{1mm}
\Tilde{y}_{1,0} \\
\end{bmatrix}
\end{align*}

\item $\Tilde{\mathbf{L}}\Tilde{\mathbf{y}}_1=\Tilde{\lambda}_1\Tilde{\mathbf{D}}\Tilde{\mathbf{y}}_1$ such that $\Tilde{\lambda}_1=\frac{2(w_1\Tilde{w}_u+w_2\Tilde{w}_u+4\Tilde{w}_u^2)}{(w_1+2\Tilde{w}_u)(w_2+2\Tilde{w}_u)}$
\begin{align*}
        (w_1+2\Tilde{w}_u)\Tilde{y}_{1,1}-w_1\Tilde{y}_{2,1}-\Tilde{w}_u(\Tilde{y}_{3,1}+\Tilde{y}_{4,1})=\frac{2(w_1\Tilde{w}_u+w_2\Tilde{w}_u+4\Tilde{w}_u^2)}{(w_2+2\Tilde{w}_u)}\Tilde{y}_{1,1}\\
        -w_1\Tilde{y}_{1,1}+(w_1+2\Tilde{w}_u)\Tilde{y}_{2,1}-\Tilde{w}_u(\Tilde{y}_{3,1}+\Tilde{y}_{4,1})=\frac{2(w_1\Tilde{w}_u+w_2\Tilde{w}_u+4\Tilde{w}_u^2)}{(w_2+2\Tilde{w}_u)}\Tilde{y}_{2,1}\\
        -\Tilde{w}_u(\Tilde{y}_{1,1}+\Tilde{y}_{2,1})+(w_2+2\Tilde{w}_u)\Tilde{y}_{3,1}-w_2\Tilde{y}_{4,1}=\frac{2(w_1\Tilde{w}_u+w_2\Tilde{w}_u+4\Tilde{w}_u^2)}{(w_1+2\Tilde{w}_u)}\Tilde{y}_{3,1}\\
        -\Tilde{w}_u(\Tilde{y}_{1,1}+\Tilde{y}_{2,1})-w_2\Tilde{y}_{3,1}+(w_2+2\Tilde{w}_u)\Tilde{y}_{4,1}=\frac{2(w_1\Tilde{w}_u+w_2\Tilde{w}_u+4\Tilde{w}_u^2)}{(w_1+2\Tilde{w}_u)}\Tilde{y}_{4,1}\\
    \end{align*}
 Then, the first two expressions are subtracted yielding:
 \begin{align*}
 \begin{split}
       2w_1\Tilde{y}_{1,1}+2\Tilde{w}_u\Tilde{y}_{1,1}-2w_1\Tilde{y}_{2,1}-2\Tilde{w}_u\Tilde{y}_{2,1}&=(\Tilde{y}_{1,1}-\Tilde{y}_{2,1})\frac{2(w_1\Tilde{w}_u+w_2\Tilde{w}_u+4\Tilde{w}_u^2)}{(w_2+2\Tilde{w}_u)}\\
       2w_1+2\Tilde{w}_u&=\frac{2(w_1\Tilde{w}_u+w_2\Tilde{w}_u+4\Tilde{w}_u^2)}{w_2+2\Tilde{w}_u}\\
 \end{split}
 \end{align*}
Straight-forwardly, subtracting the expressions does not provide further information about mapping points. Thus, the first two expressions are summed up as
\begin{align*}
    \begin{split}
        \cancel{w_1\Tilde{y}_{1,1}}+2\Tilde{w}_u\Tilde{y}_{1,1}-\cancel{w_1\Tilde{y}_{2,1}}-2\Tilde{w}_u(\Tilde{y}_{3,1}+\Tilde{y}_{4,1})-\cancel{w_1\Tilde{y}_{1,1}}+\cancel{w_1\Tilde{y}_{2,1}}+2\Tilde{w}_u\Tilde{y}_{2,1}&=\frac{2(w_1\Tilde{w}_u+w_2\Tilde{w}_u+4\Tilde{w}_u^2)}{(w_2+2\Tilde{w}_u)}(\Tilde{y}_{1,1}+\Tilde{y}_{2,1})\\
        (\Tilde{y}_{1,1}+\Tilde{y}_{2,1})\Big(2\Tilde{w}_u-\frac{2(w_1\Tilde{w}_u+w_2\Tilde{w}_u+4\Tilde{w}_u^2)}{w_2+2\Tilde{w}_u}\Big)&=2\Tilde{w}_u(\Tilde{y}_{3,1}+\Tilde{y}_{4,1})\\
        (\Tilde{y}_{1,1}+\Tilde{y}_{2,1})&=-\frac{w_2+2\Tilde{w}_u}{w_1+2\Tilde{w}_u}(\Tilde{y}_{3,1}+\Tilde{y}_{4,1})\\
    \end{split}
\end{align*}
Identically, summing up the third and fourth expressions yields
\begin{align*}
    \begin{split}
        -2\Tilde{w}_u(\Tilde{y}_{1,1}+\Tilde{y}_{2,1})+2\Tilde{w}_u(\Tilde{y}_{3,1}+\Tilde{y}_{4,1})&=\frac{2(w_1\Tilde{w}_u+w_2\Tilde{w}_u+4\Tilde{w}_u^2)}{(w_2+2\Tilde{w}_u)} (\Tilde{y}_{3,1}+\Tilde{y}_{4,1})\\
        (\Tilde{y}_{1,1}+\Tilde{y}_{2,1})&=-\frac{w_2+2\Tilde{w}_u}{w_1+2\Tilde{w}_u}(\Tilde{y}_{3,1}+\Tilde{y}_{4,1}).
    \end{split}
\end{align*}
The available information is not sufficient to understand the intrinsic relationships of the mappings. Thus, the eigenvector is written based on the available information as:
\begin{align*}
\Tilde{\mathbf{y}}_1=
 \begin{bmatrix}
   \vspace{1mm}
\Tilde{y}_{1,1}\\\vspace{1mm}
\Tilde{y}_{2,1}\\\vspace{1mm}
\Tilde{y}_{3,1}\\\vspace{1mm}
\Tilde{y}_{4,1} \\
\end{bmatrix}
\end{align*}

\item $\Tilde{\mathbf{L}}\Tilde{\mathbf{y}}_2=\Tilde{\lambda}_2\Tilde{\mathbf{D}}\Tilde{\mathbf{y}}_2$ such that $\Tilde{\lambda}_2=\frac{2(w_2+\Tilde{w}_u)}{w_2+2\Tilde{w}_u}$ and $\frac{2(w_2+\Tilde{w}_u)}{w_2+2\Tilde{w}_u}<\frac{2(w_1+\Tilde{w}_u)}{w_1+2\Tilde{w}_u}$

\begin{align*}
(w_2+2\Tilde{w}_u-2(w_2+\Tilde{w}_u))(w_1\Tilde{y}_{1,2}+2\Tilde{w}_u\Tilde{y}_{1,2})-(w_2+2\Tilde{w}_u)w_1\Tilde{y}_{2,2}-\Tilde{w}_u(w_2+2\Tilde{w}_u)(\Tilde{y}_{3,2}+\Tilde{y}_{4,2})&=0\\
-(w_2+2\Tilde{w}_u)w_1\Tilde{y}_{1,2}+(w_2+2\Tilde{w}_u-2(w_2+\Tilde{w}_u))(w_1\Tilde{y}_{2,2}+2\Tilde{w}_u\Tilde{y}_{2,2})-\Tilde{w}_u(w_2+2\Tilde{w}_u)(\Tilde{y}_{3,2}+\Tilde{y}_{4,2})&=0\\
-\Tilde{w}_u(w_2+2\Tilde{w}_u)(\Tilde{y}_{1,2}+\Tilde{y}_{2,2})+(w_2+2\Tilde{w}_u-2(w_2+\Tilde{w}_u))(w_2\Tilde{y}_{3,2}+2\Tilde{w}_u\Tilde{y}_{3,2})-w_2(w_2+2\Tilde{w}_u)\Tilde{y}_{4,2}&=0\\
-\Tilde{w}_u(w_2+2\Tilde{w}_u)(\Tilde{y}_{1,2}+\Tilde{y}_{2,2})-w_2(w_2+2\Tilde{w}_u)\Tilde{y}_{3,2}+(w_2+2\Tilde{w}_u-2(w_2+\Tilde{w}_u))(w_2\Tilde{y}_{4,2}+2\Tilde{w}_u\Tilde{y}_{4,2})&=0\\
\end{align*}
 Summing up the first two expressions yields:
 \begin{align*}
 \begin{split}
     -2w_2w_1(\Tilde{y}_{1,2}+\Tilde{y}_{2,2})-2w_2\Tilde{w}_u(\Tilde{y}_{1,2}+\Tilde{y}_{2,2})-2\Tilde{w}_uw_1(\Tilde{y}_{1,2}+\Tilde{y}_{2,2})&=2\Tilde{w}_u(w_2+2\Tilde{w}_u)(\Tilde{y}_{3,2}+\Tilde{y}_{4,2})\\
     (\Tilde{y}_{1,2}+\Tilde{y}_{2,2})(-2w_2w_1-2w_2\Tilde{w}_u-2w_1\Tilde{w}_u)&=2\Tilde{w}_u(w_2+2\Tilde{w}_u)(\Tilde{y}_{3,2}+\Tilde{y}_{4,2})\\
     (\Tilde{y}_{1,2}+\Tilde{y}_{2,2})&=-\frac{\Tilde{w}_u(w_2+2\Tilde{w}_u)}{w_2w_1+w_2\Tilde{w}_u+\Tilde{w}_uw_1}(\Tilde{y}_{3,2}+\Tilde{y}_{4,2})
\end{split}
 \end{align*}
Similarly, the third and the fourth expressions are summed up, resulting in:
\begin{align*}
    -2w_2^2\Tilde{y}_{3,2}-4w_2\Tilde{w}_u\Tilde{y}_{3,2}-4w_2\Tilde{w}_u\Tilde{y}_{4,2}-2w_2^2\Tilde{y}_{4,2}&=2\Tilde{w}_u(w_2+2\Tilde{w}_u)(\Tilde{y}_{1,2}+\Tilde{y}_{2,2})\\
    -2w_2^2(\Tilde{y}_{3,2}+\Tilde{y}_{4,2})-4w_2\Tilde{w}_u(\Tilde{y}_{3,2}+\Tilde{y}_{4,2})&=2\Tilde{w}_u(w_2+2\Tilde{w}_u)(\Tilde{y}_{1,2}+\Tilde{y}_{2,2})\\
    (\Tilde{y}_{1,2}+\Tilde{y}_{2,2})&=-\frac{w_2}{\Tilde{w}_u}(\Tilde{y}_{3,2}+\Tilde{y}_{4,2})
\end{align*}
There are two possibilities as in previous steps: 
 $\Tilde{y}_{1,2}=-\Tilde{y}_{2,2}$ and $\Tilde{y}_{3,2}=-\Tilde{y}_{4,2}$ or $\frac{w_2}{\Tilde{w}_u}=\frac{\Tilde{w}_u(w_2+2\Tilde{w}_u)}{w_2w_1+w_2\Tilde{w}_u+\Tilde{w}_uw_1}$. Considering the second expression yields
\begin{align*}
    w_2^2w_1+w_2^2\Tilde{w}_u+w_2w_1\Tilde{w}_u-\Tilde{w}_u^2w_2-2\Tilde{w}_u^3&=0\\
   (w_2+\Tilde{w}_u)(w_1w_2+\Tilde{w}_u(w_2-2\Tilde{w}_u))&=0\\
   (w_2+\Tilde{w}_u)(w_1w_2+w_2\Tilde{w}_u-2\Tilde{w}_u^2)&=0.\\
\end{align*}
This equation can be satisfied for $w_1>0$, $w_2>0$ and $\Tilde{w}_u>0$ if and only if
\begin{align*}
    \Tilde{w}_u=\frac{1}{4}(\sqrt{w_2}\sqrt{8w_1+w_2}+w_2).
\end{align*}
For further analysis, the first two expressions are subtracted as:
\begin{align*}
    -2w_2\Tilde{w}_u(\Tilde{y}_{1,2}-\Tilde{y}_{2,2})+2w_1\Tilde{w}_u(\Tilde{y}_{1,2}-\Tilde{y}_{2,2})&=0\\
     (\Tilde{y}_{1,2}-\Tilde{y}_{2,2})(2w_1\Tilde{w}_u-2w_2\Tilde{w}_u)&=0
\end{align*}
Based on the information that $w_1\neq w_2$, the relationship between $\Tilde{y}_{1,2}$ and $\Tilde{y}_{2,2}$ yields $\Tilde{y}_{1,2}=\Tilde{y}_{2,2}$. 
Subtracting the third and fourth expressions does not provide further information about $\Tilde{y}_{3,2}$ and $\Tilde{y}_{4,2}$. Thus, the associated eigenvector can be written with the available information as:
\begin{align*}
\Tilde{\mathbf{y}}_2=
    \begin{bmatrix}
    \vspace{1mm}
\Tilde{y}_{1,2}\\\vspace{1mm}
\Tilde{y}_{1,2}\\\vspace{1mm}
\Tilde{y}_{3,2}\\\vspace{1mm}
\Tilde{y}_{4,2} \\
\end{bmatrix}
\end{align*}
\item $\Tilde{\mathbf{L}}\Tilde{\mathbf{y}}_3=\Tilde{\lambda}_3\Tilde{\mathbf{D}}\Tilde{\mathbf{y}}_3$ such that $\Tilde{\lambda}_3=\frac{2(w_1+\Tilde{w}_u)}{w_1+2\Tilde{w}_u}$ and $\frac{2(w_2+\Tilde{w}_u)}{w_2+2\Tilde{w}_u}<\frac{2(w_1+\Tilde{w}_u)}{w_1+2\Tilde{w}_u}$
    
\begin{align*}
(w_1+2\Tilde{w}_u-2(w_1+\Tilde{w}_u))(w_1\Tilde{y}_{1,3}+2\Tilde{w}_u\Tilde{y}_{1,3})-(w_1+2\Tilde{w}_u)w_1\Tilde{y}_{2,3}-\Tilde{w}_u(w_1+2\Tilde{w}_u)(\Tilde{y}_{3,3}+\Tilde{y}_{4,3})&=0\\
-(w_1+2\Tilde{w}_u)w_1\Tilde{y}_{1,3}+(w_1+2\Tilde{w}_u-2(w_1+\Tilde{w}_u))(w_1\Tilde{y}_{2,3}+2\Tilde{w}_u\Tilde{y}_{2,3})-\Tilde{w}_u(w_1+2\Tilde{w}_u)(\Tilde{y}_{3,3}+\Tilde{y}_{4,3})&=0\\
-\Tilde{w}_u(w_1+2\Tilde{w}_u)(\Tilde{y}_{1,3}+\Tilde{y}_{2,3})+(w_1+2\Tilde{w}_u-2(w_1+\Tilde{w}_u))(w_2\Tilde{y}_{3,3}+2\Tilde{w}_u\Tilde{y}_{3,3})-w_2(w_1+2\Tilde{w}_u)\Tilde{y}_{4,3}&=0\\
-\Tilde{w}_u(w_1+2\Tilde{w}_u)(\Tilde{y}_{1,3}+\Tilde{y}_{2,3})-w_2(w_1+2\Tilde{w}_u)\Tilde{y}_{3,3}+(w_1+2\Tilde{w}_u-2(w_1+\Tilde{w}_u))(w_2\Tilde{y}_{4,3}+2\Tilde{w}_u\Tilde{y}_{4,3})&=0\\
    \end{align*}
The first and the second expressions are summed, yielding:
\begin{align*}
    -2w_1^2(\Tilde{y}_{1,3}+\Tilde{y}_{2,3})-4\Tilde{w}_uw_1(\Tilde{y}_{1,3}+\Tilde{y}_{2,3})&=(2\Tilde{w}_uw_1+4\Tilde{w}_u^2)(\Tilde{y}_{3,3}+\Tilde{y}_{4,3})\\
    (\Tilde{y}_{1,3}+\Tilde{y}_{2,3})(-2w_1^2-4\Tilde{w}_uw_1)&=(2\Tilde{w}_uw_1+4\Tilde{w}_u^2)(\Tilde{y}_{3,3}+\Tilde{y}_{4,3})\\
    (\Tilde{y}_{1,3}+\Tilde{y}_{2,3})&=-\frac{\Tilde{w}_u}{w_1}(\Tilde{y}_{3,3}+\Tilde{y}_{4,3}),
\end{align*}
Then, the third and the fourth expressions are summed in the same way
\begin{align*}
    -2w_1w_2\Tilde{y}_{3,3}-2w_1w_2\Tilde{y}_{4,3}-2w_1\Tilde{w}_u\Tilde{y}_{3,3}-2w_1\Tilde{w}_u\Tilde{y}_{4,3}-2w_2\Tilde{w}_u\Tilde{y}_{4,3}-2w_2\Tilde{w}_u\Tilde{y}_{3,3}&=2\Tilde{w}_u(w_1+2\Tilde{w}_u)(\Tilde{y}_{1,3}+\Tilde{y}_{2,3})\\
    (\Tilde{y}_{3,3}+\Tilde{y}_{4,3})(-2w_1w_2-2w_1\Tilde{w}_u-2w_2\Tilde{w}_u)&=2\Tilde{w}_u(w_1+2\Tilde{w}_u)(\Tilde{y}_{1,3}+\Tilde{y}_{2,3})\\
    (\Tilde{y}_{1,3}+\Tilde{y}_{2,3})&=-\frac{w_1w_2+w_1\Tilde{w}_u+w_2\Tilde{w}_u}{\Tilde{w}_u(w_1+2\Tilde{w}_u)}(\Tilde{y}_{3,3}+\Tilde{y}_{4,3})\\
\end{align*}

The two possibilites can be summarized as: $\Tilde{y}_{1,3}=-\Tilde{y}_{2,3}$ and $\Tilde{y}_{3,3}=-\Tilde{y}_{4,3}$ or $\frac{\Tilde{w}_u}{w_1}=\frac{w_2w_1+w_2\Tilde{w}_u+\Tilde{w}_uw_1}{\Tilde{w}_u(w_1+2\Tilde{w}_u)}$. The second condition yields
\begin{align*}
  0&=w_1^2w_2+w_1^2\Tilde{w}_u+w_2w_1\Tilde{w}_u-\Tilde{w}_u^2w_1-2\Tilde{w}_u^3\\
\end{align*}
whose solution is 
\begin{align*}
      w_1&=\frac{\big(\Tilde{w}_u(w_2+\Tilde{w}_u)-2\Tilde{w}_u^2\big)+\sqrt{(2\Tilde{w}_u^2-\Tilde{w}_u(w_2+\Tilde{w}_u))^2-8\Tilde{w}_u^3(-w_2-\Tilde{w}_u)}}{2(-w_2-\Tilde{w}_u)}\\
\end{align*}
or
\begin{align*}
      w_1&=\frac{\big(\Tilde{w}_u(w_2+\Tilde{w}_u)-2\Tilde{w}_u^2\big)-\sqrt{(2\Tilde{w}_u^2-\Tilde{w}_u(w_2+\Tilde{w}_u))^2-8\Tilde{w}_u^3(-w_2-\Tilde{w}_u)}}{2(-w_2-\Tilde{w}_u)}.\\
\end{align*}
Further, subtracting the first two expressions results in $0=0$ and does not provide further information. To obtain further information, the third and the fourth expressions are subtracted as
\begin{align*}
    -2w_1\Tilde{w}_u(\Tilde{y}_{3,3}-\Tilde{y}_{4,3})+2\Tilde{w}_uw_2(\Tilde{y}_{3,3}-\Tilde{y}_{4,3})&=0\\
    (\Tilde{y}_{3,3}-\Tilde{y}_{4,3})(-2w_1\Tilde{w}_u+2w_2\Tilde{w}_u)&=0
\end{align*}
Based on the knowledge that $w_1\neq w_2$, $\Tilde{y}_{3,3}=\Tilde{y}_{4,3}=0$, it follows that $\Tilde{y}_{1,3}=-\Tilde{y}_{2,3}$. According to the extracted information, the eigenvector can be written as:
\begin{align*}
\Tilde{\mathbf{y}}_3=
    \begin{bmatrix}
    \vspace{1mm}
\quad\Tilde{y}_{1,3}\\\vspace{1mm}
-\Tilde{y}_{1,3}\\\vspace{1mm}
\quad0\\\vspace{1mm}
\quad0 \\
\end{bmatrix}
\end{align*}
\end{itemize}
To summarize, the eigenvectors associated with the eigenvalues $\Tilde{\lambda}_{0}=0$, $\Tilde{\lambda}_{1}=\frac{2(w_1\Tilde{w}_u+w_2\Tilde{w}_u+4\Tilde{w}_u^2)}{(w_1+2\Tilde{w}_u)(w_2+2\Tilde{w}_u)}$, $\Tilde{\lambda}_{2}=\frac{2(w_2+\Tilde{w}_u)}{w_2+2\Tilde{w}_u}$ and $\Tilde{\lambda}_{3}=\frac{2(w_1+\Tilde{w}_u)}{w_1+2\Tilde{w}_u}$ are
\begin{align*}
\Tilde{\mathbf{y}}_0=
\begin{bmatrix}
\vspace{1mm}
\Tilde{y}_{1,0}\\\vspace{1mm}
\Tilde{y}_{1,0}\\\vspace{1mm}
\Tilde{y}_{1,0}\\\vspace{1mm}
\Tilde{y}_{1,0}\\
\end{bmatrix}
\hspace{5mm}\Tilde{\mathbf{y}}_1=
\begin{bmatrix}
\vspace{1mm}
\Tilde{y}_{1,1}\\\vspace{1mm}
\Tilde{y}_{2,1}\\\vspace{1mm}
\Tilde{y}_{3,1}\\\vspace{1mm}
\Tilde{y}_{4,1}\\
\end{bmatrix}
\hspace{5mm}\Tilde{\mathbf{y}}_2=
\begin{bmatrix}
\vspace{1mm}
\Tilde{y}_{1,2}\\\vspace{1mm}
\Tilde{y}_{1,2}\\\vspace{1mm}
\Tilde{y}_{3,2}\\\vspace{1mm}
\Tilde{y}_{4,2}\\
\end{bmatrix}
\hspace{5mm}\Tilde{\mathbf{y}}_3=
\begin{bmatrix}
\vspace{1mm}
\quad\Tilde{y}_{1,3}\\\vspace{1mm}
-\Tilde{y}_{1,3}\\\vspace{1mm}
\quad0\\\vspace{1mm}
\quad0\\
\end{bmatrix}
\end{align*}

By definition, the eigenvectors of a real and symmetric matrix must be orthogonal. To compute the eigenvectors, the orthogonality information is analyzed for couples of eigenvectors as
\begin{itemize}
    \item $\Tilde{\mathbf{y}}_{0}^{\top}\Tilde{\mathbf{y}}_{2}=0$
    \begin{align*}
2\Tilde{y}_{1,0}\Tilde{y}_{1,2}+\Tilde{y}_{1,0}(\Tilde{y}_{3,2}+\Tilde{y}_{4,2})&=0\\
2\Tilde{y}_{1,2}=-(\Tilde{y}_{3,2}+\Tilde{y}_{4,2})
\end{align*}
Here, based on the knowledge that $2\Tilde{y}_{1,2}=-\frac{w_2}{\Tilde{w}_u}(\Tilde{y}_{3,2}+\Tilde{y}_{4,2})$ and $w_2\neq\Tilde{w}_u$ the mappings yields $\Tilde{y}_{1,2}=0$ and $\Tilde{y}_{3,2}=-\Tilde{y}_{4,2}$.

\item $\Tilde{\mathbf{y}}_{1}^{\top}\Tilde{\mathbf{y}}_{3}=0$
\begin{align*}
\Tilde{y}_{1,1}\Tilde{y}_{1,3}-\Tilde{y}_{2,1}\Tilde{y}_{1,3}&=0\\
\Tilde{y}_{1,1}&=\Tilde{y}_{2,1}
\end{align*}

\item $\Tilde{\mathbf{y}}_{1}^{\top}\Tilde{\mathbf{y}}_{2}=0$
\begin{align*}
\Tilde{y}_{3,1}\Tilde{y}_{3,2}-\Tilde{y}_{4,1}\Tilde{y}_{3,2}&=0\\
\Tilde{y}_{3,1}&=\Tilde{y}_{4,1}\\
\end{align*}
\end{itemize}

Consequently, the eigenvectors associated with the eigenvalues in ascending order ${\Tilde{\lambda}_{0}=0}$, ${\Tilde{\lambda}_{1}=\frac{2(w_1\Tilde{w}_u+w_2\Tilde{w}_u+4\Tilde{w}_u^2)}{(w_1+2\Tilde{w}_u)(w_2+2\Tilde{w}_u)}}$, ${\Tilde{\lambda}_{2}=\frac{2(w_2+\Tilde{w}_u)}{w_2+2\Tilde{w}_u}}$ and ${\Tilde{\lambda}_{3}=\frac{2(w_1+\Tilde{w}_u)}{w_1+2\Tilde{w}_u}}$ are 
\begin{align*}
\Tilde{\mathbf{y}}_0=
\begin{bmatrix}
\vspace{1mm}
\Tilde{y}_{1,0}\\\vspace{1mm}
\Tilde{y}_{1,0}\\\vspace{1mm}
\Tilde{y}_{1,0}\\\vspace{1mm}
\Tilde{y}_{1,0}\\
\end{bmatrix}
\hspace{5mm}\Tilde{\mathbf{y}}_1=
\begin{bmatrix}
\vspace{1mm}
\Tilde{y}_{1,1}\\\vspace{1mm}
\Tilde{y}_{1,1}\\\vspace{1mm}
\Tilde{y}_{3,1}\\\vspace{1mm}
\Tilde{y}_{3,1}\\
\end{bmatrix}
\hspace{5mm}\Tilde{\mathbf{y}}_2=
\begin{bmatrix}
\vspace{1mm}
\quad0\\\vspace{1mm}
\quad0\\\vspace{1mm}
\quad\Tilde{y}_{3,2}\\\vspace{1mm}
-\Tilde{y}_{3,2}\\
\end{bmatrix}
\hspace{5mm}\Tilde{\mathbf{y}}_3=
\begin{bmatrix}
\vspace{1mm}
\quad\Tilde{y}_{1,3}\\\vspace{1mm}
-\Tilde{y}_{1,3}\\\vspace{1mm}
\quad0\\\vspace{1mm}
\quad0\\
\end{bmatrix}
\end{align*}
which can straight-forwardly be written for the eigenvalues in ascending order $\Tilde{\lambda}_{0}=0$, ${\Tilde{\lambda}_{1}=\frac{2(w_1\Tilde{w}_u+w_2\Tilde{w}_u+4\Tilde{w}_u^2)}{(w_1+2\Tilde{w}_u)(w_2+2\Tilde{w}_u)}}$,
${\Tilde{\lambda}_{2}=\frac{2(w_1+\Tilde{w}_u)}{w_1+2\Tilde{w}_u}}$ and
${\Tilde{\lambda}_{3}=\frac{2(w_2+\Tilde{w}_u)}{w_2+2\Tilde{w}_u}}$ as
\begin{align*}
\Tilde{\mathbf{y}}_0=
\begin{bmatrix}
\vspace{1mm}
\Tilde{y}_{1,0}\\\vspace{1mm}
\Tilde{y}_{1,0}\\\vspace{1mm}
\Tilde{y}_{1,0}\\\vspace{1mm}
\Tilde{y}_{1,0}\\
\end{bmatrix}
\hspace{5mm}\Tilde{\mathbf{y}}_1=
\begin{bmatrix}
\vspace{1mm}
\Tilde{y}_{1,1}\\\vspace{1mm}
\Tilde{y}_{1,1}\\\vspace{1mm}
\Tilde{y}_{3,1}\\\vspace{1mm}
\Tilde{y}_{3,1}\\
\end{bmatrix}
\hspace{5mm}\Tilde{\mathbf{y}}_2=
\begin{bmatrix}
\vspace{1mm}
\quad\Tilde{y}_{1,2}\\\vspace{1mm}
-\Tilde{y}_{1,2}\\\vspace{1mm}
\quad0\\\vspace{1mm}
\quad0\\
\end{bmatrix}
\hspace{5mm}\Tilde{\mathbf{y}}_3=
\begin{bmatrix}
\vspace{1mm}
\quad0\\\vspace{1mm}
\quad0\\\vspace{1mm}
\quad\Tilde{y}_{3,3}\\\vspace{1mm}
-\Tilde{y}_{3,3}\\
\end{bmatrix}.
\end{align*}

\subsection{C.3~Eigen-decomposotion for Affinity Matrix of Uncorrelated Blocks}\label{sec:EigenDecUncorrelated}
Again, let $\mathbf{W}\in\mathbb{R}^{n\times n}$ and the corresponding Laplacian $\mathbf{W}\in\mathbb{R}^{n\times n}$ consist of $k=2$ blocks, i.e.,
\begin{align*}
\mathbf{W}=
\begin{bmatrix}
\vspace{1mm}
0 & w_1 & 0&0\\
w_1 & 0 &0 &0\\
 0&  0& 0 &w_2\\
0 & 0 & w_2& 0
\end{bmatrix}
\hspace{5mm}
\mathbf{L}=
\begin{bmatrix}
\vspace{1mm}
w_1 & -w_1 & 0&0\\
-w_1 & w_1 &0 &0\\
 0&  0& w_2 &-w_2\\
0 & 0 & -w_2& w_2
\end{bmatrix},
\end{align*}
where $w_i$,$i=1,2$ denotes correlation coefficient of the $i$th block. To estimate the eigenvalues, the following expression is considered
\begin{align*}
    \mathrm{det}(\mathbf{L}-\lambda\mathbf{I})=0,
\end{align*}
which can equivalently be written in matrix form as
\begin{align*}
\begin{vmatrix}
\vspace{1mm}
w_1-\lambda & -w_1 & 0&0\\
-w_1 & w_1-\lambda &0 &0\\
 0&  0& w_2-\lambda &-w_2\\
0 & 0 & -w_2& w_2-\lambda
\end{vmatrix}=0.
\end{align*}
Using determinant properties of block matrices, it follows that [62],
\begin{align*}
\begin{vmatrix}
w_1-\lambda & -w_1\\
-w_1 & w_1-\lambda\\
\end{vmatrix}
\begin{vmatrix}
w_2-\lambda & -w_2\\
-w_2 & w_2-\lambda\\
\end{vmatrix}=0
\end{align*}
After solving this equation, the eigenvalues are equal to $\lambda_{0,1}=0$, $\lambda_2=2w_1$ and $\lambda_3=2w_2$. To compute the eigenvectors, the standard eigen-decomposition is considered
\begin{align*}
    \mathbf{L}\mathbf{y}_i=\lambda_i\mathbf{y}_i,
\end{align*}
where $\mathbf{y}_i$ denotes the eigenvector associated with the $i$th eigenvalue $\lambda_i$. The eigen-problem can equivalently be written in the matrix form as
\begin{align*}
\begin{bmatrix}
\vspace{1mm}
w_1-\lambda_i & -w_1 & 0&0\\\vspace{1mm}
-w_1 & w_1-\lambda_i &0 &0\\\vspace{1mm}
 0&  0& w_2-\lambda_i &-w_2\\\vspace{1mm}
0 & 0 & -w_2& w_2-\lambda_i
\end{bmatrix} 
\begin{bmatrix}
\vspace{1mm}
y_{1,i}\\\vspace{1mm}
y_{2,i}\\\vspace{1mm}
y_{3,i}\\\vspace{1mm}
y_{4,i}
\end{bmatrix}=0,
\end{align*}
where $y_{j,i},j=1,\dots,4$ denotes the $j$th mapping point in the $i$th eigenvector $\mathbf{y}_i$. Substituting each $\lambda_i\in[\lambda_0,\dots,\lambda_3]$ leads to the eigenvectors  
\begin{align*}
\begin{bmatrix}
\vspace{1mm}
y_{1,0}\\\vspace{1mm}
y_{1,0}\\\vspace{1mm}
y_{3,0}\\\vspace{1mm}
y_{3,0}\\
\end{bmatrix}\hspace{5mm}
\begin{bmatrix}
\vspace{1mm}
y_{1,1}\\\vspace{1mm}
y_{1,1}\\\vspace{1mm}
y_{3,1}\\\vspace{1mm}
y_{3,1}\\
\end{bmatrix}\hspace{5mm}
\begin{bmatrix}
\vspace{1mm}
\quad y_{1,2}\\\vspace{1mm}
-y_{1,2}\\\vspace{1mm}
\quad y_{3,2}\\\vspace{1mm}
-y_{3,2}\\
\end{bmatrix}\hspace{5mm}
\begin{bmatrix}
\vspace{1mm}
\quad y_{1,3}\\\vspace{1mm}
-y_{1,3}\\\vspace{1mm}
\quad y_{3,3}\\\vspace{1mm}
-y_{3,3}\\
\end{bmatrix}.
\end{align*}
As can be seen, the solution is identical to the eigenvectors of the generalized eigen-decomposition problem. Therefore, the possible eigenvectors that are computed using the orthogonality information are identical to the ones that are detailed in Section~C.1.

\subsection{C.4~Eigen-decomposition for Affinity Matrix of Correlated Blocks}\label{sec:EigenDeccorrelated}
Let $\mathbf{W}\in\mathbb{R}^{n\times n}$ denote a block zero-diagonal symmetric
affinity matrix for $k$ blocks which does not include correlations between different blocks, i.e., 
\begin{align*}
\begin{bmatrix}
0 & w_1 &\dots  &  &  & &\\
w_1 & 0 &\dots  &  &  & &\\
\vphantom{\int\limits^x}\smash{\vdots} &  \vphantom{\int\limits^x}\smash{\vdots}& 0 & w_2 &  & &\\
 &  & w_2  & 0& &  & \\
  &   &   &  & \ddots &\vphantom{\int\limits^x}\smash{\vdots} &\vphantom{\int\limits^x}\smash{\vdots}\\
  &  &   &  &\dots  &0 &w_k\\
  &   &   &  &\dots &w_k &0\\
\end{bmatrix}
\end{align*}
The associated Laplacian has the same block structure for $\mathbf{L}_1,\dots,\mathbf{L}_k$. To estimate eigenvalues of $\mathbf{L}$, the expression ${\mathrm{det}(\mathbf{L}-\lambda\mathbf{I})=0}$ can be written in terms of the distinct determinants as
\begin{align*}
    \mathrm{det}(\mathbf{L}-\lambda\mathbf{I})=\mathrm{det}(\mathbf{L}_1-\lambda\mathbf{I})\dots\mathrm{det}(\mathbf{L}_k-\lambda\mathbf{I}),
\end{align*}
where $\mathbf{L}_k$ denotes the Laplacian matrix of $k$th block and $\mathbf{I}$ is the identity matrix. After solving this equation, for the $i$th block, the eigenvalues are equal to ${\lambda_0=0}$ and ${\lambda_1=2w_i}$. Now assume that first and second blocks are correlated because of Type II outliers that have been added to the data set. Because the correlations are the effect of the outliers, we call them undesired correlations. The resulting corrupted affinity matrix $\Tilde{\mathbf{W}}\in\mathbb{R}^{n\times n}$ consists of $k$ blocks, i.e.,
\begin{align*}
\begin{bmatrix}
0 & w_1 & \Tilde{w}_u &\Tilde{w}_u  &\dots  & &\\
w_1 & 0 & \Tilde{w}_u& \Tilde{w}_u &\dots   & &\\
\Tilde{w}_u & \Tilde{w}_u & 0 & w_2 &  & &\\
\Tilde{w}_u & \Tilde{w}_u & w_2  & 0& &  & \\
\vdots  & \vdots  &   &  & \ddots &\vdots &\vdots\\
 &  &   &  &\dots  &0 &w_k\\
 &   &   &  &\dots &w_k &0\\
\end{bmatrix},
\end{align*}
where $\Tilde{w}_u$ denotes correlation coefficient belonging to the undesired correlations for $\Tilde{w}_u\neq 0$. For simplicity, let $\Tilde{\mathbf{L}}_{12}$ be the Laplacian matrix of correlated blocks. Hence, the eigenvalues of the corrupted Laplacian $\Tilde{\mathbf{L}}$ are estimated using distinct determinants as
\begin{align*}
    \mathrm{det}(\Tilde{\mathbf{L}}-\Tilde{\lambda}\Tilde{\mathbf{I}})=\mathrm{det}(\Tilde{\mathbf{L}}_{12}-\Tilde{\lambda}\Tilde{\mathbf{I}})\dots\mathrm{det}(\mathbf{L}_k-\lambda\mathbf{I}),
\end{align*}
where $\Tilde{\mathbf{L}}_{12}$ denotes the Laplacian matrix associated with the block that is combination of blocks $i=1$ and $j=2$. To study the effect of outliers on the eigen-decomposition, the above equation shows that analyzing ${\mathrm{det}(\Tilde{\mathbf{L}}_{12}-\Tilde{\lambda}\Tilde{\mathbf{I}})=0}$ is sufficient, which can equivalently be written as
\begin{align*}
\begin{vmatrix}
w_1+2\Tilde{w}_u-\lambda & -w_1 & -\Tilde{w}_u & -\Tilde{w}_u\\
-w_1 & w_1+2\Tilde{w}_u-\lambda & -\Tilde{w}_u & -\Tilde{w}_u\\
-\Tilde{w}_u & -\Tilde{w}_u & w_2+2\Tilde{w}_u-\lambda & -w_2\\
-\Tilde{w}_u & -\Tilde{w}_u & -w_2 & w_2+2\Tilde{w}_u-\lambda\\
\end{vmatrix}
=0.
\end{align*}
To simplify the determinant problem for block matrices as in [62], the matrix $\Tilde{\mathbf{L}}_{12}-\Tilde{\lambda}\Tilde{\mathbf{I}}$ must be commutative such that ${(-\Tilde{\mathbf{W}}_{u})(\Tilde{\mathbf{L}}_{2}-\Tilde{\lambda}\Tilde{\mathbf{I}})=(\Tilde{\mathbf{L}}_{2}-\Tilde{\lambda}\Tilde{\mathbf{I}})(-\Tilde{\mathbf{W}}_{u})}$. Accordingly, the matrix $\Tilde{\mathbf{L}}_{12}-\Tilde{\lambda}\Tilde{\mathbf{I}}$ is commutative if it satisfies
\begin{align*}
\begin{bmatrix}
-\Tilde{w}_u & -\Tilde{w}_u \\
-\Tilde{w}_u & -\Tilde{w}_u \\
\end{bmatrix}
\begin{bmatrix}
w_2+c & -w_2 \\
-w_2 & w_2+c \\
\end{bmatrix}=
\begin{bmatrix}
w_2+c & -w_2 \\
-w_2 & w_2+c \\
\end{bmatrix}
\begin{bmatrix}
-\Tilde{w}_u & -\Tilde{w}_u \\
-\Tilde{w}_u & -\Tilde{w}_u \\
\end{bmatrix}
\end{align*}
where $c=2\Tilde{w}_u-\Tilde{\lambda}$.
After solving both sides the equality yields,
\begin{align*}
\begin{bmatrix}
-\Tilde{w}_u(w_2+c)+w_2\Tilde{w}_u & w_2\Tilde{w}_u-\Tilde{w}_u(w_2+c) \\
-\Tilde{w}_u(w_2+c)+w_2\Tilde{w}_u & w_2\Tilde{w}_u-\Tilde{w}_u(w_2+c) \\
\end{bmatrix}=
\begin{bmatrix}
-\Tilde{w}_u(w_2+c)+w_2\Tilde{w}_u & w_2\Tilde{w}_u-\Tilde{w}_u(w_2+c) \\
-\Tilde{w}_u(w_2+c)+w_2\Tilde{w}_u & w_2\Tilde{w}_u-\Tilde{w}_u(w_2+c) \\
\end{bmatrix}.
\end{align*}
For a commutative matrix, it holds that
\begin{align*}
    \mathrm{det}(\Tilde{\mathbf{L}}_{12}-\Tilde{\lambda}\mathbf{I})=\mathrm{det}\Big((\Tilde{\mathbf{L}}_{1}-\Tilde{\lambda}\mathbf{I})(\Tilde{\mathbf{L}}_{2}-\Tilde{\lambda}\mathbf{I})-\Tilde{\mathbf{W}}_u^2\Big)=0.
\end{align*}
The solution, therefore, results in a coupled set of equations as follows:
\begin{align*}
0 &=\Bigg|
\begin{bmatrix}
w_1+c & -w_1\\
-w_1  & w_1+c\\
\end{bmatrix}
\begin{bmatrix}
w_2+c & -w_2\\
-w_2  & w_2+c\\
\end{bmatrix}-
\begin{bmatrix}
2\Tilde{w}_u^2 & 2\Tilde{w}_u^2\\
2\Tilde{w}_u^2 & 2\Tilde{w}_u^2\\
\end{bmatrix}\Bigg|\\
&=
\Bigg|
\begin{bmatrix}
w_1w_2+w_1c+cw_2+c^2+w_1w_2-2\Tilde{w}_u^2 & -w_1w_2-w_2c-w_1w_2-w_1c-2\Tilde{w}_u^2\\
 -w_1w_2-w_1c-w_1w_2-w_2c-2\Tilde{w}_u^2  & w_1w_2+w_1w_2+w_1c+cw_2+c^2-2\Tilde{w}_u^2\\
\end{bmatrix}
\Bigg|\\
&=
\big(2w_1w_2+w_1c+cw_2+c^2-2\Tilde{w}_u^2\big)^2-\big( -2w_1w_2-w_2c-w_1c-2\Tilde{w}_u^2\big)^2\\
&=
\big(4w_1w_2+2w_1c+2w_2c+c^2\big)\big(c^2-4\Tilde{w}_u^2\big)\\
&=\big(c+2w_1\big)\big(c+2w_2\big)\big(c-2\Tilde{w}_u\big)\big(c+2\Tilde{w}_u)
\end{align*}
For $w_2\leq w_1$, substituting $c=2\Tilde{w}_u-\Tilde{\lambda}$ yields the eigenvalues in ascending order
\begin{align*}
    \Tilde{\lambda}_0=0 \hspace{5mm} \Tilde{\lambda}_1=4\Tilde{w}_u,\hspace{5mm}\Tilde{\lambda}_2=2w_2+2\Tilde{w}_u \hspace{5mm}\Tilde{\lambda}_{3}=2w_1 +2\Tilde{w}_u
\end{align*}
or
\begin{align*}
    \Tilde{\lambda}_0=0 \hspace{5mm} \Tilde{\lambda}_1=4\Tilde{w}_u,\hspace{5mm}\Tilde{\lambda}_2=2w_1+2\Tilde{w}_u \hspace{5mm}\Tilde{\lambda}_{3}=2w_2 +2\Tilde{w}_u
\end{align*}
otherwise.
To analyze the effect of outliers on eigenvectors, the standard eigenvalue problem is considered
\begin{align*}
    \Tilde{\mathbf{L}}\Tilde{\mathbf{y}}_i=\Tilde{\lambda}_i\Tilde{\mathbf{y}}_i
\end{align*}
where $\Tilde{\mathbf{y}}_i$ denotes the eigenvector associated with the $i$th eigenvalue $\Tilde{\lambda}_i$. The expression can equivalently be written as
\begin{align*}
\begin{bmatrix}
\vspace{1mm}
w_1+2\Tilde{w}_u-\Tilde{\lambda}_{i} & -w_1 & -\Tilde{w}_u & -\Tilde{w}_u\\\vspace{1mm}
-w_1 & w_1+2\Tilde{w}_u-\Tilde{\lambda}_{i} & -\Tilde{w}_u & -\Tilde{w}_u\\\vspace{1mm}
-\Tilde{w}_u & -\Tilde{w}_u & w_2+2\Tilde{w}_u-\Tilde{\lambda}_{i} & -w_2\\\vspace{1mm}
-\Tilde{w}_u & -\Tilde{w}_u & -w_2 & w_2+2\Tilde{w}_u-\Tilde{\lambda}_{i}\\
\end{bmatrix}
\begin{bmatrix}
\vspace{1mm}
\Tilde{y}_{1,i}\\\vspace{1mm}
\Tilde{y}_{2,i}\\\vspace{1mm}
\Tilde{y}_{3,i}\\\vspace{1mm}
\Tilde{y}_{4,i}\\
\end{bmatrix}=0,
\end{align*}
where $\Tilde{y}_{j,i}$ denotes the $j$th mapping point in the eigenvector associated with the $i$th eigenvalue $\Tilde{\lambda}_i$. To compute the eigenvectors, each $\Tilde{\lambda}_i\in[\Tilde{\lambda}_0,\dots,\Tilde{\lambda}_3]$ is substituted in $\Tilde{\mathbf{L}}\Tilde{\mathbf{y}}_i=\Tilde{\lambda}_i\Tilde{\mathbf{y}}_i$ as follows:

\begin{itemize}
    \item $\Tilde{\mathbf{L}}\Tilde{\mathbf{y}}_0=\Tilde{\lambda}_0\Tilde{\mathbf{y}}_0$ such that $\lambda_{0}=0$
    \begin{align*}
        (w_1+2\Tilde{w}_u)\Tilde{y}_{1,0}-w_1\Tilde{y}_{2,0}-\Tilde{w}_u(\Tilde{y}_{3,0}+\Tilde{y}_{4,0})=0\\
        -w_1\Tilde{y}_{1,0}+(w_1+2\Tilde{w}_u)\Tilde{y}_{2,0}-\Tilde{w}_u(\Tilde{y}_{3,0}+\Tilde{y}_{4,0})=0\\
        -\Tilde{w}_u(\Tilde{y}_{1,0}+\Tilde{y}_{2,0})+(w_2+2\Tilde{w}_u)\Tilde{y}_{3,0}-w_2\Tilde{y}_{4,0}=0\\
        -\Tilde{w}_u(\Tilde{y}_{1,0}+\Tilde{y}_{2,0})-w_2\Tilde{y}_{3,0}+(w_2+2\Tilde{w}_u)\Tilde{y}_{4,0}=0\\
    \end{align*}
  The second expression is subtracted from the first one resulting in
  \begin{align*}
      2w_1\Tilde{y}_{1,0}+2\Tilde{w}_u(\Tilde{y}_{1,0}-\Tilde{y}_{2,0})-2w_1\Tilde{y}_{2,0}&=0\\
      2w_1(\Tilde{y}_{1,0}-\Tilde{y}_{2,0})+2\Tilde{w}_u(\Tilde{y}_{1,0}-\Tilde{y}_{2,0})&=0\\
      2(w_1+\Tilde{w}_u)(\Tilde{y}_{1,0}-\Tilde{y}_{2,0})&=0.\\
  \end{align*}
  As $w_1>\Tilde{w}_u>0$, $\Tilde{y}_{1,0}=\Tilde{y}_{2,0}$. Similarly, the fourth expression is subtracted from the third one:
   \begin{align*}
      2w_2\Tilde{y}_{3,0}+2\Tilde{w}_u(\Tilde{y}_{3,0}-\Tilde{y}_{4,0})-2w_2\Tilde{y}_{4,0}&=0\\
      2w_2(\Tilde{y}_{3,0}-\Tilde{y}_{4,0})+2\Tilde{w}_u(\Tilde{y}_{3,0}-\Tilde{y}_{4,0})&=0\\
      2(w_2+\Tilde{w}_u)(\Tilde{y}_{3,0}-\Tilde{y}_{4,0})&=0\\
  \end{align*}
  Since $w_2>\Tilde{w}_u>0$, $\Tilde{y}_{3,0}=\Tilde{y}_{4,0}$. Lastly, the first and the second expressions are summed up as follows:
  \begin{align*}
      2\Tilde{w}_u(\Tilde{y}_{1,0}+\Tilde{y}_{2,0})-2\Tilde{w}_u(\Tilde{y}_{3,0}+\Tilde{y}_{4,0})&=0\\
      \Tilde{y}_{1,0}+\Tilde{y}_{2,0}&=\Tilde{y}_{3,0}+\Tilde{y}_{4,0}\\
  \end{align*}
Substituting $\Tilde{y}_{1,0}=\Tilde{y}_{2,0}$ and $\Tilde{y}_{3,0}=\Tilde{y}_{4,0}$ yields $\Tilde{y}_{1,0}=\Tilde{y}_{2,0}=\Tilde{y}_{3,0}=\Tilde{y}_{4,0}$. Thus, the eigenvector associated with the smallest eigenvalue $\Tilde{\lambda}_0$ is
\begin{align*}
\Tilde{\mathbf{y}}_0=
    \begin{bmatrix}
    \vspace{1mm}
\Tilde{y}_{1,0}\\\vspace{1mm}
\Tilde{y}_{1,0}\\\vspace{1mm}
\Tilde{y}_{1,0} \\\vspace{1mm}
\Tilde{y}_{1,0} \\
\end{bmatrix}
\end{align*}
 \item $\Tilde{\mathbf{L}}\Tilde{\mathbf{y}}_1=\Tilde{\lambda}_1\Tilde{\mathbf{y}}_1$ such that $\lambda_{1}=4\Tilde{w}_u$
    \begin{align*}
        (w_1-2\Tilde{w}_u)\Tilde{y}_{1,1}-w_1\Tilde{y}_{2,1}-\Tilde{w}_u(\Tilde{y}_{3,1}+\Tilde{y}_{4,1})&=0\\
        -w_1\Tilde{y}_{1,1}+(w_1-2\Tilde{w}_u)\Tilde{y}_{2,1}-\Tilde{w}_u(\Tilde{y}_{3,1}+\Tilde{y}_{4,1})&=0\\
        -\Tilde{w}_u(\Tilde{y}_{1,1}+\Tilde{y}_{2,1})+(w_2-2\Tilde{w}_u)\Tilde{y}_{3,1}-w_2\Tilde{y}_{4,1}&=0\\
        -\Tilde{w}_u(\Tilde{y}_{1,1}+\Tilde{y}_{2,1})-w_2\Tilde{y}_{3,1}+(w_2-2\Tilde{w}_u)\Tilde{y}_{4,1}&=0\\
    \end{align*}

Again, the second expression is subtracted from the first one
\begin{align*}
    2w_1\Tilde{y}_{1,1}-2\Tilde{w}_u \Tilde{y}_{1,1}+2\Tilde{w}_u \Tilde{y}_{2,1}-2w_1w_2&=0\\
    2w_1(\Tilde{y}_{1,1}-\Tilde{y}_{2,1})-2\Tilde{w}_u(\Tilde{y}_{1,1}-\Tilde{y}_{2,1})&=0\\
    2(w_1-\Tilde{w}_u)(\Tilde{y}_{1,1}-\Tilde{y}_{2,1})=0.\\
\end{align*}
Since $w_1>\Tilde{w}_u$, $\Tilde{y}_{1,1}=\Tilde{y}_{2,1}$. Similarly, the fourth expression is subtracted from the third one as follows
\begin{align*}
    2w_2\Tilde{y}_{3,1}-2\Tilde{w}_u\Tilde{y}_{3,1}+2\Tilde{w}_u\Tilde{y}_{4,1}-2w_2\Tilde{y}_{4,1}&=0\\
    2w_2(\Tilde{y}_{3,1}-\Tilde{y}_{4,1})-2\Tilde{w}_u(\Tilde{y}_{3,1}-\Tilde{y}_{4,1})&=0\\
    2(w_2-\Tilde{w}_u)(\Tilde{y}_{3,1}-\Tilde{y}_{4,1})&=0.\\
\end{align*}
Based on the knowledge that $w_2>\Tilde{w}_u$, $\Tilde{y}_{3,1}=\Tilde{y}_{4,1}$. To continue, the first and the second expressions are summed as:
\begin{align*}
    -2\Tilde{w}_u\Tilde{y}_{1,1}-2\Tilde{w}_u\Tilde{y}_{2,1}-2\Tilde{w}_u(\Tilde{y}_{3,1}+\Tilde{y}_{4,1})&=0\\
    -2\Tilde{w}_u(\Tilde{y}_{1,1}+\Tilde{y}_{2,1}+\Tilde{y}_{3,1}+\Tilde{y}_{4,1})&=0\\
\end{align*}
Now, it straightforwardly follows that $\Tilde{y}_{1,1}=\Tilde{y}_{2,1}=-\Tilde{y}_{3,1}=-\Tilde{y}_{4,1}$. Therefore, the associated eigenvector can be written as:
\begin{align*}
\Tilde{\mathbf{y}}_1=
    \begin{bmatrix}
    \vspace{1mm}
    \quad\Tilde{y}_{1,1}\\\vspace{1mm}
    \quad\Tilde{y}_{1,1}\\\vspace{1mm}
    -\Tilde{y}_{1,1}\\\vspace{1mm}
    -\Tilde{y}_{1,1}\\
    \end{bmatrix}
\end{align*}
 \item $\Tilde{\mathbf{L}}\Tilde{\mathbf{y}}_2=\Tilde{\lambda}_{2}\Tilde{\mathbf{y}}_2$ such that $\lambda_{2}=2w_2+2\Tilde{w}_u$

\begin{align*}
    (w_1-2w_2)\Tilde{y}_{1,2}-w_1\Tilde{y}_{2,2}-\Tilde{w}_u(\Tilde{y}_{3,2}+\Tilde{y}_{4,2})&=0\\
    -w_1\Tilde{y}_{1,2}+(w_1-2w_2)\Tilde{y}_{2,2}-\Tilde{w}_u(\Tilde{y}_{3,2}+\Tilde{y}_{4,2})&=0\\
    -\Tilde{w}_u(\Tilde{y}_{1,2}+\Tilde{y}_{2,2})-w_2\Tilde{y}_{3,2}-w_2\Tilde{y}_{4,2}=0\\
    -\Tilde{w}_u(\Tilde{y}_{1,2}+\Tilde{y}_{2,2})-w_2\Tilde{y}_{3,2}-w_2\Tilde{y}_{4,2}=0\\
\end{align*}
Summing up the last two expressions yields
\begin{align*}
    -2\Tilde{w}_u(\Tilde{y}_{1,2}+\Tilde{y}_{2,2})=2w_2(\Tilde{y}_{3,2}+\Tilde{y}_{4,2})
\end{align*}
 Then, the second expression is subtracted from the first one as:
\begin{align*}
    2w_1\Tilde{y}_{1,2}-2w_2\Tilde{y}_{1,2}+2w_2\Tilde{y}_{2,2}-2w_1\Tilde{y}_{2,2}&=0\\
    2w_1(\Tilde{y}_{1,2}-\Tilde{y}_{2,2})-2w_2(\Tilde{y}_{1,2}-\Tilde{y}_{2,2})&=0\\
    2(w_1-w_2)(\Tilde{y}_{1,2}-\Tilde{y}_{2,2})&=0\\
\end{align*}
Based on the assumption that each block has is concentrated around a different coefficient, where $w_2\neq w_1$, $\Tilde{y}_{1,2}$ and $\Tilde{y}_{2,2}$ must be equal.  Thus, the associated eigenvector can be written as:
\begin{align*}
\Tilde{\mathbf{y}}_2=
    \begin{bmatrix}
    \vspace{1mm}
    \quad\Tilde{y}_{1,2}\\\vspace{1mm}
    \quad\Tilde{y}_{1,2}\\\vspace{1mm}
    \quad\Tilde{y}_{3,2}\\\vspace{1mm}
    \quad\Tilde{y}_{4,2}\\
    \end{bmatrix}
\end{align*}

 \item $\Tilde{\mathbf{L}}\Tilde{\mathbf{y}}_3=\Tilde{\lambda}_3\Tilde{\mathbf{y}}_3$ such that $\lambda_{3}=2w_1+2\Tilde{w}_u$
\begin{align*}
    -w_1\Tilde{y}_{1,3}-w_1\Tilde{y}_{2,3}-\Tilde{w}_u(\Tilde{y}_{3,3}+\Tilde{y}_{4,3})&=0\\
    -w_1\Tilde{y}_{1,3}-w_1\Tilde{y}_{2,3}-\Tilde{w}_u(\Tilde{y}_{3,3}+\Tilde{y}_{4,3})&=0\\
    -\Tilde{w}_u(\Tilde{y}_{1,3}+\Tilde{y}_{2,3})+(w_2-2w_1)\Tilde{y}_{3,3}-w_2\Tilde{y}_{4,3}&=0\\
    -\Tilde{w}_u(\Tilde{y}_{1,3}+\Tilde{y}_{2,3})-w_2\Tilde{y}_{3,3}+(w_2-2w_1)\Tilde{y}_{4,3}&=0\\
\end{align*}
Summing up the first two expressions yields
\begin{align*}
    -2w_1(\Tilde{y}_{1,3}+\Tilde{y}_{2,3})=2\Tilde{w}_u(\Tilde{y}_{3,3}+\Tilde{y}_{4,3})
\end{align*}
 Then, the fourth expression is subtracted from the third one as follows
\begin{align*}
    2w_2\Tilde{y}_{3,3}-2w_1\Tilde{y}_{3,3}+2w_1\Tilde{y}_{4,3}-2w_2\Tilde{y}_{4,3}&=0\\
    2w_2(\Tilde{y}_{3,3}-\Tilde{y}_{4,3})-2w_1(\Tilde{y}_{3,3}-\Tilde{y}_{4,3})&=0\\
    2(w_2-w_1)(\Tilde{y}_{3,3}-\Tilde{y}_{4,3})&=0.\\
\end{align*}
Based on the assumption that each block has concentrated around a different coefficient, where $w_2\neq w_1$, $\Tilde{y}_{3,3}$ and $\Tilde{y}_{4,3}$ must be equal. Therefore, the associated eigenvector can be written as:
\begin{align*}
\Tilde{\mathbf{y}}_3=
    \begin{bmatrix}
    \vspace{1mm}
    \quad\Tilde{y}_{1,3}\\\vspace{1mm}
    \quad\Tilde{y}_{2,3}\\\vspace{1mm}
    \quad\Tilde{y}_{3,3}\\\vspace{1mm}
    \quad\Tilde{y}_{3,3}\\
    \end{bmatrix}
\end{align*}
\end{itemize}

To summarize, the eigenvectors associated with the eigenvalues ${\Tilde{\lambda}_0=0}$, ${\Tilde{\lambda}_1=4\Tilde{w}_u}$, ${\Tilde{\lambda}_2=2w_2+2\Tilde{w}_u}$ and ${\Tilde{\lambda}_3=2w_1+2\Tilde{w}_u}$ are
\begin{align*}
\Tilde{\mathbf{y}}_0=
\begin{bmatrix}
\vspace{1mm}
\Tilde{y}_{1,0}\\\vspace{1mm}
\Tilde{y}_{1,0}\\\vspace{1mm}
\Tilde{y}_{1,0}\\\vspace{1mm}
\Tilde{y}_{1,0} \\
\end{bmatrix}
\Tilde{\mathbf{y}}_1=
\begin{bmatrix}
\vspace{1mm}
\quad\Tilde{y}_{1,1}\\\vspace{1mm}
\quad\Tilde{y}_{1,1}\\\vspace{1mm}
-\Tilde{y}_{1,1}\\\vspace{1mm}
-\Tilde{y}_{1,1} \\
\end{bmatrix}
\Tilde{\mathbf{y}}_2=
\begin{bmatrix}
\vspace{1mm}
\quad\Tilde{y}_{1,2}\\\vspace{1mm}
\quad\Tilde{y}_{1,2}\\\vspace{1mm}
\quad\Tilde{y}_{3,2}\\\vspace{1mm}
\quad\Tilde{y}_{4,2}\\
\end{bmatrix}
\Tilde{\mathbf{y}}_3=
\begin{bmatrix}
\vspace{1mm}
\quad\Tilde{y}_{1,3}\\\vspace{1mm}
\quad\Tilde{y}_{2,3}\\\vspace{1mm}
\quad\Tilde{y}_{3,3}\\\vspace{1mm}
\quad\Tilde{y}_{3,3}\\
\end{bmatrix}
\end{align*}
Again, to compute the eigenvectors, the orthogonality information is analyzed for couples of eigenvectors as
\begin{itemize}
    \item $\Tilde{\mathbf{y}}_{1}^{\top}\Tilde{\mathbf{y}}_{3}=0$
    \begin{align*}
    \Tilde{y}_{1,1}(\Tilde{y}_{1,3}+\Tilde{y}_{2,3})&=2\Tilde{y}_{1,1}\Tilde{y}_{3,3}\\
    \Tilde{y}_{1,3}+\Tilde{y}_{2,3}&=2\Tilde{y}_{3,3}
\end{align*}
Based on the available knowledge that $\Tilde{y}_{1,3}+\Tilde{y}_{2,3}=\frac{-2\Tilde{w}_u}{w_1}\Tilde{y}_{3,3}$ and $w_1>\Tilde{w}_u>0$, the mappings yield $\Tilde{y}_{3,3}=0$ and $\Tilde{y}_{1,3}=-\Tilde{y}_{2,3}$.
 
\item $\Tilde{\mathbf{y}}_{1}^{\top}\Tilde{\mathbf{y}}_{2}=0$
\begin{align*}
    2\Tilde{y}_{1,1}\Tilde{y}_{1,2}&=\Tilde{y}_{1,1}(\Tilde{y}_{3,2}+\Tilde{y}_{4,2})\\
    2\Tilde{y}_{1,2}&=\Tilde{y}_{3,2}+\Tilde{y}_{4,2}
\end{align*}
Also knowing that $\Tilde{y}_{3,2}+\Tilde{y}_{4,2}=\frac{-2\Tilde{w}_u}{w_2}\Tilde{y}_{1,2}$ and $w_2>\Tilde{w}_u>0$, the mappings yield $\Tilde{y}_{3,2}=-\Tilde{y}_{4,2}$ and $\Tilde{y}_{1,2}=0$.
\end{itemize}
Consequently, the eigenvectors associated with the eigenvalues in ascending order ${\Tilde{\lambda}_0=0}$, ${\Tilde{\lambda}_1=4\Tilde{w}_u}$, ${\Tilde{\lambda}_2=2w_2+2\Tilde{w}_u}$ and ${\Tilde{\lambda}_3=2w_1+2\Tilde{w}_u}$ are
\begin{align*}
\Tilde{\mathbf{y}}_0=
\begin{bmatrix}
\vspace{1mm}
\Tilde{y}_{1,0}\\\vspace{1mm}
\Tilde{y}_{1,0}\\\vspace{1mm}
\Tilde{y}_{1,0}\\\vspace{1mm}
\Tilde{y}_{1,0} \\
\end{bmatrix}
\hspace{5mm}\Tilde{\mathbf{y}}_1=
\begin{bmatrix}
\vspace{1mm}
\quad\Tilde{y}_{1,1}\\\vspace{1mm}
\quad\Tilde{y}_{1,1}\\\vspace{1mm}
-\Tilde{y}_{1,1}\\\vspace{1mm}
-\Tilde{y}_{1,1} \\
\end{bmatrix}
\hspace{5mm}\Tilde{\mathbf{y}}_2=
\begin{bmatrix}
\vspace{1mm}
\quad0\\\vspace{1mm}
\quad0\\\vspace{1mm}
\quad\Tilde{y}_{3,2}\\\vspace{1mm}
-\Tilde{y}_{3,2}\\
\end{bmatrix}
\hspace{5mm}\Tilde{\mathbf{y}}_3=
\begin{bmatrix}
\vspace{1mm}
\quad\Tilde{y}_{1,3}\\\vspace{1mm}
-\Tilde{y}_{1,3}\\\vspace{1mm}
\quad0\\\vspace{1mm}
\quad0\\
\end{bmatrix}
\end{align*}
which can straight-forwardly be written for the eigenvalues in ascending order ${\Tilde{\lambda}_0=0}$, ${\Tilde{\lambda}_1=4\Tilde{w}_u}$, ${\Tilde{\lambda}_2=2w_1+2\Tilde{w}_u}$ and ${\Tilde{\lambda}_3=2w_2+2\Tilde{w}_u}$ as:
\begin{align*}
\Tilde{\mathbf{y}}_0=
\begin{bmatrix}
\vspace{1mm}
\Tilde{y}_{1,0}\\\vspace{1mm}
\Tilde{y}_{1,0}\\\vspace{1mm}
\Tilde{y}_{1,0}\\\vspace{1mm}
\Tilde{y}_{1,0} \\
\end{bmatrix}
\hspace{5mm}\Tilde{\mathbf{y}}_1=
\begin{bmatrix}
\vspace{1mm}
\quad\Tilde{y}_{1,1}\\\vspace{1mm}
\quad\Tilde{y}_{1,1}\\\vspace{1mm}
-\Tilde{y}_{1,1}\\\vspace{1mm}
-\Tilde{y}_{1,1} \\
\end{bmatrix}
\hspace{5mm}\Tilde{\mathbf{y}}_2=
\begin{bmatrix}
\vspace{1mm}
\quad\Tilde{y}_{1,2}\\\vspace{1mm}
-\Tilde{y}_{1,2}\\\vspace{1mm}
\quad0\\\vspace{1mm}
\quad0\\
\end{bmatrix}
\hspace{5mm}\Tilde{\mathbf{y}}_3=
\begin{bmatrix}
\vspace{1mm}
\quad0\\\vspace{1mm}
\quad0\\\vspace{1mm}
\quad\Tilde{y}_{3,3}\\\vspace{1mm}
-\Tilde{y}_{3,3}\\
\end{bmatrix}
\end{align*}

\newpage
\setcounter{secnumdepth}{0}
\section{Appendix D~: Experimental Setup and Additional Results}
\subsection{D.1~Outlier Effects and Robustness}
\subsubsection{D.1.1~Experimental Setup}
\setlength{\extrarowheight}{1pt}
\vspace{-2.5mm}
\begin{table}[ht]
\begin{tabularx}{\linewidth}{p{3.5cm}P{6cm}P{2.5cm}P{5cm}}
\hline\hline
    Group Information & \multicolumn{2}{c}{Feature Space Parameters}& Number of samples\\
    \cline{2-3}
        &$\boldsymbol{\mu}$&$\vartheta$ &\\
    \midrule
    Cluster-1 ($c_1$)&$\boldsymbol{\mu}_{c_1}$=[5.50; 4.50; 2.00; 0.75; 2.50; 4.50] & $\vartheta_{c_1}=0.50$ & $N_{c_1}=50$ for $N_{\mathrm{out}}=0$\\
    Cluster-2 ($c_2$)&$\boldsymbol{\mu}_{c_2}$=[7.50; 1.00; 5.50; 2.50; 1.00; 1.50]  &$\vartheta_{c_2}=0.50$ & $N_{c_2}=50$ for $N_{\mathrm{out}}=0$ \\
    Cluster-3 ($c_3$)& $\boldsymbol{\mu}_{c_3}$=[8.50; 0.75; 6.00; 4.50; 1.50; 1.25]&$\vartheta_{c_3}=0.50$  & $N_{c_3}=50$ for $N_{\mathrm{out}}=0$\\
    Type II Outliers ($o_2$)&$\boldsymbol{\mu}_{o_2}$=[7.00; 0.25; 5.00; 2.00; 0.50; 0.75] & $\vartheta_{o_2}=1.50$ & $N_{\mathrm{out}} \in \{ 0,5,10,\dots,20\}$\\
    \hline\hline
\end{tabularx}
\caption{Detailed numerical information for the synthetic data set}
\vspace{-3mm}
\end{table}
\begin{itemize}
    \item The original $i$th feature vector $\mathbf{x}_{i,k}$ associated with the $k$th cluster $c_k$, such that $k=1,\dots,K$ and $i=1,\dots,N_{c_k}$, is computed as\vspace{-3mm}
\begin{align*}
    \mathbf{x}_{i,k}=\boldsymbol{\mu}_{c_k}+\vartheta_{c_k}\boldsymbol{\upsilon},
\vspace{-1mm}
\end{align*}
where $\boldsymbol{\upsilon}$ denotes a vector 
of independently and identically distributed random variables
from a uniform distribution on the interval $[-0.5,0.5]$.

\item If $\mathbf{x}_{i,k}$ is replaced with a Type I outlier, the outlier  $\Tilde{\mathbf{x}}_i^{(1)}$ is computed as
\vspace{-0.5mm}
\begin{align*}
    \Tilde{\mathbf{x}}_i^{(1)}=\mathbf{x}_{i,k}+\vartheta_{o_1}\boldsymbol{\upsilon},
\vspace{-1.5mm}
\end{align*}
where $\vartheta_{o_1}\in\{0,1,2,\dots,10\}$.

\item If $\mathbf{x}_{i,k}$ is replaced with a Type II outlier, the outlier  $\Tilde{\mathbf{x}}_i^{(2)}$ is computed as
\begin{align*}
    \Tilde{\mathbf{x}}_i^{(2)}=\boldsymbol{\mu}_{o_2}+\vartheta_{o_2}\boldsymbol{\upsilon}.
\end{align*}
\end{itemize}
\vspace{-3mm}
\subsubsection{D.1.2~Additional Results}
\vspace{-3.5mm}
\begin{figure*}[!h]
  \centering
 \includegraphics[trim={1.5cm 12.4cm 0cm 1.5cm},clip,width=19.5cm]{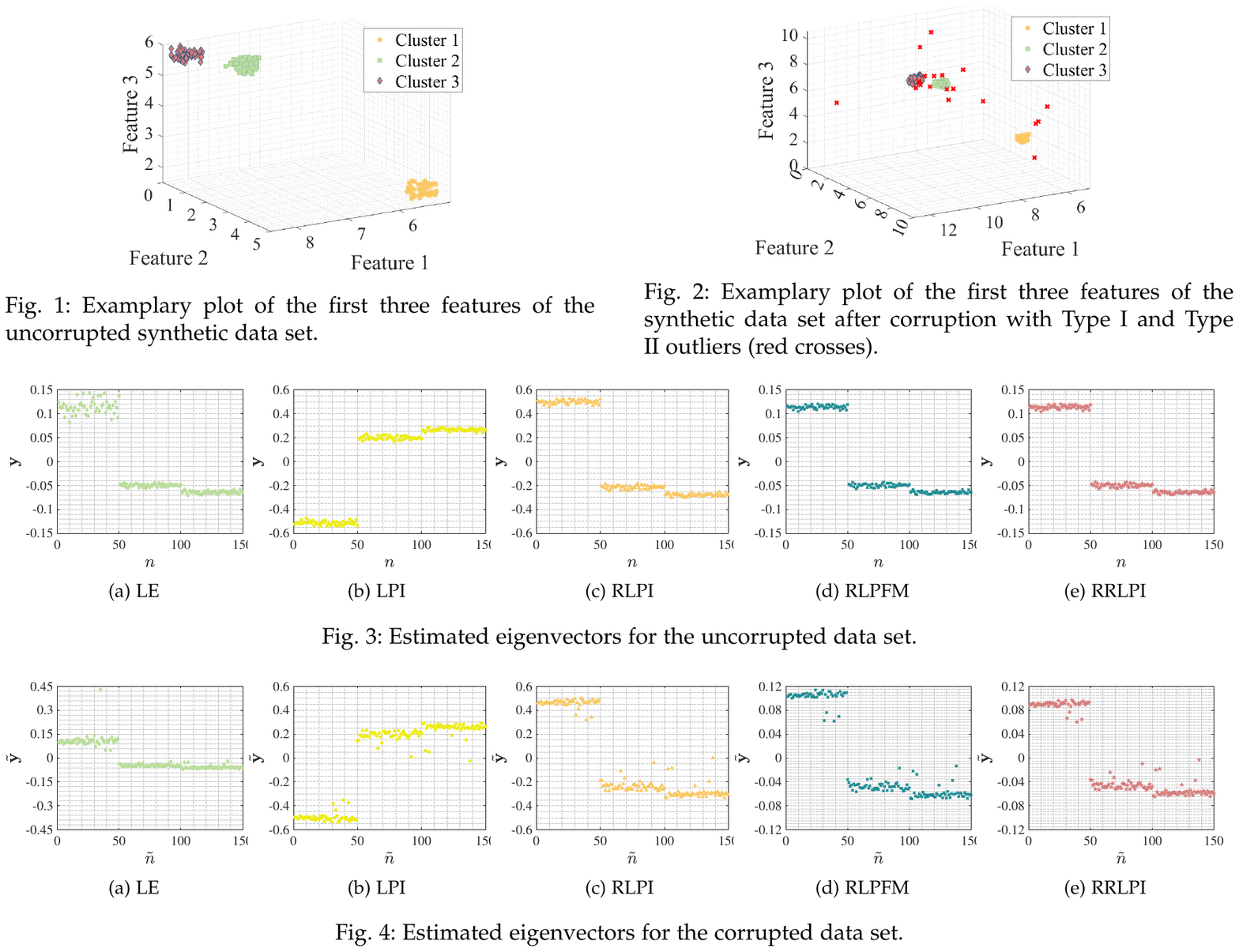}
 \vspace{-9.5mm}
\end{figure*}
\newpage
\renewcommand{\thefigure}{5}
\begin{figure*}[h!]
  \centering
 \captionsetup{justification=centering}
\includegraphics[trim={4.5cm 16.3cm 4.5cm 1.75cm},clip,width=13cm]{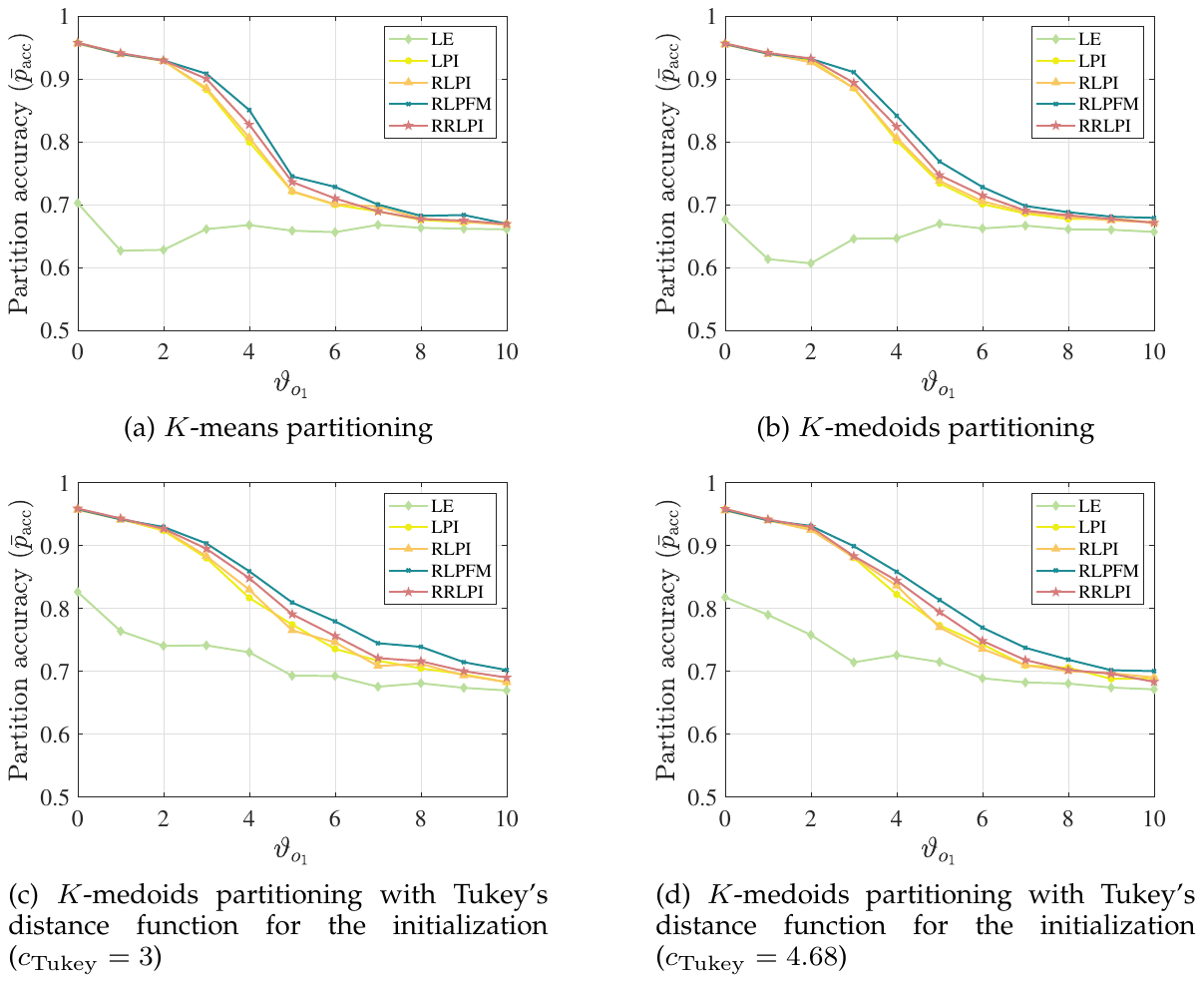}
 \caption{ $\bar{p}_{\mathrm{acc}}$ performance of different partitioning methods for increasing $\vartheta_{o_1}$
associated with Type I outlier\\($n=300$, $N_{\mathrm{out}}=10$, $\vartheta_{o_2}=1.5$, $\vartheta_{c_K}=0.5$ s.t. $K=1,\dots,K$.)}
\end{figure*}
\renewcommand{\thefigure}{6}
\begin{figure*}[h!]
 \centering
 \captionsetup{justification=centering}
 \includegraphics[trim={4.5cm 16.3cm 4.5cm 1.75cm},clip,width=13cm]{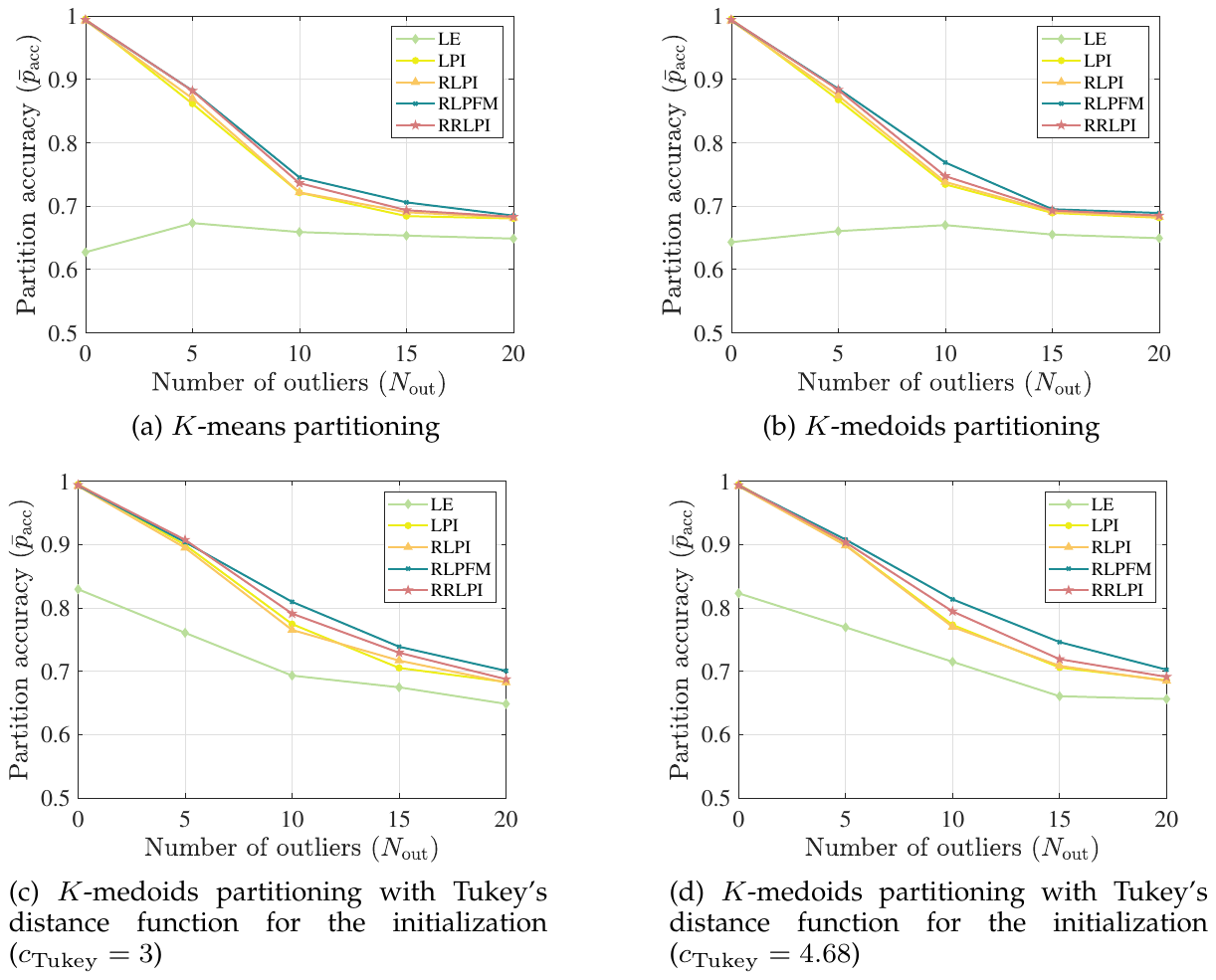}
 \caption{$\bar{p}_{\mathrm{acc}}$ performance of different partitioning methods for each of outlier type with increasing $N_{\mathrm{out}}$\\ ($n=300,\vartheta_{o_1}=5$,$\vartheta_{o_2}=1.5$, $\vartheta_{c_K}=0.5$ s.t. $K=1,\dots,K$.)}
\end{figure*}
\newpage
\subsection{D.2~Cluster Enumeration}
\subsubsection{D.2.1~Additional Results}
\vspace{-5mm}
\renewcommand{\thefigure}{7}
\begin{figure*}[h!]
  \centering
  \captionsetup{justification=centering}
\subfloat[ $\bar{p}_{\mathrm{det}}$ with respect to different penalty parameters for $K$-means partitioning]{\includegraphics[trim={0mm 0mm 0mm 0mm},clip,width=6cm]{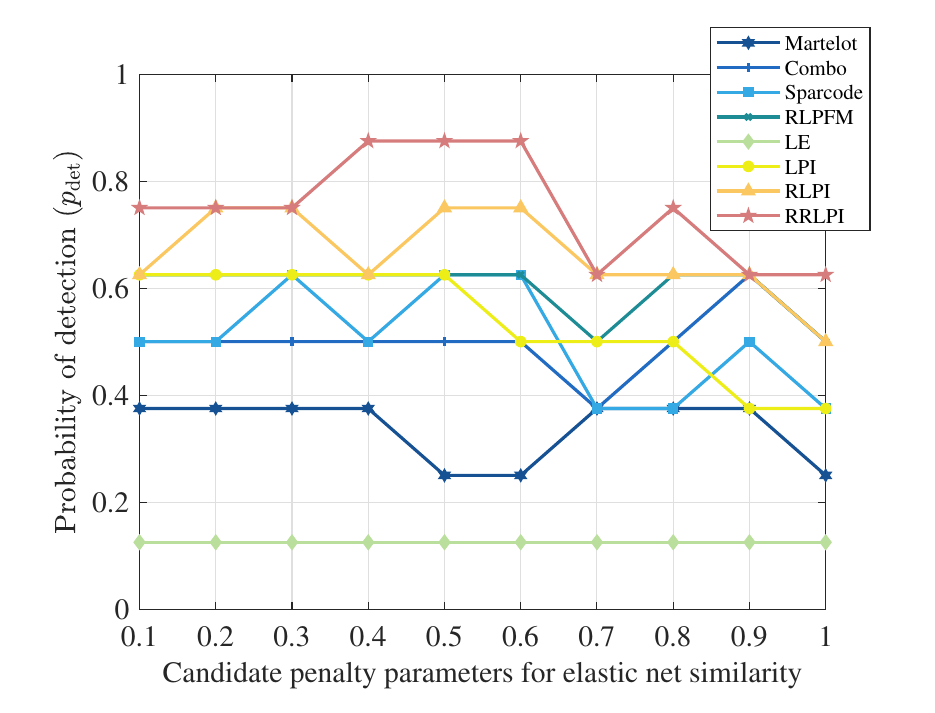}\label{fig:PdetPenaltyKmeans}}
\hspace{1.5cm}
\subfloat[Performance rank for $K$-means partitioning]{\includegraphics[trim={0mm 0mm 0mm 0mm},clip,width=6cm]{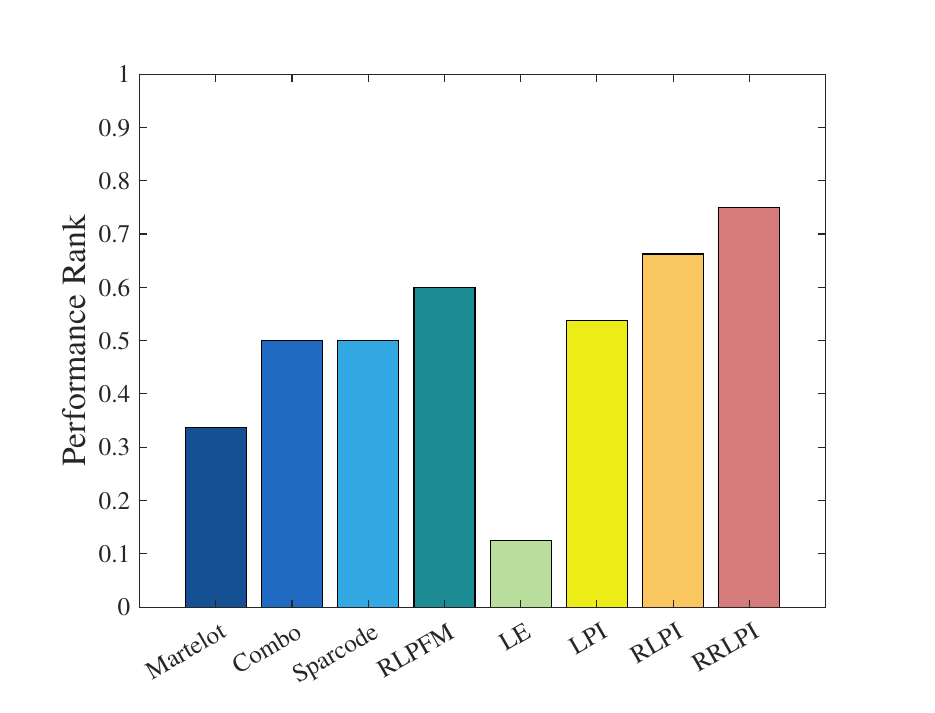}\label{fig:PdetPerRankKmeans}}\\\vspace{-4mm}
\subfloat[ $\bar{p}_{\mathrm{det}}$ with respect to different penalty parameters for $K$-medoids partitioning]{\includegraphics[trim={0mm 0mm 0mm 0mm},clip,width=6cm]{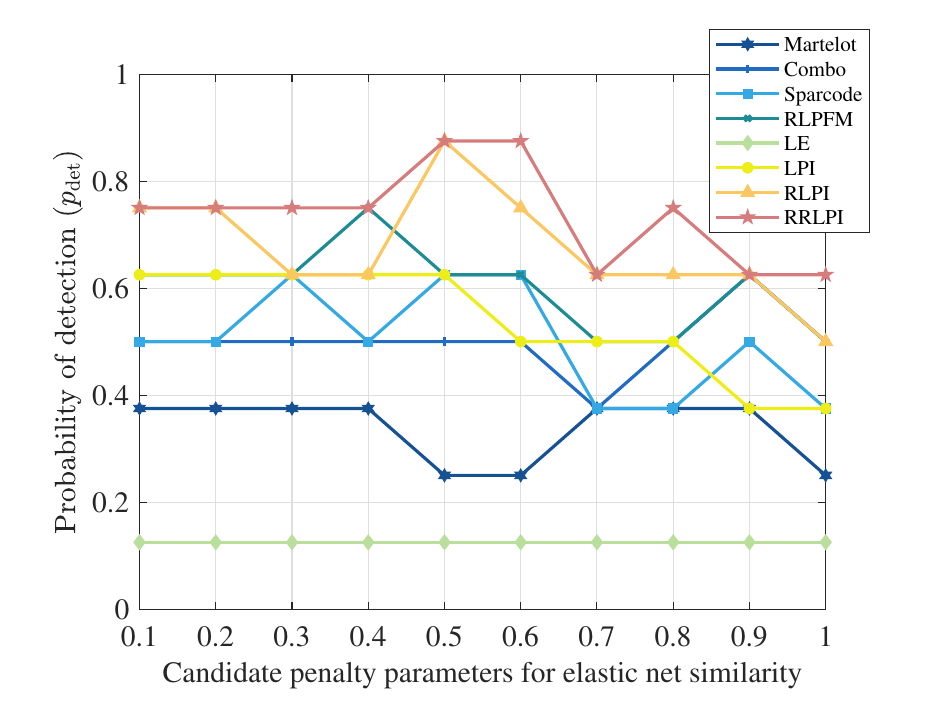}\label{fig:PdetPenaltyKmedoids}}
\hspace{1.5cm}
\subfloat[Performance rank for $K$-medoids partitioning]{\includegraphics[trim={0mm 0mm 0mm 0mm},clip,width=6cm]{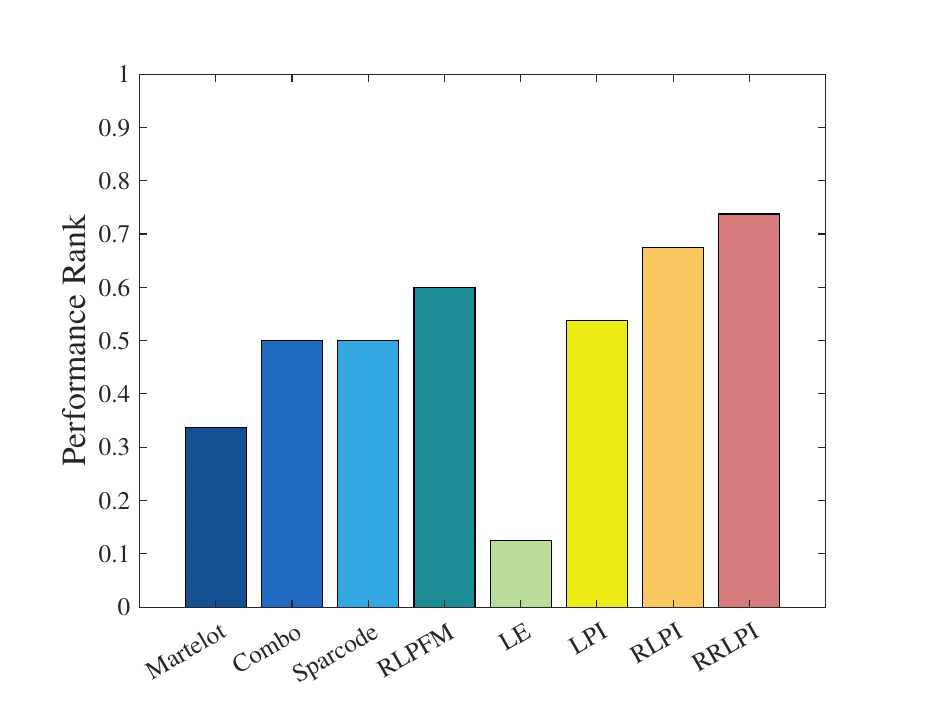}\label{fig:PdetPerRankKmedoids}}\\\vspace{-4mm}
\subfloat[ $\bar{p}_{\mathrm{det}}$ with respect to different penalty parameters for $K$-medoids partitioning with Tukey's distance function for the initialization ($c_{\mathrm{Tukey}}=3$) ]{\includegraphics[trim={0mm 0mm 0mm 0mm},clip,width=6cm]{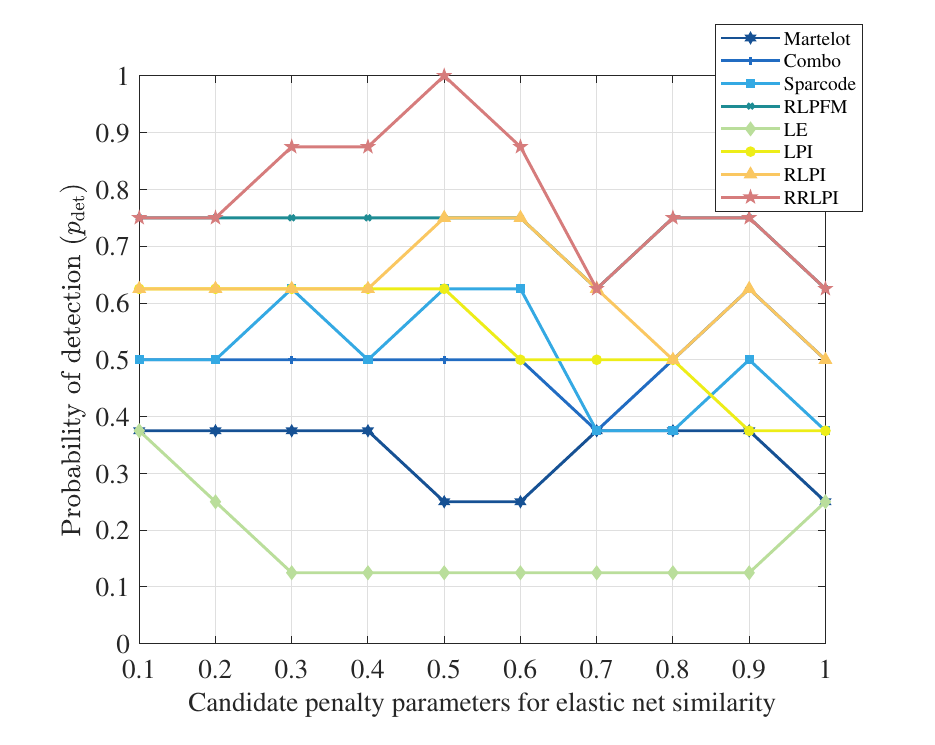}\label{fig:PdetPenaltyKmedoidsTukeyc3}}
\hspace{1.5cm}
\subfloat[Performance rank for $K$-medoids partitioning with Tukey's distance function for the initialization ($c_{\mathrm{Tukey}}=3$)]{\includegraphics[trim={0mm 0mm 0mm 0mm},clip,width=6cm]{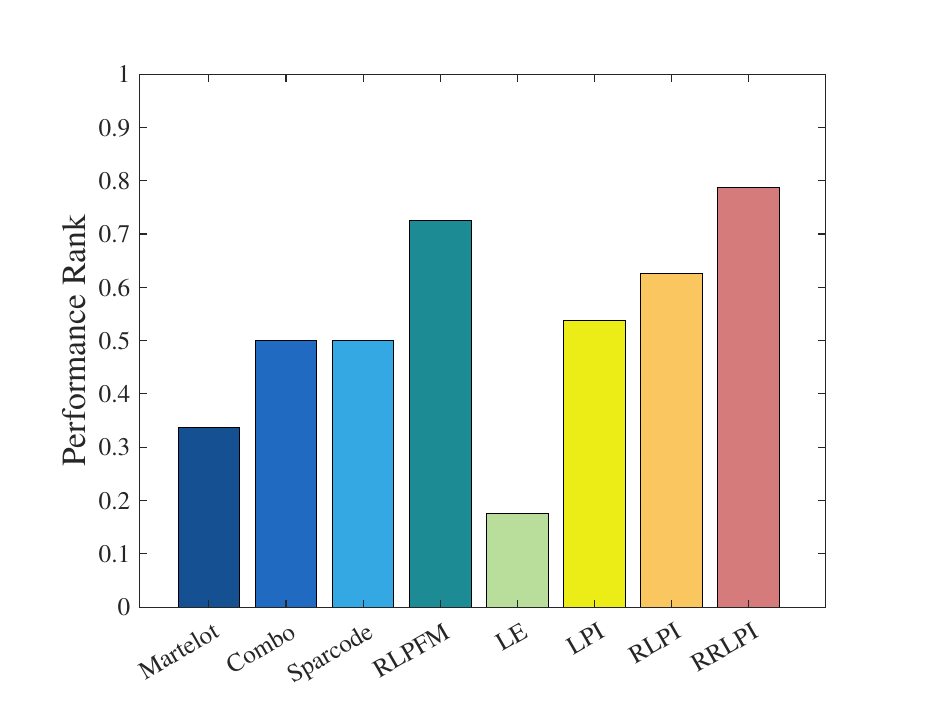}\label{fig:PdetPerRankKmedoidsTukeyc3}}\\\vspace{-4mm}
\subfloat[ $\bar{p}_{\mathrm{det}}$ with respect to different penalty parameters for $K$-medoids partitioning with Tukey's distance function for the initialization ($c_{\mathrm{Tukey}}=4.68$) ]{\includegraphics[trim={0mm 0mm 0mm 0mm},clip,width=6cm]{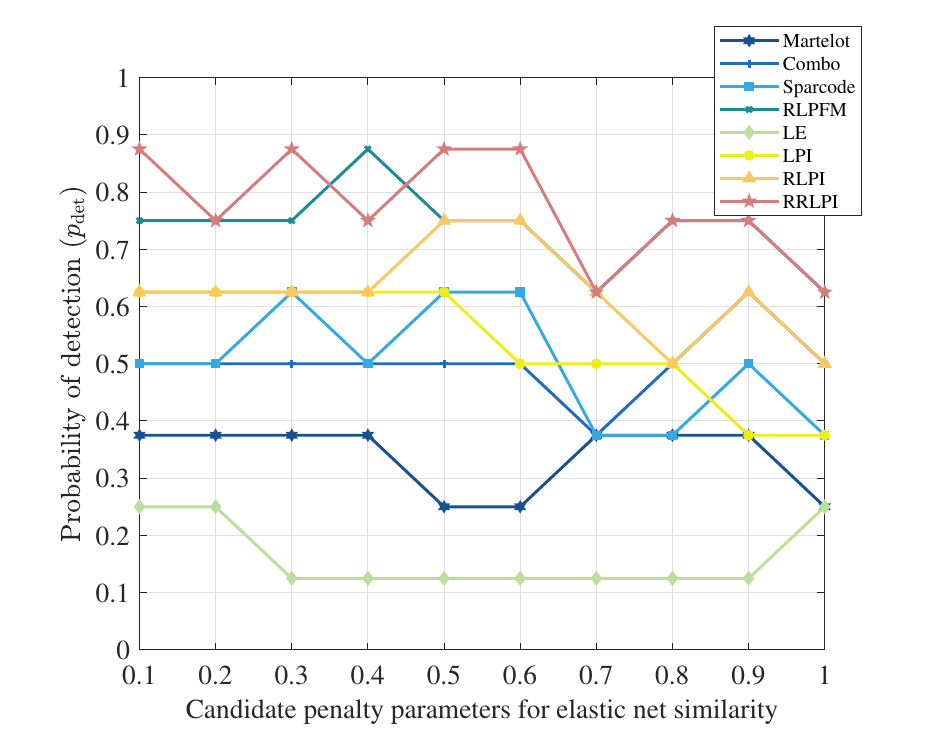}\label{fig:PdetPenaltyKmedoidsTukeyc4dot68}}
\hspace{1.5cm}
\subfloat[Performance rank for $K$-medoids partitioning with Tukey's distance function for the initialization ($c_{\mathrm{Tukey}}=4.68$)]{\includegraphics[trim={0mm 0mm 0mm 0mm},clip,width=6cm]{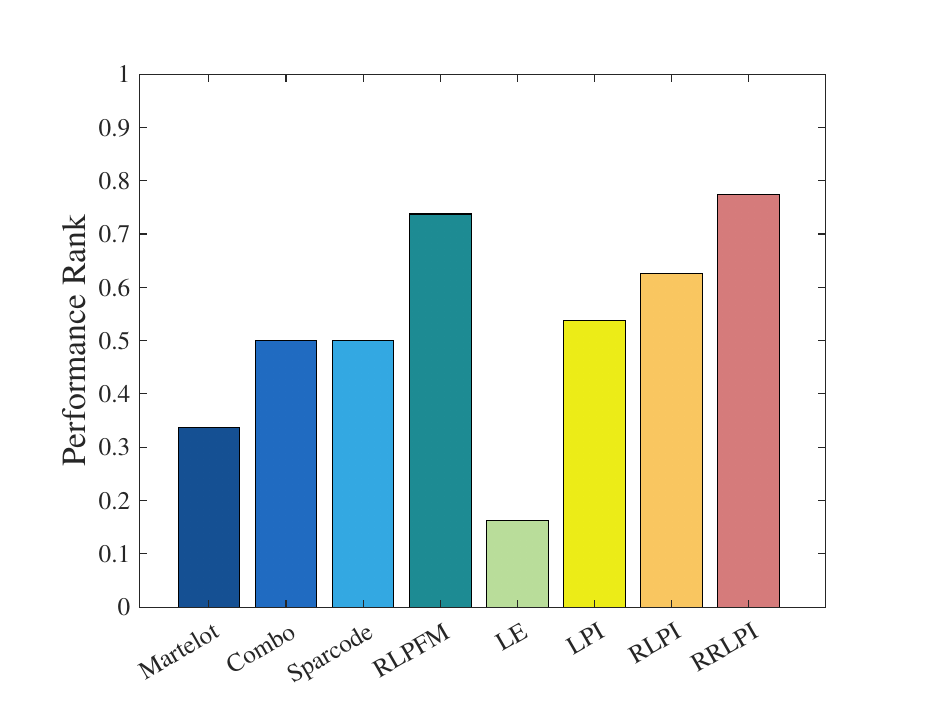}\label{fig:PdetPerRankKmedoidsTukeyc34dot68}}\\\vspace{-2mm}
\caption{Numerical results for cluster enumeration using different partitioning algorithms.}
  \label{fig:imagesegmentationnumericalresults}
\end{figure*}
\newpage
\renewcommand{\thefigure}{8}
\newpage
\begin{figure*}[h!]
  \centering
 \includegraphics[trim={0cm 1cm 0cm 5mm},clip,width=8cm]{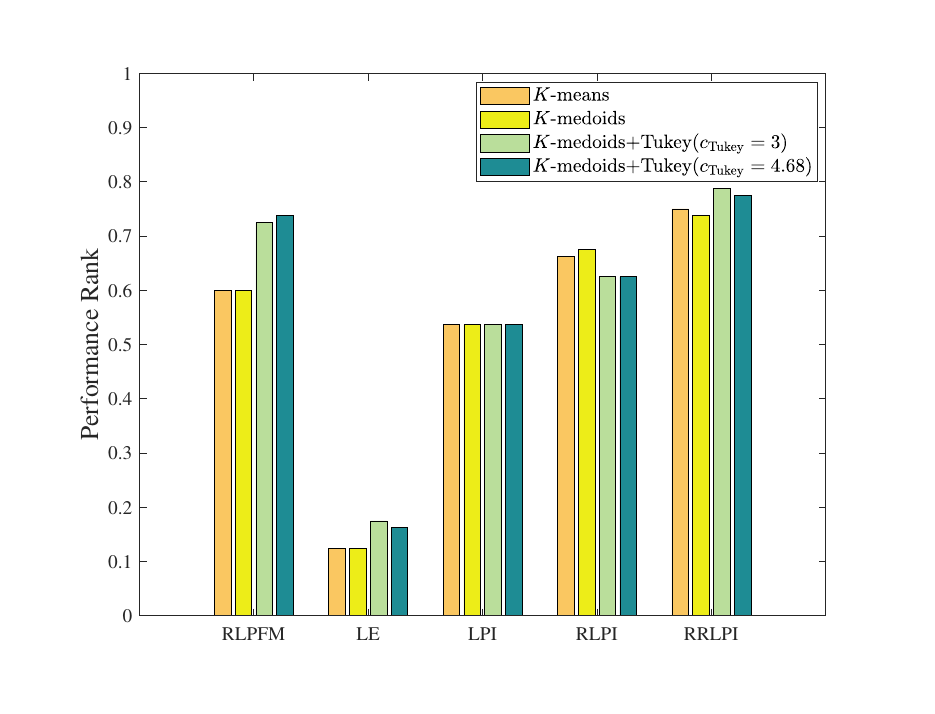}
    \caption{Overall performance rank for different partitioning algorithms.}
\end{figure*}
\vspace{-5mm}
\subsection{D.3~Image Segmentation}
\subsubsection{D.3.1~Experimental Setup}
\vspace{-5mm}
\begin{table}[h!]
 \centering
 \includegraphics[trim={1.5cm 8cm 1cm 1.1cm},clip,width=17cm]{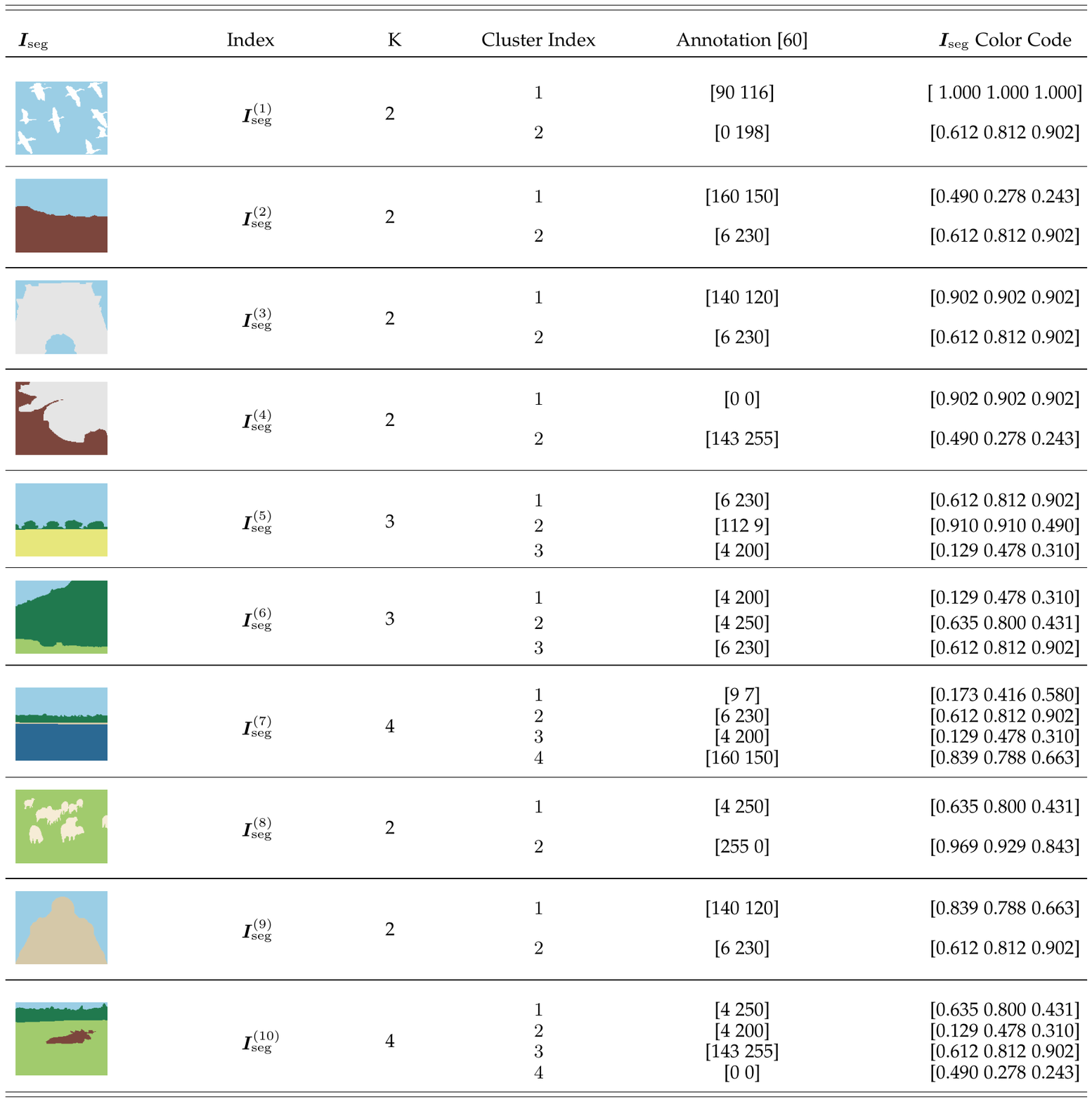}
\caption{Reference annotations for the ground truth images.}
\end{table}
\newpage
\subsubsection{D.3.2~Additional Results}
\begin{table}[h!]
  \centering
  \includegraphics[trim={1.5cm 4.5cm 1cm 1cm},clip,width=19cm]{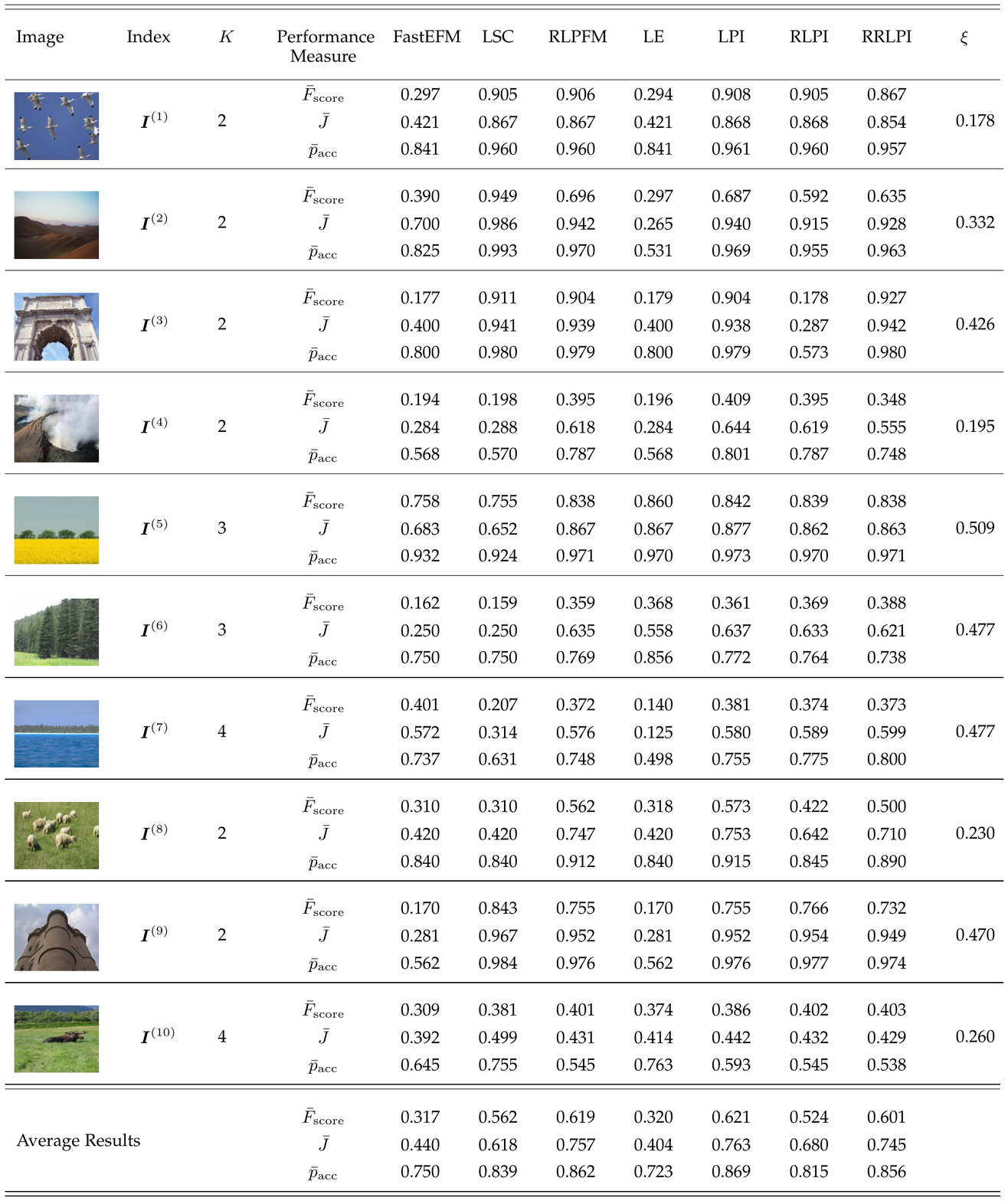}
 \caption{Detailed performance results for the original images}
\end{table}
\begin{table}[!tbp]
  \centering
  \includegraphics[trim={1.5cm 3cm 1cm 2.5cm},clip,width=19cm]{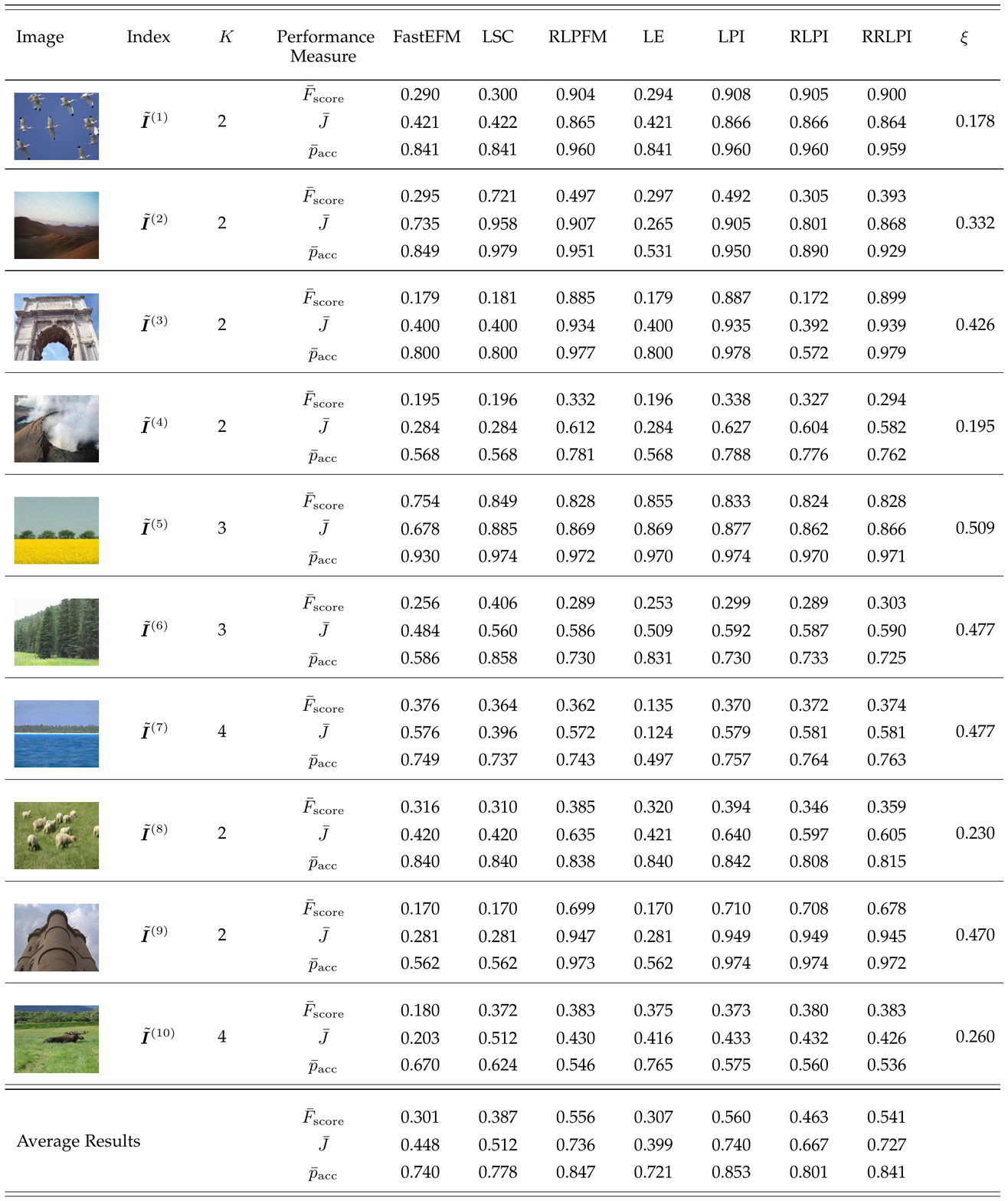}
 \caption{Detailed performance results for the corrupted images. ($\sigma^{(\xi)}=10^{-3}$)}
\end{table}

\renewcommand{\thefigure}{9}
\begin{figure*}[!tbp]
  \centering
  \captionsetup{justification=centering}
  \footnotesize
   \stackon[5pt]{\includegraphics[trim={2cm 10cm 0cm 10cm},clip,width=20cm]{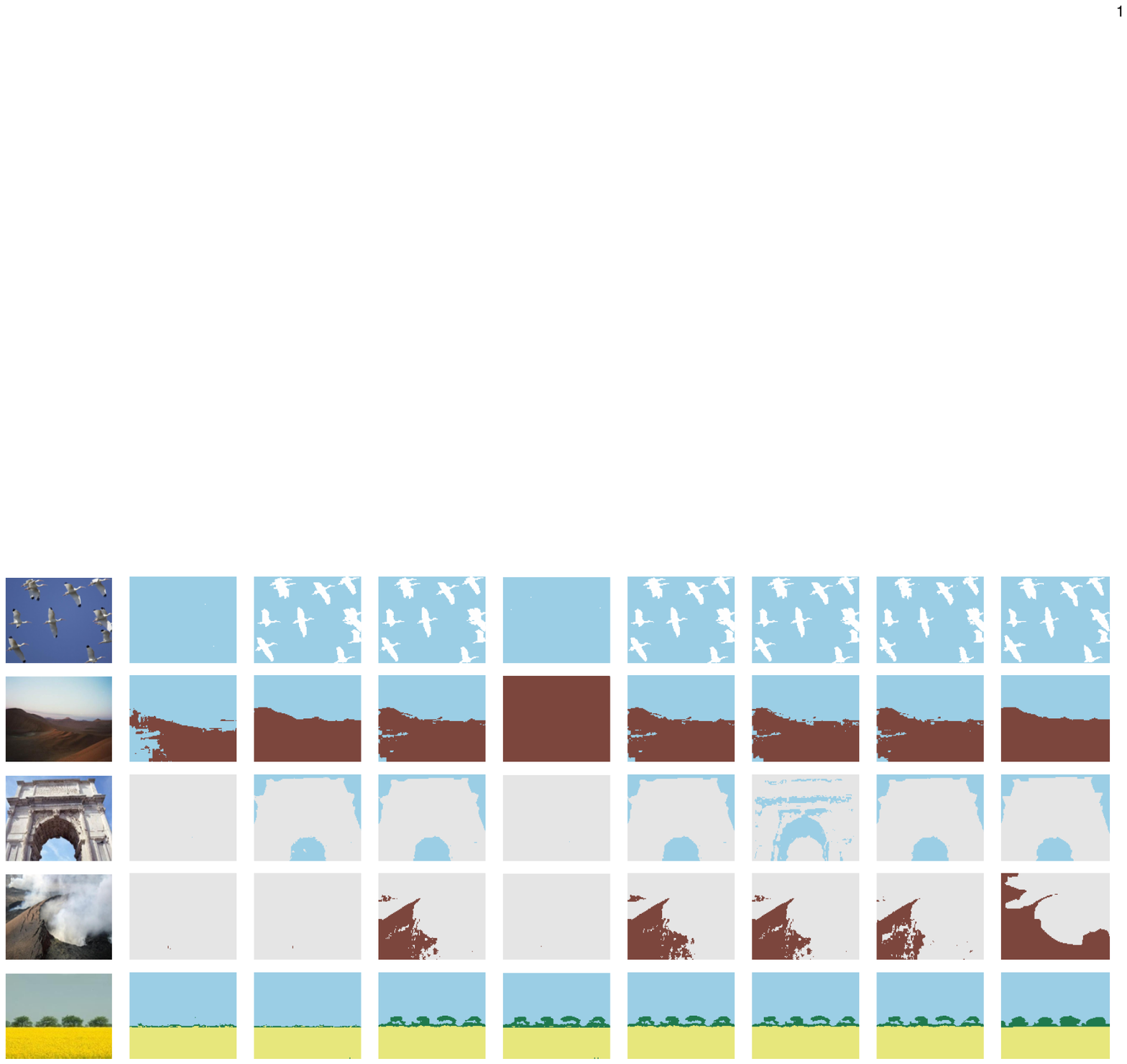}\label{fig:flowers}}{\hspace{-0.8cm}$\bm{I}^{(i)}$\hspace{1.2cm} FastEFM\hspace{1.2cm} LSC\hspace{1.2cm} RLPFM\hspace{1.3cm} LE\hspace{1.6cm} LPI\hspace{1.4cm} RLPI\hspace{1.1cm} RRLPI\hspace{1.3cm}$\bm{I}_{\mathrm{seg}}$\hspace{1.4cm}}\hspace{-3mm}
          \includegraphics[trim={2cm 10cm 0cm 10cm},clip,width=20cm]{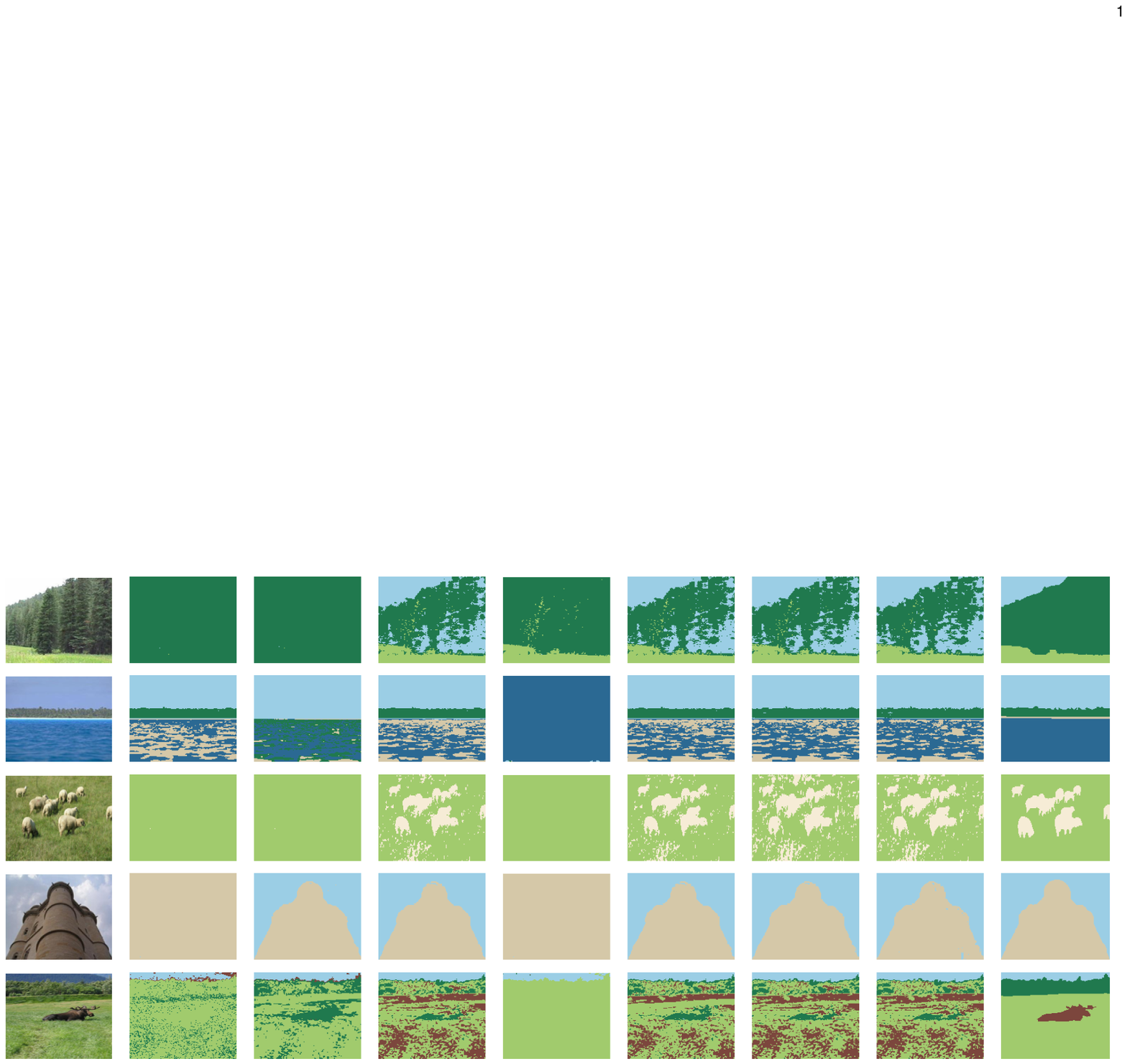}\\
  \caption{Image segmentation results for the original images}
  \label{fig:Images_Supplementaryegmentation}
  \vspace{-2mm}
\end{figure*}
\renewcommand{\thefigure}{10}
\begin{figure*}
    \centering
\hspace{-8mm}
\subfloat[$\bm{I}$]{\includegraphics[trim={1mm 5mm 8mm 5mm},clip,width=4.7cm]{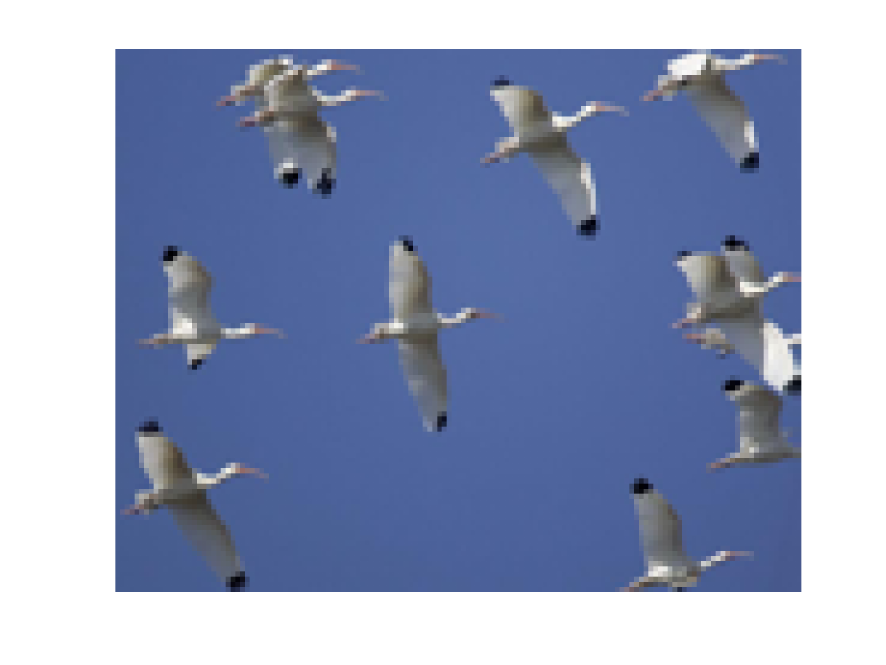}}
\subfloat[$\hat{\bm{I}}_{\mathrm{seg}}$ for LE]{\includegraphics[trim={1mm 5mm 8mm 5mm},clip,width=4.7cm]{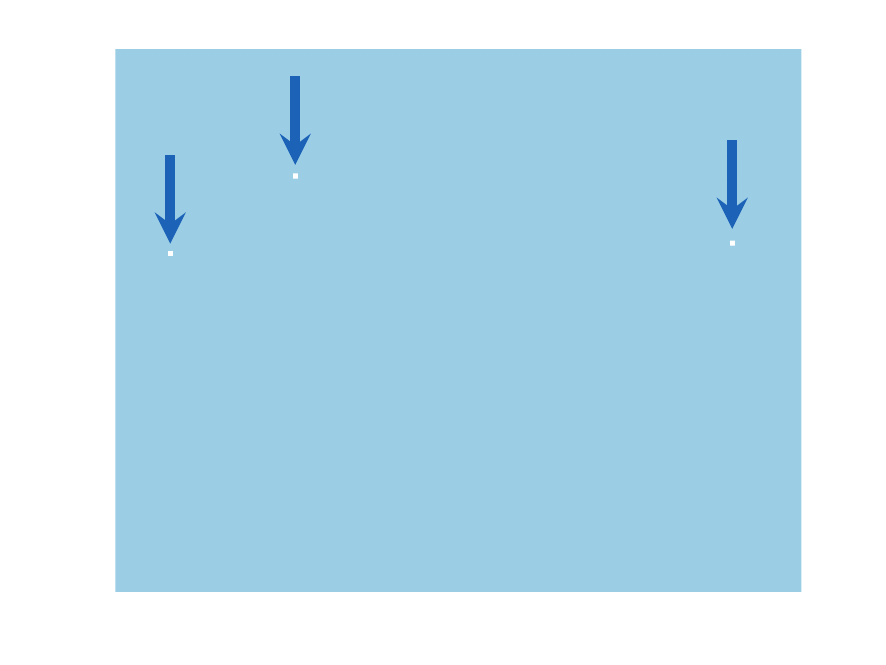}\label{fig:LESegresult}}
\subfloat[$\hat{\bm{I}}_{\mathrm{seg}}$ for RRLPI]{\includegraphics[trim={1mm 5mm 8mm 5mm},clip,width=4.7cm]{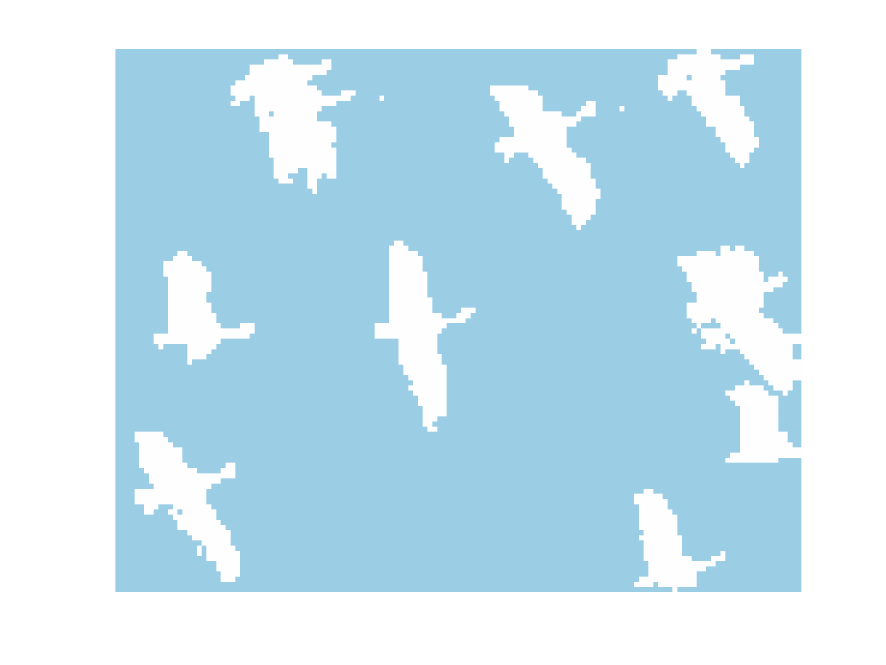}\label{fig:RRLPISegresult}}
\subfloat[Annotated image ${\bm{I}}_{\mathrm{seg}}$]{\includegraphics[trim={1mm 5mm 8mm 5mm},clip,width=4.7cm]{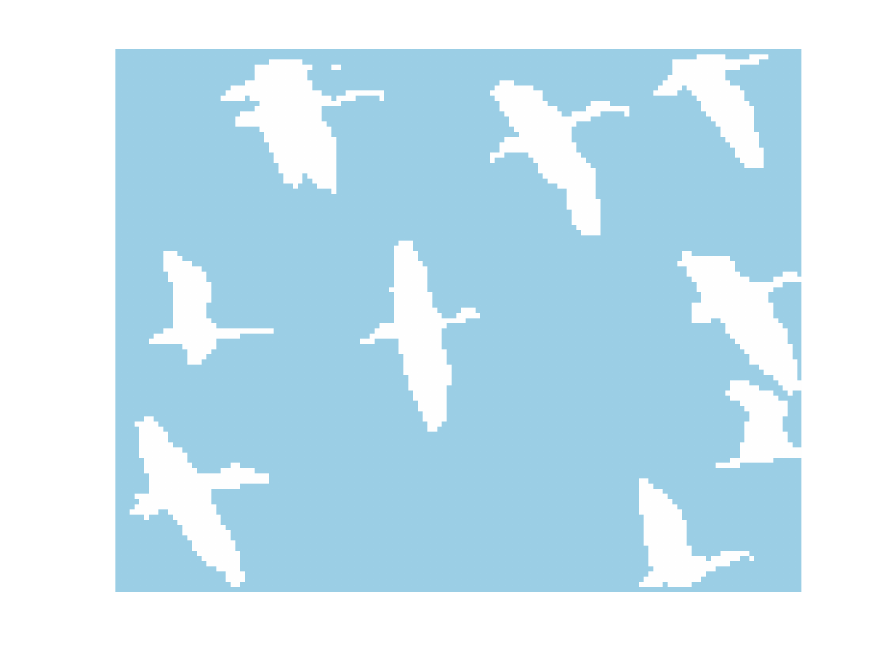}}
\caption{Example segmentations for LE and RRLPI methods. The embeddings that are mapped far away from the group of pixels are pointed out using arrows.}
\label{fig:outliereffectimage}
\end{figure*}
\renewcommand{\thefigure}{11}
\begin{figure*}[!tbp]
  \vspace{1.5cm}
  \centering
  \captionsetup{justification=centering}
  \footnotesize
   \stackon[5pt]{\includegraphics[trim={2cm 10cm 0cm 10cm},clip,width=20cm]{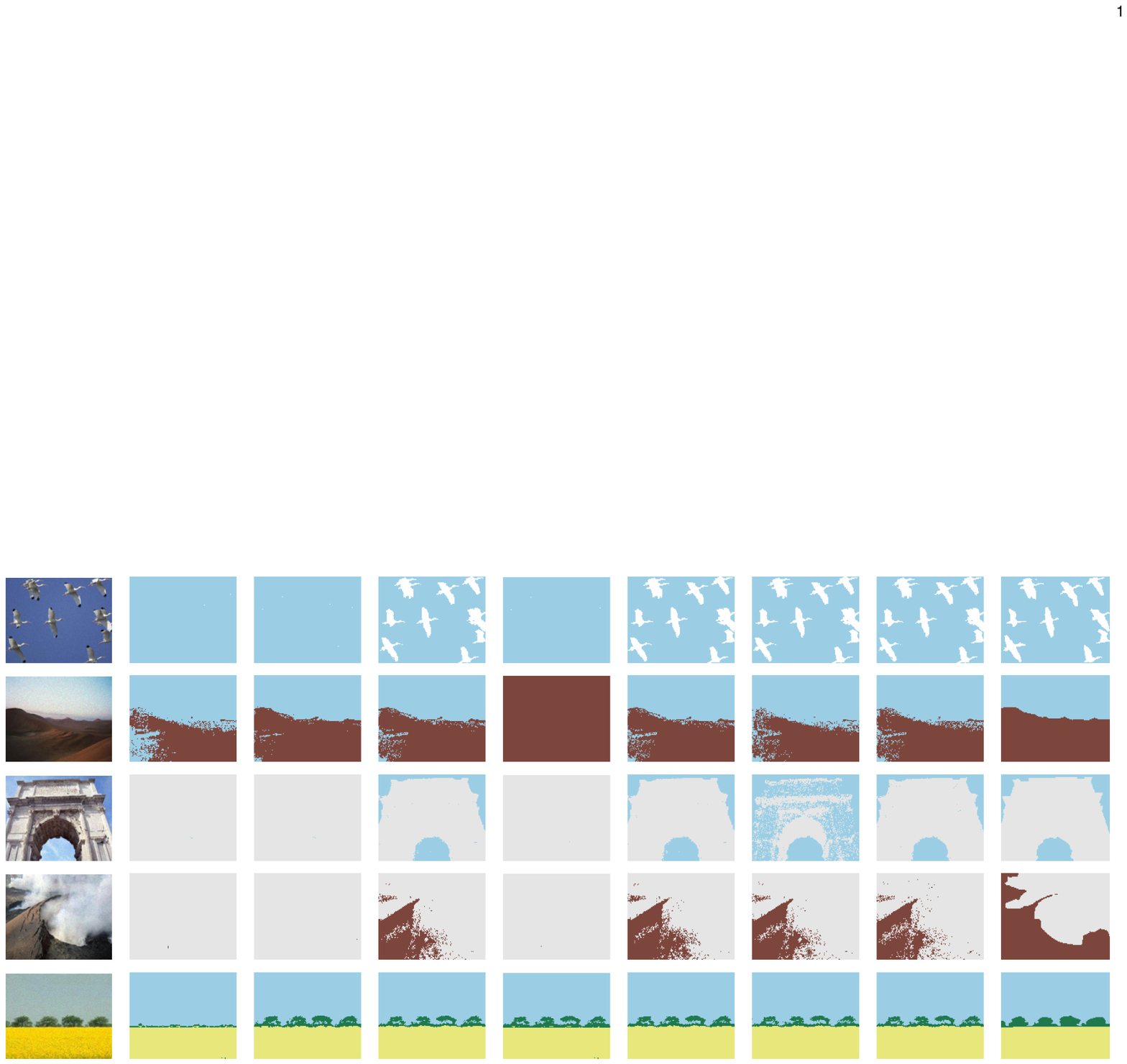}\label{fig:flowers}}{\hspace{-0.8cm}$\Tilde{\bm{I}}^{(i)}$\hspace{1.2cm} FastEFM\hspace{1.2cm} LSC\hspace{1.2cm} RLPFM\hspace{1.3cm} LE\hspace{1.6cm} LPI\hspace{1.4cm} RLPI\hspace{1.1cm} RRLPI\hspace{1.3cm}$\bm{I}_{\mathrm{seg}}$\hspace{1.4cm}}\hspace{-3mm}
          \includegraphics[trim={2cm 10cm 0cm 10cm},clip,width=20cm]{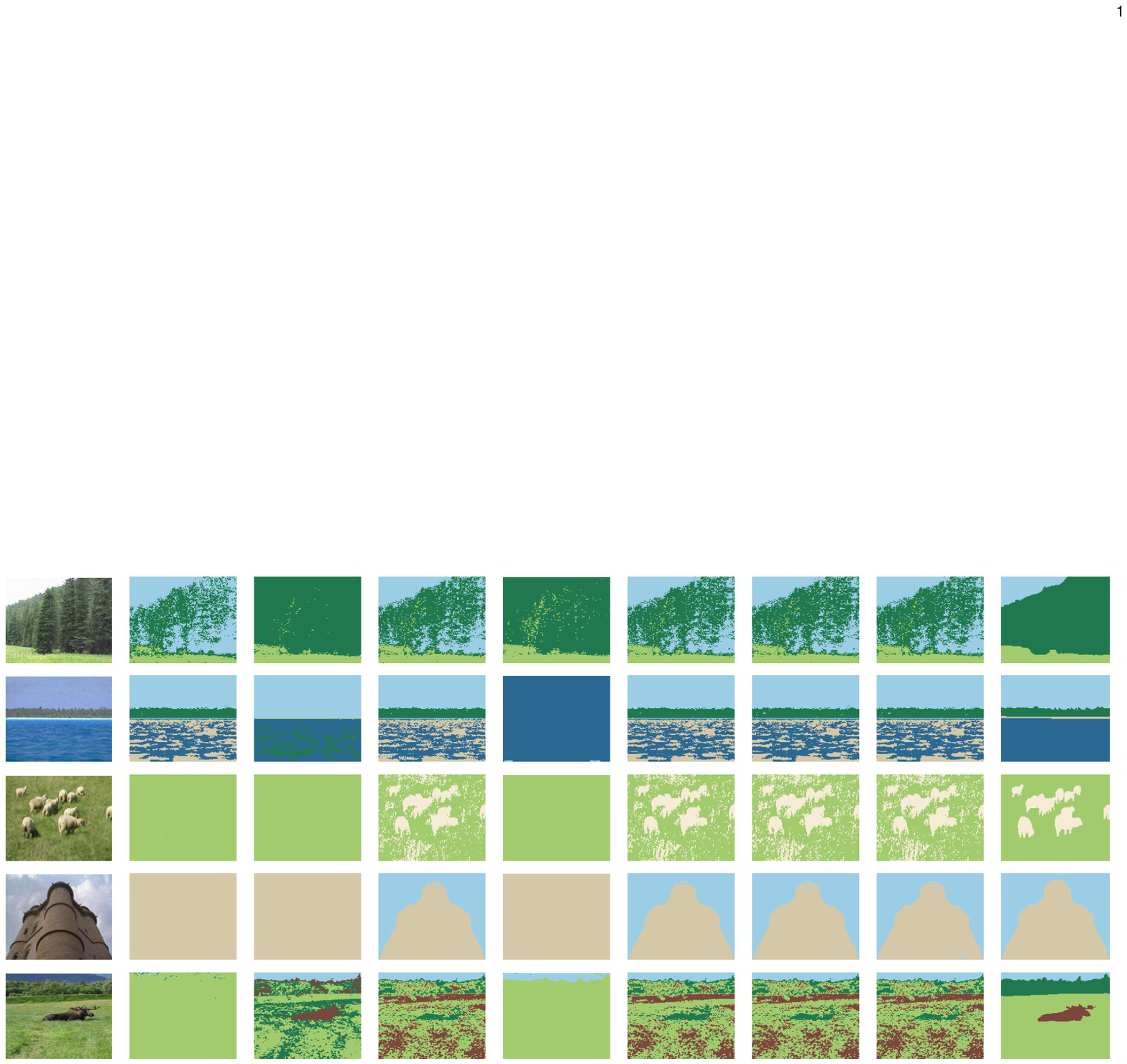}\\
  \caption{Image segmentation results for the corrupted images ($\sigma^{(\xi)}=10^{-3}$)}
  \label{fig:Images_Supplementaryegmentation}
  \vspace{-2mm}
\end{figure*}